\def\dOi{12(3:3)2016}
\subjclass{F.1.1 [Computation by Abstract Devices]: Models of Computation; F.3.2 [Logics and Meanings of Programs]: Semantics of Programming Languages}
\theoremstyle{definition}\newtheorem{definition}[thm]{Definition}
\theoremstyle{definition}\newtheorem{example}[thm]{Example}
\theoremstyle{definition}\newtheorem{proposition}[thm]{Proposition}
\theoremstyle{definition}\newtheorem{lemma}[thm]{Lemma}
\theoremstyle{definition}\newtheorem{theorem}[thm]{Theorem}
\theoremstyle{definition}\newtheorem{corollary}[thm]{Corollary}
\theoremstyle{definition}\newtheorem{remark}[thm]{Remark}
\newcommand\nt[1]{#1}
\newcommand\am[1]{#1}
\newcommand\ev{\mathsf{ev}}
\newcommand\app{\ev}
\newcommand\Scat{\mathcal{S}}
\newcommand\Scatinn{\Scat_{\rm inn}}
\newcommand\Scatinntot{\Scatinn^{t}}
\newcommand\lland{\,\land\,}
\newcommand\qwe{\ \ }
\newcommand\qweq{\ }
\renewcommand\Sigma{\varSigma}
\newcommand\recur{\mathcal{R}}
\newcommand{\nada}{$\text{}$}
\newcommand\justf[3][]{\nccurve[nodesep=.5pt,linewidth=0.4pt,angleA=110,angleB=45,linecolor=darkgray#1]{->}{#2}{#3}}
\newcommand\Tau{T}
\newcommand\ov[1]{\llcorner{#1}\lrcorner}
\newcommand\defn{\triangleq}
\newcommand\Erase[1]{\mathsf{erase}(#1)}
\newcommand\plays[1]{P_{#1}}
\newcommand\arr{\rightarrow}
\newcommand\Arr{\Rightarrow}
\newcommand\nb{\beta}
\newcommand\rest{\upharpoonright}
\newcommand\iseq{\mathop{\|}}
\newcommand\mix{\,\raisebox{.25ex}{\scalebox{.5}{$\bullet$}}\,}
\newcommand\ee\epsilon
\newcommand\remv{\setminus}
\newcommand\nice{\boldsymbol\Phi}
\newcommand\actn{\cdot}
\renewcommand\int{\mathsf{int}}
\newcommand\var{\mathsf{var}}
\newcommand\List[1]{\mathsf{List}(#1)}
\newcommand\remove[3]{\mathsf{remove}\ (#1,#2)\ \mathsf{in}\ #3}
\newcommand\add[4]{\mathsf{insert}\ (#1,#2,#3)\ \mathsf{in}\ #4}
\newcommand\Hextend[3]{\mathsf{let}\ #1=\mathsf{cons}\, #2\, #1\ \mathsf{in}\ #3}
\newcommand\Textend[3]{\mathsf{let}\ #1=\mathsf{snoc}\, #1\, #2\ \mathsf{in}\ #3}
\newcommand\indx{\mathsf{indx}}
\newcommand\deindx[1]{\mathsf{deindx}\ #1}
\newcommand\code[1]{\lceil#1\rceil}
\newcommand\ang[1]{\langle#1\rangle}
\newcommand\one{1}
\newcommand\id{\mathsf{id}}
\newcommand\cell{\mathsf{cell}}
\newcommand\ok{\mathsf{ok}}
\newcommand\rd{\mathsf{read}}
\newcommand\viewf{\mathsf{viewf}}
\newcommand\strat{\mathsf{strat}}
\newcommand\sskip{\mathsf{skip}}
\newcommand\sqleq\sqsubseteq
\newcommand\sqle\sqsubset
\newcommand\preplays[1]{\mathit{PP}_{#1}}
\newcommand\Splays[1]{\mathit{SP}_{#1}}
\newcommand\st[1]{\mathsf{st}(#1)}
\newcommand\Sinter[1]{\mathit{SInt}(#1)}
\newcommand{\HY}{\hspace{2pt}---\hspace{2pt}}
\newcommand\clg[1]{\mathcal{#1}}
\newcommand\boldemph[1]{\emph{\textbf{#1}}}
\newcommand{\BI}[1]{\overline{I}_{#1}}
\newcommand{\impl}{\mathbin{\Rightarrow}}
\newcommand{\permg}{{\rm PERM}}
\newcommand\A{\mathbb{A}}
\newcommand\Z{\mathbb{Z}}
\newcommand\dom[1]{\mathsf{dom}\,#1}
\newcommand\langp{\mathcal{L}'}
\newcommand\eqmod[1]{\cong_{#1}}
\newcommand\mwrite[1]{\mathsf{write}(#1)}
\newcommand\mread{\mathsf{read}}
\newcommand\mok{\mathsf{ok}}
\newcommand\compi[2]{\mathcal{C}_{#1}^{#2}}
\newcommand\comp[1]{\mathcal{C}_{#1}}
\newcommand\lrarr\longrightarrow
\newcommand\ctype{\mathit{ctype}}
\newcommand\ttype{\mathit{ttype}}
\newcommand\spinal[1]{P_A^{sp}}
\newcommand\ialoop{\mathsf{IA}_{\circlearrowright}}
\newcommand\iatwo{\ialoop^{2+}}
\newcommand\pcfplus{\mathsf{PCF}^+}
\newcommand\pcf{\mathsf{PCF}}
\newcommand\iacbv{\mathsf{IA}_{\mathsf{cbv}}}
\newcommand\rml{\mathsf{RML}}
\newcommand\ds{\displaystyle}
\newcommand\letin[2]{\mathsf{let}\ #1\ \mathsf{in}\ #2}
\newcommand\badvar[2]{\mathsf{mkvar}(#1,#2)}
\newcommand\mkvar{\mathsf{mkvar}}
\newcommand\substore\leq
\newcommand\Substore{\leq_p}
\newcommand\substorE{\leq_s}
\newcommand\prefix{\sqsubseteq_p}
\newcommand\suffix{\sqsubseteq_s}
\newcommand\subseq\sqsubseteq
\newcommand\fix[1]{\mathsf{Y}(#1)}
\newcommand{\aasg}{\,\raisebox{0.065ex}{:}{=}\,}
\newcommand\comt{\mathsf{unit}}
\newcommand\Rarr{\Rightarrow}
\newcommand\expt{\mathsf{int}}
\newcommand\vart{\mathsf{var}}
\renewcommand\red[2]{#1 \Downarrow #2}
\newcommand\while[2]{\mathsf{while}\,#1\,\mathsf{do}\,#2}
\newcommand\cond[3]{\mathsf{if}\,#1\,\mathsf{then}\,#2\,\mathsf{else}\,#3}
\newcommand{\rarr}{\rightarrow}
\newcommand\pview[1]{\ulcorner{#1}\urcorner}
\newcommand\pv[1]{\pview{#1}}
\newcommand\oview[1]{\llcorner{#1}\lrcorner}
\newcommand\makeset[1]{\{\,#1\,\}}
\newcommand\cutout[1]{}
\newcommand\eff{\mathsf{eff}}
\newcommand\proj[2]{{#1\restriction#2}}
\newcommand\erase[1]{\Erase{#1}}
\newcommand\intnum{\mathbb Z}
\newcommand\sem[1]{\llbracket #1 \rrbracket}
\newcommand\ssem[1]{\sem{#1}_{\mathrm{S}}}
\newcommand\seq[2]{{#1} \vdash {#2}}
\newcommand\lang{\mathcal{L}}
\newcommand\new[2]{\mathsf{new}\,#1\,\mathsf{in}\,#2}
\newcommand\newc{\mathsf{ref}}
\newcommand\loc{\mathsf{Loc}}
\newcommand\na\alpha
\newcommand\can{\mathbb{C}}
\newcommand\splay[1]{\mathit{SP}_{#1}}
\newcommand\knowing[1]{\mathcal{K}_{#1}}
\newcommand\Sbinno[1]{\Scatinn}
\newcommand\binno[1]{\mathcal{B}_{#1}}
\newcommand\inno[1]{\mathcal{I}_{#1}}
\newcommand\ord[1]{\mathsf{ord}(#1)}
\newcommand\xar[2]{\ar@{-}@/_#1mm/[#2]}
\begin{document}

\title[Block structure vs scope extrusion]%
{Block structure vs scope extrusion: between innocence and omniscience\rsuper*}

\author[A.S.~Murawski]{Andrzej S.\ Murawski\rsuper a}
\address{{\lsuper a}University of Warwick}
\email{A.Murawski@warwick.ac.uk}
\thanks{{\lsuper a}Supported by the 
Engineering and Physical Sciences Research Council (EP/C539753/1)}

\author[N.~Tzevelekos]{Nikos Tzevelekos\rsuper b}
\address{{\lsuper b}Queen Mary University of London}
\email{nikos.tzevelekos@qmul.ac.uk}
\thanks{{\lsuper b}Supported by the 
Engineering and Physical Sciences Research Council (EP/F067607/1)
and the Royal Academy of Engineering}
\keywords{Game semantics, references, contextual equivalence}

\titlecomment{{\lsuper*}Extended abstract appeared in FOSSACS~\cite{MT10}.}


\begin{abstract}
We study the semantic meaning of block structure using game semantics.
To that end, we introduce the notion of block-innocent strategies and
characterise call-by-value computation with block-allocated storage
through soundness, finite definability and universality results.
This puts us in a good position to conduct a comparative study of
purely functional computation, computation with block storage as well as
that with dynamic memory allocation.
For example, we can show that dynamic variable allocation
can be replaced with block-allocated variables exactly when the
term involved (open or closed) is of base type and that block-allocated
storage can be replaced with purely functional computation
when types of order two are involved.
To illustrate the restrictive nature of block structure further,
we prove a decidability result for a finitary fragment of call-by-value
Idealized Algol for which it is known that allowing for dynamic memory
allocation leads to undecidability.
\end{abstract}

\maketitle


\section{Introduction}

Most programming languages manage memory
by employing a stack for local variables and heap storage
for data that are supposed to live beyond their initial context.
A prototypical example of the former mechanism is Reynolds's
Idealized Algol~\cite{Rey81}, in which local variables can only
be introduced inside blocks of ground type. Memory is then allocated
on entry to the block and deallocated on exit.
In contrast, languages such as ML permit variables to escape
from their current context under the guise of pointers or references.
In this case, after memory is allocated at the point of reference creation,
the variable must be allowed to persist indefinitely (in practice, garbage collection
or explicit deallocation can be used to put an end to its life).

In this paper we would like to compare the expressivity of the two paradigms.
As a simple example of heap-based memory allocation we consider
the language $\rml$, introduced by Abramsky and McCusker in~\cite{AM97b},
which is a fragment of ML featuring integer-valued references.  In op.~cit.\ the authors
also construct a fully abstract game model of $\rml$
based on strategies (referred to as \emph{knowing} strategies)
that allow the Proponent to base his decisions on the full history of play.
On the other hand, at around the same time Honda and Yoshida~\cite{HY97} showed that the purely functional
core of $\rml$, better known as call-by-value $\pcf$~\cite{Plo77}, corresponds to
\emph{innocent} strategies~\cite{HO00}, i.e.\ those that can only
rely on a restricted view of the play when deciding on the next move.
Since block-structured storage of Idealized Algol seems
less expressive than dynamic memory allocation of ML
and more expressive than $\pcf$, it is natural to ask about
its exact position in the spectrum of strategies
between innocence and omniscience.
Our first result is an answer to this question. 
We introduce the family of \emph{block-innocent}
strategies, situated strictly between innocent and knowing strategies, and exhibit a series of results
relating such strategies to a call-by-value variant $\iacbv$ of Idealized Algol.

Block-innocence captures the particular kind of uniformity exhibited by
strategies originating from block-structured programs, akin to innocence
yet strictly weaker. In fact, we shall define block-innocence through innocence in
a setting enriched with explicit store annotations added to standard moves.
For instance, in the play shown below\footnote{For the sake of clarity, we only include pointers pointing more
than one move ahead.},
if P follows a block-innocent strategy, P is free to use different moves as the fourth (1) and sixth (2) moves,
but the tenth one (0) and the twelfth one (0) have to be the same.
\[ \]
\[
\rnode{A}{q}\qwe \rnode{B}{q}\qwe \rnode{C}{q}\qwe \rnode{D}{1}\qwe
\rnode{E}{q}\qwe \rnode{F}{2}\qwe \rnode{G}{a}\qwe \rnode{H}{a}\qwe \rnode{I}{q}\qwe \rnode{J}{0}\qwe \rnode{K}{q}\qwe \rnode{L}{0}
\justf{E}{B}\justf{G}{B}\justf{H}{A}\justf{K}{H}
\]
The above play is present in the strategy representing the term
\[\begin{array}{rcl}
{f:(\comt\rarr\expt)\rarr (\comt\rarr\expt)} &\vdash & \left(\new{x}{(\letin{g=f(\lambda y^\comt. (x\aasg !x+1); !x)}{()}})\right); \\
&&\lambda y^\comt.0: \comt\rarr\int
\end{array}\]
and the necessity to play the same value ($0$ in this case) in the twelfth move once $0$ has been played in the tenth one
stems from the fact that variables in $\iacbv$ can only be allocated in blocks of ground type. For example, the block in which $x$ was allocated cannot extend over $\lambda y^\comt.0$.

Additionally, our framework can detect ``storage violations" resulting
from an attempt to access a variable from outside of its block.
For instance, no $\iacbv$-term will ever produce the following play.
\[ \]
\[
\rnode{A}{q}\qwe \rnode{B}{q}\qwe \rnode{C}{q}\qwe\rnode{D}{1}\qwe\rnode{E}{q}\qwe\rnode{F}{2}\qwe\rnode{G}{a}\qwe\rnode{H}{a}\qwe\rnode{I}{q}\qwe
\rnode{J}{q}\justf{E}{B}\justf{G}{B}\justf{H}{A}\justf{J}{G}
\]
The last move is the offending one: for the term given above, it would amount to trying to use $g$ after deallocation of the block for $x$.
Note, though, that the very similar play drawn below does originate from an $\iacbv$-term. 
\[ \]
\[
\rnode{A}{q}\qwe \rnode{B}{q}\qwe \rnode{C}{q}\qwe\rnode{D}{1}\qwe\rnode{E}{q}\qwe\rnode{F}{1}\qwe\rnode{G}{a}\qwe\rnode{H}{a}\qwe\rnode{I}{q}\qwe
\rnode{J}{q}\justf{E}{B}\justf{G}{B}\justf{H}{A}\justf{J}{G}
\]
Take, e.g.\
$\seq{f:(\comt\rarr\expt)\rarr (\comt\rarr\expt)}{\letin{g=f(\lambda y^\comt. 1)}{\lambda y^\comt. g()}}:\comt\to\int$.

The notion of block-innocence provides us with a systematic methodology
to address expressivity questions  related to block structure such as
\emph{``Does a given strategy originate from a stack-based memory discipline?''}
or \emph{``Can a given program using dynamic memory allocation be replaced
with an equivalent program featuring stack-based storage?''}.
To illustrate the approach we conduct a complete study of the relationship
between the three classes of strategies (innocent, block-innocent and knowing respectively) 
according to the underpinning type shape.
We find that knowingness implies block-innocence when terms of base types
(open or closed) are involved, that block-innocence implies innocence
exactly for types of order at most two, and that knowingness implies
innocence if the term is of base type and its free identifiers are of order $1$.
\nt{The fact that knowingness and innocence coincide at terms of base types implies, in particular, that $\rml$ and $\iacbv$ contexts have the same expressive power: two $\rml$-terms can be distinguished by an $\rml$-context if, and only if, they can be distinguished by an $\iacbv$-context.}

As a further confirmation of the restrictive nature of the stack discipline of $\iacbv$,
we prove that program equivalence is decidable for a finitary variant of $\iacbv$
which properly contains all second-order types as well as some third-order types
(interestingly, this type discipline covers the available higher-order types in PASCAL).
In contrast, the corresponding restriction of $\rml$ is known to be undecidable~\cite{Mur04b}.

\nada\\
\noindent
{\bf Related work.} The stack discipline has always been regarded as part of the essence
of Algol~\cite{Rey81}. \nt{The first languages introduced in that lineage, i.e.\ Algol 58 and 60, 
featured both call-by-name and call-by-value parameters. Call-by-name was abandoned in Algol 68, 
though, and was absent from subsequent designs, such as Pascal and C.
}

On the semantic front, finding models embodying stack-oriented storage management
has always been an important goal of research into Algol-like languages.
In this spirit, in the early 1980s, Reynolds~\cite{Rey81} and Oles~\cite{Ole85} devised a semantic
model of \nt{(call-by-name)} Algol-like languages using a category of functors from a category of store shapes
to the category of predomains. Perhaps surprisingly, in the 1990s, Pitts and Stark~\cite{PS93,Sta95}
managed to adapt the techniques to \nt{(call-by-value)} languages with dynamic allocation.
This would appear to create a common platform suitable for a comparative study
such as ours.  However, despite the valuable structural insights, the relative
imprecision of the functor category semantics (failure of definability and full abstraction)
makes it unlikely that the results obtained by us can be proved via this route.
The semantics of local effects has also been investigated from the category-theoretic
point of view in~\cite{Pow06}.

As for the game semantics literature, Ong's work~\cite{Ong02} based on strategies-with-state
is the work closest to ours. His paper defines a compositional framework that is proved sound
for the third-order fragment of \emph{call-by-name} Idealized Algol. Adapting the results to call-by-value
and all types is far from immediate, though. For a start, to handle higher-order types, we note that
the state of O-moves is no longer determined by their justifier and the preceding move.
Instead, the right state has to be computed globally using the whole history
of play. However, the obvious adaptation of this idea to call-by-value does not capture
the block structure of $\iacbv$. Quite the opposite: it seems to be more compatible with $\rml$.
Consequently,  further changes are needed to characterize $\iacbv$. Firstly,
to restore definability, the explicit stores have to become lists instead of sets.
Secondly, conditions controlling state changes must be tightened. In particular,
P must be forbidden from introducing fresh variables at any step and,  in a similar vein,
must be forced to drop some variables from his moves in certain circumstances.

\nt{Another related paper is~\cite{AM99a}, in which Abramsky and McCusker introduce a model of
Idealized Algol with passive (side-effect-free) expressions~\cite{AM99a}.
Their framework is based on a distinction between \emph{active} and \emph{passive} moves, which correspond
to active and passive types respectively. Legal plays must then satisfy a novel correctness condition, called  \emph{activity}, 
and strategies must be \emph{$a/p$-innocent}. In contrast, our setting does not feature any type support for 
discovering the presence of storage. Moreover, as the sequences discussed in the Introduction demonstrate,
in order to understand legality in our setting, it is sometimes necessary to scrutinise values used in plays:
changing $1,1$ to $1,2$ may entail loss of correctness!
This is different from the activity condition (and other conditions used in game semantics), where it suffices
to consider the kind of moves involved (question/answer) or pointer patterns. 
Consequently, in order to capture the desired
shape of plays and associated notion of innocence in our setting,  
we felt it was necessary to  introduce moves explicitly decorated with stores.}

Our paper is also related to the efforts of finding decidable fragments of (finitary) $\rml$ as far as contextual equivalence is concerned.
Despite several papers in the area~\cite{Ghi01, Mur04b, HMO11,CBHMO15}, no full classification based on type shapes has emerged yet, even though
the corresponding call-by-name case has been fully mapped
out~\cite{MOW05}. We show that, for certain types, moving from $\rml$
to $\iacbv$ (thus weakening storage capabilities) can help to regain
decidability.


\section{Syntax\label{sec:syntax}}

To set a common ground for our investigations, we introduce a higher-order 
programming language that features syntactic constructs for both
block and dynamic memory allocation. 

\begin{definition}[The language~$\lang$]
We define types as generated by the grammar below,
where $\beta$ ranges over the ground types $\comt$ and $\expt$.
\[
\theta\;\; ::= \quad \beta \quad|\quad \vart\quad | \quad \theta\rarr\theta
\]
The syntax of $\lang$ is given in Figure~\ref{fig:syntax}. 
\end{definition}

Note in particular the first two rules concerning variables and the rule for the $\sf mkvar$ constructor: the latter allows us to build ``bad variables'' in accordance with Idealized Algol. 
\begin{figure}[t]
\renewcommand\arraystretch{2.5}
\[\begin{array}{c}
\frac{\ds\seq{\Gamma,x:\vart}{M:\beta}}{\ds \seq{\Gamma}{\new{x}{M}:\beta}}\qquad\frac{}{\ds \seq{\Gamma}{\newc:\vart}} \qquad
\frac{}{\ds \seq{\Gamma}{\mathsf{()}:\comt}} \qquad
\frac{\ds i\in \intnum}{\ds \seq{\Gamma}{i:\expt}}\qquad
\frac{\ds (x:\theta)\in \Gamma}{\ds \seq{\Gamma}{x:\theta}}
\\
\frac{\ds\seq{\Gamma}{M_1:\expt}\qquad \seq{\Gamma}{M_2:\expt}}{\ds\seq{\Gamma}{M_1\oplus M_2:\expt}}\qquad\,
\frac{\ds \seq{\Gamma}{M:\expt}\quad\,
          \seq{\Gamma}{N_0:\theta}\quad\,  \seq{\Gamma}{N_1:\theta}}
{\ds\seq{\Gamma}{\cond{M}{N_1}{N_0}:\theta}}\\
\frac{\ds\seq{\Gamma}{M:\vart}}{\ds\seq{\Gamma}{!M:\expt}}\quad\,\,
\frac{\ds \seq{\Gamma}{M:\vart}\quad\seq{\Gamma}{N:\expt}}{\ds \seq{\Gamma}{M\aasg N:\comt}}\quad\,\,
\frac{\ds\seq{\Gamma}{M:\comt\rarr\expt}\quad\seq{\Gamma}{N:\expt\rarr\comt}}{\ds
\seq{\Gamma}{\badvar{M}{N}:\vart}}\\
\frac{\ds\seq{\Gamma}{M:\theta\rarr\theta'}\quad\seq{\Gamma}{N:\theta}}{\ds 
\seq{\Gamma}{MN:\theta'}}\quad\,\,
\frac{\ds\seq{\Gamma,x:\theta}{M:\theta'}}{\ds \seq{\Gamma}{\lambda x^\theta.M:\theta\rarr\theta'}}\quad\,\,
\frac{\ds\seq{\Gamma}{M:(\theta\rarr\theta')\rarr(\theta\rarr\theta')}}{\ds\seq{\Gamma}{\fix{M}:\theta\rarr\theta'}}\\
\end{array}\]
\caption{Syntax of $\lang$\label{fig:syntax}}
\makebox[\textwidth][l]{\hrulefill}
\end{figure}
{The order of a type is defined as follows: 
\[\begin{array}{rcl}
\ord{\beta}&=&0\\
\ord{\vart}&=&1\\
\ord{\theta_1\rarr\theta_2}&=& \max (\ord{\theta_1}+1,\ord{\theta_2}).
\end{array}\]}
For any $i\ge 0$, 
terms that are typable using exclusively judgments of the
form 
\[ \seq{x_1:\theta_1,\cdots,x_n:\theta_n}{M:\theta}\] 
where $\ord{\theta_j}<i$ ($1\le j\le n$) and $\ord{\theta}\le i$,
are said to form the $i$th-order fragment.

To spell out the operational semantics of $\lang$, we need to assume a countable set $\loc$
of \emph{locations}, which are added to the syntax as auxiliary constants of type $\vart$.
We shall write $\alpha$ to range over them.
The semantics then takes the form of judgments $\red{s,M}{s',V}$, where $s,s'$ are
finite partial functions from $\loc$ to integers, $M$ is a \nt{closed} term and $V$ is a value.
Terms of the following shapes are values:  $()$, integer constants, elements of $\loc$, 
$\lambda$-abstractions or terms of the form $\badvar{\lambda x^\comt.M}{\lambda y^\expt.N}$.

The operational semantics is given via the large-step rules in Figure~\ref{fig:opsem}. 
Most of them take the form 
\[
\frac{\ds\red{M_1}{V_1}\quad \red{M_2}{V_2}\quad\cdots\quad \red{M_n}{V_n}}{\ds \red{M}{V}}
\]
which is meant to abbreviate:
\[
\frac{\ds 
\red{s_1,M_1}{s_2,V_1}\quad
\red{s_2,M_2}{s_3,V_2}\quad \cdots\quad \red{s_n,M_n}{s_{n+1}, V_n}}{\ds \red{s_1,M_1}{s_{n+1},V}}
\]
This is a common semantic convention, introduced in the Definition of Standard ML~\cite{MTH90}.
In particular, it means that the ordering of the hypotheses is significant.
The penultimate rule in the figure encapsulates the state within the newly created block,
while the last one creates a reference to a new memory cell that can
be passed around without restrictions on its scope.
Note that $s'\setminus\alpha$
is the restriction of $s'$ to $\dom{s'}\setminus\makeset{\alpha}$.

\begin{figure}[t]
\renewcommand\arraystretch{2.5}
\[\begin{array}{c}
\frac{\ds \textrm{$V$ is a value}}{\ds \red{s,V}{s,V}}\qquad
\frac{\ds \red{M}{0}\quad\, \red{N_0}{V}}{\ds
\red{\cond{M}{N_1}{N_0}}{V}}
\qquad
\frac{\ds i\neq 0\quad\, \red{M}{i}\quad\, \red{N_1}{V}}{\ds
\red{\cond{M}{N_1}{N_0}}{V}}
\\
\frac{\ds \red{M_1}{i_1}\quad\, \red{M_2}{i_2}}{\ds
\red{M_1\oplus M_2}{i_1\oplus i_2}}\qquad
\frac{\ds \red{M}{\lambda x.M'}\quad\,\red{N}{V'}\quad\red{M'[V'/x]}{V}}{\ds
\red{MN}{V}}\\
\frac{\ds \red{s,M}{s',\alpha}\quad\, s'(\alpha)=i}{\ds \red{s,!M}{s',i}}
\qquad
\frac{\ds \red{s,M}{s',\alpha}\quad\red{s',N}{s'',i}} {
\ds\red{s,M\aasg N}{s''(\alpha\mapsto i),()}}\\
\frac{\ds \red{M}{\badvar{V_1}{V_2}}\quad\red{V_1()}{i}}{\ds \red{!M}{i}}
\qquad
\frac{\ds \red{M}{\badvar{V_1}{V_2}}\quad\red{N}{i}\quad \red{V_2\, i}{()}} {
\ds\red{M\aasg N}{()}}\\
\frac{\ds \red{M}{V_1}\quad \red{N}{V_2}}{\ds \red{\badvar{M}{N}}{\badvar{V_1}{V_2}}}
\qquad \frac{\ds\red{M}{V}}{\ds \red{\fix{M}}{\lambda x^\theta. (V(\fix{V}))x} }\\
\frac{\ds\red{s\cup(\alpha\mapsto 0),M[\alpha/x]}{s',V}}{\ds\red{s,\new{x}{M}}{s'\setminus \na, V}}\,\,
\alpha\not\in\dom{s}
\qquad\,\,
\frac{}{\ds\red{s,\newc}{s\cup(\na\mapsto 0), \na}}\quad\alpha\not\in\dom{s}
\end{array}\]
\caption{Operational semantics of $\lang$\label{fig:opsem}}
\makebox[\textwidth][l]{\hrulefill}
\end{figure}

\cutout{
Here we only reproduce the two evaluation rules related to variable creation.
\[
\frac{\ds\red{s\cup(\alpha\mapsto 0),M[\alpha/x]}{s',V}}{\ds\red{s,\new{x}{M}}{s'\setminus \na, V}}
{\ \scriptstyle\na\notin\dom{s}}
\qquad
\frac{}{\ds\red{s,\newc}{s\cup(\na\mapsto 0), \na}}
{\ \scriptstyle\na\notin\dom{s}}
\]
}

Given a closed term $\seq{}{M:\comt}$, we write $M\Downarrow$ 
if there exists $s$ such that $\red{\emptyset,M}{s,()}$.
We shall call two programs equivalent if they behave identically 
in every context. This is captured by the following definition, 
parameterised by the kind of contexts that are considered,
to allow for testing of terms with contexts originating from a designated 
subset of the language.

\begin{definition}
Suppose $\langp$ is a subset of $\lang$.
We say that the terms-in-context $\seq{\Gamma}{M_1, M_2:\theta}$ 
are $\langp$-equivalent (written $\seq{\Gamma}{M_1\eqmod{\langp} M_2}$)
if, for any $\langp$-context ${C}$ such that $\seq{}{{C}[M_1],{C}[M_2]:\comt}$,
${C}[M_1]\Downarrow$ if and only if  ${C}[M_2]\Downarrow$.
\end{definition}

We shall study three sublanguages of $\lang$, called
$\pcfplus$, $\iacbv$ and $\rml$ respectively. The latter two
have appeared in the literature as paradigmatic examples of programming
languages with stack discipline and dynamic memory allocation respectively.
\begin{itemize}
\item 
$\pcfplus$ is a purely functional language obtained from $\lang$ by removing
$\new{x}{M}$ and $\newc$. It extends the language $\mathsf{PCF}$~\cite{Plo77}
with primitives for variable access, but not for memory allocation.
\item 
$\iacbv$ is $\lang$ without the $\newc$ constant.
It can be viewed as  a call-by-value variant of Idealized Algol~\cite{Rey81}.
Only block-allocated storage is available in $\iacbv$.
\item 
$\rml$ is $\lang$ save the construct $\new{x}{M}$. It is exactly the language
introduced in~\cite{AM97b} as a prototypical language for ML-like integer references.\footnote{\nt{In other words, $\rml$ is Reduced ML~\cite{Sta95} with the addition of the $\mkvar$ construct.}}
\end{itemize}
{We shall often use $\letin{x=M}{N}$  as shorthand for $(\lambda x.N)M$. 
Moreover, $\letin{x=M}{N}$, where $x$ does not occur in $N$, will be abbreviated to $M;N$.}
Note also that $\new{x}{M}$ is equivalent to $\letin{x=\newc}{M}$.
\nt{\begin{example}
The term $\seq{}{\letin{v=\newc}{\lambda x^\comt. (\cond{!v}{\Omega}{v\aasg !v+1}}):\comt\rarr\comt}$ is an example of an $\rml$-term 
that is not \am{$\rml$-equivalent} to any term from $\iacbv$.
On the other hand, 
\[
\seq{\!}{\lambda f^{(\comt\rarr\comt)\rarr\comt}.\new{v}{f(\lambda y^\comt.\cond{!v}{\Omega}{v\aasg !v+1})}:((\comt\!\rarr\!\comt)\!\rarr\!\comt)\!\rarr\!\comt}
\]
is an $\iacbv$-term that has no \am{$\rml$-equivalent} in $\pcfplus$. All of  the inequivalence claims will follow immediately from our results.
\end{example}}

\nt{
\begin{lemma}
Given any base type \am{$\lang$-term}\, $\seq{\Gamma,x:\vart}{M:\beta}$, we have $\seq{\Gamma}{\new{x}{M}\eqmod{\am{\lang}}}$ $\letin{x=\newc}{M}:\beta$.
\end{lemma}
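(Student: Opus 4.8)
The plan is to argue directly on the operational semantics of $\lang$, treating the cell created by $\newc$ but retained beyond the body $M$ as garbage. First I would unfold the definition of $\eqmod{\lang}$ and reduce the claim to the following: for every $\lang$-context $C$ with $\seq{}{C[\new{x}{M}],C[\letin{x=\newc}{M}]:\comt}$, one has $C[\new{x}{M}]\Downarrow$ iff $C[\letin{x=\newc}{M}]\Downarrow$. Since $\letin{x=\newc}{M}$ is by definition $(\lambda x.M)\,\newc$, the rules of Figure~\ref{fig:opsem} make the two plugged terms behave identically except at the end: from a derivation of $\red{s\cup(\alpha\mapsto 0),M'}{s',V}$ with $\alpha\notin\dom{s}$, where $M'$ is the relevant closed instance of $M[\alpha/x]$, the block construct returns $\red{s,\new{x}{M}}{s'\setminus\alpha,V}$ whereas the $\newc$-based term returns $\red{s,\letin{x=\newc}{M}}{s',V}$. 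Everything therefore hinges on showing that the surviving cell $\alpha$ is unreachable from the surrounding computation. Two features of the language make this true: stores map locations to \emph{integers}, so a location can never be placed inside the store, and $M$ has base type $\beta$, so the returned value $V$ is $()$ or an integer constant and hence contains no location either; together with the freshness of $\alpha$ --- it occurs neither in $C$ nor in any value substituted for the free identifiers of $M$ --- this means nothing downstream of the body can touch $\alpha$.

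The technical core will be a \emph{garbage lemma} for the operational semantics, proved by induction on evaluation derivations: for every closed $\lang$-term $P$, store $s$ and location $\gamma\in\dom{s}$ not occurring in $P$, if $\red{s,P}{s_1,W}$ then $\gamma\in\dom{s_1}$, $s_1(\gamma)=s(\gamma)$, every location occurring in $W$ already occurs in $P$ or was allocated during the derivation (so in particular $\gamma$ does not occur in $W$), and $\red{s\setminus\gamma,P}{s_1\setminus\gamma,W}$; conversely, adjoining to $s$ a fresh cell $\gamma\mapsto n$ that does not occur in $P$ leaves the set of reachable outcomes unchanged, up to the choice of internally allocated locations. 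For the inductive step the only rules that touch the store are dereferencing, assignment, $\new{\cdot}{\cdot}$ and $\newc$: a dereferenced or assigned location is the value of a previously evaluated sub-term, hence --- by the induction hypothesis and the fact that stores hold only integers --- occurs in $P$ or is freshly allocated, and in either case differs from the pre-existing dead cell $\gamma$; while $\new{\cdot}{\cdot}$ and $\newc$ allocate, and for $\new{\cdot}{\cdot}$ subsequently remove, locations chosen fresh for the current store, so again distinct from $\gamma$.

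Finally I would lift this to an arbitrary context. An evaluation witnessing $C[N]\Downarrow$ decomposes into a sequence of evaluations of closed instances of the plugged term $N$, interleaved with the evaluation of the surrounding syntax of $C$; at each such sub-evaluation, $\new{x}{M}$ and $\letin{x=\newc}{M}$ agree up to the deallocation of the fresh cell $\alpha$ that this invocation created, and the garbage lemma applied to the remainder of the derivation (with $\gamma=\alpha$) shows that the rest of the computation behaves the same whether or not $\alpha\mapsto s'(\alpha)$ is kept. Performing this replacement at every invocation converts a derivation of $C[\letin{x=\newc}{M}]\Downarrow$ into one of $C[\new{x}{M}]\Downarrow$, and the converse direction of the garbage lemma gives the reverse implication. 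I expect the main obstacle to be exactly this last step: making the decomposition of the evaluation of $C[N]$ into sub-evaluations of $N$ precise (handling in particular the cases where $N$ is evaluated zero times or several times, and the fact that $M$ is open so that distinct invocations use distinct closed instances), and keeping the freshly chosen locations genuinely fresh throughout the surgery. As an alternative, once the game model developed in the sequel is available the statement also follows from full abstraction together with the observation that the $\iacbv$- and $\rml$-strategies denoting the two terms coincide when the result type is a base type; I would nonetheless prefer the elementary argument above since it does not presuppose the later development.
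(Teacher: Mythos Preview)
Your proposal is correct and follows essentially the same route as the paper. The paper also proves a garbage lemma (your first technical claim, stated there only in the direction ``remove an unused cell'') and then lifts it to arbitrary contexts.

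The one noteworthy difference is in how the lifting is packaged. Where you speak of decomposing an evaluation of $C[N]$ into sub-evaluations of closed instances of $N$ and anticipate difficulty in making this precise, the paper sidesteps the decomposition entirely: it proves directly, by structural induction on $C$, that if $s,C[\letin{x=\newc}{M}]\Downarrow s',V$ then $s,C[\new{x}{M}]\Downarrow s'\setminus S,V$ for some set $S\subseteq\dom(s')$ of locations fresh for $s$ and $V$. Carrying the accumulated garbage as a \emph{set} $S$ through the induction is what makes the multiple-invocation case you worry about go through smoothly: each evaluated instance of the plugged term contributes its freshly allocated cell to $S$, and the garbage lemma lets you discard $S$ from the store before the next sub-derivation. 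This is exactly the ``surgery at every invocation'' you sketch, but phrased so that the induction hypothesis does the bookkeeping for you; I recommend adopting that formulation rather than attempting an explicit decomposition of the derivation.
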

\begin{proof}
The proof is based on the following two claims:
\begin{itemize}
\item if $s,M\Downarrow s',V$ and $\alpha\in\dom(s)$ does not appear in $M$, then $s\setminus\alpha,M\Downarrow s'\setminus\alpha,V$;
\item for any closed context $C$, value $V$ and $s,s'$, if
$s,C[\letin{x=\newc}{M}]\Downarrow s',V$ then there is a set of locations $S\subseteq\dom(s')$ such that 
$s,C[\new{x}{M}]\Downarrow s'\setminus S,V$ and $S$ contains no locations from $s$ or $V$;
\end{itemize}
which are proven by straightforward induction.
\cutout{ following fact without proof: if $s,M\Downarrow s',V$ and $\alpha\in\dom(s)$ does not appear in $M$, then $s\setminus\alpha,M\Downarrow s'\setminus\alpha,V$.
Let us write $M_1$ for $\letin{x=\newc}{M}$, and $M_2$ for $\new{x}{M}$.
We now show that, for any closed context $C$, value $V$ and $s,s'$, if
$s,C[M_1]\Downarrow s',V$ then there is a set of locations $S\subseteq\dom(s')$ such that 
$s,C[M_2]\Downarrow s'\setminus S,V$ and $S$ contains no locations from $s$ or $V$. We argue by induction on $C$. In the base case, $C[M_1]\equiv M_1$ and the claim clearly holds as $M$ is of base type. For the inductive case, consider $C\equiv C_1C_2$ and suppose $s,C_1[M_1]\Downarrow s_1,\lambda x.N$ and $s_1,C_2[M_1]\Downarrow s_2,V'$ and $s_2,N[V'/x]\Downarrow s',V$. Then, by IH, we have that $s,C_1[M_1]\Downarrow s_1\setminus S_1,\lambda x.N$ for some set of names $S_1$ fresh for $s,N$. By the above fact, and since $s$ is compatible with $C_2[M_1]$, we have 
$s_1\setminus S_1,C_2[M_1]\Downarrow s_2\setminus S_1,V'$ and, hence by the IH, 
$s_1\setminus S_1,C_2[M_2]\Downarrow s_2\setminus S,V'$ for some $S$ extending $S_1$ with names fresh for $s_1,V'$. We moreover have $s_2\setminus S,N[V'/x]\Downarrow s'\setminus S,V$, which yields the claim. The other inductive cases are treated similarly.

\nt{Mention the two facts in the proof, say proven by easy induction.}}
\end{proof}}

Hence, 
$\rml$ and $\lang$ merely differ on a syntactic level in that $\lang$
contains ``syntactic sugar" for blocks. In the opposite direction, 
our results will show that $\newc$ cannot in general be replaced
with an equivalent term that uses $\new{x}{M}$. Indeed, our paper provides 
a general methodology for identifying and studying scenarios 
in which this expressivity gap is real.


\section{Game semantics}
%

We next introduce the game models used throughout the paper,
which are based on the Honda-Yoshida approach to modelling call-by-value computation~\cite{HY97}.
%
\begin{definition}
An \boldemph{arena} $A=(M_A,I_A,\vdash_A,\lambda_A)$ is given by
\begin{itemize}
  \item a set $M_A$ of moves, and a subset $I_A\subseteq M_A$ of \emph{initial} moves,
  \item a justification relation ${\vdash_A}\subseteq M_A\times (M_A\setminus{I}_A)$, and
  \item a labelling function $\lambda_A:M_A\arr \{O,P\}\times\{Q,A\}$\,
\end{itemize}
such that $\lambda_A(I_A)\subseteq\{\mathit{PA}\}$ and, whenever $m\vdash_A m'$,
we have $(\pi_1\lambda_A)(m)\neq(\pi_2\lambda_A)(m')$ and $(\pi_2\lambda_A)(m')=A\implies (\pi_2\lambda_A)(m)=Q$. 
\end{definition}
The role of $\lambda_A$ is to label moves as \emph{Opponent} or \emph{Proponent} moves and
as \emph{Questions} or \emph{Answers}. We typically write them as $m,n,\dots$, or $o,p,q,a,q_P,q_O,\dots$ when we want to be specific about their kind.
\nt{Note that we abbreviate elements of the codomain of $\lambda_A$, e.g.\ $(P,A)$ above is written as $PA$.}

The simplest arena is $0=(\emptyset,\emptyset,\emptyset,\emptyset)$.
Other ``flat'' arenas are $1$ and $\Z$, defined by:
\[
M_{1}=I_{1}=\{*\}\,,\quad 
M_{\Z}=I_{\Z}=\Z\,.
\]
Below we recall two standard constructions on arenas, 
where $\bar{I}_A$ stands for $M_A\setminus I_A$,
the $OP$-complement of $\lambda_A$ is written as $\bar{\lambda}_A$, and $i_A, i_B$ range
over initial moves in the respective arenas.
\[
\begin{aligned}
M_{A\impl B}   &= I_{A\impl B}\uplus I_A \uplus \BI{A} \uplus M_B\\
I_{A\impl B}   &= \{*\} \\
\lambda_{A\impl B} &= [(*,{PA}),(i_A,{OQ}),\proj{\bar{\lambda}_A}{\BI{A}},\lambda_B] \\
\vdash_{A\impl B} &= \{(*,i_A),(i_A,i_B)\} \cup{}\vdash_A{}\cup{}\vdash_B
\end{aligned}
\]
\bigskip
\[
\begin{aligned}
M_{A\otimes B}   &= I_{A\otimes B}\uplus \BI{A} \uplus \BI{B}\\
I_{A\otimes B}   &= I_A \times I_B \\
\lambda_{A\otimes B} &= [((i_A,i_B),{PA}),\, \proj{\lambda_A}{\BI{A}},\,\proj{\lambda_B}{\BI{B}}] \\
\vdash_{A\otimes B} &= \{((i_A,i_B),m)\,\,|\,\, i_A\vdash_A m\,\lor\, i_B\vdash_B m\}\\
&\quad\cup(\proj{\vdash_A}{\BI{A}}^2)\cup(\proj{\vdash_B}{\BI{B}}^2)\\
\end{aligned}
\]
Types of $\mathcal{L}$ can now be interpreted with arenas in the following way.
\begin{align*}
\sem{\comt} &= 1\\
\sem{\expt} &=\Z\\
\sem{\vart} &=(1\Arr\Z)\otimes(\Z\Arr1)\\
\sem{\theta_1\arr\theta_2} &=\sem{\theta_1}\impl \sem{\theta_2}
\end{align*}
Note that the type $\vart$ is translated as a product arena the components of which represent its read and write methods.

Although arenas model types, the actual games will be played in \boldemph{prearenas},
which are defined in the same way as arenas with the exception that initial moves must
be O-questions. Given arenas $A$ and  $B$, we can construct the prearena $A\arr B$ by setting:
\[
\begin{aligned}
M_{A\arr B} &= M_A \uplus M_B\\
I_{A\arr B}&= I_A\\
\lambda_{A\arr B}&= [(i_A, OQ)\,\cup\, (\proj{\bar{\lambda}_A}{\BI{A}}) \;,\; \lambda_B]\\
\vdash_{A\arr B}&= \{(i_A,i_B)\}\,\cup\,\vdash_A\,\cup\,\vdash_B.
\end{aligned}
\]
For $\Gamma=\makeset{x_1:\theta_1,\cdots,x_n:\theta_n}$, typing judgments $\seq{\Gamma}{\theta}$ will eventually be interpreted by strategies for the prearena
$\sem{\theta_1}\otimes\cdots\otimes\sem{\theta_n}\arr \sem{\theta}$ (if $n=0$ we take the left-hand side to be $1$),
which we shall denote by $\sem{\seq{\Gamma}{\theta}}$ or $\sem{\seq{\theta_1,\cdots,\theta_n}{\theta}}$.

A \boldemph{justified sequence} in a prearena $A$ is a finite sequence $s$
of moves of $A$ satisfying the following condition: the first move must be initial,
but all other moves $m$ must be equipped with a pointer  to an earlier
occurrence of a move $m'$ such that $m'\vdash_A m$.
\nt{We then say that $m'$ \emph{justifies} $m$. If $m$ is an answer, we may also say that $m$ \emph{answers} $m'$.
If a question remains unanswered in $s$, it is \emph{open}; and the rightmost open question in $s$ is its \emph{pending question}.}

Given a justified sequence $s$, we define its \emph{O-view }$\oview{s}$ and its \emph{P-view} $\pview{s}$ inductively as follows.
\begin{itemize}
\item $\oview{\epsilon}=\epsilon$\,, \,
$\oview{s\, o}= \oview{s}\, o$\,, \,
$\oview{s\, \rnode{A}{o}\cdots \rnode{B}{p}}\justf{B}{A} =  \oview{s}\,o\,p$\,;
\item
$\pview{\epsilon} = \epsilon$\,,
$\pview{s\, p} = \pview{s}\, p$\,,
$\pview{s\, \rnode{A}{p}\cdots \rnode{B}{o}}\justf{B}{A} =   \pview{s}\,p\,o$\,.
\end{itemize}
Above, \nt{recall that} $o$ ranges over O-moves (i.e.\ moves $m$ such that $(\pi_1\lambda_A)(m)=O$), and $p$ ranges over P-moves.

\begin{definition}
A \boldemph{play} in a prearena $A$ is a justified sequence $s$ satisfying
the following conditions.
\begin{itemize}
\item If $s=\cdots m\,n\cdots$ then $\lambda_A^{OP}(m)=\bar{\lambda}_A^{OP}(n)$.\ (\emph{Alternation})
\item If $s=s_1\, \rnode{A}{q}\,s_2\,\rnode{B}{a}\cdots\justf{B}{A}$ then $q$ is the pending question in $s_1\,q\,s_2$.\
(\emph{Well-Bracketing})
\item
If $s=s_1\,\rnode{A}{o}\,s_2\rnode{B}{p}\cdots\justf{B}{A}$ then $o$ appears in $\pv{s_1\,o\,s_2}$\,;\\ 
if  $s=s_1\, \rnode{A}{p}\,s_2\rnode{B}{o}\cdots\justf{B}{A}$ then $p$ appears in $\ov{s_1\,p\,s_2}$\,.
(\emph{Visibility})
\end{itemize}
We write $\plays{A}$ to denote the set of plays in $A$.
\end{definition}


We are going to model terms-in-context $\Gamma\vdash M:\theta$ as sets of plays in $\sem{\Gamma\vdash\theta}$ subject to specific conditions.

\begin{definition}
A \boldemph{(knowing) strategy} $\sigma$ on a prearena $A$ is a {non-empty} prefix-closed set of plays from $A$ satisfying the first two conditions below. A strategy is \boldemph{innocent} if, in addition, the third condition holds.
\begin{itemize}
 \item If even-length $s\in\sigma$ and $sm\in\plays{A}$ then $sm\in\sigma$.\
(\emph{O-Closure})
 \item If even-length $sm_1,sm_2\in\sigma$ then $m_1=m_2$.\ (\emph{Determinacy})
 \item If $s_1m,s_2\in\sigma$ with odd-length $s_1,s_2$ and $\pv{s_1}=\pv{s_2}$ then $s_2m\in\sigma$.\ (\emph{Innocence})
\end{itemize}
We write $\sigma:A$ to denote that $\sigma$ is a strategy on $A$.
\end{definition}

Note, in particular, that every strategy $\sigma:A$ contains the empty sequence $\epsilon$ as well as the elements of $I_A$, the latter being the 1-move plays in $A$.
\nt{Moreover, in the last condition above, the move $m$ in $s_2m$ points at the same move it points inside $s_1m$: by visibility and the fact that $s_1$ and $s_2$ have the same view, this is always possible.}

{%
In previous work it has been shown that knowing strategies yield a fully abstract semantics for $\rml$ in the following sense\footnote{Perhaps it is worth noting that the presentation of the model in~\cite{AM97b} is in a different setting
  (we follow Honda-Yoshida call-by-value games, while~\cite{AM97b} applies the \emph{family construction} on call-by-name games) which, nonetheless, is equivalent to the one presented above.}.
  \nt{
  \begin{theorem}[\cite{AM97b}]
  Two $\rml$-terms are $\rml$-equivalent
  if and only if their interpretations contain the same complete plays\footnote{A play is called \emph{complete} if each question in that play has been answered.}. 
  \end{theorem}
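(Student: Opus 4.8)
The statement is the full-abstraction theorem of Abramsky and McCusker, so the plan is to follow the standard game-semantics recipe: interpret $\rml$-terms compositionally by knowing strategies, establish computational adequacy, establish finite definability, and combine the last two into full abstraction. It is worth recording at the outset \emph{why} the correspondence is phrased through complete plays only: since $\rml$ contains $\fix{-}$, hence divergent terms, the raw play-sets of two equivalent terms need not agree (a term may begin an interaction that no context ever completes), and the coarsest observable invariant turns out to be exactly the set of complete plays.

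First I would fix the interpretation $\sem{\seq{\Gamma}{M:\theta}}$ as a knowing strategy on the prearena $\sem{\Gamma}\arr\sem{\theta}$, by induction on typing derivations: $\lambda$-abstraction and application by currying and composition of strategies, $\fix{-}$ as the union of its finite syntactic unfoldings, the arithmetic and branching primitives by the evident finite strategies, $\newc$ by the ``good variable cell'' strategy on $\sem{\vart}=(1\Arr\Z)\otimes(\Z\Arr1)$ in which a read returns the most recently written value and the cell is initialised to $0$, and $\badvar{M}{N}$ by pairing the strategies denoting the read- and write-methods $M$ and $N$. The obligation here is structural and routine: composition of knowing strategies (parallel composition followed by hiding) must be shown well-defined, associative and unital, with O-closure and determinacy stable under it.

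The second ingredient is computational adequacy: for closed $\seq{}{M:\comt}$, $M\Downarrow$ precisely when the two-move complete play of the prearena $1\arr1$ lies in $\sem{M}$, and more generally $\sem{M}$ records the $\Downarrow$-behaviour of $M$ under every initial store. Soundness --- that $\red{s,M}{s',V}$ forces the corresponding play into $\sem{M}$ --- goes by induction on the evaluation derivation, with the $\newc$- and $\new{x}{M}$-rules matched against the cell strategy; the converse follows either by an approximation argument over the unfoldings of $\fix{-}$ or by a computability predicate indexed over arenas. Tracking the interaction between the operational store and the denotational cell is the delicate part of this step.

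The main obstacle is finite definability: every compact knowing strategy $\sigma$ on $\sem{\seq{\Gamma}{\theta}}$ is $\sem{\seq{\Gamma}{M_\sigma}}$ for some $\rml$-term $M_\sigma$. I would prove this by induction on the size of $\sigma$: decompose the plays of $\sigma$ by the initial O-move and P's reply, encode the finite case analysis on O's moves using conditionals together with the term structure, thread the moves already played through let-bindings and sequencing, realise any sub-strategy on $\sem{\vart}$ that violates the cell discipline via $\mkvar$, and recover non-compact strategies as suprema using $\fix{-}$. Manufacturing, from a finite set of plays, the term whose strategy contains exactly those plays is where essentially all the work resides. Granting adequacy and definability, full abstraction follows as usual: for the soundness direction, if $\sem{M_1}$ and $\sem{M_2}$ have the same complete plays then so do $\sem{C[M_1]}$ and $\sem{C[M_2]}$ for every context $C$ --- complete-play equivalence is a congruence because composition of strategies determines the complete plays of a composite from those of its components --- whence $C[M_1]\Downarrow$ if and only if $C[M_2]\Downarrow$ by adequacy; for the completeness direction, if a complete play $s$ distinguishes the two interpretations, a compact sub-strategy containing $s$ is definable, yielding a context of type $\comt$ that converges exactly when applied to a term exhibiting $s$, and this context separates $M_1$ from $M_2$.
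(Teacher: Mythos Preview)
The paper does not prove this theorem: it is stated with the citation \cite{AM97b} and no proof is given, because it is imported wholesale as the full-abstraction result of Abramsky and McCusker. So there is no ``paper's own proof'' to compare against here.

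That said, your sketch is a faithful outline of the standard Abramsky--McCusker argument and would serve as a reasonable summary of what happens in \cite{AM97b}. Two remarks on accuracy. First, the definability step in that work does not proceed by a direct decomposition of arbitrary knowing strategies; rather one shows a \emph{factorisation} result: every knowing strategy on $A$ can be written as an innocent strategy on $\sem{\vart}\otimes A$ composed with the cell strategy, after which innocent definability (essentially the PCF argument) yields the term. Your paragraph on definability gestures at the right ingredients but blurs this two-stage structure. Second, your congruence claim for complete-play equivalence is the point that actually needs an argument: one must check that if $\sigma$ and $\sigma'$ have the same complete plays then so do $\sigma;\tau$ and $\sigma';\tau$ for every $\tau$, which is not entirely trivial because composition is defined on all plays, not just complete ones. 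None of this invalidates your plan, but these are the places where the real work sits.
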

Moreover, as an immediate consequence of the full abstraction result of~\cite{HY97}, we have that innocent strategies (quotiented by the intrinsic preorder)
yield full abstraction for $\pcfplus$.}
What remains open is the model for the intermediate language, $\iacbv$, which requires one to identify a family of strategies between the innocent and knowing ones.
This is the problem examined in the next section. We address it in two steps.
\begin{itemize}
\item First we introduce a category of strategies that are equipped with explicit stores for registering private variables. 
We show that this category, of so-called innocent \emph{S-strategies}, indeed models block allocation: terms of $\iacbv$ translate into innocent S-strategies (Proposition~\ref{prop:model}) and, moreover, in a complete manner (Propositions~\ref{prop:defin} and~\ref{prop:univ}). 
\item The strategies capturing $\iacbv$, called \emph{block-innocent} strategies, are then defined by deleting stores from innocent S-strategies.
\end{itemize}
}

\section{Games with stores and the model of \texorpdfstring{$\iacbv$}{IAcbv}}

We shall now extend the framework to allow moves to be decorated with
stores that contain name-integer pairs. 
This extension will be necessary for capturing block-allocated storage.
The names should be viewed as semantic analogues of locations.
The stores will be used for carrying the values of private, block-allocated variables.

\subsection{Names and stores in games}

When employing such moves-with-store, we are not interested in what exactly the names are,
but we would like to know how they relate to names that have already been in play. Hence,
the objects of study are rather the induced equivalence classes with respect to name-invariance, and all ensuing constructions and reasoning need to be compatible with it. This overhead can be dealt with robustly using the language of nominal set theory~\cite{GP02}.
Let us fix a countably infinite set $\A$, the set of \emph{names}, the elements of which we shall denote by $\na,\nb$ and variants. Consider the group $\permg(\A)$ of finite permutations of $\A$. 

\begin{definition}
A \boldemph{strong nominal set}~\cite{GP02,Tze09}
is a set equipped with a group action\footnote{A group action of $\permg(\A)$ on $X$ is a function\, $\_\cdot\_\ :\permg(\A)\times X\rightarrow X$\, such that, for all $x\in X$ and $\pi,\pi'\in\permg(\A)$, $\pi\cdot(\pi'\cdot x)=(\pi\circ\pi')\cdot x$  and $\mathsf{id}\cdot x = x$, where $\mathsf{id}$ is the identity permutation.}
of $\permg(\A)$
such that each of its elements has \emph{finite strong support}. That is to say, for any $x\in X$,
there exists a finite set $\nu(x)\subseteq\A$, called \boldemph{the support of $x$}, such that, for all permutations $\pi$, $(\forall{\na\in\nu(x)}.\,\pi(\na)=\na)\iff \pi\cdot x=x$.
\end{definition}

Intuitively, $\nu(x)$ is the set of names ``involved'' in $x$.
%
For example, the set $\A^\#$ of finite lists of distinct atoms with permutations acting elementwise is a strong nominal set.
If $X$ and $Y$ are strong nominal sets, then so is their cartesian product $X\times Y$ (with permutations acting componentwise) and their disjoint union $X\uplus Y$.
Name-invariance in a strong nominal set $X$ is represented by the relation: $x\sim x'$ if there exists $\pi$ such that $x=\pi\actn x'$.

We define a strong nominal set of \boldemph{stores}, the elements of which are finite sequences of
name-integer pairs. Formally,
\[ 
\Sigma,\Tau\; ::=\;\; \epsilon\ |\ (\na,i)::\Sigma 
\]
where $i\in\Z$ and $\na\in\A\setminus\nu(\Sigma)$. 
We view stores as finite functions from names to
integers, though their domains are lists rather than sets. Thus, we
define the \boldemph{domain} of a store to be the \emph{list} of
names obtained by applying the first projection to all of its elements. In particular, $\nu(\dom(\Sigma))=\nu(\Sigma)$. If $\na\in\nu(\Sigma)$ then we write $\Sigma(\na)$ for the unique $i$ such that $(\na,i)$ is an element of $\Sigma$. 
For stores $\Sigma,\Tau$ we write:
\begin{align*}
   \Sigma\substore\Tau\;\ &\text{for } \dom(\Sigma)\subseq\dom(\Tau),\\
   \Sigma\substore_X\Tau\;\ &\text{for } \dom(\Sigma)\subseq_X\dom(\Tau),
\end{align*}
where $X\in\{p,s\}$, and $\subseq,\prefix,\suffix$ denote the subsequence, prefix and suffix relations
respectively.
Note that $\Sigma\substore_X\Tau\substore_X\Sigma$ implies $\dom(\Sigma)=\dom(\Tau)$ but not $\Sigma=\Tau$.
Finally, let us write $\Sigma\remv\Tau$ for $\Sigma$ restricted to $\nu(\Sigma)\setminus\nu(\Tau)$.

An \boldemph{S-move} (or \emph{move-with-store}) in a prearena $A$ is a pair consisting of a move and a store. We typically write S-moves as $m^\Sigma,n^\Tau,o^\Sigma,p^\Tau,q^\Sigma,a^\Tau$.
The first projection function is viewed as \emph{store erasure} and denoted by $\Erase{\_\!\_\,}$.
Note that moves contain no names and therefore, for any $m^\Sigma$,\
$\nu(m^\Sigma)=\nu(\Sigma)=\nu(\dom(\Sigma))$\,.
%
A \boldemph{justified S-sequence} in $A$ is a sequence of S-moves equipped with justifiers, so that its erasure is a justified sequence. The notions of \emph{O-view} and \emph{P-view} are extended to S-sequences in the obvious manner.
We say that a name $\na$ \boldemph{is closed} in $s$
if there are no open questions in $s$ containing $\na$.

\begin{definition}\label{def:splay}
A justified S-sequence $s$ in a prearena $A$ is called an \boldemph{S-play} if it satisfies the following conditions, for all $\na\in\A$.
\begin{itemize}
  \item If $s=m^\Sigma\cdots$ then $\Sigma=\ee$.\ ({\em Init})
  \item If $s=\cdots \rnode{A}{o^{\Sigma}}\cdots \rnode{B}{p}^\Tau\cdots\justf[,angleB=20]{B}{A}$ then $\Sigma\Substore\Tau$. If $\lambda_A(p)=PA$ then $\Tau\Substore\Sigma$ too.\ ({\em Just-P})
 \item If $s=\cdots \rnode{A}{p^{\Sigma}} \cdots \rnode{B}{o}^\Tau\cdots\justf[,angleB=20]{B}{A}$ \ then $\Sigma\Substore\Tau\Substore\Sigma$.\ ({\em Just-O})
  \item If $s=s_1\, o^\Sigma q_P^\Tau\cdots$ then $\Sigma\remv\Tau\substorE\Sigma$ and $\Sigma\remv(\Sigma\remv\Tau)\Substore\Tau$ and 
  \begin{enumerate}[label=\({\alph*}]
    \item if $\na\in\nu(\Tau\remv\Sigma)$ then $\na\notin\nu(s_1o^\Sigma)$,
    \item if $\na\in\nu(\Sigma\remv\Tau)$ then $\na$ is closed in $s_1o^\Sigma$.
  \end{enumerate}({\em Prev-PQ})
  \item If $s=\cdots p^\Sigma s'o^\Tau\cdots$ \ and $\na\in(\nu(\Tau)\cap\nu(\Sigma))\setminus\nu(s')$ then $\Tau(\na)=\Sigma(\na)$.\ ({\em Val-O})
\end{itemize}
We write $\Splays{A}$ for the set of S-plays in $A$.
\end{definition}

Let us remark that, as stores have strong support, the set of S-plays $\Splays{A}$ is a strong nominal set.
The conditions we impose on S-plays reflect the restrictions pertaining to block-allocation of variables. In particular, given a move $m$, all block-allocated variables present at $m$ are carried over to every move $n$ justified by $m$. In addition, only P is allowed to allocate/deallocate such variables, or change their values.

\paragraph{\bf\em Just-P}
All variables allocated at $o$ survive in $p$. If $p$ is an answer then, in fact, the subsequence from $o$ to $p$ represents a whole {sub-}block, with $p$ closing the {sub-}block. Thus, $o$ and $p$ must have the same private variables.

\paragraph{\bf\em Just-O}
Each O-move inherits its private variables from its justifier move. Put otherwise, a block does not extend beyond the current P-view and we only store variables that are created by and are private to P (so $T$ cannot be larger that $\Sigma$).

\paragraph{\bf\em Prev-PQ}
P-questions can open or close private variables and thus alter the domain of the store. This process must obey the nesting of variables, as reflected in the order of names in stores. Therefore, 
variables are closed by removing their corresponding names from the right end of the store: $\Sigma\setminus T\substorE\Sigma$.
On the other hand, variables/names that survive comprise the left end of the new store: $\Sigma\remv(\Sigma\remv\Tau)\Substore\Tau$. 

In addition, any names that are added in the store must be fresh for the whole sequence (they represent fresh private variables). A final condition disallows variables to be closed if their block still contains open questions.

\paragraph{\bf\em Val-O}
Since variables are private to P, it is not possible for O to change their value: at each O-move $o^T$, the value of each $\na\in\dom(T)$ is the same as that of the last P-move $p^\Sigma$ such that $\na\in\dom(\Sigma)$.
\medskip

%
%
The above is a minimal collection of rules that we need to impose for block allocation. From them we can extract further properties for S-plays, whose proofs are delegated to Appendix~\ref{apx:plays}.
\begin{lemma}
{The following properties hold for S-plays $s$.}
\begin{itemize}
\item If $s=\cdots m^\Sigma a_{P}^\Tau\cdots$ then $\Sigma\remv\Tau\substorE\Sigma$ and $\Sigma\remv(\Sigma\remv\Tau)\Substore\Tau$ and
\begin{itemize}
    \item[\quad(a)] if $\na\in\nu(\Tau)$ then $\na\in\nu(\Sigma)$,
    \item[\quad(b)] if $\na\in\nu(\Sigma\remv\Tau)$ then $\na$ is closed in $s_{<a_P^T}$.
\end{itemize}({\em Prev-PA})
\item \nt{%
For any $\na$, we have 
$\pv{s}=s_1s_2s_3$, where
\begin{itemize}
  \item $\na\notin\nu(s_1)\cup\nu(s_3)$ and $\forall m^\Sigma\in s_2.\ \na\in\nu(\Sigma)$,
  \item if $s_2\neq\epsilon$ then its first element is the move introducing $\na$ in $s$.
\end{itemize}
({\em Block form})}
\item If $s=s_1o^\Sigma p^\Tau s_2$ with $\na\in \nu(\Sigma)\setminus\nu(\Tau)$ then $\na\notin\nu(s_2)$.\ ({\em Close})\qed
\end{itemize}
\end{lemma}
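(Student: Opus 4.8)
# Proof Proposal

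The three properties (\emph{Prev-PA}, \emph{Block form}, \emph{Close}) should all be derived from the defining conditions of S-plays in Definition~\ref{def:splay}, with the main work being bookkeeping about how names enter and leave stores as one moves along a play. The plan is to prove them more or less in the stated order, since \emph{Prev-PA} is the answer-move analogue of \emph{Prev-PQ} and \emph{Close} will follow from the combination of the others together with \emph{Just-O} and \emph{Just-P}.

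For \emph{Prev-PA}, the idea is to reduce to \emph{Prev-PQ}. Given $s = \cdots m^\Sigma a_P^\Tau \cdots$, the move $a_P$ is a P-answer; by well-bracketing it answers some pending question, and the key observation is that the justifier structure forces $m^\Sigma$ to be the O-move immediately preceding $a_P$ in the play, so $m^\Sigma = o^\Sigma$ is an O-move. Then I want to relate the store $\Tau$ attached to $a_P$ to the store at the question it closes. Using \emph{Just-P} for the pair (justifier of $a_P$, the move $a_P$) — recalling that answers are $PA$-labelled only when... actually $a_P$ is a P-move answer so $\lambda_A(a_P) = PA$ fails; $a_P$ has label $PA$? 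No: answers can be $P$ or $O$; here it is a P-answer, so $\lambda_A(a_P) \in \{PA\}$ is possible and then \emph{Just-P} gives the two-sided containment with its justifier. I would combine this with the relationship between the justifier's store and $\Sigma$ obtained by tracking the block containing the question-answer pair. The inclusions $\Sigma\remv\Tau\substorE\Sigma$ and $\Sigma\remv(\Sigma\remv\Tau)\Substore\Tau$, and clauses (a), (b), then fall out by the same argument that established \emph{Prev-PQ}, using that closing a sub-block (the well-bracketed segment from $o^\Sigma$ to $a_P^\Tau$) must restore exactly the private variables that were live before it was entered, minus those legitimately deallocated. The subtle point here, and the one I expect to be the main obstacle, is justifying that $m^\Sigma$ really is an O-move and that the store comparison is against the \emph{right} earlier move; this requires a careful induction using visibility and well-bracketing, and is where one must be most careful.

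For \emph{Block form}, I would argue by induction on the length of $s$ (equivalently on $\pv{s}$), fixing a name $\na$. The claim is that within the P-view, the occurrences of moves whose store contains $\na$ form a contiguous block $s_2$, bracketed by move-ranges not mentioning $\na$. The induction step analyses the last move of $\pv{s}$: if it is an O-move, \emph{Just-O} says its store equals its justifier's store (restricted), so whether $\na$ is present is inherited from the justifier, which by the P-view computation rule is the move immediately before it in $\pv{s}$ — this preserves contiguity. If it is a P-move (question or answer), \emph{Prev-PQ}/\emph{Prev-PA} describe precisely how the store changes: names are removed from the right end or added (and added names are globally fresh, hence equal to $\na$ only if $\na$ was never seen before, accounting for $s_2$ starting exactly at the introducing move). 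The condition that added names be fresh for the whole sequence is what pins down that $s_2$'s first element is the introducer of $\na$. The nesting (stack) discipline on stores, already emphasised in the \emph{Prev-PQ} commentary, is what prevents $\na$ from reappearing after being dropped — that gives $\na\notin\nu(s_3)$.

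For \emph{Close}: suppose $s = s_1 o^\Sigma p^\Tau s_2$ with $\na\in\nu(\Sigma)\setminus\nu(\Tau)$. Since $\na$ disappears across the O-move/P-move pair $o^\Sigma p^\Tau$, and by \emph{Just-P} $p$'s justifier's store is contained in (or equal to) $\Tau$ in the relevant direction, while by \emph{Prev-PQ}/\emph{Prev-PA} the only way $\na$ can leave between $o^\Sigma$ and $p^\Tau$ is by being closed at $p$ (a P-move removing it from the right end of the store), $\na$ is closed in $s_1 o^\Sigma p^\Tau$. Now I want to conclude $\na\notin\nu(s_2)$: by \emph{Block form} applied along $s$, the positions carrying $\na$ form one contiguous block within each P-view, and that block cannot be re-entered after $\na$ is closed; any later move $n$ with $\na\in\nu(n)$ would force $\na$ back into some store either via \emph{Just-O} from a justifier still carrying $\na$ (impossible, since all such justifiers precede $p$ and are not in the P-view at $n$ — use visibility), or via a P-move re-adding $\na$, which \emph{Prev-PQ} forbids because a re-added name must be fresh for the whole sequence and $\na$ already occurred. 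Hence no such $n$ exists, completing the proof. The routine bookkeeping in each case I would not spell out; the genuine difficulty is uniformly the interplay of visibility with the store-nesting discipline, i.e.\ confirming that a deallocated variable's name is genuinely out of reach of every subsequent move.
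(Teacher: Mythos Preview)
Your argument for \emph{Block form} is correct and matches the paper's: induction on $|s|$, with the O-move case handled by \emph{Just-O} (the last O-move in the P-view has the same store-domain as the preceding P-move there) and the P-move case by \emph{Prev-PQ}/\emph{Prev-PA} plus the freshness clause.

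For \emph{Prev-PA}, you are headed in the right direction but the sketch is vague at the one place that matters. That $m^\Sigma$ is an O-move is immediate from alternation; this is not the obstacle. The real work is relating $\Sigma$ to the store $\Sigma_0$ of the question $q_0$ that $a_P$ answers. The paper does this by using well-bracketing to write the segment from $q_0$ to $a_P$ as $q_0^{\Sigma_0}\,q_1^{\Tau_1}\cdots a_1^{\Sigma_1}\,q_2^{\Tau_2}\cdots a_2^{\Sigma_2}\cdots q_j^{\Tau_j}\cdots a_j^{\Sigma_j}\,a_P^{\Tau}$ (so $\Sigma=\Sigma_j$), and then proving $\Sigma_0\Substore\Sigma_i$ for all $i$ by a short induction that alternates \emph{Prev-PQ} and \emph{Just-O}/\emph{Just-P} across each inner $q_i\cdots a_i$ pair. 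Your phrase ``by the same argument that established \emph{Prev-PQ}'' is confused: \emph{Prev-PQ} is a defining axiom of S-plays, not something that was proved; there is no such argument to reuse. What you need instead is this explicit walk through the nested bracket structure.

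For \emph{Close}, there is a genuine gap. Your P-move case is fine (freshness in \emph{Prev-PQ}(a) and clause (a) of \emph{Prev-PA} handle it). But in the O-move case you write that potential justifiers carrying $\na$ ``are not in the P-view at $n$''. Visibility for O-moves concerns the \emph{O-view}, not the P-view, and you give no reason why a P-move $p'^{\Tau'}$ occurring in $s_1$ with $\na\in\nu(\Tau')$ cannot appear in the O-view after $p^\Tau$. This is exactly the nontrivial step. The paper isolates it as a separate lemma: for every prefix $s_2'$ of $s_2$, $\na\notin\nu(\ov{s_1 o^\Sigma p^\Tau s_2'})$. The base case uses \emph{Block form} on $\pv{s_1 o^\Sigma}$ to see that the justifier of $p^\Tau$ (which lies in that P-view and has $\na$-free store by \emph{Just-P}) must occur \emph{before} the move introducing $\na$; hence the entire O-view of $s_1 o^\Sigma p^\Tau$ is $\na$-free. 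The inductive step then propagates this along $s_2$. Once this O-view lemma is in hand, \emph{Close} is immediate, since the last move of any prefix always lies in its own O-view. Without it, your argument for the O-move case does not go through.
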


We now move on to strategies for block allocation.

\begin{definition}
An \boldemph{S-strategy} $\sigma$ on an arena $A$ is a non-empty prefix-closed set of S-plays from $A$ satisfying the first three of the following conditions.
An S-strategy is \boldemph{innocent} if it also satisfies the last condition.
\begin{itemize}
 \item If $s'\sim s\in\sigma$ then $s'\in\sigma$.\ ({\em Nominal Closure})
 \item If even-length $s\in\sigma$ and $sm^\Sigma\in\Splays{A}$ then $sm^\Sigma\in\sigma$.\ ({\em O-Closure})
 \item If even-length $sm_1^{\Sigma_1},sm_2^{\Sigma_2}\in\sigma$ then $sm_1^{\Sigma_1}\sim sm_2^{\Sigma_2}$.\ ({\em Determinacy})
\item If $s_1m^{\Sigma_1},s_2\in\sigma$ with $s_1,s_2$ odd-length and $\pv{s_1}=\pv{s_2}$ then there exists $s_2m^{\Sigma_2}\in\sigma$ with $\pv{s_1m^{\Sigma_1}}\sim \pv{s_2m^{\Sigma_2}}$.\ ({\em Innocence})
\end{itemize}
We  write $\sigma:A$ to denote that $\sigma$ is an $S$-strategy on $A$.
\end{definition}

Observe how  S-strategies
are defined in the same manner as ordinary strategies but
follow some additional conditions due to their involving of stores and names.

\begin{example}\label{ex:cell}
  For any base type $\beta$, consider the prearena $\sem{\vart\rarr\beta}\rarr\sem{\beta}$ given below, where we have indexed moves and type constructors to indicate provenance.
We use $\mread$ and $\mwrite{i}$ ($i\in\Z$) to refer to the question-moves
from $\sem{\var}$, and $i$ ($i\in\Z$) and $\mok$ for the corresponding answers.
\[
\begin{aligned}
&\sem{\vart\rarr\beta}\rarr\sem{\beta}\\
&=\ (\sem{\vart}\Rarr\sem{\beta})\to\sem{\beta}\\
&=\ (\sem{\vart}\Rarr_1\sem{\beta})\to_0\sem{\beta}
\\\\\\
\end{aligned}
\qquad\qquad
\begin{aligned}
\xymatrix@R=.5mm@C=2mm{
&q_0\ar@{-}[dd]\ar@{-}[dr] \\
&& a_0 \\
&q_1\ar@{-}[dd]\ar@{-}[ddl]\ar@{-}[dr] \\
&& a_1 \\
\mread\ar@{-}[dd]  & \mwrite{i}\ar@{-}[dd]
\\
~\\
i & \mok
}
\end{aligned}
\]
Let us define the S-strategy
$\mathsf{cell}_\beta: \sem{\vart\rarr\beta}\rarr\sem{\beta}$ as the
S-strategy containing all \nt{even-length prefixes of S-plays of the following form (where we use the Kleene star for move repetition).
\[
\begin{aligned}
\\\\\\[2.5mm]
&\rnode{A}{q_0}\qwe \rnode{B}{q_1}^{(\na,0)}
\qweq
\left(\rnode{CA}{\mread}^{(\na,0)}\qweq\rnode{DA}{0}^{(\na,0)}\right)^*\justf{CA}{B}\justf{DA}{CA}
\qweq
\rnode{CB}{\mwrite{i}}^{(\na,0)}\qweq\rnode{DB}{\mok}^{(\na,i)}\justf[,angleA=120]{CB}{B}\justf{DB}{CB}
\qweq
\left(\rnode{CC}{\mread}^{(\na,i)}\qweq\rnode{DC}{i}^{(\na,i)}\right)^*\justf[,angleA=130]{CC}{B}\justf{DC}{CC}
\cdots\qweq
\rnode{C}{a_1}^{(\na,j)}\qweq\rnode{D}{a_0}\justf{B}{A}\justf[,angleA=140]{C}{B}\justf[,angleA=135]{D}{A}
\end{aligned}
\]}%
Note that, although the $\sf cell$ strategy is typically non-innocent, it is innocent in our framework, where private variables are explicit in game moves (in their stores).
\end{example}

\begin{example}
Let us consider the prearena $\sem{\comt\rarr\expt}\rarr\comt$, depicted as on the left below.
Had we used sets instead of lists for representing stores, the following ``S-strategy" (right below),
which represents incorrect overlap of scopes ($\alpha$ and $\beta$ are in scope of one another,
but at the same time have different scopes), would have been valid (and innocent).
\[
\begin{aligned}
\xymatrix@R=.5mm@C=2mm{
q_0\ar@{-}[dd]\ar@{-}[dr] \\
& \ast \\
q_1\ar@{-}[dr] \\
& i
}
\end{aligned}
\qquad\qquad\qquad\qquad
\begin{aligned}
\\
&
\rnode{A}{q_0}\qwe\rnode{B}{q_1}^{(\na,0),(\nb,0)}\qweq\rnode{C}{0_1}^{(\na,0),(\nb,0)}\qweq\rnode{D}{q_1}^{(\na,0)}
\justf{B}{A}\justf{C}{B}\justf[,angleA=150,angleB=35]{D}{A}
\\\\
&
\rnode{A}{q_0}\qwe\rnode{B}{q_1}^{(\na,0),(\nb,0)}\qweq\rnode{C}{1_1}^{(\na,0),(\nb,0)}\qweq\rnode{D}{q_1}^{(\nb,0)}
\justf{B}{A}\justf{C}{B}\justf[,angleA=150,angleB=35]{D}{A}
\end{aligned}
\]
{However, the above is not a valid S-strategy in our language setting. Intuitively, it would correspond to a term that determines the scope of its variables on the fly, depending on the value (0 or 1) received after memory allocation.}
\end{example}

\subsection{Composing S-plays}

%
We next define composition of S-plays, following the approach of~\cite{HY97,Tze09}. Let us introduce some notation on stores. For a sequence of S-moves $s$ and stores $\Sigma,\Tau$, we write $\Sigma[s],\Sigma[\Tau]$ and $\Sigma+\Tau$ for the stores defined by: $\ee[s]=\ee[\Tau]\defn\ee$, $\ee+\Tau\defn\Tau$ and
\begin{align*}
((\na,i)::\Sigma)[s] &\defn 
\begin{cases}(\na,\Tau(\na))::(\Sigma[s]) &\text{if }s=s_1m^\Tau s_2\land\na\in\nu(\Tau)\setminus\nu(s_2)\\ (\na,i)::(\Sigma[s]) &\text{otherwise}
\end{cases}\\  
((\na,i)::\Sigma)[\Tau] &\defn
    \begin{cases}(\na,\Tau(\na))::(\Sigma[\Tau]) &\text{if }\na\in\nu(\Tau)\\ (\na,i)::(\Sigma[\Tau]) &\text{otherwise}\end{cases} \\
((\na,i)::\Sigma)+\Tau &\defn
\begin{cases}
(\na,i)::(\Sigma+\Tau) &\text{ if }\na\notin\nu(\Tau)\\ \text{undefined} &\text{ otherwise}
\end{cases}
\end{align*}
Moreover, we write $\st{s}$ for the store of the final S-move in $s$. 
\nt{For instance, by {\em Prev-PQ} and {\em Prev-PA}, if $o^\Sigma p^\Tau$ are consecutive inside an S-play then $\Tau=\Sigma[\Tau]\setminus(\Sigma\setminus\Tau)+(\Tau\setminus\Sigma)$.}

It will also be convenient 
to introduce the following store-constructor. For stores $\Sigma_0,\Sigma_1,\Sigma_2$ we define
\[ \nice(\Sigma_0,\Sigma_1,\Sigma_2) \defn \Sigma_0[\Sigma_2]\remv(\Sigma_1\remv\Sigma_2)+(\Sigma_2\remv\Sigma_1)\,.\]
Considering $\Sigma_1,\Sigma_2$ as \emph{consecutive},
the constructor first updates $\Sigma_0$ with values from $\Sigma_2$, removes those names that have been \emph{dropped} in $\Sigma_2$ and then adds those that have been \emph{introduced} in it.

\begin{definition}
Let $A,B,C$ be arenas and $s\in\Splays{A\arr B},t\in\Splays{B\arr C}$. We say that $s,t$ are \emph{compatible}, written $s\asymp t$, if $\erase{s}\rest B=\erase{t}\rest B$ and $\nu(s)\cap\nu(t)=\varnothing$. In such a case, we define their \emph{interaction}, $s\iseq t$, and their \emph{mix}, $s\mix t$, recursively as follows,
\begin{align*}
\ee\iseq\ee                 &\defn \ee
& \ee\mix\ee                          &\defn \ee \\
sm_A^\Sigma\iseq t            &\defn (s\iseq t)m_A^{sm_A^\Sigma\mix t}
& sn^\Tau m_{A(P)}^\Sigma\mix t                 &\defn \nice(sn^\Tau\mix t,\Tau,\Sigma) \\
&& sm_{A(O)}^\Sigma\mix t                 &\defn \tilde\Sigma[s\iseq t] \\
sm_B^\Sigma\iseq tm_B^{\Sigma'}    &\defn (s\iseq t)m_B^{sm_B^\Sigma\mix tm_B^{\Sigma'}}
& sn^\Tau m_{B(P)}^\Sigma\mix tm_{B(O)}^{\Sigma'}     &\defn \nice(sn^\Tau\mix t,\Tau,\Sigma) \\
&& s m_{B(O)}^{\Sigma'}\mix tn^\Tau m_{B(P)}^\Sigma     &\defn \nice(s\mix tn^\Tau,\Tau,\Sigma) \\
s\iseq tm_C^\Sigma            &\defn (s\iseq t)m_C^{s\mix tm_C^\Sigma}
& s\mix tn^\Tau m_{C(P)}^\Sigma                 &\defn \nice(s\mix tn^\Tau,\Tau,\Sigma) \\
&& s\mix tm_{C(O)}^\Sigma                 &\defn \tilde\Sigma[s\iseq t]
\end{align*}
where 
justification pointers in $s\iseq t$ are inherited from $s$ and $t$, and
$\tilde\Sigma$ is the store of $m_{A/C(O)}$'s justifier in $s\iseq t$. 
Note that $s\mix t$ is the store of the last S-move in $s\iseq t$. 
The \emph{composite} of $s$ and $t$ is
\[ s; t \defn (s\iseq t)\upharpoonright AC\,. \]
We moreover let
\[ \Sinter{A,B,C} \defn \{ s\iseq t\ |\ s\in\Splays{A\arr B}\land t\in\Splays{B\arr C}\land s\asymp t \} \]
be the set of \boldemph{S-interaction sequences} of $A,B,C$.
\end{definition}

\nt{
\begin{example}\label{ex:playcomp}
Let us 
demonstrate how to
compose S-plays
from the following strategies: 
\begin{align*}
\sigma &\defn\sem{\seq{}{\lambda x^\vart.\,x:={!x}+1;{!x}:\vart\to\int}}: 1\to(\sem{\vart}\Rarr\Z)\\
\tau &\defn\sem{\seq{f:\vart\to\int}{(\new{x}{fx})+\new{x}{fx}:\int}}: (\sem{\vart}\Rarr\Z)\rarr\Z
\end{align*}
We depict the two prearenas below (compared to Example~\ref{ex:cell}, in the prearena on the right we have replaced $q$ and $a$ with concrete moves $\ast$ and $i\in\Z$).
\[
\begin{aligned}
\xymatrix@R=.5mm@C=2mm{
\ast_0\ar@{-}[drr] 
\\
&& \ast_0'\ar@{-}[dd] 
&&&&\ast_0'\ar@{-}[dd]\ar@{-}[drr] 
\\
&&&&&&&& i_0
\\
&&\ast_1\ar@{-}[dd]\ar@{-}[ddl]\ar@{-}[dr] 
&&
&&\ast_1\ar@{-}[dd]\ar@{-}[ddl]\ar@{-}[dr] 
\\
&&& i_1
&\quad\qquad&&& i_1 
\\
&\mread\ar@{-}[dd]  & \mwrite{i}\ar@{-}[dd]
&&&
\mread\ar@{-}[dd]  & \mwrite{i}\ar@{-}[dd] 
\\
\\
&i & \mok
&&&
i & \mok
}
\end{aligned}
\]
The S-strategy $\sigma$ is given by even-length prefixes of S-plays of the form:
\[
\begin{aligned}\\[4mm]
&\rnode{A}{\ast_0}\qweq 
\rnode{AA}{\ast_0'}\qweq
\rnode{B}{\ast_1}
\qweq
\rnode{CA}{\mread}\qweq\rnode{DA}{i}
\qweq
\rnode{CB}{\mwrite{i{+}1}}\qweq\rnode{DB}{\mok}
\qweq
\rnode{CA}{\mread}\qweq\rnode{DA}{j}
\qweq
\rnode{C}{j_1}
\justf[,angleA=145]{C}{B}
\qweq
\rnode{XB}{\ast_1}
\qweq
\rnode{XCA}{\mread}\qweq\rnode{XDA}{i'}
\qweq
\rnode{XCB}{\mwrite{i'{+}1}}\qweq\rnode{XDB}{\mok}
\qweq
\rnode{CA}{\mread}\qweq\rnode{DA}{j'}
\qweq
\rnode{XC}{j'_1}
\qweq\cdots
\justf[,angleA=145]{XC}{XB}\justf[,angleA=140]{XB}{AA}
\end{aligned}
\]
where the $\sf read$'s and $\sf write$'s point to the last $\ast_1$ on their left (and
each other missing pointer is assumed to point to the next move on the left). 
On the other hand, the elements of $\tau$ are even-length prefixes of S-plays with the following pattern:
\begin{align*}
&\rnode{A}{\ast'_0}\qwe 
\rnode{B}{\ast_1^{(\na,0)}}
\qweq
\left(\rnode{CA}{\mread}^{(\na,0)}\qweq\rnode{DA}{0}^{(\na,0)}\right)^*
\qweq
\rnode{CB}{\mwrite{i}}^{(\na,0)}\qweq\rnode{DB}{\mok}^{(\na,i)}
\qweq
\left(\rnode{CC}{\mread}^{(\na,i)}\qweq\rnode{DC}{i}^{(\na,i)}\right)^*
\cdots\qweq
\rnode{C}{j_1}^{(\na,k)}
\\
&\quad\;\;
\rnode{B}{\ast_1^{(\na',0)}}
\qweq
\left(\rnode{CA}{\mread}^{(\na',0)}\qweq\rnode{DA}{0}^{(\na',0)}\right)^*
\qweq
\rnode{CB}{\mwrite{i'}}^{(\na',0)}\qweq\rnode{DB}{\mok}^{(\na',i')}
\qweq
\left(\rnode{CC}{\mread}^{(\na',i')}\qweq\rnode{DC}{i}^{(\na',i')}\right)^*
\cdots\qweq
\rnode{C}{j'_1}^{(\na',k')}\qweq
\rnode{D}{(j{+}j')_0}
\end{align*}
Observe that an S-play $s\in\sigma$ can only be composed with a $t\in\tau$ if it satisfies the read-write discipline of variables: each $\sf read$ move must be answered by the last $\sf write$ value, apart of the initial $\sf read$ that should be answered by 0. 
Moreover, $s$ must feature at most two moves $\ast_1$.
We therefore consider the following S-plays $s\asymp t$.
\begin{align*}
s&=\, \rnode{A}{\ast_0}\qweq 
\rnode{AA}{\ast_0'}\qweq
\rnode{B}{\ast_1}
\qweq
\rnode{CA}{\mread}\qweq\rnode{DA}{0}
\qweq
\rnode{CB}{\mwrite{1}}\qweq\rnode{DB}{\mok}
\qweq
\rnode{CC}{\mread}\qweq\rnode{DC}{1}
\qweq
\rnode{C}{1_1}
\qweq
\rnode{B}{\ast_1}
\qweq
\rnode{CA}{\mread}\qweq\rnode{DA}{0}
\qweq
\rnode{CB}{\mwrite{1}}\qweq\rnode{DB}{\mok}
\qweq
\rnode{CC}{\mread}\qweq\rnode{DC}{1}
\qweq
\rnode{C}{1_1}
\\
t&=\, \rnode{A}{\ast'_0}
\qweq \rnode{B}{\ast_1}^{(\na,0)}
\qweq
\rnode{CA}{\mread}^{(\na,0)}\qweq\rnode{DA}{0}^{(\na,0)}
\qweq
\rnode{CB}{\mwrite{1}}^{(\na,0)}\qweq\rnode{DB}{\mok}^{(\na,1)}
\qweq
\rnode{CC}{\mread}^{(\na,1)}\qweq\rnode{DC}{1}^{(\na,1)}
\qweq
\rnode{CC}{1_1}^{(\na,1)}
\\
&\qquad\qweq\; \rnode{B}{\ast_1}^{(\na',0)}
\qweq
\rnode{CA}{\mread}^{(\na',0)}\qweq\rnode{DA}{0}^{(\na',0)}
\qweq
\rnode{CB}{\mwrite{1}}^{(\na',0)}\qweq\rnode{DB}{\mok}^{(\na',1)}
\qweq
\rnode{CC}{\mread}^{(\na',1)}\qweq\rnode{DC}{1}^{(\na',1)}
\qweq
\rnode{CC}{1_1}^{(\na',1)}\qweq\rnode{DC}{2_0}
\end{align*}
By composing, we obtain:
\begin{align*}
s\|t&=\, 
\rnode{A}{\ast_0}
\qweq 
\rnode{A}{\ast'_0}
\qweq 
\rnode{B}{\ast_1}^{(\na,0)}
\qweq
\rnode{CA}{\mread}^{(\na,0)}\qweq\rnode{DA}{0}^{(\na,0)}
\qweq
\rnode{CB}{\mwrite{1}}^{(\na,0)}\qweq\rnode{DB}{\mok}^{(\na,1)}
\qweq
\rnode{CC}{\mread}^{(\na,1)}\qweq\rnode{DC}{1}^{(\na,1)}
\qweq
\rnode{CC}{1_1}^{(\na,1)}
\\
&\qquad\qweq\; \rnode{B}{\ast_1}^{(\na',0)}
\qweq
\rnode{CA}{\mread}^{(\na',0)}\qweq\rnode{DA}{0}^{(\na',0)}
\qweq
\rnode{CB}{\mwrite{1}}^{(\na',0)}\qweq\rnode{DB}{\mok}^{(\na',1)}
\qweq
\rnode{CC}{\mread}^{(\na',1)}\qweq\rnode{DC}{1}^{(\na',1)}
\qweq
\rnode{CC}{1_1}^{(\na',1)}\qweq\rnode{DC}{2_0}
\end{align*}
and $s;t=\ast_0\,2_0$.

As a side-note, let us observe that, had we considered a slightly different strategy to compose S-plays from $\sigma$ with:
\begin{align*}
\tau' &\defn\sem{\seq{f:\vart\to\int}{\new{x}{fx+fx}:\int}}: (\sem{\vart}\Rarr\Z)\rarr\Z
\end{align*}
we would only be able to compose variants of $s$ and $t$:
\begin{align*}
s'&=\, \rnode{A}{\ast_0}\qweq 
\rnode{AA}{\ast_0'}\qweq
\rnode{B}{\ast_1}
\qweq
\rnode{CA}{\mread}\qweq\rnode{DA}{0}
\qweq
\rnode{CB}{\mwrite{1}}\qweq\rnode{DB}{\mok}
\qweq
\rnode{CC}{\mread}\qweq\rnode{DC}{1}
\qweq
\rnode{C}{1_1}
\qweq
\rnode{B}{\ast_1}
\qweq
\rnode{CA}{\mread}\qweq\rnode{DA}{1}
\qweq
\rnode{CB}{\mwrite{2}}\qweq\rnode{DB}{\mok}
\qweq
\rnode{CC}{\mread}\qweq\rnode{DC}{2}
\qweq
\rnode{C}{2_1}
\\
t'&=\, \rnode{A}{\ast'_0}
\qweq \rnode{B}{\ast_1}^{(\na,0)}
\qweq
\rnode{CA}{\mread}^{(\na,0)}\qweq\rnode{DA}{0}^{(\na,0)}
\qweq
\rnode{CB}{\mwrite{1}}^{(\na,0)}\qweq\rnode{DB}{\mok}^{(\na,1)}
\qweq
\rnode{CC}{\mread}^{(\na,1)}\qweq\rnode{DC}{1}^{(\na,1)}
\qweq
\rnode{CC}{1_1}^{(\na,1)}
\\
&\qquad\qweq\; \rnode{B}{\ast_1}^{(\na,1)}
\qweq
\rnode{CA}{\mread}^{(\na,1)}\qweq\rnode{DA}{0}^{(\na,1)}
\qweq
\rnode{CB}{\mwrite{2}}^{(\na,1)}\qweq\rnode{DB}{\mok}^{(\na,2)}
\qweq
\rnode{CC}{\mread}^{(\na,2)}\qweq\rnode{DC}{2}^{(\na,2)}
\qweq
\rnode{CC}{2_1}^{(\na,2)}\qweq\rnode{DC}{3_0}
\end{align*}
and thus obtain $s';t'=\ast_0\,3_0$.
\end{example}
}

Given an interaction sequence $u$ and a move $m$ of $u$, we call $m$ a \boldemph{generalised P-move} if it is a P-move in $AB$ or in $BC$ \nt{(by which we mean a P-move in $A\to B$ or $B\to C$ respectively)}. We call $m$ an \boldemph{external O-move} if it is an O-move in $AC$.
The notions of P-view and O-view extend to interaction sequences as follows. Let $p$ be a generalised P-move and $o$ be an external O-move.
\begin{align*}
  \pv{o} &\defn o  & \ov{\ee} &\defn\ee \\
  \pv{up^\Sigma} &\defn \pv{u}\,p^\Sigma & \ov{uo^\Sigma} &\defn \ov{u}\,o^\Sigma \\
  \pv{u \rnode{A}{m^\Tau}\cdots\rnode{B}{o}^\Sigma\justf[,angleB=20]{B}{A}} &\defn \pv{u}\,m^\Tau o^\Sigma &
  \ov{u \rnode{A}{m^\Tau}\cdots\rnode{B}{p}^\Sigma\justf[,angleB=20]{B}{A}} &\defn \ov{u}\,m^\Tau p^\Sigma
\end{align*}
Observe that if $u$ ends in a P-move in $AB$ (resp.~$BC$) then $\ov{u}=\ov{u\rest AB}$ ($\ov{u\rest BC}$).

We now show that play-composition is well-defined. The result follows from the next two lemmas.
The first one is standard, while the proof of the second one is given in Appendix~\ref{apx:plays}.

\begin{lemma}[Zipper Lemma~\cite{HY97}]
If $s\in\Splays{A\arr B}$, $t\in\Splays{B\arr C}$ and $s\asymp t$ then either $s\,{\rest}B=t=\epsilon$,
or $s$ ends in $A$ and $t$ in $B$ (with a P-move in $BC$), 
or $s$ ends in $B$ (with a P-move in $AB$) and $t$ in $C$, 
or both $s$ and $t$ end in $B$ (with the same move).
\end{lemma}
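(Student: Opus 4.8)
The first thing to note is that stores are irrelevant here: the statement only concerns the underlying justified sequences. Writing $\bar s=\erase{s}\in\plays{A\arr B}$ and $\bar t=\erase{t}\in\plays{B\arr C}$, with $\erase{s}\rest B=\erase{t}\rest B$ (name-disjointness playing no further role), it suffices to prove the assertion for $\bar s,\bar t$. This is the classical Zipper Lemma of~\cite{HY97}; the plan is an induction on $|\bar s|+|\bar t|$, carrying a strengthening of the statement that records the polarity of the last move of $\bar s$ (resp.\ $\bar t$) in the two cases where that move lies in $B$ but is not the last move of the other component, and, in the fourth case, distinguishes whether the common final $B$-move is a P-move in $AB$ or in $BC$.

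The elementary facts I would use throughout are: (a) the moves of $B$ retain their $O/P$-polarity in $A\arr B$ but have it reversed in $B\arr C$, so a $B$-move cannot be a P-move in both; (b) $I_{B\arr C}=I_B$, so a non-empty play on $B\arr C$ begins with a $B$-move, and the first $B$-move to occur in any play on $A\arr B$ or $B\arr C$ is the initial move $i_B$; (c) $\bar s$ alternates with respect to $A\arr B$-polarity and $\bar t$ with respect to $B\arr C$-polarity; (d) since $\erase{s}\rest B=\erase{t}\rest B$ as justified sequences, $\bar s$ and $\bar t$ carry the same $B$-moves in the same order, so if both contain a $B$-move their last $B$-moves coincide.

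After the base case, I would dispatch the cases where one side is empty using (b): if $\bar s=\epsilon$ then $\bar t\rest B=\epsilon$, which by (b) forces $\bar t=\epsilon$; if $\bar t=\epsilon$ then $\bar s\rest B=\epsilon$, so $\bar s$ consists solely of $A$-moves and we are in the first case. Otherwise both are non-empty and share a $B$-move; let $b$ be their common last $B$-move and write $\bar s=s_1\,b\,s_2$, $\bar t=t_1\,b\,t_2$, where $s_2$ is necessarily a (possibly empty) string of $A$-moves and $t_2$ a string of $C$-moves. If $s_2=t_2=\epsilon$ we are in the fourth case (and the refined polarity of $b$ is read off from $\bar s$); if exactly one of $s_2,t_2$ is non-empty we are, respectively, in the second or the third case, and the asserted polarity of $b$ there follows by deleting the last move of the longer component, applying the induction hypothesis to the strictly shorter pair, and using (a) and (c) to transfer the polarity across.

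The crux is to exclude the configuration in which $s_2$ and $t_2$ are both non-empty, i.e.\ $\bar s$ ends in $A$ and $\bar t$ ends in $C$ simultaneously. I would rule this out by a symmetric double descent: removing the last (an $A$-)move of $\bar s$ yields a shorter compatible pair which, by the induction hypothesis, is either in the excluded configuration again (impossible by induction) or in the third case with its first component now ending at $b$, whence $b$ is a P-move in $AB$; removing the last (a $C$-)move of $\bar t$ symmetrically forces $b$ to be a P-move in $BC$; these two conclusions contradict (a). The main obstacle is organising precisely this parity bookkeeping: deciding which polarity invariant about the terminal $B$-move to thread through the induction, and verifying that the four annotated cases are closed under the one-move extensions that arise — here one needs, besides alternation, the visibility conditions on $\bar s$ and $\bar t$ to see that an $A$- or $C$-move can only be appended when the parity is right. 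The remainder is bookkeeping of the same routine character as in the classical proof.
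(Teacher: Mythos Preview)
The paper does not supply its own proof of the Zipper Lemma; it is cited from~\cite{HY97} with the remark ``the first one is standard.'' So there is nothing in the paper to compare your attempt against.

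Your sketch is correct in outline. The reduction to the store-free case is sound; the induction on $|\bar s|+|\bar t|$ with the fourth case refined by the polarity of the common terminal $B$-move is the right strengthening; and the symmetric descent excluding the ``$s$ ends in $A$, $t$ ends in $C$'' configuration is clean and uses only the inductive hypothesis (the excluded configuration is simply not among the four cases, so the IH rules it out for the shortened pair directly).

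You are also right that visibility --- not just alternation --- is needed to push through the polarity assertions in cases~2 and~3. Without visibility one can build alternating justified sequences on $A\to B$ in which an O-move from $A$ immediately follows a P-move from $B$ (e.g.\ $i_A\,p_1\,o_1\,i_B\,m$ with $p_1,o_1,m$ all $A$-moves and $m$ justified by $p_1$), which would break the inductive step. What your sketch leaves implicit is the concrete lemma that visibility delivers: in any play on $A\to B$, if the play ends in a $B$-move then its O-view contains only $B$-moves and the initial $A$-move, whereas if it ends in an $A$-move then its O-view contains only $A$-moves. This is a short separate induction using visibility together with the fact that $\vdash_A$ and $\vdash_B$ are disjoint in $\vdash_{A\to B}$ (so $B$-moves are justified only by $B$-moves or by $i_A$, and non-initial $A$-moves only by $A$-moves). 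With this in hand, the problematic subcase --- where the IH on the shortened pair lands in case~4a while you are establishing case~2 (or symmetrically 4b while establishing case~3) --- is eliminated: the appended $A$-move would be an O-move whose P-justifier is an $A$-move, hence absent from the O-view, violating visibility. The induction then closes.
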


\begin{lemma}\label{l:inter}
Suppose $s\in\Splays{A\arr B},t\in\Splays{B\arr C}$, $s\asymp t$ and $p$ a generalised P-move.
\begin{enumerate}\renewcommand{\theenumi}{\alph{enumi}}
  \item If $s\iseq t=un^\Tau p^\Sigma$ then $\nu(\Sigma\remv\Tau)\cap\nu(un^\Tau)=\varnothing$.
  \item For any $\na$, $\pv{s\iseq t}$ is in block-form: $\pv{s\iseq t}=u_1u_2u_3$ where $\na$ appears in every move of $u_2$, $\na\notin\nu(u_1)\cup\nu(u_3)$  and if $u_2\neq\ee$ then its first move introduces $\na$ in $s\iseq t$.
  \item If $s\iseq t=un^\Tau p^\Sigma$ and $\na\in\nu(\Tau\remv\Sigma)$ then $\na$ is closed in $un^\Tau$.
  \item If $s\iseq t=\cdots \rnode{A}{n^\Tau}\cdots \rnode{B}{m}^\Sigma\justf[,angleB=20]{B}{A}$ then $\Tau\Substore\Sigma$. If $m$ is an answer then $\Sigma\Substore\Tau$.
  \item If $s\iseq t=u_1n^\Tau p^\Sigma u_2$ and $\na\in\nu(\Tau\remv\Sigma)$ then $\na\notin\nu(u_2)$.
  \item If $s\iseq t=um^\Sigma$ with $m$ an O-move in $AC$ then, for any $\na\in\nu(\Sigma)$, the last appearance of $\na$ in $u$ occurs in $AC$.
  \item If $s\iseq t=um^\Sigma$ and $s=s'm^{\Sigma'}$ or $t=t'm^{\Sigma'}$ then $\Sigma'\substore\Sigma$ and $\Sigma[\Sigma']=\Sigma$. Moreover, $\Sigma[\st{s}]=\Sigma[\st{t}]=\Sigma$.
    \item If $s\iseq t=un^\Tau p^\Sigma$ then $\Tau\remv(\Tau\remv\Sigma)\Substore\Sigma$ and $\Tau\remv\Sigma\substorE\Tau$.
\end{enumerate}
\end{lemma}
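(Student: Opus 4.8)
The plan is to prove all eight statements (a)--(h) simultaneously by induction on the length of the interaction sequence $u=s\iseq t$, reading off the structure of $u$ from the Zipper Lemma. At each step the last move of $u$ falls into one of the syntactic cases of the recursive definition of $\iseq$ and $\mix$, and correspondingly its store is produced either by an application of $\nice(-,-,-)$ (when the move is a generalised P-move, i.e.\ a P-move in $AB$ or $BC$) or by $\tilde\Sigma[s\iseq t]$ (when it is an external O-move, i.e.\ an O-move in $AC$). The key observation that makes the induction go through is that every generalised P-move $p^\Sigma$ in $u$ is preceded, within the component ($AB$ or $BC$) it belongs to, by a move $n^\Tau$ with $n^\Tau p^\Sigma$ consecutive in the corresponding S-play; there, by \emph{Prev-PQ}/\emph{Prev-PA} and the remark following the definition of $\st{-}$, we know $\Sigma$ is obtained from the store $\Sigma'$ of $p$'s justifier-side predecessor by the dropping/updating/introducing pattern, and $\nice$ is precisely engineered to replicate this pattern at the level of the interaction store. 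So the heart of each case is an unfolding of $\nice(sn^\Tau\mix t,\Tau,\Sigma)$ (or the $B$- and $C$-variants) together with the S-play conditions on the participating component and the relevant induction hypotheses.

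Concretely, I would order the statements roughly as they are used by one another. Statement (a) (freshly introduced names in a generalised P-move are fresh for the whole preceding interaction) comes first: it follows because the names introduced in $\Sigma$ relative to $\Tau$ are, by \emph{Prev-PQ}(a) for the underlying S-play, fresh for that S-play, and by compatibility $\nu(s)\cap\nu(t)=\varnothing$ and the fact that $\nice$ does not invent names, they are fresh for $s\iseq t$ as well. Statement (h) ($\Tau\remv(\Tau\remv\Sigma)\Substore\Sigma$ and $\Tau\remv\Sigma\substorE\Tau$, i.e.\ the subsequence/suffix discipline for drops and survivors) is the interaction-level image of \emph{Prev-PQ}/\emph{Prev-PA}; here I must check that passing through $\nice$ and through the external-O clause $\tilde\Sigma[s\iseq t]$ preserves the ordering of names in the store — this is where the use of \emph{lists} (not sets) is essential. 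Statement (d) (the \emph{Just-P}/\emph{Just-O}-style inclusions $\Tau\Substore\Sigma$ between a move and its justifier, with equality for answers) and statement (b) (block-form of $\pv{s\iseq t}$ w.r.t.\ any name $\na$, the interaction analogue of the Block form lemma) are then proved together, since block-form is the natural bookkeeping device for tracking when $\na$ enters and leaves. From (b) one gets (c) and (e) (a name dropped at a generalised P-move is closed just before, and does not reappear afterwards) exactly as \emph{Close} was derived from \emph{Block form} for S-plays. Statement (f) (at an external O-move, the last prior occurrence of each store-name is in $AC$) is the interaction version of \emph{Just-O} combined with visibility, and (g) (the store of a component move embeds into the interaction store and is recovered by the update operation $\Sigma[-]$, and $\Sigma[\st s]=\Sigma[\st t]=\Sigma$) is a direct computation from the definitions of $\iseq$, $\mix$ and $[\,-\,]$, using (a) to know no spurious names interfere.

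The main obstacle I expect is statement (h) together with the ordering-preservation it needs, and more generally verifying that the external-O clause $sm_{A(O)}^\Sigma\mix t\defn\tilde\Sigma[s\iseq t]$ — which copies the store of the justifier in the interaction — interacts correctly with the $\nice$-produced stores of generalised P-moves. One has to show that $\tilde\Sigma$, being the interaction-store of $m$'s justifier, already contains exactly the names that \emph{Just-O} permits, in the right order, and that reindexing it by $[s\iseq t]$ (which updates values along the interaction) yields a store satisfying \emph{Val-O} and the nesting discipline; this requires knowing, via (b) and (f) at the justifier, that no name was dropped and re-introduced between the justifier and $m$ along $AC$. Threading the induction hypotheses through the mutual recursion of $\iseq$ and $\mix$ so that each clause has the hypotheses it needs about strictly shorter prefixes — noting that $s\mix t$ refers to $s\iseq t$ of the same length, so the mix-clauses must be handled in lockstep with the interaction-clauses rather than after them — is the bookkeeping subtlety that makes the proof long; the individual store identities are then routine unfoldings. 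These details are deferred to Appendix~\ref{apx:plays}.
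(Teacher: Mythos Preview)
Your overall strategy---induction on $|s\iseq t|$, unfolding the $\nice$/external-O clauses, and reading off component-level facts from \emph{Prev-PQ}/\emph{Prev-PA} and compatibility---is correct and is essentially what the paper does (the paper actually runs separate inductions for (b), (d), (g) and derives the rest directly, but that is only an organisational difference).

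The real gap is in your ordering and, behind it, your account of (h). You propose to prove (h) second, right after (a), on the grounds that it is ``the interaction-level image of \emph{Prev-PQ}/\emph{Prev-PA}'' and that the only thing to check is that $\nice$ preserves the list ordering. That is not enough for the suffix half $\Tau\remv\Sigma\substorE\Tau$. The point is that $\Tau=s'n^{\Tau'}\mix t$ may interleave names coming from $t$ between the names of $\Tau'$; so even though $\Tau'\remv\Sigma'\substorE\Tau'$ holds in the component by \emph{Prev}, it does not follow formally that the same set of names forms a suffix of the larger list $\Tau$. To get that, you must show: if $\na_1,\na_2$ are \emph{consecutive in $\Tau$} and $\na_1$ is dropped, then $\na_2$ is dropped too. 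The paper does this by invoking (c) at the \emph{current} length (so $\na_1$ is closed in $un^\Tau$), together with block-form (b) and the justifier relation (d) on strictly shorter prefixes, to force $\na_2$ to have been introduced in the same component after $\na_1$---hence $\na_2\in\nu(\Tau'\remv\Sigma')$ as well. In particular, (h) at length $n$ genuinely depends on (c) at length $n$ (and on (b), (d), (g) at length $\le n-1$), so it must come last within the inductive step, not second.

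A smaller point: your sketch of (g) as ``a direct computation \dots\ using (a)'' understates it; the paper's argument for $\Sigma'\substore\Sigma$ in the O-move case goes through (d) applied to the justifier, and the value-preservation part traces the last occurrence of each name through the interaction using the inductive hypothesis for (g) itself. None of this is hard, but it is not purely a one-line unfolding.
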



\begin{proposition}[Compositionality]
S-play composition is well defined, that is, if $s\in\Splays{A\arr B},t\in\Splays{B\arr C}$ and $s\asymp t$ then $s;t\in\Splays{A\arr C}$.
\end{proposition}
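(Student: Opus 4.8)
The composite is $s;t=(s\iseq t)\rest AC$, and the plan is to check, one by one, the five defining conditions of an S-play (Definition~\ref{def:splay}) for this justified S-sequence in $A\arr C$, drawing on Lemma~\ref{l:inter} for everything involving stores. First I would dispose of the store-free prerequisites: by the Zipper Lemma together with the standard bookkeeping for composition of justified sequences, $\erase{s;t}$ is a justified sequence in $A\arr C$, the interaction restricts correctly ($(s\iseq t)\rest AB=s$ and $(s\iseq t)\rest BC=t$), the justification pointer of every $AC$-move of $s\iseq t$ is retained in $s;t$ — except that of $i_C$, which is composed through $i_B$ onto $i_A$ — and the generalised P-moves (resp.\ external O-moves) of $s\iseq t$ lying in $AC$ are exactly the P-moves (resp.\ O-moves) of $s;t$. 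All the store-specific content then comes from Lemma~\ref{l:inter}.

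\emph{Init} is immediate: the first move of $s\iseq t$ is $i_A$, and by \emph{Init} for $s$ and the $\mix$-clauses its store, and those of $i_B$ and $i_C$, are $\ee$; in particular the \emph{Just-P} requirement for the pair $(i_A,i_C)$ holds trivially. For any other pair $o^\Sigma\cdots p^\Tau$ with $o\vdash p$ in $s;t$, the justification is inherited directly by $s\iseq t$, so Lemma~\ref{l:inter}(d) gives $\Sigma\Substore\Tau$, and also $\Tau\Substore\Sigma$ when $p$ is an answer — precisely \emph{Just-P}. For \emph{Just-O}, the $\mix$-clauses give the store of an external O-move the shape $\tilde\rho[\,\cdot\,]$, where $\tilde\rho$ is the store of its justifier and the update $\tilde\rho[\,\cdot\,]$ preserves $\dom(\tilde\rho)$; since this justifier is the same move in $s;t$ as in $s\iseq t$, we get $\Sigma\Substore\Tau\Substore\Sigma$ at once.

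The substantive obligations are \emph{Prev-PQ} and \emph{Val-O}, and the main obstacle throughout is that two moves consecutive in $s;t$ need not be consecutive in $s\iseq t$: between an external O-move and the next external move there may sit a run of internal generalised P-moves in $B$, so the single-step statements of Lemma~\ref{l:inter} have to be propagated along such runs. For \emph{Prev-PQ}, with $o^\Sigma$ followed in $s;t$ by a P-question $q_P^\Tau$, I would apply Lemma~\ref{l:inter}(h) to each adjacent pair along the intervening run and chain the resulting $\substorE$- and $\Substore$-relations — using Lemma~\ref{l:inter}(e) to know that a name once dropped does not recur — so as to derive $\Sigma\remv\Tau\substorE\Sigma$ and $\Sigma\remv(\Sigma\remv\Tau)\Substore\Tau$; sub-condition~(a), that the names of $\Tau\remv\Sigma$ are fresh for $s_1o^\Sigma$, comes from Lemma~\ref{l:inter}(a), applied at whichever move along the run introduces each such name; and sub-condition~(b), that the names of $\Sigma\remv\Tau$ are closed in $s_1o^\Sigma$, from Lemma~\ref{l:inter}(c) together with the block-form and close statements Lemma~\ref{l:inter}(b),(e), which confine each name to a contiguous sub-block. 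For \emph{Val-O}, given $p^\Sigma s'o^\Tau$ in $s;t$ with $\na\in(\nu(\Tau)\cap\nu(\Sigma))\setminus\nu(s')$: Lemma~\ref{l:inter}(f) forces the last occurrence of $\na$ before $o$ in $s\iseq t$ to be in $AC$, hence — since $\na\notin\nu(s')$ — at $p$ itself, so that no move of $s\iseq t$ strictly between $p$ and $o$ mentions $\na$; unfolding the store of $o$ via its $\mix$-clause and using Lemma~\ref{l:inter}(g) to pin the value at $\na$ to the one recorded in $\st{s}$ or $\st{t}$ then gives $\Tau(\na)=\Sigma(\na)$. Having verified all five conditions, $s;t\in\Splays{A\arr C}$, as required.
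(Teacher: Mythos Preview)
Your proof is correct and follows the same approach as the paper: verify the five conditions of Definition~\ref{def:splay} for $(s\iseq t)\rest AC$ by appealing to the appropriate parts of Lemma~\ref{l:inter}. The paper's proof is a one-paragraph checklist citing parts~(d) for \emph{Just-P}, the definition for \emph{Just-O}, parts~(a,c,h) for \emph{Prev-PQ}, and part~(f) with the definition for \emph{Val-O}; your write-up simply spells out more of the mechanics (notably the chaining along internal $B$-moves between consecutive $AC$-moves), and the extra citations of (b), (e), (g) are harmless but not strictly needed beyond what (a), (c), (h), (f) and the $\mix$-clauses already give.
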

\begin{proof}
We need to verify the 5 conditions of Definition~\ref{def:splay}. Init is straightforward. Just-P follows from part~(d) of the Lemma~\ref{l:inter}. Just-O follows directly from the definition of interaction sequences. Prev-PQ follows from parts (a,c,h) Lemma~\ref{l:inter}. Finally, Val-O follows from part~(f) of Lemma~\ref{l:inter} and the definition of interaction sequences.
\end{proof}

\subsection{Associativity}

We next show that composition of S-plays is associative. 
We first extend interactions to triples of S-plays. For $s\in\Splays{A\arr B},t\in\Splays{B\arr C},r\in\Splays{C\arr D}$ with $(s;t)\asymp r$,  $s\asymp(t;r)$ and $\nu(s)\cap\nu(r)=\varnothing$, we define $s\iseq t\iseq r$ and $s\mix t\mix r$ as follows,
\begin{align*}
  \ee\iseq\ee\iseq\ee &\defn \ee & \ee\mix\ee\mix\ee &\defn\ee \\
  sm^\Sigma_A\iseq t\iseq r &\defn (s\iseq t\iseq r)m_A^{sm^\Sigma_A\mix t\mix r} &
  sn^\Tau m^\Sigma_{A(P)}\mix t\mix r &\defn \nice(sn^\Tau\mix t\mix r,\Tau,\Sigma) \\ &&
  sm^\Sigma_{A(O)}\mix t\mix r &\defn \tilde\Sigma[s\mix t\mix r] \\
  sm^\Sigma_B\iseq tm^{\Sigma'}_B\iseq r &\defn (s\iseq t\iseq r)m_B^{sm^\Sigma_B\mix tm^{\Sigma'}_B\mix r} &
  sn^\Tau m^\Sigma_{B(P)}\mix tm^{\Sigma'}_{B(O)}\mix r &\defn \nice(sn^\Tau\mix t\mix r,\Tau,\Sigma) \\ &&
  sm^{\Sigma'}_{B(O)}\mix tn^\Tau m^\Sigma_{B(P)}\mix r &\defn \nice(s\mix tn^\Tau\mix r,\Tau,\Sigma) \\
  s\iseq tm^\Sigma_C\iseq rm^{\Sigma'}_C &\defn (s\iseq t\iseq r)m_C^{s\mix tm^\Sigma_C\mix rm^{\Sigma'}_C} &
  s\mix tn^\Tau m^\Sigma_{C(P)}\mix rm^{\Sigma'}_{C(O)} &\defn \nice(s\mix tn^\Tau\mix r,\Tau,\Sigma) \\ &&
  s\mix tm^{\Sigma'}_{C(O)}\mix rn^\Tau m^\Sigma_{C(P)} &\defn \nice(s\mix t\mix rn^\Tau,\Tau,\Sigma) \\
  s\iseq t\iseq rm^\Sigma_D &\defn (s\iseq t\iseq r)m_D^{s\mix t\mix rm^\Sigma_D} &
  s\mix t\mix rn^\Tau m^\Sigma_{D(P)} &\defn \nice(s\mix t\mix rn^\Tau,\Tau,\Sigma) \\ &&
  s\mix t\mix rm^\Sigma_{D(O)} &\defn \tilde\Sigma[s\mix t\mix r]
\end{align*}
where $\tilde\Sigma$ is the store of the move justifying $m_{A(O)}^\Sigma$ and $m_{D(O)}^\Sigma$ respectively.

The next lemma proves a form of associativity in $\nice$ that will be used in the proof of the next
proposition.  Its proof is delegated to Appendix~\ref{apx:plays}.

\begin{lemma}\label{l:nice}
Let $\Sigma_1,\Sigma_2,\Sigma_3,\Sigma_4,\Sigma_5$ be stores such that, for any $\na$,
\begin{enumerate}\renewcommand{\theenumi}{\rm\alph{enumi}}
  \item $\na\in\nu(\Sigma_5\remv\Sigma_4)\implies\na\notin\nu(\Sigma_1)\cup\nu(\Sigma_2)\cup\nu(\Sigma_3)$,
  \item $\na\in\nu(\Sigma_1)\cap\nu(\Sigma_4)\implies \na\in\nu(\Sigma_2)$.
\end{enumerate}
Then, $\nice(\Sigma_1,\Sigma_2,\nice(\Sigma_3,\Sigma_4,\Sigma_5))=\nice(\nice(\Sigma_1,\Sigma_2,\Sigma_3),\Sigma_4,\Sigma_5)$.
\end{lemma}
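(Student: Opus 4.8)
The plan is to prove Lemma~\ref{l:nice} by unfolding both sides according to the definition of $\nice$ and comparing the resulting stores name-by-name, tracking for each name $\na$ whether it survives, is dropped, or is introduced along the way. Recall that $\nice(\Sigma_0,\Sigma_1,\Sigma_2)=\Sigma_0[\Sigma_2]\remv(\Sigma_1\remv\Sigma_2)+(\Sigma_2\remv\Sigma_1)$, so the right-hand component $\nice(\Sigma_3,\Sigma_4,\Sigma_5)$ has domain $(\nu(\Sigma_3)\setminus\nu(\Sigma_4\remv\Sigma_5))\cup\nu(\Sigma_5\remv\Sigma_4)$ and carries, at each surviving name from $\Sigma_3$, the value taken from $\Sigma_5$ if present there and the old value of $\Sigma_3$ otherwise, while names in $\Sigma_5\remv\Sigma_4$ carry their $\Sigma_5$-values. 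I would first establish a normal-form description of $\nice(\Sigma_0,\Sigma_1,\Sigma_2)$: its domain as a list (the sublist of $\dom\Sigma_0$ obtained by deleting names in $\nu(\Sigma_1)\setminus\nu(\Sigma_2)$, followed by $\dom(\Sigma_2\remv\Sigma_1)$) together with its value function. This reduces the whole lemma to two separate checks — equality of domains as lists, and equality of values at each common name.

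For the value check I would do a straightforward case split on where a name $\na$ ``comes from'' in the final store. The interesting cases are: $\na\in\nu(\Sigma_1)$ (so $\na$ passes through the outer $\nice$ unchanged except for value updates), where hypothesis~(b) is used to guarantee that if $\na$ also lies in $\nu(\Sigma_4)$ then it lies in $\nu(\Sigma_2)$, so the update order through $\Sigma_2$, $\Sigma_3$, $\Sigma_4$, $\Sigma_5$ is consistent on both sides and one shows the last store among $\Sigma_2,\dots,\Sigma_5$ containing $\na$ determines its value identically; and $\na\in\nu(\Sigma_5\remv\Sigma_4)$, where hypothesis~(a) forces $\na\notin\nu(\Sigma_1)\cup\nu(\Sigma_2)\cup\nu(\Sigma_3)$, so on the right $\na$ is freshly introduced by the inner $\nice$ and then survives the outer one (it is not in $\Sigma_1$, so not dropped, and since $\na\notin\nu(\Sigma_2)$ it is not ``updated away''), carrying its $\Sigma_5$-value, while on the left $\na$ is absent from $\nice(\Sigma_1,\Sigma_2,\Sigma_3)$ and from $\Sigma_4$, hence introduced directly by the outermost $\nice$ with the same $\Sigma_5$-value. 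A symmetric bookkeeping argument handles the domain-as-list equality: both sides delete exactly the names in $(\nu(\Sigma_1)\setminus\nu(\Sigma_2))\cup(\nu(\Sigma_4)\setminus\nu(\Sigma_5))$ from $\dom\Sigma_0$ — using~(a) to see that these deletions do not interfere with the appended fresh segments — and then append the names introduced in $\Sigma_2$ then in $\Sigma_5$ in the correct order, where again~(a) is what licenses reordering/merging the two appended blocks consistently.

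I would carry this out by induction on the structure of $\Sigma_0$ (the list $(\na,i)::\Sigma_0'$), peeling off the head name and invoking the inductive hypothesis on the tail, after checking that hypotheses~(a) and~(b) are inherited by $\Sigma_0'$ (they are, since weakening $\Sigma_0$ to a sublist only weakens the hypothesis about names of $\Sigma_1$ — actually~(b) mentions $\Sigma_1$, not $\Sigma_0$, so it is untouched, and~(a) is likewise about $\Sigma_1,\Sigma_2,\Sigma_3$ only, so both hypotheses are literally unchanged). The base case $\Sigma_0=\ee$ is a direct computation. The main obstacle I anticipate is not any single deep idea but the sheer bookkeeping: keeping the list structure (as opposed to set structure) of domains straight through nested applications of $[\,\cdot\,]$, $\remv$ and $+$, and making sure the $+$ operations are always defined (no name clashes) — this is precisely where hypothesis~(a) does the real work, since without it the freshly introduced names of $\Sigma_5$ could collide with names already present on the left-hand side. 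I would therefore state and prove a small auxiliary fact first: under hypothesis~(a), $\nu(\Sigma_5\remv\Sigma_4)$ is disjoint from the domain of $\nice(\Sigma_1,\Sigma_2,\Sigma_3)$, so that the outer $+$ on the right-hand side is well-defined; the analogous well-definedness on the left is immediate from the shape of the inner $\nice$.
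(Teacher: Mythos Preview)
Your high-level plan---unfold both sides, then separately match domains (as lists) and values via a case split on where each name originates---is exactly the paper's approach. The paper carries this out by direct algebraic manipulation: it rewrites each side into the form $(\text{something from }\Sigma_1)+(\Sigma_3[\Sigma_5]\remv\Sigma_{45})\remv\Sigma_2+\Sigma_{54}$, where hypothesis~(a) is used to peel off the $\Sigma_{54}$ tail, and then shows the two ``$\Sigma_1$-parts'' agree by a domain computation in which hypothesis~(b) is invoked precisely once.

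Where your proposal diverges is the induction on ``$\Sigma_0$''. This layer is both unnecessary and ill-posed. There is no $\Sigma_0$ in the statement: on the left the outer first argument is $\Sigma_1$, but on the right it is $\nice(\Sigma_1,\Sigma_2,\Sigma_3)$, so peeling a head name off $\Sigma_1$ does not leave you with an instance of the same lemma on the right-hand side. Your own parenthetical (``actually~(b) mentions $\Sigma_1$, not $\Sigma_0$\dots'') already signals the mismatch. Drop the induction; the direct unfolding suffices, and the list structure is handled automatically once you observe that both sides decompose as (sublist of $\dom\Sigma_1$) $+$ (sublist of $\dom(\Sigma_3\remv\Sigma_2)$) $+$ $\dom(\Sigma_5\remv\Sigma_4)$.

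One small slip: you write ``append the names introduced in $\Sigma_2$ then in $\Sigma_5$''. The names appended are those of $\Sigma_3\remv\Sigma_2$ and $\Sigma_5\remv\Sigma_4$; $\Sigma_2$ and $\Sigma_4$ play the role of \emph{previous} stores against which novelty is measured, not sources of new names.
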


\begin{proposition}[Associativity]\label{p:assoc}
If $s_1\in\Splays{A_1\arr A_2},s_2\in\Splays{A_2\arr A_3}$ and $s_3\in\Splays{A_3\arr A_4}$ with $s_1;s_2\asymp s_3$ and $s_1\asymp s_2; s_3$ then:
\begin{itemize}
\item  $(s_1;s_2);s_3 = (s_1\iseq s_2\iseq s_3)\rest A_1A_4 = s_1;(s_2;s_3)$,
\item  $\nice((s_1;s_2)\mix s_3,\st{s_1;s_2},s_1\mix s_2) = s_1\mix s_2\mix s_3 = \nice(s_1\mix(s_2;s_3),\st{s_2;s_3},s_2\mix s_3)$.
\end{itemize}
\end{proposition}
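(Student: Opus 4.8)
The proof of Proposition~\ref{p:assoc} is a standard ``interaction sequences are associative'' argument, refined to track the store annotations. The backbone is the observation that, up to erasure, $s_1 \iseq s_2 \iseq r$ (writing $r$ for $s_3$) is the unique interaction sequence whose restrictions to $A_1A_2A_3$, $A_2A_3A_4$, $A_1A_2$-etc.\ agree with $s_1,s_2,r$ in the obvious way; this part is exactly the classical result (Hyland--Ong, Abramsky--McCusker) and I would cite it rather than reprove it. So the content is entirely in the stores. The strategy is to prove, by simultaneous induction on the length of the interaction, the two displayed equalities together with the identity $(s_1;s_2)\iseq r = (s_1\iseq s_2\iseq r)\rest A_1A_3A_4$ and $s_1\iseq(s_2;r) = (s_1\iseq s_2\iseq r)\rest A_1A_2A_4$ (the ``one interaction, three zippings'' principle at the level of S-moves, including stores). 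Once those four statements are all carried in the induction hypothesis, each inductive step is a case analysis on whether the last move lies in $A_1$, $A_2$, $A_3$ or $A_4$ and whether it is a P- or O-move there; in each case one unfolds the definition of $\iseq$/$\mix$ (and of the ternary $\iseq$/$\mix$ given just above) on both sides and checks they produce the same S-move, i.e.\ the same underlying move (by the classical result) decorated with the same store.

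\textbf{Key steps, in order.} First, I would set up the four-way induction hypothesis described above and dispatch the base case ($\ee$) trivially. Second, for a last move that is ``internal'' to one of $A_1\arr A_2$, $A_2\arr A_3$, $A_3\arr A_4$ as a P-move: here the store on the composite side is computed by a $\nice(\cdot,\Tau,\Sigma)$ applied to the \emph{composite} mix, whereas on the ternary side it is a $\nice(\cdot,\Tau,\Sigma)$ applied to the ternary mix; using the IH that composite-mix equals a $\nice$ of ternary-mix (the second displayed equality, in the appropriate lower-arity instance), the step reduces precisely to Lemma~\ref{l:nice}. Third, for a last move that is an external O-move in $A_1A_4$: both sides set the store to $\tilde\Sigma[\,\cdot\,]$ where $\tilde\Sigma$ is the store of the justifier, and $\tilde\Sigma$ is the same on both sides by IH (it was emitted earlier); the $[\,\cdot\,]$-update is over the interaction/composite respectively, and these agree by the restriction part of the IH. Fourth, the internal O-moves and the $B$-synchronised P/O pairs are handled the same way, mechanically. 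Throughout, the side-conditions needed to apply Lemma~\ref{l:nice}---its hypotheses (a) that names introduced at the innermost step are fresh for everything to the left, and (b) that names surviving both outer stores are present in the middle one---are supplied by Lemma~\ref{l:inter}(a),(c),(h) together with the S-play conditions (\emph{Prev-PQ}, \emph{Just-P}, \emph{Just-O}), applied to the relevant sub-interactions; I would isolate these as a short preliminary observation so the main induction stays clean.

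\textbf{Main obstacle.} The genuine difficulty is the bookkeeping for applying Lemma~\ref{l:nice} in the ``last move is a generalised P-move'' case: one has to match up \emph{which} five stores play the roles of $\Sigma_1,\dots,\Sigma_5$ in the nested $\nice$ on each side, and verify hypotheses (a) and (b) for that particular instantiation. The roles are not symmetric between the ``$s_1;s_2$ then $r$'' bracketing and the ``$s_1$ then $s_2;r$'' bracketing, so this is where a naive induction stalls; the remedy is exactly to have carried the \emph{second} displayed equality (the $\nice$-identity relating composite-mix to ternary-mix) as part of the induction hypothesis from the start, rather than deriving it afterwards, so that it is available in the lower-arity form precisely when the P-move step needs it. Everything else---alternation of cases, the underlying-move equality, the O-move updates---is routine given Lemma~\ref{l:inter} and the classical associativity of store-free interaction.
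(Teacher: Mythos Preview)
Your proposal is correct and follows essentially the same route as the paper: induction on $|s_1\iseq s_2\iseq s_3|$, carrying both displayed identities simultaneously, with the generalised-P-move step reduced to Lemma~\ref{l:nice} (using Lemma~\ref{l:inter}(g) to discharge its hypotheses) and the external-O-move step handled directly from the definitions. The one minor difference is that you propose to carry the auxiliary identities $(s_1;s_2)\iseq s_3 = (s_1\iseq s_2\iseq s_3)\rest A_1A_3A_4$ and its mirror as part of the induction hypothesis, whereas the paper does not do this explicitly and instead unfolds the binary definitions directly at each step; your strengthening is harmless and arguably cleaner, but not strictly necessary.
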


\begin{proof}
By induction on $|s_1\iseq s_2\iseq s_3|$. The base case is trivial. We examine the following inductive cases; the rest are similar.\smallskip

\noindent$-$ $(s_1m_{A_1}^{\Sigma};s_2);s_3=((s_1;s_2);s_3)m_{A_1}^{\Sigma'}\overset{\text{IH}}{=}(s_1\iseq s_2\iseq s_3)m_{A_1}^{\Sigma'}\rest A_1A_4$ with $\Sigma'=(s_1m_{A_1}^{\Sigma};s_2)\mix s_3$. Moreover, as $\st{s_1m_{A_1}^{\Sigma};s_2}=s_1m_{A_1}^{\Sigma}\mix s_2$ and using Lemma~\ref{l:inter}(g),
\[
\nice((s_1;s_2m_{A_1}^{\Sigma});s_3,\st{s_1m_{A_1}^{\Sigma}; s_2},s_1m_{A_1}^{\Sigma}\mix s_2)={\Sigma'[s_1m_{A_1}^{\Sigma}\mix s_2]}=\Sigma'.
\]
We still need to show that $\Sigma'=s_1m_{A_1}^{\Sigma}\mix s_2\mix s_3$. If $m_{A_1}$ is an O-move this is straightforward. If a P-move and, say, $s_1=s_1'n^{\Tau}$ then $\Sigma'=\nice((s_1;s_2)\mix s_3,\st{s_1;s_2},\nice(s_1\mix s_2,\Tau,\Sigma))$ and
$s_1m_{A_1}^{\Sigma}\mix s_2\mix s_3=\nice(s_1\mix s_2\mix s_3,\Tau,\Sigma)\overset{\text{IH}}{=}
\nice(\nice((s_1;s_2)\mix s_3,\st{s_1;s_2},s_1\mix s_2),\Tau,\Sigma)$. These are equal since they satisfy the hypotheses of Lemma~\ref{l:nice}.
Moreover, 
\[s_1m_{A_1}^{\Sigma};(s_2;s_3)=(s_1;(s_2;s_3))m_{A_1}^{\Sigma''}\overset{\text{IH}}{=}(s_1\iseq
s_2\iseq s_3)m_{A_1}^{\Sigma''}\rest A_1A_4
\] with $\Sigma''=s_1m_{A_1}^\Sigma\mix(s_2;s_3)$. Note that
$\st{s_2;s_3}=s_2\mix s_3$, so it suffices to show that
$\Sigma''=s_1m_{A_1}^\Sigma\mix s_2\mix s_3$. If $m_{A_1}$ an O-move
then this is straightforward, otherwise 
\[\Sigma''=\nice(s_1\mix(s_2;s_3),\Tau,\Sigma)\overset{\text{IH}}{=}\nice(s_1\mix
s_2\mix s_3,\Tau,\Sigma)=s_1m_{A_1}^\Sigma\mix s_2\mix
s_3\,.\]\smallskip

\noindent$-$ $(s_1m_{A_2}^{\Sigma_1};s_2m_{A_2}^{\Sigma_2});s_3=(s_1;s_2);s_3\overset{\text{IH}}{=}(s_1\iseq s_2\iseq s_3)\rest A_1A_4=(s_1m_{A_2}^{\Sigma_1}\iseq s_2m_{A_2}^{\Sigma_2}\iseq s_3)\rest A_1A_4$. Assume WLOG that $m_{A_2}$ is a P-move in $A_1A_2$, so $\st{s_2;s_3}=s_2\mix s_3$, and suppose $s_1=s_1'n^\Tau$. Then,
\begin{align*}
\nice((&s_1m_{A_2}^{\Sigma_1};s_2m_{A_2}^{\Sigma_2})\mix s_3,\st{s_1m_{A_2}^{\Sigma_1};s_2m_{A_2}^{\Sigma_2}},s_1m_{A_2}^{\Sigma_1}\mix s_2m_{A_2}^{\Sigma_2})\\
&= \nice((s_1;s_2)\mix s_3,\st{s_1;s_2},\nice(s_1\mix s_2,\Tau,\Sigma_1)) \\
 s_1m_{A_2}^{\Sigma_1}\mix s_2m_{A_2}^{\Sigma_2}\mix s_3 &= \nice(s_1\mix s_2\mix s_3,\Tau,\Sigma_1)\overset{\text{IH}}{=}
\nice(\nice((s_1;s_2)\mix s_3,\st{s_1;s_2},s_1\mix s_2),\Tau,\Sigma_1)
\end{align*}
and equality follows from Lemma~\ref{l:nice}. Moreover,
\begin{align*}
& \nice(s_1m_{A_2}^{\Sigma_1}\mix(s_2m_{A_2}^{\Sigma_2};s_3),\st{s_2m_{A_2}^{\Sigma_2};s_3},s_2m_{A_2}^{\Sigma_2}\mix s_3)=
(s_1m_{A_2}^{\Sigma_1}\mix(s_2m_{A_2}^{\Sigma_2};s_3))[s_2m_{A_2}^{\Sigma_2}\mix s_3] \\
&\quad\overset{\text{lm.\,\ref{l:inter}(g)}}{=} s_1m_{A_2}^{\Sigma_1}\mix(s_2m_{A_2}^{\Sigma_2};s_3)= \nice(s_1\mix(s_2;s_3),\Tau,\Sigma_1)\\
& s_1m_{A_2}^{\Sigma_1}\mix s_2m_{A_2}^{\Sigma_2}\mix s_3 = \nice(s_1\mix s_2\mix s_3,\Tau,\Sigma_1)\overset{\text{IH}}{=}
\nice(\nice(s_1\mix(s_2;s_3),\st{s_2;s_3},s_2\mix s_3),\Tau,\Sigma_1)\\
&\quad=\nice((s_1\mix(s_2;s_3))[s_2\mix s_3],\Tau,\Sigma_1)\overset{\text{lm.\,\ref{l:inter}(g)}}{=}\nice(s_1\mix(s_2;s_3),\Tau,\Sigma_1)
\end{align*}
as required.
\end{proof}

\subsection{The categories \texorpdfstring{$\Scat$}{S} and \texorpdfstring{$\Scatinn$}{Sinn}}

We next show that S-strategies and arenas form a category, which we call $\mathcal{S}$, while innocent S-strategies form a wide subcategory of $\mathcal{S}$.
%
We start this section with a lemma on strong nominal sets. Recall that,
for a nominal set $X$ and $x,x'\in X$, we write $x\sim x'$ if there exists a permutation $\pi$ such that $x=\pi\actn x'$.

\begin{lemma}[Strong Support Lemma~\cite{Tze09}]
Let $X$ be a strong nominal set and let $x_i,y_i,z_i\in X$ with $\nu(y_i)\cap\nu(z_i)\subseteq\nu(x_i)$, for $i=1,2$.
Then, $(x_1,y_1)\sim(x_2,y_2)$ and $(x_1,z_1)\sim(x_2,z_2)$ imply $(x_1,y_1,z_1)\sim(x_2,y_2,z_2)$. \qed
\end{lemma}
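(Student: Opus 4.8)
The plan is to manufacture a single finite permutation $\tau$ that witnesses all three similarities at once. Fix permutations $\pi,\rho\in\permg(\A)$ with $\pi\cdot(x_1,y_1)=(x_2,y_2)$ and $\rho\cdot(x_1,z_1)=(x_2,z_2)$, so $\pi\cdot x_1=x_2=\rho\cdot x_1$, $\pi\cdot y_1=y_2$, $\rho\cdot z_1=z_2$. From $\rho^{-1}\pi\cdot x_1=x_1$ and the strong-support property of $x_1$, the permutation $\rho^{-1}\pi$ fixes $\nu(x_1)$ pointwise; equivalently $\pi$ and $\rho$ agree on $\nu(x_1)$. I would also record the standard equivariance facts $\nu(\sigma\cdot w)=\sigma(\nu(w))$ and $\nu((u,v))=\nu(u)\cup\nu(v)$, which give $\nu(x_2)=\pi(\nu(x_1))=\rho(\nu(x_1))$, $\nu(y_2)=\pi(\nu(y_1))$ and $\nu(z_2)=\rho(\nu(z_1))$.

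Next, set $A_i=\nu(x_i)\cup\nu(y_i)$ and $B_i=\nu(z_i)$. The hypothesis $\nu(y_i)\cap\nu(z_i)\subseteq\nu(x_i)$ yields $A_i\cap B_i=\nu(x_i)\cap\nu(z_i)$ for $i=1,2$. Define the finite partial map $g\colon A_1\cup B_1\to\A$ by letting $g=\pi$ on $A_1$ and $g=\rho$ on $B_1$; this is well defined because $\pi$ and $\rho$ agree on $A_1\cap B_1\subseteq\nu(x_1)$. Since $g(A_1)=\pi(A_1)=\nu(x_2)\cup\nu(y_2)=A_2$ and $g(B_1)=\rho(B_1)=\nu(z_2)=B_2$, the image of $g$ is exactly $A_2\cup B_2$, so $g$ is automatically onto $A_2\cup B_2$; the one thing left to check — and this is the crux of the argument — is that $g$ is injective. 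The only nontrivial case is $\pi(a)=\rho(b)$ with $a\in A_1\setminus B_1$ and $b\in B_1\setminus A_1$. Then $\pi(a)\in A_2\cap B_2=\nu(x_2)\cap\nu(z_2)$, so from $\nu(x_2)=\pi(\nu(x_1))$ we get $a\in\nu(x_1)$, and from $\nu(x_2)=\rho(\nu(x_1))$ we get $\rho^{-1}\pi(a)\in\nu(x_1)$; since $\rho^{-1}\pi$ fixes $\nu(x_1)$ pointwise and $a\in\nu(x_1)$, we conclude $\pi(a)=\rho(a)$, hence $\rho(a)=\rho(b)$ and $a=b$, contradicting $a\ne b$. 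Thus $g$ is a bijection $A_1\cup B_1\to A_2\cup B_2$, and any bijection between two finite subsets of the infinite set $\A$ extends to a finite permutation $\tau\in\permg(\A)$ (e.g. by the identity outside $A_1\cup B_1\cup A_2\cup B_2$ together with a bijection $(A_2\cup B_2)\setminus(A_1\cup B_1)\to(A_1\cup B_1)\setminus(A_2\cup B_2)$, which exist since the two sets have equal cardinality).

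Finally, $\tau$ agrees with $\pi$ on $A_1\supseteq\nu((x_1,y_1))$, so $\tau^{-1}\pi$ fixes $(x_1,y_1)$ and therefore $\tau\cdot x_1=\pi\cdot x_1=x_2$ and $\tau\cdot y_1=\pi\cdot y_1=y_2$; likewise $\tau$ agrees with $\rho$ on $B_1=\nu(z_1)$, so $\tau\cdot z_1=\rho\cdot z_1=z_2$. Hence $\tau\cdot(x_1,y_1,z_1)=(x_2,y_2,z_2)$, i.e. $(x_1,y_1,z_1)\sim(x_2,y_2,z_2)$.

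I expect the injectivity of $g$ to be the only genuine obstacle; everything else is routine bookkeeping with support-equivariance and finite permutations. It is exactly in that step that the hypothesis $\nu(y_i)\cap\nu(z_i)\subseteq\nu(x_i)$ — needed for \emph{both} $i=1$ and $i=2$ — together with the strong-support property of the $x_i$ is used, and weakening either of these would make the claim fail.
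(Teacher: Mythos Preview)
Your proof is correct. The paper does not supply its own proof of this lemma: it is quoted from~\cite{Tze09} and marked with \qed immediately after the statement, so there is nothing to compare against. Your argument is the standard one --- glue the two witnessing permutations on the union of the relevant supports, use strong support of $x_1$ to see they agree on the overlap, verify injectivity (your crux step is right: the collision forces $a\in\nu(x_1)$, whence $\pi(a)=\rho(a)$), and extend the resulting finite bijection to a permutation. The bookkeeping around $A_i\cap B_i=\nu(x_i)\cap\nu(z_i)$ and the equivariance of supports is all correct.
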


We proceed to show compositionality of S-strategies. The \boldemph{interaction} of S-strategies $\sigma:A\arr B$ and $\tau:B\arr C$ is defined by:
\[ \sigma\iseq\tau\defn\{s\iseq t\ |\ s\in\sigma\land t\in\tau\land s\asymp t\} \]

First, some lemmas for determinacy.

\begin{lemma}
If $s_1\iseq t_1,s_2\iseq t_2\in\Sinter{A,B,C}$ then $s_1\iseq t_1=s_2\iseq t_2$ implies $s_1=s_2$ and $t_1=t_2$.
\nt{Consequently, if $s_1\iseq t_1\sim s_2\iseq t_2$ then $(s_1,t_1)\sim(s_2,t_2)$.}
\end{lemma}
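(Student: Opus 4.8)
The statement to prove is that the map taking a pair $(s,t)$ with $s\asymp t$ to its interaction $s\iseq t$ is injective, and moreover reflects name-invariance: $s_1\iseq t_1\sim s_2\iseq t_2$ implies $(s_1,t_1)\sim(s_2,t_2)$. The plan is to prove the first part by induction on the length of the interaction sequence $u = s_1\iseq t_1 = s_2\iseq t_2$, exploiting the Zipper Lemma to recover which component each move of $u$ came from, and then to deduce the second (nominal) part from the first together with the Strong Support Lemma.

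For the injectivity part, I would argue as follows. Write $u = s_1\iseq t_1 = s_2\iseq t_2$. The base case $u=\ee$ is immediate, since then $s_1\rest B = t_1 = \ee$ forces $s_1=\ee$ (an S-play whose $B$-projection is empty and that is a justified sequence in $A\arr B$ can only be empty, as its first move is initial in $B$), and likewise $s_2=t_2=\ee$. Wait---I should be slightly careful: a play in $A\to B$ starts with an initial move of $A$, not $B$. In fact the relevant observation is that by the Zipper Lemma the last move of a nonempty interaction sequence lies in exactly one of the three "interfaces'' and this is determined by the move and the $B$-projection of $u$ alone. So for the inductive step, write $u = u'm$. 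By the Zipper Lemma applied to $u'$ (or directly reading off whether $m\in M_A$, $M_B$, or $M_C$ via the erasure and the structure of $A\arr B$ and $B\arr C$), we know whether $m$ is the last move of $s_i$ only, of $t_i$ only, or of both. This classification depends only on $u$, hence is the same for $i=1$ and $i=2$. Removing $m$ from the appropriate component(s) yields $s_i', t_i'$ with $s_i'\iseq t_i' = u'$, and also $s_i'\asymp t_i'$ (compatibility is inherited by taking prefixes, since erasures still agree on $B$ and the name sets only shrink). By the induction hypothesis $s_1'=s_2'$ and $t_1'=t_2'$. It then remains to check that the store decorating $m$ in $s_i$ (resp.\ $t_i$) is determined; but in the interaction-sequence notation the store on a component move that also occurs in $u$ is literally a constituent of the $\mix$ expression defining the store of $m$ in $u$, and since $u$ and the common prefixes are fixed, these component-stores are forced. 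Concretely: if $m$ is a generalised P-move, its store in the component equals the store appearing in the corresponding clause of the definition of $\iseq$, which is recoverable from $u$; if $m$ is a generalised O-move it appears in both a component of $s$ and of $t$ (or is external), and again the store is pinned down. The justification pointer of $m$ in $s_i$, $t_i$ is inherited from its pointer in $u$, hence also common. Thus $s_1=s_2$ and $t_1=t_2$.

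For the nominal consequence, suppose $s_1\iseq t_1 \sim s_2\iseq t_2$, say $\pi\actn(s_1\iseq t_1) = s_2\iseq t_2$. Since permutations act componentwise on moves-with-store and commute with the recursive definition of $\iseq$ (all constructions---$\nice$, store update, the $\mix$ operations---are equivariant, as they are built from the equivariant store operations of the previous subsection), we have $\pi\actn(s_1\iseq t_1) = (\pi\actn s_1)\iseq(\pi\actn t_1)$, provided $\pi\actn s_1 \asymp \pi\actn t_1$, which holds since compatibility is a conjunction of an equivariant condition ($\erase{s}\rest B = \erase{t}\rest B$, and erasure is equivariant because moves carry no names) and the disjointness $\nu(s)\cap\nu(t)=\varnothing$, which is preserved by applying the same permutation to both. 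Hence $(\pi\actn s_1)\iseq(\pi\actn t_1) = s_2\iseq t_2$, and by the injectivity just proved, $\pi\actn s_1 = s_2$ and $\pi\actn t_1 = t_2$, i.e.\ $(s_1,t_1)\sim(s_2,t_2)$ via the single permutation $\pi$. (If one prefers a more economical argument avoiding the remark that a single $\pi$ works, one can instead invoke the Strong Support Lemma with $x_i = s_i$, $y_i = t_i$ and the observation that $\nu(t_i)\cap\nu(s_i) = \varnothing \subseteq \nu(s_i)$, but the direct equivariance argument is cleaner here.)

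**Main obstacle.** The only real subtlety is the bookkeeping in the inductive step: verifying that the component *store* on the newly added move $m$ is genuinely recovered from $u$ alone. This requires unwinding the relevant clause in the definition of $s\iseq t$ / $\mix$ for each of the cases (last move in $A$, in $C$, in $B$ with a P-move in $AB$, in $B$ with a P-move in $BC$) and checking that the equation defining the store of $m$ in $u$ can be solved for the component store---or, more simply, observing that in each clause the component store appears as a literal subterm of the notation for $u$, so it is syntactically present and hence determined. I expect this to be routine once the Zipper Lemma has been used to fix the case split, so the proof is short; it is essentially the standard argument that interaction sequences uniquely decompose, adapted to carry the store annotations along.
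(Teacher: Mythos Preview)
Your approach is correct and matches the paper's: induction on $|s_1\iseq t_1|$ for injectivity, then equivariance of $\iseq$ for the nominal part (exactly as the paper does, without invoking the Strong Support Lemma).

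One small caution on your inductive step: the remark that the component store ``appears as a literal subterm'' of the store in $u$ is not accurate---$\Sigma'$ only occurs inside $\nice(s'\mix t',\st{s'},\Sigma')$, and $\nice$ is not a syntactic constructor but a store operation. Your first alternative (solving the equation for $\Sigma'$) is the right one. Concretely, from $\Sigma=\nice(X,Y,\Sigma')$ with $X=s'\mix t'$, $Y=\st{s'}$ fixed by the induction hypothesis, you recover $\dom(\Sigma')$ as follows: the suffix of $\dom(\Sigma)$ consisting of names fresh for $X$ is exactly $\dom(\Sigma'\setminus Y)$ (by Prev-PQ(a) and $\nu(s)\cap\nu(t)=\varnothing$); the remaining prefix of $\dom(\Sigma)$ is $\dom(X)\setminus\nu(Y\setminus\Sigma')$, which determines $\nu(Y\setminus\Sigma')$ and hence (using $Y\setminus\Sigma'\substorE Y$) the prefix $\dom(\Sigma')\cap\nu(Y)$ of $\dom(Y)$. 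The values then come from Lemma~\ref{l:inter}(g). For external O-moves the argument is even simpler via Just-O and Val-O. The paper suppresses all of this under ``straightforward induction'', so your level of detail is already more than adequate; just drop the ``literal subterm'' shortcut.
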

\begin{proof}
The former part of the claim is shown by straightforward induction on the length of the interactions. For the latter part, if $s_1\iseq t_1\sim s_2\iseq t_2$ then there is some $\pi$ such that $s_1\iseq t_1=\pi\cdot( s_2\iseq t_2)=(\pi\cdot s_2)\iseq(\pi\cdot t_2)$. Hence, by the former part,  $(s_1,t_1)=(\pi\cdot s_2,\pi\cdot t_2)$, from which we obtain $(s_1,t_1)\sim(s_2,t_2)$.
\end{proof}

\begin{lemma}\label{l:det_inter}
If $\sigma:A\arr B$, $\tau:B\arr C$ are S-strategies and $u_1m_1^{\Sigma_1},u_2m_2^{\Sigma_2}\in\sigma\iseq\tau$ then $u_1\sim u_2$ and $m_1$ a generalised P-move imply $u_1m_1^{\Sigma_1}\sim u_2m_2^{\Sigma_2}$.
\end{lemma}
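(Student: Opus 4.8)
The plan is to prove Lemma~\ref{l:det_inter} by induction on the length of the interaction sequences $u_1m_1^{\Sigma_1}$ and $u_2m_2^{\Sigma_2}$, using the Zipper Lemma to case-split on where the generalised P-move $m_1$ lives (in $AB$, in $BC$, or shared at $B$), and appealing to the Determinacy condition of the underlying S-strategies $\sigma$ and $\tau$ together with the Strong Support Lemma to upgrade name-invariance of the pieces to name-invariance of the whole. First I would observe that since $u_1\sim u_2$, there is a permutation $\pi$ with $u_1=\pi\cdot u_2$; by the previous lemma on interactions, writing $u_i=s_i\iseq t_i$, we get $(s_1,t_1)\sim(s_2,t_2)$, and in fact we may pick a single $\pi$ with $s_1=\pi\cdot s_2$ and $t_1=\pi\cdot t_2$. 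The goal is then to find moves-with-store $s_im_i^{\Sigma_i'}$ and $t_im_i^{\Sigma_i''}$ realising the extension in each component, show the extended pieces are $\sim$-related (via the Determinacy of $\sigma$, $\tau$ and the innocence/P-view bookkeeping that matches the view-equality needed to invoke Determinacy), and conclude by the definition of $\mix$/$\iseq$ that the resulting stores $\Sigma_1,\Sigma_2$ satisfy $u_1m_1^{\Sigma_1}\sim u_2m_2^{\Sigma_2}$.

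The key steps, in order, are: (1) use the Zipper Lemma to determine the shape of $u_i$ and of the extension $m_i$ — if $m_1$ is a P-move in $AB$ then $u_i$ ends with the interaction poised for $\sigma$ to move, if in $BC$ then for $\tau$ to move, and the $B$-moves are shared; (2) in the $AB$ case, project: from $s_1m_1^{\Sigma_1'}\in\sigma$ and the fact that $s_1\sim s_2$ (so in particular their even-length prefixes and P-views match up to permutation), invoke O-Closure/Determinacy of $\sigma$ to produce $s_2m_1^{\Sigma_2'}\in\sigma$ with $s_1m_1^{\Sigma_1'}\sim s_2m_1^{\Sigma_2'}$ — here one must be careful that $m_1$ as a move (store-erased) is literally the same in both, which follows because the move played by $\sigma$ after $\sim$-equivalent histories is the same move (Determinacy is up to $\sim$, and $\sim$ only permutes names, not move-shapes); (3) lift back through the $\iseq$ recursion: since $\mix$ is defined by the $\nice$-constructor applied to data that is componentwise $\sim$-related, and $\nice$ is equivariant, the resulting store $\Sigma_1$ is $\pi'$-related to $\Sigma_2$ for the appropriate permutation; the Strong Support Lemma is what lets us combine the $\sim$ on $(u,s\text{-side})$ and the $\sim$ on the new store-bearing move into a single $\sim$ on $u_1m_1^{\Sigma_1}$ versus $u_2m_2^{\Sigma_2}$, provided we check the support-containment hypothesis $\nu(\text{new piece})\cap\nu(\text{other new piece})\subseteq\nu(u_i)$, which should follow from freshness of newly allocated names (Lemma~\ref{l:inter}(a)) and the fact that surviving names already occur in $u_i$; (4) the $BC$ case is symmetric via $\tau$, and the shared-$B$ case combines both. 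The general P-move $m_1$ being external-O is excluded by hypothesis, so those cases do not arise; internal O-moves at $B$ are determined by the opposite component's P-move and so are handled inside the recursion rather than as the final step.

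The main obstacle I expect is step (3)–the careful nominal bookkeeping needed to turn "each component extends $\sim$-compatibly" into "the interaction extends $\sim$-compatibly". The subtlety is that Determinacy of $\sigma$ only gives $s_1m_1^{\Sigma_1'}\sim s_2m_1^{\Sigma_2'}$ via \emph{some} permutation, which a priori need not agree with the permutation $\pi$ witnessing $s_1\sim s_2$; reconciling these requires the Strong Support Lemma (to glue $(s_2, t_2\text{-data})$ and $(s_2, m_1\text{-data})$ into a coherent triple), and one must verify its hypothesis by tracking which names are fresh (newly allocated, hence disjoint from everything by Lemma~\ref{l:inter}(a,e)) versus inherited (hence already supported by $u_i$). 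A secondary fiddly point is justifying that the P-view condition needed to \emph{apply} Determinacy actually holds — i.e.\ that $\sim$-equivalent interaction histories induce $\sim$-equivalent P-views of the $\sigma$-projection — which follows from equivariance of the P-view operation but should be stated explicitly. Once these nominal details are in place, the store-equality/$\nice$-equivariance computations are routine.
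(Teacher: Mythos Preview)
Your approach is essentially the paper's: decompose $u_i$ via the previous lemma into $(s_i',t_i')$, use the Zipper Lemma to locate the component of $m_1$, apply Determinacy of $\sigma$ (or $\tau$) to the relevant projection, and glue with the Strong Support Lemma. However, you are overcomplicating several points. First, no induction is needed; the argument is direct. Second, and more importantly, innocence and P-view matching play no role here---the lemma is stated for arbitrary S-strategies, and Determinacy requires only that the two plays share the same odd-length prefix, which you obtain from $s_1'\sim s_2'$ by Nominal Closure (apply the witnessing permutation to $s_2$), not from any view condition. Third, the support hypothesis of the Strong Support Lemma is immediate from compatibility: since $s_i\asymp t_i$, one has $\nu(t_i')\cap\nu(m_i^{\Sigma_i'})=\varnothing\subseteq\nu(s_i')$ outright, with no need for Lemma~\ref{l:inter}. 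Finally, there is no need to unwind $\nice$ or track store computations: once $(s_1',t_1',m_1^{\Sigma_1'})\sim(s_2',t_2',m_2^{\Sigma_2'})$ is established, equivariance of $\iseq$ immediately gives $u_1m_1^{\Sigma_1}\sim u_2m_2^{\Sigma_2}$.
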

\begin{proof}
Let us assume that $u_im_i^{\Sigma_i}=s_i\iseq t_i=(s_i'\iseq t_i')\,m_i^{\Sigma_i}$. Then, by 
the previous lemma,
$(s_1',t_1')\sim(s_2',t_2')$. If, say, $m_1$ is a P-move in $AB$ then 
$m_2$ is also a P-move in $AB$ and
$s_1\sim s_2$, by Zipper Lemma and determinacy of $\sigma$, so $(s_1',m_1^{\Sigma_1'})\sim(s_2',m_2^{\Sigma_2'})$, where $s_i=s_i'm_i^{\Sigma_i'}$. Since $\nu(t_i')\cap\nu(m_i^{\Sigma_i'})=\varnothing$, we can apply the Strong Support Lemma to obtain $u_1m_1^{\Sigma_1}\sim u_2m_2^{\Sigma_2}$.
\end{proof}

\begin{lemma}\label{l:det_other}
Let $\sigma:A\arr B$, $\tau:B\arr C$ be S-strategies and $u_1,u_2\in\sigma\iseq\tau$ with $|u_1|\leq|u_2|$. Then, $u_1\rest AC=u_2\rest AC$ implies $u_1\sim\prefix u_2$.
\end{lemma}
\begin{proof}
By induction on $|u_1|$. The base case is trivial. 
Now suppose $u_1=u_1'm_1^{\Sigma_1}$, and let $u_2=u_2'u_2''$ with $u_2'$ being the greatest prefix of $u_2$ such that $u_1'\rest AC=u_2'\rest AC$. 
Then, by IH, $u_1'\sim\prefix u_2'$.
If $m_1$ a generalised P-move then, by previous lemma, $u_1\sim\prefix u_2$.
If $m_1$ an external O-move then $u_1'$ ends in a P-move in $AC$ and, by $u_1'\rest AC=u_2'\rest AC$ and Zipper Lemma, $u_2'$ must end in the same move. Thus, $u_1\rest AC\sim u_2\rest AC$ implies $u_1\sim\prefix u_2$.
\end{proof}

Now some lemmas for innocence. We say that a move $m$ in an interaction sequence $s\in\Sinter{A,B,C}$ is a \emph{generalised O-move} if it is an O-move in $AB$ or $BC$.

\begin{lemma}\label{l:proj}
If $s_1,s_2\in\Splays{A\arr B}$, $t_1,t_2\in\Splays{B\arr C}$ and $s_i\iseq t_i$ end in a generalised O-move in component $X$,
\begin{enumerate}\renewcommand{\theenumi}{\rm\alph{enumi}}
\item if $X=AB$ then $\pv{(s_1\iseq t_1)\rest AB}=\pv{(s_2\iseq t_2)\rest AB}\implies\pv{s_1}=\pv{s_2}$,
\item if $X=BC$ then $\pv{(s_1\iseq t_1)\rest BC}=\pv{(s_2\iseq t_2)\rest BC}\implies\pv{t_1}=\pv{t_2}$.
\end{enumerate}
\end{lemma}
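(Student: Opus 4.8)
The plan is to proceed by induction on the length of the interaction sequence $s_i\iseq t_i$, mirroring the way P-views of interaction sequences are defined on generalised P-moves. Recall the key structural fact established earlier: if an interaction sequence ends in a P-move of $AB$ (respectively $BC$), its O-view (and hence, by the dual pattern, the relevant view computations) agree with those of its projection to $AB$ (respectively $BC$). I would exploit an analogous bookkeeping lemma for P-views in interaction sequences, namely that the P-view of an interaction sequence, when the sequence ends in a move of component $X$, is obtained by interleaving P-view and O-view computations in the two components in a ``zipper'' fashion. The heart of the argument is that $\pv{(s\iseq t)\rest X}$ and $\pv{s}$ (for $X=AB$) are related by a deterministic procedure that reads off, from the projected P-view, exactly the moves of $s$ that lie in its own P-view.

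Concretely, for part (a): suppose $s_i\iseq t_i$ ends in a generalised O-move $o$ in $AB$ — so $o$ is an O-move of $A\to B$, which is either an external O-move (O-move in $AC$, hence an O-move in $A$) or an internal one (an O-move in $B$, i.e.\ a P-move of $B\to C$). In either case $o$ occurs in $s_i$ as well. I would show by simultaneous induction the stronger statement relating $\pv{(s_i\iseq t_i)\rest AB}$ directly to $\pv{s_i}$: walking back along pointers in the interaction sequence restricted to $AB$ visits exactly the same moves, in the same order, as walking back along pointers in $s_i$ itself, because the justification pointers in $s\iseq t$ are inherited from $s$ and $t$ and, for moves of $AB$, coincide with those of $s$. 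The subtlety is that the P-view of the interaction sequence (restricted to $AB$) must be shown to equal the P-view of $s$ as a genuine play in $A\to B$, which requires checking that the ``detours through $C$'' that the interaction takes between consecutive $AB$-moves do not affect which $AB$-moves are pulled into the P-view. This follows from the Zipper Lemma: the segments of the interaction strictly between two consecutive visible $AB$-moves that appear in the P-view are themselves complete sub-interactions, so they disappear on projection to $AB$ and are invisible to the view computation. Part (b) is symmetric, with the roles of $s$ and $t$, and of $AB$ and $BC$, exchanged.

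Having established the bridge $\pv{(s_i\iseq t_i)\rest X} \text{ determines } \pv{s_i}$ (resp.\ $\pv{t_i}$) — more precisely, that equal projected P-views force equal $\pv{s_i}$ — the implication in the statement is immediate. So the lemma reduces to the structural claim: \emph{for $s\asymp t$, if $s\iseq t$ ends in a move of $AB$, then $\pv{(s\iseq t)\rest AB}=\pv{s}$} (and dually for $BC$). I expect the main obstacle to be precisely the bookkeeping in the inductive step for this structural claim — tracking how the P-view pointer chain of the interaction sequence restricted to a component relates to the P-view pointer chain of the underlying S-play, and verifying that the interleaved O-view/P-view computation on interactions (as defined in the excerpt for $\pv{\cdot}$ on interaction sequences) collapses correctly under projection. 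This is the standard but delicate ``views commute with projection'' argument familiar from the composition of innocent strategies; here it carries the extra baggage of stores, but since the view computations on S-sequences are defined exactly as on their erasures (the stores just ride along with the moves), the store annotations play no active role in this particular lemma and can be ignored for the purposes of the pointer/view bookkeeping.
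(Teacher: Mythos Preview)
Your proposal contains a genuine gap: the reduction you propose --- that $\pv{(s\iseq t)\rest AB}=\pv{s}$ --- is simply false, and the claim that ``the store annotations play no active role in this particular lemma and can be ignored'' is precisely where the argument breaks down.

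The point is that the moves in $(s\iseq t)\rest AB$ carry the \emph{mixed} stores $s\mix t$, not the original stores from $s$. When $t$ introduces a private name $\alpha$ (via a P-move in $BC$) and later plays a P-move in $B$ still carrying $\alpha$, that move appears in $(s\iseq t)\rest AB$ with $\alpha$ in its mixed store; but the corresponding O-move in $s$ has a store determined by {\em Just-O} from its justifier in $s$, which knows nothing of $\alpha$. So the underlying justified sequences of $(s\iseq t)\rest AB$ and $s$ agree (erasure and pointers coincide, as you correctly observe), but the stores do not, and hence the P-views as \emph{S}-sequences are not equal. This is exactly why the lemma is stated as an implication between \emph{two} interactions rather than as an identity for a single one: from the mixed-store P-view alone you cannot read off $s$'s stores, but if two interactions have the \emph{same} mixed-store P-view then you can argue their $s$-components have the same stores.

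The paper's proof is almost entirely devoted to this store-recovery step. It proceeds by induction on $|s_1|$: given that $\pv{s_1'}=\pv{s_2'}$ at the justifier, one must show that the stores $\Tau_1,\Tau_2$ on the justifier (in $s_1,s_2$) and $\Sigma_1,\Sigma_2$ on the last move agree, knowing only that the corresponding \emph{mixed} stores $\Tau_1',\Tau_2'$ and $\Sigma_1',\Sigma_2'$ agree. This uses the explicit form $\Tau_i'=\nice(\cdots,\st{s_i'},\Tau_i)$ together with Lemma~\ref{l:inter}(g) (which says $\Sigma_i'\substore$ the mixed store and the mixed store is fixed under update by $\Sigma_i'$) to peel off the $t$-contribution and recover $\Tau_1=\Tau_2$, then $\Sigma_1=\Sigma_2$. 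None of this is pointer bookkeeping; it is store arithmetic, and your plan provides no mechanism for it.
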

\begin{proof}
We show (a) by induction on $|s_1|\geq1$, and (b) is proved similarly. The base case is obvious. If $s_1=s_1'n^{\Tau_1}s_1''m^{\Sigma_1}$ with $m$ an O-move in $AB$ justified by $n$ then $s_2=s_2'n^{\Tau_2}s_2''m^{\Sigma_2}$ and, by IH, $\pv{s_1'}=\pv{s_2'}$. We need to show that $\Tau_1=\Tau_2$ and
$\Sigma_1=\Sigma_2$, while we know that the stores of the corresponding moves in $s_i\iseq t_i$ are equal, say $\Sigma_1'=\Sigma_2'$ and $\Tau_1'=\Tau_2'$. We have that $\Tau_i'=\nice(\st{(s_i\iseq t_i)_{<n^{\Tau_i'}}},\st{s_i'},\Tau_i)$, hence $\Tau_1\remv\st{s_1'}=\Tau_2\remv\st{s_2'}$. By IH we have that $\st{s_1'}=\st{s_2'}$, so if $\na\in\nu(\Tau_1)\cap\nu(\st{s_1'})$ then, by Lemma~\ref{l:inter}(g), $\na\in\nu(\Tau_1')\cap\nu(\st{s_1'})$ so $\na\in\nu(\Tau_2')\cap\nu(\st{s_2'})$, $\therefore\na\in\nu(\Tau_2)\cap\nu(\st{s_2'})$, and viceversa. Moreover, $\Tau_i(\na)=\Tau'_i(\na)$ for each such $\na$, thus $\Tau_1=\Tau_2$. This also implies that $\nu(\Sigma_1)=\nu(\Sigma_2)$ and so, by Lemma~\ref{l:inter}(g), $\Sigma_1=\Sigma_2$.
\end{proof}

\begin{lemma}\label{l:inn_inter}
If $\sigma:A\arr B$, $\tau:B\arr C$ are innocent S-strategies then, if $u_1m^{\Sigma_1},u_2\in\sigma\iseq\tau$ with $u_i$ ending in a generalised O-move and $\pv{u_1}\sim\pv{u_2}$ then there exists $u_2m^{\Sigma_2}\in\sigma\iseq\tau$ such that $\pv{u_1m^{\Sigma_1}}\sim\pv{u_2m^{\Sigma_2}}$.
\end{lemma}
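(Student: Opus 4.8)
The plan is to reduce the claim to the innocence (and O\nobreakdash-closure) of the component S\nobreakdash-strategies, with the store bookkeeping supplied by Lemma~\ref{l:inter} and Lemma~\ref{l:proj}, and the nominal manipulations by Nominal Closure and the Strong Support Lemma. Since $u_i$ ends in a generalised O-move, the appended move $m$ is a generalised P-move, produced by exactly one of the two strategies; the case in which $m$ is a P-move in $BC$ is symmetric, so I would only treat $m$ a P-move in $AB$, so that $\sigma$ produces $m$ (and $\tau$ receives it if $m$ is a $B$-move). Write $u_1m^{\Sigma_1}=s_1\iseq t_1$ and $u_2=s_2\iseq t_2$ with $s_i\in\sigma$, $t_i\in\tau$ (the factorisations are unique by the determinacy lemma for interaction sequences), so that $s_1=s_1'm^{\Sigma_1'}$ and, by the Zipper Lemma, $u_1=s_1'\iseq t_1'$, $u_2=s_2\iseq t_2$, where $t_1'$ is $t_1$ if $m$ lies in $A$ and its immediate prefix (with $t_1=t_1'm^{\Sigma_1''}$) if $m$ lies in $B$.

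The crucial step is to obtain $\pv{s_1'}\sim\pv{s_2}$, so that innocence of $\sigma$ can be invoked. From $\pv{u_1}\sim\pv{u_2}$ and the standard decomposition of the P-view of an interaction sequence into the interleaved P-views of its two projections — a decomposition stable under name permutations — one gets $\pv{(s_1'\iseq t_1')\rest AB}\sim\pv{(s_2\iseq t_2)\rest AB}$. After a permutation of $s_2,t_2$ (Nominal Closure) that turns this into an equality, Lemma~\ref{l:proj}(a) applies, since $s_i'\iseq t_i'$ ends in a generalised O-move in $AB$, and delivers $\pv{s_1'}=\pv{s_2}$; undoing the permutation we may then assume, after one further renaming, $\pv{s_1'}=\pv{s_2}$ outright.

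Now $s_1'm^{\Sigma_1'}\in\sigma$ and $s_2\in\sigma$ are odd-length with the same P-view, so innocence of $\sigma$ yields $s_2m^{\Sigma_2'}\in\sigma$ with $\pv{s_1'm^{\Sigma_1'}}\sim\pv{s_2m^{\Sigma_2'}}$; as $\sim$ fixes erasures this is the same move $m$, and by Nominal Closure on $\sigma$ its fresh names may be chosen disjoint from $\nu(t_2)$. I then set $u_2m^{\Sigma_2}\defn(s_2m^{\Sigma_2'})\iseq t_2$ if $m$ lies in $A$, and $u_2m^{\Sigma_2}\defn(s_2m^{\Sigma_2'})\iseq(t_2m^{\Sigma_2''})$ if $m$ lies in $B$, where in the second case $t_2$ is even-length (Zipper Lemma) and O-closure of $\tau$ provides $t_2m^{\Sigma_2''}\in\tau$; in both cases compatibility ($\asymp$) is immediate from the freshness choice and the Zipper Lemma, so $u_2m^{\Sigma_2}\in\sigma\iseq\tau$. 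Finally, since $m$ is a generalised P-move, $\pv{u_im^{\Sigma_i}}=\pv{u_i}\,m^{\Sigma_i}$, and $(\pv{u_1},\Sigma_1)\sim(\pv{u_2},\Sigma_2)$ follows from $\pv{u_1}\sim\pv{u_2}$, the relation between $\Sigma_1$ and $\Sigma_2$ induced by $\pv{s_1'm^{\Sigma_1'}}\sim\pv{s_2m^{\Sigma_2'}}$ together with Lemma~\ref{l:inter}(g), and the Strong Support Lemma (whose support hypothesis holds because the names shared by $\Sigma_i$ and $\pv{u_i}$ already occur in the justifier of $m$, while the genuinely new names of $\Sigma_i$ are fresh by Lemma~\ref{l:inter}(a)).

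I expect the main obstacle to be the crucial step: making precise, through the per-move store-renaming that the interaction performs, why $\pv{u_1}\sim\pv{u_2}$ forces $\pv{s_1'}\sim\pv{s_2}$ with the \emph{original} $\sigma$-stores — which is exactly the content isolated in Lemma~\ref{l:proj}. Everything else — compatibility of the extended interaction and the concluding use of the Strong Support Lemma — is routine given Lemma~\ref{l:inter} and the Zipper Lemma.
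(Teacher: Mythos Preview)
Your argument follows essentially the same route as the paper's: reduce to innocence of a component via Lemma~\ref{l:proj}, extend using Nominal Closure to restore compatibility, and close with the Strong Support Lemma. The paper is terser (it treats only the case where $u_1$ ends in an O-move in $A$ and leaves the $B$-move case implicit), while you spell out the $B$-case and the O-closure of $\tau$ explicitly; this is fine.

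One small presentational difference worth noting: in the final step the paper applies the Strong Support Lemma directly to the triple $(\pv{u_i},\pv{s_i},m^{\Sigma_i'})$, using the support condition $\nu(\pv{u_i})\cap\nu(m^{\Sigma_i'})\subseteq\nu(\pv{s_i})$, and then observes that $\Sigma_i$ is an equivariant function of this triple. You instead aim to relate $\Sigma_1,\Sigma_2$ through Lemma~\ref{l:inter}(g); this works, but the paper's route is cleaner since it avoids reasoning about the mixed store directly. Also, the intermediate step you describe as ``the standard decomposition of the P-view of an interaction sequence'' is indeed what bridges $\pv{u_1}\sim\pv{u_2}$ to the hypothesis of Lemma~\ref{l:proj}; the paper glosses over it too, so your level of detail here is adequate.
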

\begin{proof}
Suppose $u_1$ ends in an O-move in $A$\HY the other cases are shown similarly. Then, $u_1m^{\Sigma_1}=s_1 m^{\Sigma'_1}\iseq t_1$ and $u_2=s_2\iseq t_2$ for some relevant S-plays of $\sigma,\tau$. Moreover, $\pv{u_1}\sim\pv{u_2}$ implies, by Lemma~\ref{l:proj}, that $\pv{s_1}\sim\pv{s_2}$ and thus, by innocence, there exists $s_2m^{\Sigma_2'}\in\sigma$ such that $s_1m^{\Sigma_1'}\sim s_2m^{\Sigma_2'}$. In fact, we can pick a $\Sigma_2'$ such that $s_2m^{\Sigma_2'}\asymp t_2$. Let $s_2m^{\Sigma_2'}\iseq t_2=u_2m^{\Sigma_2}\in\sigma\iseq\tau$. We have that $(\pv{u_1},\pv{s_1})\sim(\pv{u_2},\pv{s_2})$ and $(\pv{s_1},m^{\Sigma_1'})\sim(\pv{s_2},m^{\Sigma_2'})$ and, moreover, $\nu(\pv{u_i})\cap\nu(m^{\Sigma_i'})\subseteq\nu(\pv{s_i})$ for $i=1,2$ thus, by Strong Support Lemma, $(\pv{u_1},\pv{s_1},m^{\Sigma_1'})\sim(\pv{u_2},\pv{s_2},m^{\Sigma_2'})$. This implies that $\pv{u_1m^{\Sigma_1}}\sim\pv{u_2m^{\Sigma_2}}$.
\end{proof}

\begin{lemma}\label{l:inn_other}
Let $\sigma:A\arr B$, $\tau:B\arr C$ be innocent S-strategies and let $u_1,u_2\in\sigma\iseq\tau$ with $|\pv{u_1}|\leq|\pv{u_2}|$. Then, $\pv{u_1\rest AC}=\pv{u_2\rest AC}$  implies $\pv{u_1}\sim\prefix\pv{u_2}$.
\end{lemma}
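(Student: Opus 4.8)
The plan is to mirror the proof of Lemma~\ref{l:det_other}, carried out at the level of P-views and with plain equality relaxed to name-invariance $\sim$. I would argue by induction on $|\pv{u_1}|$. In the base case $\pv{u_1}$ consists of the initial move only; since $\pv{u_1\rest AC}=\pv{u_2\rest AC}$ forces $\pv{u_2}$ to begin with that same initial move, $\pv{u_1}\sim\prefix\pv{u_2}$ is immediate.

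For the inductive step I would split on the last move $m$ of $\pv{u_1}$. If $m$ is a generalised P-move, I would strip it from $u_1$, apply the induction hypothesis to the resulting shorter interaction and the appropriate prefix of $u_2$, and then re-attach $m$ on the $u_2$-side using Lemma~\ref{l:inn_inter} (the innocence-transfer lemma for interactions), which simultaneously supplies a store for the re-attached move keeping the two P-views $\sim$-related; here one checks, exactly as in the proof of Lemma~\ref{l:inn_inter}, that the truncated interactions end in a generalised O-move of the relevant component. If instead $m$ is an external O-move, then $m$ already occurs in $u_1\rest AC$; since $\pv{u_1\rest AC}=\pv{u_2\rest AC}$ ends in $m$ and, by the Zipper Lemma, the justifier of $m$ pins down the matching occurrence on the $u_2$-side, I would invoke the induction hypothesis up to that justifier and then re-attach $m$, whose store in either interaction is determined by the prefix up to its justifier (via the interaction clause for external O-moves). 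Throughout, as in Lemma~\ref{l:inn_inter}, I would carry a single permutation witnessing $\pv{u_1}\sim\prefix\pv{u_2}$ and extend it over the re-attached move by the Strong Support Lemma.

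The main obstacle is the store bookkeeping under name-invariance: the permutation produced by the induction hypothesis must be shown to extend coherently once the final move and its store are reinstated, which is precisely what the finite-strong-support hypothesis and the Strong Support Lemma are for. A secondary subtlety, absent from Lemma~\ref{l:det_other}, is that a prefix of a P-view need not itself be the P-view of a prefix of the underlying interaction, so the ``greatest matching prefix of $u_2$'' device of that proof has to be re-phrased directly in terms of views, and keeping track of exactly which prefix of $u_2$ realises the required view is the fiddliest part of the argument.
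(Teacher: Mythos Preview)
Your plan follows the paper's line: induction, case analysis on the last move of $u_1$, and Lemma~\ref{l:inn_inter} to transport the final generalised P-move to the $u_2$-side. Two points deserve attention.

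The paper inducts on $|\pv{u_1}|+|\pv{u_2}|$, not on $|\pv{u_1}|$. In the external O-move case one recurses to the prefixes $v_1 n^\Tau$ and $v_2 n^\Tau$ ending in the justifier, and there is no obvious reason why $|\pv{v_1 n}|\le|\pv{v_2 n}|$ (the lemma's standing hypothesis) should be inherited; the sum measure sidesteps this by letting the IH apply in whichever direction the inequality happens to hold. Also, the paper's treatment of that case is lighter than yours: since $\pv{u_1\rest AC}=\pv{u_2\rest AC}$ is a literal equality, the stores $\Tau$ of $n$ on both sides coincide, so the permutation $\pi$ witnessing $\pv{v_1 n^\Tau}\sim\pv{v_2 n^\Tau}$ fixes $\dom(\Tau)$, hence fixes $m^\Sigma$ (whose store is determined by its justifier's). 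No Zipper Lemma or Strong Support Lemma is invoked here.

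The more substantive gap is in the generalised P-move case. Lemma~\ref{l:inn_inter} only manufactures some $u_2' m^{\Sigma'}\in\sigma\iseq\tau$ with $\pv{u_1}\sim\pv{u_2' m^{\Sigma'}}$; it does not by itself tell you that this sits inside $\pv{u_2}$. The paper closes this with Lemma~\ref{l:det_inter} (determinacy at the level of interactions): from the IH one first extracts an actual prefix $u_2'$ of $u_2$ with $\pv{u_2'}\sim\pv{u_1'}$, so $u_2$ continues $u_2'$ with some $m_2^{\Sigma_2}$; then $u_2' m_2^{\Sigma_2}$ and $u_2' m^{\Sigma'}$ are both in $\sigma\iseq\tau$, and determinacy forces them to be $\sim$-equal, whence $\pv{u_1}\sim\pv{u_2'}m_2^{\Sigma_2}\prefix\pv{u_2}$. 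Your sketch omits this step, and without it the re-attachment of $m$ does not land inside $\pv{u_2}$.
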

\begin{proof}
Let us write $u_i$ for $s_i\iseq t_i$.
We argue by induction on $|\pv{u_1}|+|\pv{u_2}|$. The base cases are obvious.  Now let $u_1=u_1'm^{\Sigma_1}$ with $m$ a B-move. We have that $|u_1'|\leq|u_2|$ and $\pv{u_1'\rest AC}=\pv{u_2\rest AC}$ so, by IH, $\pv{u_1'}\sim\prefix\pv{u_2}$. In particular, $u_2=u_2'm_2^{\Sigma_2}u_2''$ with $\pv{u_2'}\sim\pv{u_1'}$ so, by Lemma~\ref{l:inn_inter}, there exists $u_2'm_1^{\Sigma_2'}\in\sigma\iseq\tau$ with $\pv{u_1}\sim\pv{u_2'm_1^{\Sigma_2'}}$. Using Lemma~\ref{l:det_inter}, $\pv{u_2'}m_2^{\Sigma_2}\sim \pv{u_2'}m_1^{\Sigma_2'}\sim \pv{u_1}$.\\
The case of $m$ being a P-move in $AC$ is proved along the same lines. On the other hand, if $m$ is an O-move in $AC$ justified by $n$ then $u_1=v_1n^\Tau v_1'm^\Sigma$ and $u_2=v_2n^\Tau v_2'm^\Sigma v_2''$ and, by IH, $\pv{v_1n^\Tau}=\pi\actn\pv{v_2n^\Tau}$. In particular, $\pi$ fixes $\dom(\Tau)$ and therefore $\pi\actn m^\Sigma=m^\Sigma$, $\therefore\pv{v_1n^\Tau}m^\Sigma=\pi\actn(\pv{v_2n^\Tau}m^\Sigma)$.
\end{proof}

\begin{proposition}
If $\sigma:A\arr B$, $\tau:B\arr C$ are S-strategies then $\sigma;\tau:A\arr C$ is an S-strategy.
If $\sigma$ and $\tau$ are innocent, so is $\sigma;\tau$.
\end{proposition}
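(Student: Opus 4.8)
The plan is to check the defining conditions of an (innocent) S-strategy, one by one, for $\sigma;\tau=\{\,s;t\mid s\in\sigma,\ t\in\tau,\ s\asymp t\,\}$, the point being that the Compositionality proposition already shows $\sigma;\tau$ consists of S-plays, while every remaining condition is obtained by transporting the corresponding condition on $\sigma$ and $\tau$ across the interaction construction by means of Lemmas~\ref{l:inter}--\ref{l:inn_other}. Non-emptiness is immediate since $\ee=\ee;\ee\in\sigma;\tau$. For prefix-closure, given $s\asymp t$ and a prefix $v$ of $s;t$, I take the shortest prefix $u'$ of $s\iseq t$ with $u'\rest AC=v$; by the recursive shape of $\iseq$ (and the Zipper Lemma) $u'=s'\iseq t'$ for prefixes $s'\prefix s$ and $t'\prefix t$, and $s'\asymp t'$ because $\erase{s'}\rest B=\erase{u'}\rest B=\erase{t'}\rest B$ while $\nu(s')\cap\nu(t')\subseteq\nu(s)\cap\nu(t)=\varnothing$; hence $v=s';t'\in\sigma;\tau$. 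Nominal closure follows from equivariance of $\iseq$, $\rest AC$ and $\asymp$ together with nominal closure of $\sigma,\tau$, since $\pi\cdot(s;t)=(\pi\cdot s);(\pi\cdot t)$.

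For O-closure, suppose $v\in\sigma;\tau$ is even-length and $vm^\Sigma\in\Splays{A\arr C}$, so $m$ is an external O-move. I pick a witness $(s,t)$ with $s;t=v$ such that the interaction $s\iseq t$ ends at the last move of $v$; then, by the Zipper Lemma, $s\iseq t$ ends in an external P-move in $A$ or in $C$, and in either case the component play that is ``ahead'' ($s$, resp.\ $t$) is even-length. If $m$ lies in $C$ it is an O-move for $\tau$ in $B\arr C$, whose justifier already occurs in $t$; by condition {\em Just-O} the domain of its store is that of the justifier's, and translating between stores in the interaction and stores in $t$ via Lemma~\ref{l:inter}(g) I obtain an admissible $\Sigma'$ with $tm^{\Sigma'}\in\Splays{B\arr C}$, hence $tm^{\Sigma'}\in\tau$ by O-closure of $\tau$. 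Since $\nu(\Sigma')\subseteq\nu(t)$ we still have $s\asymp tm^{\Sigma'}$, and $(s\iseq tm^{\Sigma'})\rest AC$ extends $v$ by $m$ with a store that must equal $\Sigma$, the store of an O-move being uniquely pinned down from the play so far by {\em Just-O} and {\em Val-O}. The case $m\in A$ is symmetric, using O-closure of $\sigma$.

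Determinacy and innocence are the heart of the argument and both follow the same two-step pattern, the two steps being precisely the interaction lemmas. For determinacy, let $vm_1^{\Sigma_1},vm_2^{\Sigma_2}\in\sigma;\tau$ be even-length and lift them to interaction sequences $u_i\in\sigma\iseq\tau$ ending in the external P-move $m_i$. The shortest prefixes $u_i^-$ with $u_i^-\rest AC=v$ both end in the last (external O-)move of $v$ and have equal $AC$-restrictions, so Lemma~\ref{l:det_other} gives $u_1^-\sim\prefix u_2^-$ (or conversely); as both already end in that same move the $\prefix$ collapses to $\sim$, i.e.\ $u_1^-\sim u_2^-$. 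Each $u_i$ then continues from $u_i^-$ through a ``bounce'' of internal $B$-moves ending in $m_i$, and each such move is a generalised P-move, so iterating Lemma~\ref{l:det_inter} along the bounce yields $u_1\sim u_2$; restricting to $AC$ gives $m_1=m_2$ and $vm_1^{\Sigma_1}\sim vm_2^{\Sigma_2}$. Innocence is proved in exactly the same way, with Lemmas~\ref{l:inn_other} and~\ref{l:inn_inter} (and Lemma~\ref{l:proj}) in place of Lemmas~\ref{l:det_other} and~\ref{l:det_inter}: first align the interaction P-views lying behind the last external O-move, then propagate the chosen P-response forward along the bounce while tracking P-views.

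I expect the only genuinely fiddly point to be the store bookkeeping in the O-closure step — confirming that the store prescribed for the new O-move by {\em Just-O}/{\em Val-O} on the $A\arr C$ side is realised by an admissible store on the $B\arr C$ side via Lemma~\ref{l:inter}(g), and that freshness and compatibility survive the extension — together with the routine observation that between two consecutive external moves the interaction proceeds through generalised P-moves only, so that Lemmas~\ref{l:det_inter}/\ref{l:inn_inter} apply along the bounce. Beyond that nothing should present real difficulty, since the combinatorial content has already been absorbed into Lemmas~\ref{l:inter}--\ref{l:inn_other}.
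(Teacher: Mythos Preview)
Your plan matches the paper's proof closely: the same supporting lemmas are invoked in the same roles, and the overall architecture (lift to interactions, align via Lemma~\ref{l:det_other}/\ref{l:inn_other}, then step with Lemma~\ref{l:det_inter}/\ref{l:inn_inter}) is identical.

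The one point that deserves a flag is your claim that innocence goes through ``exactly the same way'' as determinacy. There is a real asymmetry. In determinacy both interactions already extend to the external P-move, so once you have aligned the penultimate prefixes a single application of Lemma~\ref{l:det_inter} (as the paper does) or your bounce-iteration suffices. For innocence only $u_1$ is known to extend to $m_1$; the extension of $u_2$ must be \emph{constructed}, and Lemma~\ref{l:inn_other} delivers only $\pv{u_i}\sim\prefix\pv{u_j}$, not $\sim$, so your ``align the interaction P-views behind the last external O-move'' does not by itself give $\pv{u_1^-}\sim\pv{u_2^-}$. The paper closes this gap by a case split on which P-view is the $\sim$-prefix: one branch yields a contradiction (via Lemmas~\ref{l:inn_inter} and~\ref{l:det_inter}) with $\pv{v_1}=\pv{v_2}$; in the other branch one chooses $u_2$ of \emph{maximum} length subject to $u_2\rest AC=v_2$ and $|\pv{u_2}|\le|\pv{u_1}|$, so that the extension produced by Lemma~\ref{l:inn_inter} is forced to coincide (up to $\sim$) with $\pv{u_1m_1^{\Sigma_1}}$ rather than landing strictly inside $\pv{u_1}$. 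Your bounce idea can be made to work, but it needs exactly this case-split/maximality scaffolding.
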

\begin{proof}
Prefix closure and nominal closure are easy to establish by noting that they hold at the level of interactions, for $\sigma\iseq\tau$.
For O-closure, suppose $v\in\sigma;\tau$ and $vm^\Sigma\in\Splays{A\arr C}$ with, say, $m$ an O-move in $A$. Then, $v=s;t$ for some $s\in\sigma$, $t\in\tau$ with $s$ ending in a P-move in $A$. Moreover, we can construct a (unique) store $\Sigma'$ such that $sm^{\Sigma'}\in\Splays{A\arr B}$, by means of the O-Just and O-Val conditions. Thus, $sm^{\Sigma'}\in\sigma$ and $vm^\Sigma\in\sigma;\tau$. \\
For determinacy, suppose even-length $vm_i^{\Sigma_i}\in\sigma;\tau$ and $vm_i^{\Sigma_i}=s_i;t_i$ with $s_i,t_i$ not both ending in $B$, for $i=1,2$.
Let $s_i\iseq t_i=(s_i'\iseq t_i')m_i^{\Sigma_i}$ and suppose WLOG that $|s_1'\iseq t_1'|\leq|s_2'\iseq t_2'|$. Then, by Lemma~\ref{l:det_other}, $s_1'\iseq t_1'\sim\prefix s_2'\iseq t_2'$ and therefore, by Lemma~\ref{l:det_inter}, $s_1\iseq t_1\sim\prefix s_2\iseq t_2$. In particular, $vm_1^{\Sigma_1}\sim vm_2^{\Sigma_2}$.
\\
For innocence, let $v_1m_1^{\Sigma_1},v_2\in\sigma;\tau$ with $\pv{v_1}=\pv{v_2}$ being of odd length, and $u_1m_1^{\Sigma_1},u_2\in\sigma\iseq\tau$ such that $v_1m_1^{\Sigma_1}=u_1m_1^{\Sigma_1}\rest AC$, $v_2=u_2\rest AC$.
By Lemma~\ref{l:inn_other}, 
either $\pv{u_2}\sim\prefix\pv{u_1}$
or $\pv{u_1}\sim\prefix\pv{u_2}$ and $\pv{u_1}\not\sim\pv{u_2}$.
In the latter case, by Lemmata~\ref{l:inn_inter} and~\ref{l:det_inter} we obtain $\pv{u_1m_1^{\Sigma_1}}\sim\prefix\pv{u_2}$, contradicting $\pv{v_1}=\pv{v_2}$. 
In the former case, let us assume $u_2$ is of maximum length such that $u_2\rest AC=v_2$ and $|\pv{u_2}|\leq|\pv{u_1}|$. Then, by Lemma~\ref{l:inn_inter}, there exists $u_2m_2^{\Sigma_2}\in\sigma\iseq\tau$ such that either $\pv{u_2m_2^{\Sigma_2}}\sim\prefix\pv{u_1}$ or $\pv{u_2m_2^{\Sigma_2}}\sim \pv{u_1m_1^{\Sigma_1}}$. The former case contradicts maximality of $u_2$, while the latter implies $v_2m_2^{\Sigma_2}\in\sigma;\tau$ and $\pv{v_1m_1^{\Sigma_1}}\sim\pv{v_2m_2^{\Sigma_2}}$.
\end{proof}

We proceed to construct a category of arenas and S-strategies. For each arena $A$, the identity S-strategy $\id_A:A\arr A$ is the \emph{copycat} S-strategy:
\[
\id_A = \makeset{ s\in\Splays{A\arr A}\ |\ \exists s'.\, s\prefix s'\land s'\rest A_l=s'\rest  A_r}
\]
where by $A_l$ and $A_r$ we denote the LHS and RHS arena $A$ of $A\arr A$ respectively.

\begin{proposition}
Let $\sigma_i:A_i\arr A_{i+1}$ for $i=1,2,3$. Then, $\sigma_i;\id_{A_{i+1}}=\id_{A_i};\sigma_i=\sigma_i$ and $\sigma_1;(\sigma_2;\sigma_3)=(\sigma_1;\sigma_2);\sigma_3$.
\end{proposition}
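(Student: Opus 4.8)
The statement comprises the associativity law and the two unit laws, and my plan is to dispatch associativity by lifting the play-level Proposition~\ref{p:assoc}, and then to handle the unit laws, which carry the only genuinely new bookkeeping.

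\textit{Associativity.} First I would form the interaction of three S-strategies,
\[
\sigma_1\iseq\sigma_2\iseq\sigma_3 \defn \{\, s_1\iseq s_2\iseq s_3 \mid s_i\in\sigma_i,\ s_1\iseq s_2\iseq s_3\text{ defined}\,\},
\]
and prove $(\sigma_1;\sigma_2);\sigma_3 = (\sigma_1\iseq\sigma_2\iseq\sigma_3)\rest A_1A_4 = \sigma_1;(\sigma_2;\sigma_3)$. The inclusion $(\sigma_1\iseq\sigma_2\iseq\sigma_3)\rest A_1A_4\subseteq(\sigma_1;\sigma_2);\sigma_3$ is immediate from the first equation of Proposition~\ref{p:assoc}, together with the definitions of binary composite and of $\sigma\iseq\tau$. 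For the reverse inclusion, given $v\in(\sigma_1;\sigma_2);\sigma_3$ I would unfold it as $v=(w\iseq s_3)\rest A_1A_4$ with $w=(s_1\iseq s_2)\rest A_1A_3$, $s_i\in\sigma_i$, $s_1\asymp s_2$, $w\asymp s_3$; using nominal closure of the $\sigma_i$ I would then rename $s_3$ so that $\nu(s_3)$ avoids $\nu(s_1)\cup\nu(s_2)$. This changes $v$ only up to name-permutation, which is harmless since the set $(\sigma_1\iseq\sigma_2\iseq\sigma_3)\rest A_1A_4$ is itself nominal-closed. After the renaming the three conditions making $s_1\iseq s_2\iseq s_3$ defined all hold (name-disjointness by construction; the erased-face equalities because composition leaves the outward faces untouched), so Proposition~\ref{p:assoc} delivers $v\sim(s_1\iseq s_2\iseq s_3)\rest A_1A_4$, hence $v\in(\sigma_1\iseq\sigma_2\iseq\sigma_3)\rest A_1A_4$. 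The symmetric unfolding of an arbitrary element of $\sigma_1;(\sigma_2;\sigma_3)$ gives the other equality, and the agreement of the stores produced by the two bracketings is exactly the second equation of Proposition~\ref{p:assoc}.

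\textit{Unit laws.} I would begin by checking that $\id_A$ is an innocent S-strategy: prefix- and nominal-closure are built into the definition; O-closure and determinacy hold because the clause $s'\rest A_l=s'\rest A_r$ fixes both the replying move and its store uniquely from the preceding Opponent move; and innocence holds because that reply depends only on the last Opponent move and its justifier, hence only on the P-view. The crucial observation is that every copycat play carries only \emph{empty} stores: the initial move is empty by {\em Init}, every O-move inherits its justifier's store by {\em Just-O}, and a copycat P-move merely mirrors the preceding O-move, so it never opens a private variable. Hence, for $\sigma:A\arr B$, any $t\in\id_B$ satisfies $\nu(t)=\varnothing$, so $s\asymp t$ for $s\in\sigma$ amounts to just $\erase{s}\rest B=\erase{t}\rest B$. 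I would then show by induction on length that $s\iseq t$ is a zig-zag in which each shared $B$-move is echoed immediately into the right copy $B_r$ of $B$, so that $(s\iseq t)\rest AB_r$ has the same moves, pointers and bracketing as $s$; a further induction, unwinding the $\iseq/\mix$ recursion and using that $t$ contributes no names together with Lemma~\ref{l:inter}(g), shows that the stores agree too. This gives $\sigma;\id_B=\sigma$, and $\id_A;\sigma=\sigma$ is symmetric; innocence of $\id_A$ then also yields that $\Scatinn$ is a wide subcategory of $\Scat$.

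\textit{Main obstacle.} The one step that needs real care — everything else being bookkeeping over Proposition~\ref{p:assoc} and the compositionality results — is this last store calculation for copycat, i.e.\ verifying that threading $\sigma$'s stores through the interaction with $\id_B$ returns them unchanged. Concretely, in the copycat case the \emph{update} part $\Sigma_0[\Sigma_2]$ of $\nice$ does nothing because the echoed stores already agree, the \emph{drop} part $\Sigma_1\remv\Sigma_2$ is empty because copycat closes no variable, and the \emph{introduce} part $\Sigma_2\remv\Sigma_1$ is empty because it opens none, so $\nice$ and the auxiliary operations $\tilde\Sigma[-]$, $\Sigma[-]$ collapse to the identity; pinning this down requires a simultaneous induction tracking $\st{-}$ on the $AB_l$, shared-$B$ and $AB_r$ faces of the interaction sequence. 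Once this is in place, the unit and associativity laws together establish that arenas and S-strategies form the category $\Scat$.
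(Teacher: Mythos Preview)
Your proposal is correct and follows essentially the same route as the paper: associativity is lifted from Proposition~\ref{p:assoc} to strategies via nominal closure (the paper renames all three plays to be pairwise name-disjoint where you rename only $s_3$, but this is an inessential variation), and the unit laws are handled by the standard copycat argument. Your treatment of the unit laws is in fact more detailed than the paper's, which simply declares them ``standard''; your observation that copycat plays carry only empty stores---so that the $\nice$ machinery collapses---is exactly the right way to flesh this out, though the key inductive invariant you need is $s\mix t=\st{s}$ (which handles both the $t$-side, where $\nice(\Sigma_0,\epsilon,\epsilon)=\Sigma_0$, and the $s$-side, where $\nice(\Tau,\Tau,\Sigma)=\Sigma$), and Lemma~\ref{l:inter}(g) is not really the tool that delivers this.
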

\begin{proof}
Composition with identities is standard. For associativity, if $s_1;(s_2;s_3)\in \sigma_1;(\sigma_2;\sigma_3)$ then there are $s_i'\sim s_i$ such that $\nu(s_{i_1}')\cap(\nu(s_{i_2})\cup\nu(s_{i_3})\cup\nu(s_{i_2}')\cup\nu(s_{i_3}'))=\varnothing$, for any distinct $i_1,i_2,i_3$. These S-plays satisfy the hypotheses of Proposition~\ref{p:assoc}, hence $s_1';(s_2';s_3')=(s_1';s_2');s_3'\in(\sigma_1;\sigma_2);\sigma_3$. Moreover, since $\nu(s_2,s_2')\cap\nu(s_3,s_3')=\varnothing$, $s_i\sim s_i'$ imply $(s_2,s_3)\sim(s_2',s_3')$ and therefore $s_2;s_3\sim s_2';s_3'$.
Since 
{$\nu(s_1,s_1')\cap\nu(s_2;s_3,s_2';s_3')=\emptyset$,}
we have $s_1;(s_2;s_3)\sim s_1';(s_2';s_3')$, hence $s_1;(s_2;s_3)\in(\sigma_1;\sigma_2);\sigma_3$. Other direction proved dually.
\end{proof}

We can now define our categories of games with stores. \nt{In the following definition we also make use of the observation that identity S-strategies are innocent.}

\begin{definition}
Let $\Scat$ be the category whose objects are arenas and, for each pair of arenas $A,B$, the morphisms are given by
$\Scat(A,B) = \makeset{\sigma:A\to B\ |\ \sigma\text{ an S-strategy}}$. Let $\Scatinn$ be the wide subcategory of $\Scat$ of innocent S-strategies.
\end{definition}

\subsection{The model of \texorpdfstring{$\iacbv$}{IAcbv}}

We next construct the model of $\iacbv$ in $\Scatinn$.
An innocent S-strategy $\sigma$ is specified by its \emph{view-function}, $\viewf(\sigma)$, defined as follows.
\[ 
\viewf(\sigma)\defn \{\pv{s}\ |\ s\in\sigma\land \mathsf{even}(|s|)\land s\not=\epsilon\} 
\]
Conversely, a \boldemph{preplay} is defined exactly like a (non-empty) S-play only that it does not necessarily satisfy the Val-O condition. Let us write $\preplays{A}$ for the set of preplays of $A$. Obviously, $\Splays{A}\subseteq\preplays{A}$. Moreover, if $s\in\Splays{A}$ then $\pv{s}\in\preplays{A}$. 

\nt{%
For instance, the cell strategy of Example~\ref{ex:cell} can be described as the least innocent S-strategy whose view-function contains the following preplays.
\[
\begin{aligned}\\[-2mm]
&\rnode{A}{q_0}\qwe \rnode{B}{q_1}^{(\na,0)}\qweq\rnode{C}{a_1}^{(\na,i)}\qweq\rnode{D}{a_0}\justf{B}{A}\justf{C}{B}\justf[,angleA=120]{D}{A}
\qquad
&\rnode{A}{q_0}\qwe \rnode{B}{q_1}^{(\na,0)}\qweq\rnode{C}{\mread}^{(\na,i)}\qweq\rnode{D}{i}^{(\na,i)}\justf{B}{A}\justf{C}{B}\justf{D}{C}
\qquad
&\rnode{A}{q_0}\qwe\rnode{B}{q_1}^{(\na,0)}\qweq\rnode{C}{\mwrite{j}}^{(\na,i)}\qweq\rnode{D}{\mok}^{(\na,j)}\justf{B}{A}\justf{C}{B}\justf{D}{C}
\end{aligned}
\]
We next make formal the connection between view-functions and innocent S-strategies.
}

A \boldemph{view-function} $f$ on $A$ is a subset of $\preplays{A}$ satisfying:
\begin{itemize}
 \item If $s\in f$ then $|s|$ is even and $\pv{s}=s$.\ ({\em  View})
 \item If $sn^\Tau m^\Sigma\in f$ and $s\not=\epsilon$ then $s\in f$.\ ({\em Even-Prefix Closure})
 \item If $s'\sim s\in f$ then $s'\in f$.\ ({\em Nominal Closure})
 \item If $sm_1^{\Sigma_1},sm_2^{\Sigma_2}\in f$ then $sm_1^{\Sigma_1}\sim sm_2^{\Sigma_2}$.\ ({\em Determinacy})
\end{itemize}
From a view-function $f$ we can derive an innocent S-strategy $\strat(f)$ by the following procedure. We set $\strat(f)\defn\bigcup_{i\in\omega}\strat_i(f)$, where
\begin{align*}
  \strat_{2i+1}(f) &\defn \{sm^\Sigma\in\Splays{A}\ |\ s\in\strat_{2i}(f) \} \\
  \strat_{2i+2}(f) &\defn \{sm^\Sigma\in\Splays{A}\ |\ s\in\strat_{2i+1}(f)\land \pv{sm^\Sigma}\in f\} 
\end{align*}
and $\strat_0(f)\defn\{\ee\}$.

\begin{lemma}
If $\sigma,f$ are an innocent S-strategy and a view-function respectively then $\viewf(\sigma),\strat(\sigma)$ are a view-function and an innocent S-strategy respectively. Moreover, $\strat(\viewf(\sigma))=\sigma$ and $\viewf(\strat(f))=f$. \qed
\end{lemma}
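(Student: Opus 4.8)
The plan is to establish the four claims in sequence, settling everything on P-views first and then transporting name-equivalences to the ambient plays. First, \emph{$\viewf(\sigma)$ is a view-function}. \emph{View} is immediate, since $\pv{\cdot}$ is idempotent and the P-view of an even-length play has even length. For \emph{Even-Prefix Closure}, if $sn^\Tau m^\Sigma=\pv t$ with even-length $t\in\sigma$ and $s\neq\ee$, then $m$ sits at an even position of the P-view, hence is a P-move and is the last move of $t$; writing $t=t''n^\Tau$ with $n$ an O-move and invoking prefix-closure of $\sigma$ gives $t''\in\sigma$ (nonempty) with $\pv{t''}=s$. \emph{Nominal Closure} follows from that of $\sigma$ together with equivariance of $\pv{\cdot}$. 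For \emph{Determinacy}, from $sm_1^{\Sigma_1},sm_2^{\Sigma_2}\in\viewf(\sigma)$ I would lift each to a play $t_i'm_i^{\Sigma_i}\in\sigma$ with $\pv{t_i'}=s$, apply \emph{Innocence} of $\sigma$ to replay $m_1^{\Sigma_1}$ after $t_2'$, and then \emph{Determinacy} of $\sigma$ to identify the result with $m_2^{\Sigma_2}$ up to $\sim$; chasing P-views yields $sm_1^{\Sigma_1}\sim sm_2^{\Sigma_2}$ (the underlying moves are literally equal, as moves carry no names).

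Next, \emph{$\strat(f)$ is an innocent S-strategy}. Non-emptiness and prefix-closure are built into the layered definition, and \emph{O-Closure} is precisely the $\strat_{2i+1}$ step. \emph{Nominal Closure} uses \emph{Nominal Closure} of $f$ plus the fact, recorded earlier, that at an O-extension the completing store is the unique one making the result an S-play, hence is selected equivariantly. For \emph{Determinacy}, even-length $sm_1^{\Sigma_1},sm_2^{\Sigma_2}\in\strat(f)$ give $\pv s m_1^{\Sigma_1},\pv s m_2^{\Sigma_2}\in f$, so these are $\sim$-related by \emph{Determinacy} of $f$; since the names of $\st s$ already occur in $\pv s$ while the remaining names of $\Sigma_i$ are fresh for $s$ (by \emph{Prev-PQ}(a)), the Strong Support Lemma applied to $\pv s$, $s$ and $m_i^{\Sigma_i}$ promotes this to $sm_1^{\Sigma_1}\sim sm_2^{\Sigma_2}$. \emph{Innocence} runs along the same lines: given $s_1m^{\Sigma_1},s_2\in\strat(f)$ with $\pv{s_1}=\pv{s_2}$ odd-length, we have $\pv{s_1}m^{\Sigma_1}\in f$; renaming the names of $\Sigma_1$ that are fresh for $s_1$ to names fresh for $s_2$ produces $\Sigma_2$ with $\pv{s_2}m^{\Sigma_2}\in f$ by \emph{Nominal Closure} of $f$, and $s_2m^{\Sigma_2}\in\Splays A$ because the clauses of Definition~\ref{def:splay} bearing on the final move constrain only the P-view and the freshly-introduced names; hence $s_2m^{\Sigma_2}$ is admitted at the next even layer, with $\pv{s_2m^{\Sigma_2}}\sim\pv{s_1m^{\Sigma_1}}$.

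For the inverse laws, consider first $\strat(\viewf(\sigma))=\sigma$. The inclusion $\subseteq$ is an induction on the layers: $\strat_0\subseteq\sigma$ trivially; the odd step is \emph{O-Closure} of $\sigma$; and for the even step, $\pv{sm^\Sigma}\in\viewf(\sigma)$ yields some $t'm^\Sigma\in\sigma$ with $\pv{t'}=\pv s$, whence \emph{Innocence}, \emph{Determinacy} and \emph{Nominal Closure} of $\sigma$ force $sm^\Sigma\in\sigma$. The inclusion $\supseteq$ is a routine induction on $|s|$ using prefix-closure of $\sigma$ and the definition of the layers. For $\viewf(\strat(f))=f$, the inclusion $\subseteq$ is immediate since any even-length $t\in\strat(f)$ was admitted at a $\strat_{2i+2}$ step, which required $\pv t\in f$. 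For $\supseteq$, given $s\in f$ I would show $s\in\strat(f)$ by induction along the even prefixes of $s$, using \emph{Even-Prefix Closure} of $f$ to keep each even prefix in $f$, the fact that a prefix of a P-view is again a P-view, and — the point needing attention — that the stores attached to O-moves in $s$ are exactly the unique stores dictated by the S-play conditions, so that the O-extension steps of $\strat$ reconstruct them verbatim; since $\pv s=s$ this gives $s\in\viewf(\strat(f))$.

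The combinatorial core — the four inclusions — is the familiar innocent-strategy/view-function correspondence and poses no real difficulty. The work lies in the nominal bookkeeping: systematically upgrading name-equivalences obtained on P-views to the surrounding plays, which rests on the Strong Support Lemma together with the freshness and locality guarantees established above (\emph{Prev-PQ}, \emph{Prev-PA}, \emph{Block form}, \emph{Close}), and on keeping the layer-by-layer construction of $\strat$ synchronised with the fact that the store of every O-move in an S-play is already uniquely determined. Making these two aspects cooperate cleanly is where care is required.
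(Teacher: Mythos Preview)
The paper offers no proof here (just $\qed$), so there is no approach to compare against. Your overall strategy and the nominal bookkeeping via the Strong Support Lemma are sound. Two points deserve attention.

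First, a slip in \emph{Even-Prefix Closure} for $\viewf(\sigma)$. You write ``$t=t''n^\Tau$'', but $t$ ends in the P-move $m^\Sigma$; presumably you mean $t=t_1m^\Sigma$ with $t_1=t''n^\Tau$. Even granting that, $\pv{t''}=s$ need not hold: the P-view of $t_1$ jumps from $n^\Tau$ back to its \emph{justifier}, not to its immediate predecessor, and there may be further moves between the two. The correct witness is the (even-length) prefix of $t$ ending at the justifier of $n^\Tau$; by prefix closure it lies in $\sigma$ and its P-view is exactly $s$.

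Second, and more seriously, your argument for $\viewf(\strat(f))\supseteq f$ asserts that ``the stores attached to O-moves in $s$ are exactly the unique stores dictated by the S-play conditions'', i.e.\ that each $s\in f$ is already an S-play and so $s\in\strat(f)$. But view-functions contain \emph{preplays}, which are exempt from Val-O: the paper's own example just before the lemma puts $q_0\,q_1^{(\alpha,0)}\,\mread^{(\alpha,i)}\,i^{(\alpha,i)}$ in $\viewf(\cell_\beta)$ for every $i$, and this violates Val-O whenever $i\neq 0$. The route ``show $s\in\strat(f)$'' therefore fails. What is needed instead is some $t\in\strat(f)$ with $\pv{t}=s$, which requires that the O-store values recorded in $s$ be realisable by an actual S-play with that P-view \emph{and} that all shorter even P-views of that S-play lie in $f$. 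For an arbitrary view-function on an arbitrary prearena this can fail (take an arena where $p_1$ enables only a single O-question and $f$ specifies a response only after a nonzero O-store value), so the identity $\viewf(\strat(f))=f$ as stated appears to overreach. The direction the paper actually uses is $\strat(\viewf(\sigma))=\sigma$, and that half of your argument goes through.
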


We can show that $\Scatinn$ exhibits the same kind of categorical structure as that obtained in~\cite{HY97} (in the context of call-by-value PCF),
which can be employed to model call-by-value higher-order computation with recursion.
In particular, let us call an S-strategy $\sigma:A\to B$ \boldemph{total} if for all $i_A\in I_A$ there is $i_Ai_B\in\sigma$. We write $\Scatinntot$ for the wide subcategory of $\Scatinn$ containing total innocent S-strategies.

For innocent S-strategies $\sigma:A\to B$ and $\tau:A\to C$, we define their \emph{left pairing} to be
$\ang{\sigma,\tau}_l= \strat(f)$, where $f$ is the view-function:
\begin{align*}
f\ &=\ \makeset{ s\in\preplays{A\to B\otimes C}\ |\ s\in \viewf(\sigma) \lland s\rest (B\otimes C)=\epsilon}
\\
&\quad\; \cup \makeset{ i_As_1s_2\in\preplays{A\to B\otimes C}\ |\ \exists i_B.i_As_1i_B\in\viewf(\sigma)\lland i_As_2\in\viewf(\tau)}
\\
&\quad\; \cup \makeset{ i_As_1s_2(i_B,i_C)s\in\preplays{A\to B\otimes C}\ |\ 
i_As_1i_Bs\in\viewf(\sigma)\lland i_As_2i_C\in\viewf(\tau)}
\\
&\quad\; \cup \makeset{ i_As_1s_2(i_B,i_C)s\in\preplays{A\to B\otimes C}\ |\ 
i_As_1i_B\in\viewf(\sigma)\lland i_As_2i_Cs\in\viewf(\tau)}
\end{align*}
We can show that left pairing yields a product in $\Scatinntot$ with the usual projections:
\[
\pi_1:A\otimes B\to A =
\makeset{s\in\Splays{A\otimes B\to A}\ |\ |s|\leq 1}\cup
\makeset{(i_A,i_B)i_As\in\Splays{A\otimes B\to A}\ |\ i_Ai_As\in\id_{A}}
\]
and dually for $\pi_2$.
Moreover, for every $A,B,C$, there is a bijection
\[
\Lambda:\Scatinn(A\otimes B,C)\overset{\cong}{\to}\Scatinntot(A,B\Rightarrow C)
\]
natural in $A,C$. In particular, for each innocent $\sigma:A\otimes B\to C$, $\Lambda(\sigma)=\strat(f)$, where
\[
f=
\makeset{i_A{*}\,i_Bs\in\preplays{A\to B\Rightarrow C}\ |\ (i_A,i_B)s\in\viewf(\sigma)}\,.
\]
The inverse of $\Lambda$ is defined is an analogous manner. We set $\ev_{A,B}=\Lambda^{-1}(\id_{A\Arr B})$.

Thus, the functional part of $\iacbv$ can be interpreted in $\Scatinn$ using the same constructions as in~\cite{HY97}.
Assignment, dereferencing and $\mkvar$ can in turn be modelled using
the relevant (store-free) innocent strategies of~\cite{AM97b}. 
Finally, the denotation of $\new{x}{M}$ is obtained by using $\mathsf{cell}_\beta$ of Example~\ref{ex:cell}.
Let us write $\ssem{\cdots}$ for the resultant semantic map.

\begin{proposition}\label{prop:model}
For any  $\iacbv$-term $\seq{\Gamma}{M:\theta}$,
$\ssem{\seq{\Gamma}{M:\theta}}$ is an innocent S-strategy. 
\end{proposition}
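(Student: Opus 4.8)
The plan is to argue by structural induction on the typing derivation of $\seq{\Gamma}{M:\theta}$, showing that the S-strategy produced at every clause of $\ssem{-}$ lands in $\Scatinn$. Before running the induction I would collect the closure facts it rests on: $\Scatinn$ is closed under composition, contains all identities, and contains the projections $\pi_1,\pi_2$, the left pairings $\ang{-,-}_l$, the currying maps $\Lambda(-)$ and the evaluation morphisms $\ev_{A,B}$ --- all already shown above to be innocent S-strategies. I would also observe that the store-free strategies of~\cite{AM97b} for $\aasg$, $\deref$ and $\mkvar$ become innocent S-strategies when read as sets of S-plays all of whose moves carry the empty store: Init, Just-P, Just-O, Prev-PQ and Val-O hold trivially once every store is $\ee$, Nominal Closure is vacuous, and the S-versions of O-Closure, Determinacy and Innocence collapse to the ordinary ones. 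Finally, $\mathsf{cell}_\beta$ is innocent by Example~\ref{ex:cell}.

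With these in place the inductive cases are routine. Variables and the constants $()$, $i$ are composites of identities with projections; $M_1\oplus M_2$, $\cond{M}{N_1}{N_0}$, $\deref M$, $M\aasg N$ and $\badvar{M}{N}$ are composites of the inductively obtained innocent S-strategies with the relevant base strategies; application $MN$ is $\ang{\ssem{M},\ssem{N}}_l$ post-composed with $\ev$; abstraction $\lambda x^\theta.M$ is $\Lambda(\ssem{M})$; and for $\new{x}{M}$ with $\seq{\Gamma,x:\vart}{M:\beta}$ I would take $\Lambda(\ssem{M}):\sem{\Gamma}\to\sem{\vart\rarr\beta}$ and post-compose with $\mathsf{cell}_\beta:\sem{\vart\rarr\beta}\to\sem{\beta}$. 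In every case we only compose, pair, curry or evaluate innocent S-strategies, so the result stays in $\Scatinn$. For $\fix{M}$ I would interpret the term as the least fixpoint of the continuous operator on $\Scatinn(\sem{\Gamma},\sem{\theta\rarr\theta'})$ induced by $\ssem{M}$.

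The hard part is not any single inductive case but securing the infrastructure those cases invoke. The $\fix{-}$ clause needs $\Scatinn$ to be cpo-enriched --- concretely, that innocent S-strategies ordered by inclusion of view-functions form a cpo whose directed joins are unions of view-functions and that composition and the other structural operations are continuous for this order; I expect to establish this by transporting the argument of~\cite{HY97} through the view-function/$\strat(-)$ correspondence recorded above. The other delicate point is that $\new{x}{-}$ is modelled \emph{inside} $\Scatinn$ and not merely inside $\Scat$, which is exactly the innocence of $\mathsf{cell}_\beta$ --- the place where the explicit-store discipline pays off. Once both are in hand, the induction itself presents no further difficulty.
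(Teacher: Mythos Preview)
Your proposal is correct and follows essentially the same approach as the paper: a structural induction resting on the categorical infrastructure of $\Scatinn$ (composition, pairing, currying, evaluation, cpo-enrichment for $\mathsf{Y}$) together with the innocence of $\mathsf{cell}_\beta$ and of the store-free strategies for the variable constructs. The paper's own proof is terser --- it simply displays the four variable-related clauses and defers everything else to~\cite{HY97} --- but the content is the same; you have spelled out more of what the deferral covers.
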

\begin{proof}
We present here the (inductive) constructions pertaining to variables, 
\begin{itemize}
\item
$\ssem{\seq{\Gamma}{\new{x}{M}:\beta}}=\sem{\Gamma}\xrightarrow{\Lambda(\ssem{M})}\sem{\vart}\Rarr\sem{\beta}\xrightarrow{\cell_\beta}\sem{\beta}
$
\item 
$\ssem{\seq{\Gamma}{M\aasg N:\comt}}=\sem{\Gamma}\xrightarrow{\ang{\ssem{M};\pi_2,\ssem{N}}_l}
(\Z\Rarr1)\otimes\Z\xrightarrow{\ev}1
$
\item
$\ssem{\seq{\Gamma}{!M}:\expt}=\sem{\Gamma}\xrightarrow{\ssem{M};\pi_1}
1\Rarr\Z\xrightarrow{\cong}(1\Rarr\Z)\otimes1\xrightarrow{\ev}\Z
$
\item
$\ssem{\seq{\Gamma}{\badvar{M}{N}:\vart}}=\sem{\Gamma}\xrightarrow{\ang{\ssem{M},\ssem{N}}_l}
\sem{\vart}
$
\end{itemize}
and refer to~\cite{HY97} for the functional constructions and the treatment of fixpoints.
\end{proof}

\cutout{With the soundness and definability results in place, we could now proceed in the familiar way to define a fully abstract model of $\iacbv$ via  taking the intrinsic quotient.
However, this would be somewhat counterproductive.
It turns out that $\rml$ is a conservative extension of $\iacbv$ (Corollary~\ref{cor:conservativity}),
so the (simpler) fully abstract model of $\rml$ from~\cite{AM97b}, based on knowing
strategies, is already fully abstract for $\iacbv$.
In fact, our model can be related to knowing strategies more precisely. Observe that by erasing
storage annotations in an innocent S-strategy $\sigma$ we obtain a knowing strategy,
which we call $\erase{\sigma}$ (determinism follows from the fact that stores in O-moves
are uniquely determined and from block-allocation). 
Let us write $\sem{\cdots}$
for the knowing strategy semantics (cast in~\cite{HY97}). 
}%
{Our model of $\iacbv$, based on innocent $S$-strategies, is closely related to one based on knowing strategies.
First observe that by erasing storage annotations in an innocent S-strategy $\sigma$ one obtains a knowing strategy
(determinacy follows from the fact that stores in O-moves are uniquely determined).
We shall refer to that knowing strategy by $\erase{\sigma}$.
Next note that the (simpler) fully abstract model of $\rml$ from~\cite{AM97b}, based on knowing strategies, also yields a model of $\iacbv$.
Let us write $\sem{\cdots}$
for this knowing-strategy semantics (cast in the Honda-Yoshida setting).
Then we have:}

\begin{lemma}
For any $\iacbv$-term $\seq{\Gamma}{M:\theta}$,
$\sem{\seq{\Gamma}{M:\theta}} = \erase{\ssem{\seq{\Gamma}{M:\theta}}}$.
\end{lemma}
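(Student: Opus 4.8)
The plan is to proceed by induction on the typing derivation of $\seq{\Gamma}{M:\theta}$, reducing the statement to the assertion that store-erasure is an identity-on-objects functor $\erase{-}\colon \Scatinn \to \mathcal{K}$, where $\mathcal{K}$ denotes the category of knowing strategies in the Honda--Yoshida setting, and that this functor preserves all the structure used to interpret $\iacbv$: finite products (left pairing and projections), the exponential adjunction $\Lambda$, evaluation $\ev$, least fixed points, the store-free knowing strategies for $\aasg$, $\deref{}$ and $\mkvar$, and the cell strategy $\cell_\beta$. Once this is in place, $\sem{-}$ and $\ssem{-}$ are given by literally the same compositional recipe---the former interpreted in $\mathcal{K}$, the latter in $\Scatinn$---so the two interpretations are matched up by $\erase{-}$ clause by clause, with the inductive hypothesis supplying the agreement on immediate subterms.

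The heart of the argument is that $\erase{-}$ preserves composition: $\erase{\sigma;\tau} = \erase{\sigma};\erase{\tau}$ for $\sigma\colon A\arr B$ and $\tau\colon B\arr C$. First one checks that $\erase{-}$ genuinely lands in knowing strategies: $\erase\sigma$ is prefix-closed and O-closed, and it is deterministic because the stores on O-moves are uniquely determined (by \emph{Just-O} and \emph{Val-O}) while the stores on P-moves are determined up to name-permutation by innocence, so erasure removes exactly the choice that nominal closure had introduced. For composition, the key observation is that erasure commutes with the interaction and hiding operations \emph{at the level of moves and pointers}: if $s\asymp t$ then $\erase s$ and $\erase t$ compose in the knowing sense (since $\erase s\rest B=\erase t\rest B$), the move-and-pointer structure of $s\iseq t$ coincides with that of the plain knowing interaction of $\erase s$ and $\erase t$, and hence $\erase{s;t}=\erase{(s\iseq t)\rest AC}=(\erase s\iseq\erase t)\rest AC=\erase s;\erase t$. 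The reverse inclusion also needs a lifting step: given a knowing interaction of $\erase\sigma$ and $\erase\tau$, one picks preimages $s\in\sigma$, $t\in\tau$ with disjoint name supports (possible by nominal closure), so that $s\asymp t$ and $s\iseq t$ erases to it. I expect this reconciliation of the store-annotated $\iseq/\mix/\nice$ bookkeeping with the plain knowing interaction to be the main obstacle---no single step is deep, but it requires carefully tracking the store data that erasure discards. Preservation of identities, $\erase{\id_A}=\id_A$, is immediate from the definition of the copycat S-strategy.

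With functoriality established, the preservation of the remaining structure is routine. Erasure commutes with the passages $\viewf(-)$ and $\strat(-)$ between innocent S-strategies and view-functions---by inspection of their defining clauses, stores being merely carried along---so it sends the view-functions defining $\ang{\sigma,\tau}_l$, $\Lambda(\sigma)$, $\pi_i$ and $\ev_{A,B}$ to the corresponding knowing view-functions built from $\erase\sigma$, $\erase\tau$; fixed points are preserved because $\erase{-}$ is monotone and commutes with the increasing unions computing $\fix{-}$. For the constants, $\erase{-}$ maps the store-free S-strategies interpreting $\aasg$, $\deref{}$ and $\mkvar$ to themselves (they are literally the knowing strategies of~\cite{AM97b}), and a direct comparison of Example~\ref{ex:cell} with the knowing ``good variable'' strategy shows that $\erase{\cell_\beta}$ is exactly the latter. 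The induction then closes: each syntactic construct of $\iacbv$ is interpreted by $\ssem{-}$ through an $\Scatinn$-combinator whose erasure is, by the above, the $\mathcal{K}$-combinator used by $\sem{-}$.
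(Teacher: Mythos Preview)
The paper states this lemma without proof, so there is no argument in the paper to compare against. Your approach---establishing that $\erase{-}$ is an identity-on-objects functor from $\Scatinn$ to the category of knowing strategies which preserves all the structure used to interpret $\iacbv$, and then running the induction on typing derivations---is the natural and correct way to fill the gap.

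Two points of imprecision are worth tightening. First, the determinacy of $\erase\sigma$ should be attributed to S-strategy \emph{Determinacy}, not innocence: the argument is that any two S-plays in $\sigma$ with the same erasure are related by a name-permutation (shown by induction on length, using that stores on O-moves are uniquely reconstructible from the preceding play via \emph{Just-O} and \emph{Val-O}, and that P-responses are determined up to $\sim$), so after erasure the next P-move is unique. Second, the phrase ``erasure commutes with the passages $\viewf(-)$ and $\strat(-)$'' does not type-check: knowing strategies are not innocent and have no view-functions, so there is nothing on the $\mathcal{K}$ side for erasure to commute with. What you actually need---and what your subsequent sentences correctly gesture at---is that the erasure of each $\Scatinn$-combinator (left pairing, $\Lambda$, $\pi_i$, $\ev$, the fixed-point operator), which happens to be \emph{defined} via view-functions, coincides with the corresponding knowing-strategy combinator, which is defined directly at the level of plays. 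This is verified by unfolding both descriptions and comparing the resulting sets of plays, exactly as you do for $\cell_\beta$. With these adjustments your outline goes through.
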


\subsection{Block-innocent strategies}

\begin{definition}
A (knowing) strategy $\sigma:A$ is \boldemph{block-innocent} if $\sigma=\erase{\sigma'}$ for some innocent S-strategy $\sigma'$.
\end{definition}

\begin{example}
Let us revisit the two plays from the Introduction. The first one
indeed comes from an innocent S-strategy (we reveal the stores below).
\[
\begin{array}{llllllllllll}
\\[3mm]
\rnode{A}{q}&
\rnode{B}{q^{(\na,0)}}&
\rnode{C}{q^{(\na,0)}}&
\rnode{D}{1^{(\na,1)}}&
\rnode{E}{q}^{(\na,1)}&
\rnode{F}{2^{(\na,2)}}&
\rnode{G}{a}^{(\na,2)}&
\rnode{H}{a}&
\rnode{I}{q}&
\rnode{J}{0}&
\rnode{K}{q}&
\rnode{L}{0}
\justf[,angleA=120,angleB=20]{E}{B}\justf[,angleA=140,angleB=20]{G}{B}\justf[,angleA=150,angleB=20]{H}{A}\justf{K}{H}
\end{array}
\]
For the second one to become innocent (in the setting with stores),
a store with variable $\alpha$, say, would need to be introduced in the second move, to justify the different responses in moves 4 and 6.
Then $\alpha$ must also occur in the seventh move by {\sc Just-O}, but
it must not occur in the eighth move by {\sc Just-P} (the $\mathit{PA}$ clause).
Hence, it will not be present in the ninth move by {\sc Just-O}.
Consequently, the last move 
is bound to break either {\sc Prev-PQ}(a) (if it contains $\alpha$) or
{\sc Just-P} (if it does not).
\[
\begin{array}{l@{\quad\;}l@{\quad\;}l@{\quad\;}l@{\quad\;}l@{\quad\;}l@{\quad\;}l@{\quad\;}l@{\quad\;}l@{\quad\;}l}
\\[3mm]
\rnode{A}{q}&
\rnode{B}{q^{\na}}&
\rnode{C}{q^{\na}}&
\rnode{D}{0^{\na}}&
\rnode{E}{q}^{\na}&
\rnode{F}{1^{\na}}&
\rnode{G}{a}^{\na}&
\rnode{H}{a}&
\rnode{I}{q}&
\rnode{K}{q}^{\na?}
\\[3mm]
\color{gray}
1&
\color{gray}2&\color{gray}3&\color{gray}4&\color{gray}5&\color{gray}6&\color{gray}7&\color{gray}8&\color{gray}9&\color{gray}10
\justf[,angleA=120,angleB=20]{E}{B}\justf[,angleA=140,angleB=20]{G}{B}\justf[,angleA=150,angleB=20]{H}{A}\justf{K}{G}
\end{array}
\]
\end{example}

\noindent The knowledge that strategies determined by $\iacbv$ are block-innocent
will be crucial in establishing a series of results in the following sections, where
we shall galvanise the correspondence by investigating 
full abstraction (Corollary~\ref{cor:conservativity}) and universality (Proposition~\ref{prop:univ}).
%
%



\section{Finitary definability and universality}

In this section we demonstrate that the game model of $\iacbv$ is complete when restricted to
\emph{finitary} or \emph{recursively presentable} innocent S-strategies. That is, every appropriately typed strategy of that kind is the denotation of some $\iacbv$ term. 
{Although finitary innocent S-strategies are subsumed by recursively presentable ones, the method of proving completeness in the latter case is much more involved and we therefore prove the two results separately. The two completeness results are called \emph{finitary definability} and \emph{universality} respectively.}

\subsection{Finitary definability}

We first formulate a decomposition lemma for innocent S-strategies which subsequently allows us to show the two results. The decomposition of innocent S-strategies follows the argument for call-by-value $\pcf$~\cite{HY97} except for the case in which the strategy replies to Opponent's (unique) initial move with a question that introduces a new name (case 8 in the lemma below).
Let us examine this case more closely, assuming
$\alpha$ to be the first variable from the non-empty store.
In order to decompose the S-strategy, say $\sigma$, consider any P-view $s$
in which $\alpha$ occurs in the second move $q_\alpha^{\Sigma_\alpha}$. It turns out
that $s$ must be of the form $q\, q_\alpha^{\Sigma_\alpha} s_\alpha s'$, where a move $m^\Sigma$ from $s$
contains $\alpha$ if, and only if, it is $q_\alpha^{\Sigma_\alpha}$ or in $s_\alpha$. In addition,
no justification pointers connect $s'$ to $q_\alpha^{\Sigma_\alpha} s_\alpha$, because of the
({\em Just-O}) and ({\em Close}) conditions.
This separation can be applied to decompose the view-function of $\sigma$. 
The $s_\alpha$ segments, put together as a single S-strategy,
can subsequently be dealt with in the style of factorisation arguments,
which remove $\alpha$ from moves
at the cost of an additional $\vart$-component.
Finally, to relate $s_\alpha$'s to the suitable $s'$ one can use numerical codes
for $q_\alpha^{\Sigma_\alpha} s_\alpha$. These ideas lie at the heart of the following result.

\cutout{
Innocent S-strategies can be \emph{decomposed} in a similar way
to the innocent strategies of~\cite{HY97}. There is one important exception, though,
which occurs when the second-move introduces a non-empty store
(our rules of play imply that the move must be a question).
Let $\alpha$ be the first variable from the non-empty store.
In order to decompose the strategy,  consider a P-view $s$
in which $\alpha$ occurs in the second move $q_\alpha$. It turns out
that $s= q q_\alpha s_\alpha s'$, where (the store of) a move $m^\Sigma$ from $s$
contains $\alpha$ if, and only if, it is $q_\alpha$ or in $s_\alpha$. In addition,
no justification pointers connect $s'$ to $q_\alpha s_\alpha$.
This separation can be applied to decompose the view-function of an innocent
strategy. The $s_\alpha$ parts, put together as a single S-strategy,
can subsequently be dealt with in the style of factorization arguments,
which remove $\alpha$ from moves
at the cost of an additional $\vart$-component.
Finally, to relate $s_\alpha$'s to the suitable $s'$ one can use numerical codes
for $q_\alpha s_\alpha$. These ideas lie at the heart of the following result.
By a finitary innocent $S$-strategy we mean
an innocent strategy whose view-function quotiented by
name-invariance is finite.
}

We fix a generic notation $\code{\_\!\_\,}$ for coding functions from enumerable sets to $\omega$. For example, $\code{i,j}$ encodes the pair $(i,j)$ as a number. There is an inherent abuse of notation in our coding notation which, nevertheless, we overlook for typographical economy. 
Moreover, we denote sequences $\theta_1,\cdots,\theta_n$ (where $n$ may be left implicit) as $\bar\theta$. In such cases, we may write $\bar\theta_i^j$ ($i\leq j$) for subsequences $\theta_i,\theta_{i+1},\cdots,\theta_j$.

\begin{lemma}[Decomposition Lemma (DL)] Let $\theta_1,\dots,\theta_m,\delta$ be types of $\iacbv$.
Each innocent S-strategy $\sigma:\sem{\theta_1}\otimes\cdots\otimes\sem{\theta_m}\arr\sem{\delta}$ can be decomposed as follows.
\begin{enumerate}[label=\bf\arabic*.]
  \item If $\theta_1,\dots,\theta_m=\overline{\theta}_1^{\,m'},\int,\overline{\theta}_{m'+2}^{\,m}$ with none of $\theta_1,\dots,\theta_{m'}$ being $\int$ then:
\begin{align*} 
\sigma &= \{(\overline{*}_1^{\,m'},i,\overline{q}_{m'+2}^{\,m})\,s\ |\ (\overline{*}_1^{\,m'},\overline{q}_{m'+2}^{\,m})\,s\in\tau_i\}\\
\text{where\ }
\tau_i &\defn \sem{\overline{\theta}_1^{\,m'}}\otimes\sem{\overline{\theta}_{m'+2}^{\,m}}\xrightarrow{\cong\,;\,\id\otimes i\otimes\id} \sem{\overline{\theta}_1^{\,m'}}\otimes \Z\otimes\sem{\overline{\theta}_{m'+2}^{\,m}}\xrightarrow{\sigma}\sem{\delta}
\end{align*}
  \item[$\bullet$\,\,] If none of $\theta_1,\dots,\theta_m$ is $\int$ then one of the following is the case.
  \begin{enumerate}[label=\bf\arabic*.,start=6]
    \item[$-$] $\overline{*}\,a\in\sigma$, in which case either:
    \begin{enumerate}[label=\bf\arabic*.,start=2]
      \item
           $\delta=\comt$ and $\sigma=\sem{\overline{\theta}}\xrightarrow{!}\one$ (the unique total S-strategy into $\one$),
      \item
           $\delta=\int$, $i\in\Z$ and $\sigma=\sem{\overline{\theta}}\xrightarrow{i}\Z$ (the unique total S-strategy into $\Z$ playing $i$),
      \item
           $\delta=\var$ and $\sigma=\ang{\sigma_1,\sigma_2}$ where $\sigma_i\defn\sigma;\pi_i$,
      \item
           $\delta=\delta'\arr \delta''$ and $\sigma=\Lambda(\sigma')$ where $\sigma'=\Lambda^{-1}(\sigma)$, that is:
      \[ \sigma':\sem{\overline{\theta}}\otimes\sem{\delta'}\arr\sem{\delta''}\defn\strat\{(\overline{*},q_{\delta'})\,s\ |\ \overline{*}\,a\,q_{\delta'}s\in\viewf(\sigma)\} \]
    \end{enumerate}
  \item
       $\overline{*}\,q\in\sigma$ with $q$ played in some $\theta_l=\var$, in which case $\sigma\cong\sigma'$ where 
\[ \sigma':\sem{\overline{\theta}_1^{\,l-1}}\otimes\sem{\overline{\theta}_{l+1}^{\,m}}\otimes(1\Arr\Z)\otimes(\Z\Arr1)\longrightarrow\sem{\delta}
\] is obtained from $\sigma$ by simply internally permuting and re-associating its initial moves.
  \item
       $\overline{*}\,q\in\sigma$ with $q$ played in some $\theta_l=\theta_{l}'\arr \theta_{l}''$, in which case
    \[ \sigma=\sem{\overline{\theta}}\xrightarrow{\ang{\Lambda(\sigma''),\pi_l,\sigma'}}(\sem{\theta_l''}\Arr \sem{\delta})\otimes(\sem{\theta_l'}\Arr\sem{\theta_l''})\otimes \sem{\theta_l'}\xrightarrow{\id\otimes\app;\,\app}\sem{\delta} \]
    where, taking $a$ to be $q$ (seen as an answer):
    \begin{align*}
      \sigma':\sem{\overline{\theta}}\arr\sem{\theta_l'}
&\defn\strat(\{\overline{*}\,a\}\cup\{\overline{*}\,a\,q_l'\,s\ |\ \overline{*}\,q\,q_l'\,s\in\viewf(\sigma)\land q_l'\in M_{\sem{\theta_l'}}\}) \\
      \sigma'':\sem{\overline{\theta}}\otimes\sem{\theta_l''}\arr \sem{\delta} &\defn \strat\{(\overline{*},q_l'')s\ |\ \overline{*}\,q\,a_l'' s\in\viewf(\sigma)\land q_l''=a_l'' \}
    \end{align*}
  \item
       $\overline{*}\,q^\Sigma\in\sigma$ with $\dom(\Sigma)=\na\cdots$, in which case
    \[ \sigma=\sem{\overline{\theta}}\xrightarrow{\ang{\Lambda(\sigma'),\id}}(\sem{\vart}\rarr\Z)\otimes\sem{\overline{\theta}}\xrightarrow{\cell\otimes\id}
    \Z\otimes\sem{\overline{\theta}}
    \xrightarrow{\sigma''}\sem{\delta}\]
    where:
    \begin{align*}
        \sigma':\sem{\vart}\otimes\sem{\overline{\theta}}\arr\Z &\defn\strat(\,\{\psi(sm^\Tau)\ |\ sm^\Tau\in\viewf(\sigma)\land\na\in\nu(\Tau)\} \\
        &\qquad\qquad\cup \{\psi(s)\code{s}\ |\ sm^\Tau\in\viewf(\sigma)\land\na\in\nu(\st{s}\remv\Tau)\}\,) \\
        \psi(o^{(\na,i)::\Tau} s)&\defn o^\Tau \rd^\Tau i^\Tau\psi(s) \\
        \psi(p^{(\na,i)::\Tau} s)&\defn \mwrite{i}^\Tau \ok^\Tau p^\Tau\psi(s) \\
        \sigma'':\Z\otimes\sem{\overline{\theta}}\arr\sem{\delta} &\defn \strat\{(\code{s},\overline{*})\,m^\Tau t\ |\ s\,m^\Tau t\in\viewf(\sigma)\land\na\in\nu(\st{s}\remv\Tau) \}
    \end{align*}
  \end{enumerate}
\end{enumerate}
\end{lemma}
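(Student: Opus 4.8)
The plan is to follow the decomposition argument for call-by-value PCF of~\cite{HY97}, performing a case analysis on how the innocent S-strategy $\sigma$ acts on Opponent's initial move(s), and to treat the genuinely new situation --- case~\textbf{8}, where P responds with a question carrying a fresh name --- by an explicit factorisation through $\cell_\beta$. First I would reduce everything to view-functions: since $\strat(\viewf(\sigma))=\sigma$, it suffices to identify $\viewf(\sigma)$ and to check that each claimed composite has the stated view-function. Well-definedness of each auxiliary $\sigma',\sigma''$ then amounts to verifying the four view-function axioms (\emph{View}, \emph{Even-Prefix Closure}, \emph{Nominal Closure}, \emph{Determinacy}), all of which are inherited from the corresponding properties of $\viewf(\sigma)$ essentially by construction, once one observes that the transformations used ($\psi$, coding, restriction to P-views through $\ast\,a$ or $\ast\,q$) are compatible with name-invariance and preserve P-views.

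Cases \textbf{1}--\textbf{7} are routine adaptations. If an integer occurs among the initial moves we peel it off, obtaining the $\tau_i$ of case~\textbf{1}. Otherwise we inspect P's response to the unique initial question: by alternation, well-bracketing and the shape of the prearena it is either an answer $a$ --- in which case we recurse on $\delta$ exactly as in loc.\ cit., using the bijection $\Lambda:\Scatinn(A\otimes B,C)\xrightarrow{\cong}\Scatinntot(A,B\Rightarrow C)$, the product pairing and the projections $\pi_i$ to obtain cases \textbf{2}--\textbf{5} --- or a question in one of the argument arenas. If that question lies in a $\var$-component we merely reassociate and permute the initial moves (case~\textbf{6}); if it lies in a higher-type component $\theta_l'\arr\theta_l''$ we split along the copycat link exactly as for PCF, recovering $\sigma$ via two applications of $\ev$ (case~\textbf{7}).

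The heart of the argument is case~\textbf{8}, where $\overline{*}\,q^\Sigma\in\sigma$ with $\dom(\Sigma)=\na\cdots$. Invoking the \emph{Block form} property together with the \emph{Close} and \emph{Just-O}/\emph{Just-P} conditions, every P-view $s\in\viewf(\sigma)$ that contains $\na$ factors uniquely as $q\,q_\na^{\Sigma_\na}\,s_\na\,s'$, where a move of $s$ carries $\na$ iff it is $q_\na^{\Sigma_\na}$ or lies in $s_\na$, and no justification pointer crosses from $s'$ back into $q_\na^{\Sigma_\na}\,s_\na$. This licenses two independent extractions. First, the family of $s_\na$-segments, assembled and with $\na$ stripped from every store by the translation $\psi$ (which replaces an O-move carrying value $i$ for $\na$ by the pair $\rd\,i$, and a P-move resetting $\na$ to $i$ by $\mwrite{i}\,\ok$), yields an innocent S-strategy $\sigma':\sem{\vart}\otimes\sem{\overline{\theta}}\arr\Z$ --- this is the usual factorisation that removes $\na$ from moves at the cost of an extra $\vart$-component. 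Second, the residual behaviour $q\,s'$, tagged with the numerical code $\code{q_\na^{\Sigma_\na}\,s_\na}$ so that the continuation remembers which block-run produced it, yields $\sigma'':\Z\otimes\sem{\overline{\theta}}\arr\sem{\delta}$. One then checks that $\sem{\overline{\theta}}\xrightarrow{\ang{\Lambda(\sigma'),\id}}(\sem{\vart}\rarr\Z)\otimes\sem{\overline{\theta}}\xrightarrow{\cell\otimes\id}\Z\otimes\sem{\overline{\theta}}\xrightarrow{\sigma''}\sem{\delta}$ recovers $\sigma$: unwinding the interaction, $\cell_\beta$ supplies a private name $\beta$, $\Lambda(\sigma')$ runs the $\psi$-image of $s_\na$ against it so that the read/write protocol of $\cell_\beta$ faithfully reconstructs the values $\Sigma_\na(\na),\ldots$ occurring along $s_\na$, $\cell_\beta$ returns the code $\code{s}$, and $\sigma''$ continues with $s'$; compositionality and associativity guarantee the three pieces glue correctly, and \emph{Val-O} ensures O never perturbs $\beta$, matching the original play.

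The main obstacle is precisely this last verification in case~\textbf{8}: showing that the $\psi$-encoded interaction with $\cell_\beta$ reproduces --- store for store, via the $\nice$-constructor semantics of $\iseq$/$\mix$ --- the stores appearing along $s_\na$ in $\sigma$, and that the coding $\code{\cdot}$ is sufficiently injective on the relevant name-equivalence classes of segments $q_\na^{\Sigma_\na}\,s_\na$ for $\sigma''$ to satisfy \emph{Determinacy}. Everything else is bookkeeping: alternation, well-bracketing and visibility pin down the shape of the initial exchanges, and the view-function axioms transfer mechanically; but the round trip through $\cell_\beta$ forces one to track carefully how $\beta$ is introduced, updated and closed, and to confirm that the \emph{Prev-PQ}, \emph{Just-O}, \emph{Just-P} and \emph{Close} conditions on the composite S-play coincide with those already satisfied inside $\sigma$.
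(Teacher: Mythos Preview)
Your proposal is correct and follows essentially the same approach as the paper: cases \textbf{1}--\textbf{7} are deferred to the Honda--Yoshida argument, and case~\textbf{8} is handled by exploiting the block-form property (the paper cites \emph{Just-O} and \emph{Close} specifically) to split each P-view into the $\alpha$-carrying segment and its continuation, then factoring out $\alpha$ via an extra $\sem{\vart}$ component and linking the two halves with a numerical code of the P-view. If anything, you articulate more of the bookkeeping (view-function axioms, the determinacy obligation for $\code{\cdot}$, the store-level verification of the $\cell$ round trip) than the paper's own proof, which is rather terse on these points.
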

\begin{proof}
Cases {\bf1}-{\bf7} are the standard ones that also occur for call-by-value $\pcf$~\cite{HY97}.
Case {\bf8} is the most interesting one. Here we exploit the fact that, once $\alpha$ occurs
in the second move of a P-view, it appears continuously (in the P-view) until it is dropped by Proponent.
Moreover, after $\alpha$ has been dropped, no move will ever have a justification pointer
to a move containing $\alpha$ (because of {\em Just-O} and {\em Close}). 
The $\sigma'$ strategy tracks the behaviour of $\sigma$
until $\alpha$ is dropped, at which point it returns the code of the current P-view.
$\sigma''$ in turn will take a code of such a P-view and will continue the play, 
as $\sigma$ would. Additionally, $\alpha$ is factored out  in $\sigma'$ through
an extra $\sem{\vart}$ arena, as in the factorisation argument of~\cite{AM97a}.
\end{proof}
\begin{definition}
We call an innocent S-strategy $\sigma$ \boldemph{finitary} if its view-function is finite modulo name-permutation, that is, if the set
\[
O(\viewf(\sigma)) = \makeset{\makeset{\pi\cdot s\ |\ \pi\in\permg}\ |\ s\in\viewf(s)}
\]
is finite. Accordingly, we call a block-innocent strategy finitary if the underlying innocent S-strategy is finitary.
\end{definition}
\begin{proposition}[Finitary definability]\label{prop:defin}
Let $\theta_1,\dots,\theta_m,\delta$ be types of $\iacbv$. For any finitary
innocent S-strategy $\sigma:\sem{\theta_1}\otimes\cdots\otimes\sem{\theta_m}\arr\sem{\delta}$ there exists a term $x_1:\theta_1,\dots,x_m:\theta_m\vdash M:\delta$ such that 
$\sigma=\ssem{\seq{x_1:\theta_1,\dots,x_m:\theta_m}{M}}$.
\end{proposition}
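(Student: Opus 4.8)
The plan is to prove this by a well-founded induction on $\sigma$ in which the Decomposition Lemma is applied repeatedly: for a given finitary innocent S-strategy $\sigma:\sem{\theta_1}\otimes\cdots\otimes\sem{\theta_m}\arr\sem{\delta}$ we read off which clause of the DL produces $\sigma$, invoke the induction hypothesis on the (finitary, strictly smaller) component S-strategies it returns, and glue the resulting $\iacbv$-terms together with the term constructors that realise the categorical operations occurring in that clause: composition $\leftrightarrow$ $\mathsf{let}$/substitution, $\ang{-,-}_l$ $\leftrightarrow$ tupling and $\mkvar$, $\Lambda$ $\leftrightarrow$ $\lambda$-abstraction, $\ev$ $\leftrightarrow$ application, and $\cell_\beta$ $\leftrightarrow$ $\new{x}{-}$. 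Concretely: in clause {\bf2} take $M=()$; in {\bf3}, $M=i$; in {\bf4}, $M=\badvar{N_1}{N_2}$ with $N_1,N_2$ realising $\sigma;\pi_1,\sigma;\pi_2$; in {\bf5}, $M=\lambda x^{\delta'}.N'$ with $N'$ realising $\sigma'$; in {\bf7}, $M=\letin{z=x_l\,N'}{N''}$ with $N',N''$ realising $\sigma',\sigma''$; in {\bf6}, take $N$ realising the re-associated $\sigma'$ over the context with $x_l:\vart$ replaced by $y_1:\comt\rarr\expt$, $y_2:\expt\rarr\comt$, and set $M=N[\lambda z^\comt.{!x_l}/y_1,\ \lambda w^\expt.(x_l\aasg w)/y_2]$; and in {\bf8}, $M=\letin{z=\new{x}{N'}}{N''}$ with $N'$ realising $\sigma'$ over the context extended by $x:\vart$ and $N''$ realising $\sigma''$. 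Before any of this one disposes of the two trivial situations: $\sigma$ not responding to the unique initial move, where $M$ is a divergent term of type $\delta$ built from $\fix{-}$; and clauses {\bf2}--{\bf3}, which are outright base cases. That $\ssem{M}=\sigma$ is then in each clause a routine check using Proposition~\ref{prop:model} and the equations for products, exponentials and $\cell_\beta$ in $\Scatinn$.

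Two places use finitariness in an essential way, and both are handled by the standard device of replacing an a priori infinite family by a finite nested conditional with a default branch. In clause {\bf1}, $O(\viewf(\sigma))$ being finite forces $i\mapsto\tau_i$ to have finite image and to be constant (equal to some fixed $\tau_\infty$, possibly the empty strategy) for all but finitely many $i$; so, writing $M_1,\dots,M_r,M_\infty$ for terms realising the finitely many distinct $\tau_i$'s by the induction hypothesis, one takes $M$ to be a finite conditional on $x_{m'+1}$ selecting $M_k$ when $x_{m'+1}$ matches the corresponding value and falling through to $M_\infty$ otherwise (using that $\iacbv$ can compare an integer expression against a fixed numeral via $\oplus$ and $\cond{-}{-}{-}$). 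Symmetrically, in clause {\bf8} the codes $\code{q_\alpha^{\Sigma_\alpha}s_\alpha}$ that actually arise range over a finite set, so $\sigma''$ is again realised by a term performing a finite case distinction on $z$. One also has to observe that every component S-strategy produced by the DL is itself finitary, which is immediate since its view-function is obtained from $\viewf(\sigma)$ by truncation, re-labelling, re-association or the $\psi$-translation, none of which breaks finiteness modulo name-permutation.

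The crux — and the step I expect to be the main obstacle — is setting up the well-founded order so the recursion terminates, especially for clause {\bf8}. The intended measure is a lexicographic product whose dominant component is the maximal number of distinct names occurring in a single P-view of $\viewf(\sigma)$, with the standard call-by-value PCF definability measure of~\cite{HY97} (essentially P-view lengths together with the sizes of the context and conclusion types, adjusted for the $\vart$-components as in~\cite{AM97a,AM97b}) as the subordinate component. Clauses {\bf1}--{\bf7} leave the dominant component unchanged and strictly decrease the subordinate one, the only delicate point being clause {\bf6}, where the re-associated $\sigma'$ has its next move inside an arrow arena and is therefore immediately decomposed by clause {\bf7}, so one either treats {\bf6}+{\bf7} as a single step or records the number of still-undecomposed $\vart$-components in the context. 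Clause {\bf8} is where the dominant component pays off: $\sigma'$ is obtained by the $\psi$-translation, which deletes the outermost private variable $\alpha$ from every store, so each of its P-views contains strictly fewer distinct names than the corresponding P-view of $\sigma$; and every P-view of $\sigma''$ lies after the point where $\alpha$ is dropped, hence by the {\sc Close} property behind the DL contains no occurrence of $\alpha$ at all, again strictly decreasing the count. Checking that $\sigma'$ and $\sigma''$ are genuine S-strategies of the stated types, and that $\psi$ behaves as claimed, leans on the block-form analysis ($\alpha$ occupies a contiguous segment of every P-view, with no justification pointers crossing out of it) that already underpins case {\bf8} of the Decomposition Lemma; this is the part requiring genuine care, everything else being the familiar Hyland--Ong/Honda--Yoshida definability routine extended with $\mkvar$ and block allocation.
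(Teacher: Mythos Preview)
Your proposal is correct and follows essentially the same approach as the paper: well-founded induction driven by the Decomposition Lemma, with the measure combining the maximal number of names in a P-view and a size-of-view-function component, and with case~{\bf 8} realised by $\letin{x=(\new{y}{M'})}{M''}$. The paper is terser (it simply says ``the right measure is obtained by combining the size of the view-function quotiented by name-permutation and the maximum number of names occurring in a single P-view'' and defers cases {\bf 1}--{\bf 7} to~\cite{HY97}), but your explicit identification of the name-count as the \emph{dominant} lexicographic component is exactly what makes case~{\bf 8} go through, since the $\psi$-translation lengthens P-views while removing one name.
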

\begin{proof}
We rely on the decomposition lemma to reduce the suitably calculated size of the strategy.
The right measure is obtained by combining the size of the view-function quotiented by
name-permutation and the maximum number of names occurring in a single P-view.
It then suffices to establish that in  each case 
the reconstruction of the original strategy can be supported by the syntax.
For the first seven cases we can proceed as in~\cite{HY97}.
For the eighth case,
let $\seq{y:\vart, \Gamma}{M':\expt}$ and $\seq{x:\expt, \Gamma}{M'':\delta}$
be the terms obtained by IH for $\sigma'$ and $\sigma''$ respectively.
Then, in order to account for $\sigma$,  one can take $\letin{x=(\new{y}{M'})}{M''}$.
\end{proof}

\cutout{
\begin{proposition}[Finitary Definability and Universality]
\begin{asparaitem}
\item Any finitary innocent S-strategy
is $\iacbv$-definable.
\item Any recursively presentable innocent S-strategy is $\iacbv$-definable.
\end{asparaitem}
\end{proposition}}

\subsection{Universality}

We now proceed with the universality result.
In the rest of this section we closely follow the presentation of~\cite{AJM00}; the reader is referred thereto for a more detailed exposition of the background material.
Let us fix an enumeration of partial recursive functions such that $\phi_n$ is the $n$-th partial recursive function. 

The universality result concerns innocent S-strategies. Recall that S-strategies and their view-functions are saturated under name-permutations and, in fact, view-functions only become functions after nominal quotienting. To represent them we introduce an encoding scheme that is not dependent on names. 
Let us define a function $\eff$ which converts S-plays to plays in which moves are attached with lists of integers:
\[ 
\eff(so^\Sigma)\defn\eff(s)o^{\pi_2(\Sigma)}\,,\quad\eff(sm^\Sigma p^\Tau)\defn\eff(sm^\Sigma)p^{\pi_2(\Tau),|\Tau\remv\Sigma|}\,. 
\]
Thus, from an O-move $o^\Sigma$ we only keep the values stored in $\Sigma$, whereas in a P-move $p^\Tau$ we keep the values of $\Tau$ and a number indicating how many of the names of $\Tau$ are freshly introduced. Because of the conditions on stores that S-plays satisfy, $\eff$ maps two S-plays to the same encoding if, and only if, they are nominally equivalent. 
In the sequel we assume that S-plays are given using the encoding above.

For the rest of the section we assume that \nt{$\preplays{A}$} is recursively enumerable, which is clearly the case for denotable prearenas.

\begin{definition}
A subset of \nt{$\preplays{A}$} (for instance, a strategy or a view-function)
will be called \boldemph{recursively presentable} if it is a recursively enumerable subset of \nt{$\preplays{A}$}. A block-innocent strategy will be called  recursively presentable if the underlying
innocent S-strategy is recursively presentable.
\end{definition}
It follows that an innocent S-strategy $\sigma$ is recursively presentable if, and only if, its view-function is.
We therefore encode an innocent S-strategy $\sigma$ by $\code{\sigma}$, where the latter is the index $n$ such that $\viewf(\sigma)=\phi_n$ (with $\phi_n$ seen as a partial function from codes of P-views of S-plays to codes of P-moves).

\cutout{
We write $\recur$ for the wide subcategory of $\Scatinn$ containing  recursive S-strategies.
Above we used the fact that recursive S-strategies are closed under composition.
Since all our language constructors preserve recursiveness, $\recur$ is a sound model of $\iacbv$.
We therefore encode an innocent S-strategy $\sigma$ by $\code{\sigma}$ where the latter is seen indifferently as the index $n$ such that either $\sigma=W_n$ or $\viewf(\sigma)=\phi_n$. 
}

We want to show that any recursively presentable innocent S-strategy is definable by an $\iacbv$-term. The result will be proved by constructing a term that accepts the code of a given strategy, examines its initial behaviour, mimics it and, after a subsequent O-move, is ready to explore the relevant component of the decomposition. 
Observe that if we start from a strategy on $\sem{\theta_1,\cdots,\theta_k\vdash\theta}$ the decomposition will lead us to consider strategies on $\sem{\theta_1',\cdots,\theta_l'\vdash\theta'}$, where each $\theta_i'$ (as well as $\theta'$) is a subtype of some $\theta_j$ or $\theta$. Since the given strategy will in general be infinite, repeated applications of the Decomposition Lemma will mean that $l$ is unbounded. To keep track of the current component we will thus need to be able to represent unbounded lists of variables whose types are subtypes of $\theta_1,\cdots,\theta_k,\theta$. This issue is tackled next.

\paragraph{\bf List contexts} 
We say that a set of types $T$ is \boldemph{closed} if whenever $\theta\in T$ and $\delta$ is a subtype of $\theta$ then $\delta\in T$. 
For the rest of this section let us fix a closed finite set of types $T$ and an ordering of $T$, say $T=T_0,T_1,\dots,T_n$, such that 
$T_0=\comt$, $T_1=\int$, $T_2=\var$,
$T_3=\comt\to\expt$ and $T_4=\expt\to\comt$.

For each $i$, we encode lists of type $T_i$ as products $\int\times(\int\arr T_i)$. In particular, we use the notation
\[ z:\List{T_i},\,\Gamma \vdash M:\delta \]
as a shorthand for
\[ z^L:\int,\,z^R:\int\arr{T_i},\,\Gamma \vdash M:\delta \]
Thus, $z^L$ represents the length of the represented list. 
For each $1\leq i\leq z^L$, the value of the $i$-th element in the list is represented by $z^Ri$. The list can be shortened by simply `reducing' $z^L$. For example, for a term $z:\List{T_i},\,\Gamma \vdash M:\delta$ we can form
\[ z:\List{T_i},\,\Gamma \vdash \letin{z^L=z^L{-}1}{M}:\delta. \]
Note that, although the notation seems to suggest differently, the above is {unrelated} to variable assignment: it stands for
$(\lambda z^L.M)(z^L{-}1)$. 
A finer removal of a list element is executed as follows.
For a term $z:\List{T_i},\,\Gamma \vdash M:\delta$ and an index $j$, we define the term
\[  z:\List{T_i},\,\Gamma \vdash \remove{z}{j}{M}:\delta \]
to be
\[  
z:\List{T_i},\,\Gamma \vdash (\lambda z^L.\lambda z^R.\,M)(z^L{-}1)(\lambda x.\ \cond{x<j}{z^Rx}{z^R(x{+}1)}):\delta. 
\]
A list can be extended as follows.
For terms $z:\List{T_i},\,\Gamma \vdash M:\delta,\,N:T_i$ and an index $j$ we define the term
\[  
z:\List{T_i},\,\Gamma \vdash \add{z}{j}{N}{M}:\delta 
\]
to be:
\[  
z:\List{T_i},\,\Gamma \vdash (\lambda z^L.\lambda z^R.\,M)(z^L{+}1)(\lambda x.\ 
\cond{x<j}{z^Rx}{%
\cond{x=j}{N}{z^R(x{-}1)}}):\delta 
\]
Let us use the shorthands \nt{
\[
\Hextend{z}{N}{M}\qquad
\Textend{z}{N}{M}
\]}%
for $\add{z}{1}{N}{M}$ 
and $\add{z}{z_L{+}1}{N}{M}$ respectively
(that is, $\sf Hextend$ inserts at the head of lists and $\sf Textend$ at the tail).

We can define an (effective) indexing function $\indx$ which, for any sequence (possibly with repetitions) $\theta_1,\dots,\theta_m$ of types from $T$, returns a pair of numbers $(i,j)$ such that
$\theta_m=T_i$ and there are $j$ occurrences of $T_i$ in $\theta_1,\dots,\theta_m$.

Suppose now we have such a sequence $\overline \theta$ and a term $z_0:\List{T_0},\,z_1:\List{T_1},\,\dots,\, z_n:\List{T_n} \vdash M:\delta$. We can \emph{de-index} $M$ with respect to $\overline \theta$, obtaining the term $x_1:\theta_1,\dots,x_m:\theta_m \vdash \deindx{M}:\delta$, defined as
\[ \deindx{M} \defn \letin{\overline{z=\bot}}{(\Hextend{z_{l_m}}{x_m}{(\dots\ (\Hextend{z_{l_1}}{x_1}{M}))})}, \]
where 
\[ \letin{\overline{z=\bot}}{N}\defn \letin{z_{0}^L=0,z_{0}^R=\lambda x.\Omega}{(\dots\ (\letin{z_{n}^L=0,z_{n}^R=\lambda x.\Omega}{N}))} \]
and, for each $1\leq i\leq m$, $\theta_i=T_{l_i}$.
Note that the extensions above are executed from left to right so, in particular, $x_1$ will be related to $z_{l_1}^R\,1$.

\paragraph{\bf Universal terms}
Given a closed set of types $T$, the way we prove universality is by constructing for each $\delta\in T$ a \emph{universal term} $z_0:\List{T_0},\dots,z_n:\List{T_n}\vdash F_\delta:\int\arr\delta$ such that,
for every sequence $\theta_1,\dots,\theta_m$ from $T$ and recursively presentable S-strategy $\sigma:\sem{\theta_1}\otimes\cdots\otimes\sem{\theta_m}\arr\sem{\delta}$,
\[ \sigma = \ssem{\deindx (F_\delta\code{\sigma})}\,. \]
We first need to make sure that we can move inside the Decomposition Lemma effectively, i.e.\ that the passage from the code of the original strategy to the code of the components is effective and that the case which applies can also be computed from the index of the original strategy. Here is such a recursive version of the Decomposition Lemma (for types in $T$).\newpage

\begin{lemma}\label{l:DLDL}
There are partial recursive functions
\[ D,H:\omega\rightharpoonup\omega\text{ and }B:\omega\times\omega\rightharpoonup \omega \]
such that, for any $\theta_1,\dots,\theta_m,\delta\in T$ and recursively presentable S-strategy $\sigma:\sem{\theta_1}\otimes\cdots\otimes\sem{\theta_m}\arr\sem{\delta}$,
\begin{align*}
D\code{\sigma}&=\begin{cases}
\,i &\text{ if $\sigma$ falls within the $i$-th case of DL}\\ \bot &\text{ otherwise}
\end{cases}
\\
B(\code{\sigma},i) &= \begin{cases}
\code{\tau_i} &\text{ if $\sigma$ and $\tau_i$ are related as in first case of DL}\\
 \bot &\text{ otherwise}
\end{cases}\\
H\code{\sigma} &=\begin{cases}
\,i &\text{ if $\sigma,i$ are related as in third case of DL}\\
\code{\code{\sigma_1},\code{\sigma_2}} &\text{ if $\sigma,\sigma_1,\sigma_2$ are related as in fourth case of DL}\\
\code{\sigma'} &\text{ if $\sigma,\sigma'$ are related as in fifth case of DL}\\
\code{i,\code{\sigma'}} &\text{ if $\sigma,\sem{\theta_l},\sigma'$ are related as in sixth case of DL}\\
&\text{ and }
\indx(\theta_1,...\,,\theta_{l})=(2,i)
\\	
\code{i_1,i_2,\code{\sigma'},\code{\sigma''}} &\text{ if $\sigma,\sem{\theta_l},\sigma',\sigma''$ are related as in seventh case of DL}\\
&\text{ and }\indx(\theta_1,...\,,\theta_l)=(i_1,i_2)\\	
\code{\code{\sigma'},\code{\sigma''}} &\text{ if $\sigma,\sigma',\sigma''$ are related as in eighth case of DL}
\end{cases}
\end{align*}
\end{lemma}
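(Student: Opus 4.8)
The plan is to prove Lemma~\ref{l:DLDL} by exhibiting explicit constructions of the partial recursive functions $D$, $H$ and $B$ from an index $\code{\sigma}$ for the view-function $\viewf(\sigma)=\phi_{\code\sigma}$, so that each function performs a bounded search through the enumeration of $\viewf(\sigma)$ to detect which case of the Decomposition Lemma applies and then re-encodes the relevant component strategies. The crucial observation is that every case of the DL is detected by inspecting the \emph{second} move of P-views in $\sigma$ (together with, in the first case, the types of the initial moves, which are fixed data determined by $T$ and hence available as a constant): e.g.\ case~2 vs.\ case~\textbf{7} is decided by whether the second move is an answer or a question; if a question, which $\theta_l$ it is played in is read off its syntactic form; and case~\textbf{8} is the case where that question carries a non-empty store $q^\Sigma$ with $\dom(\Sigma)=\na\cdots$. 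All of this is a finite amount of information about $\sigma$, extractable by enumerating $\phi_{\code\sigma}$ until a P-view of length $\geq 2$ appears (or, for case~\textbf{2}, of length exactly $2$ ending in an answer), which justifies the definition of $D$.

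\textbf{For $B$ and $H$}, I would go case by case through the Decomposition Lemma and observe that in each case the component strategies ($\tau_i$, $\sigma_1,\sigma_2$, $\sigma'$, $\sigma''$, $\ldots$) are defined from $\viewf(\sigma)$ by \emph{primitive recursive} manipulations of P-views: restriction of the defining set by a decidable predicate on the leading moves, prefixing/deleting an initial move, applying $\Lambda$ or $\Lambda^{-1}$ (which is the syntactic reshuffle $(i_A*\,i_Bs)\leftrightarrow((i_A,i_B)s)$), re-associating and permuting initial moves, and — in case~\textbf{8} — applying the auxiliary functions $\psi$ and the coding $\code{\;}$. Since all these operations on (encodings of) S-plays are computable, and since $\viewf(\tau_i)$ etc.\ are obtained by composing $\phi_{\code\sigma}$ with such a computable transformation, the s-m-n theorem yields an index for each component, uniformly in $\code\sigma$ (and in $i$ for $B$). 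This gives total-where-defined partial recursive $B$ and $H$ with exactly the case split stated, the last clause of $H$ packaging the two indices $\code{\sigma'},\code{\sigma''}$ of case~\textbf{8} via the pairing $\code{\;}$. The function $\indx$ needed in the sixth and seventh clauses of $H$ was already shown to be effective in the text, so it may be invoked directly.

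\textbf{The main technical point to get right} — and the step I expect to be the main obstacle — is case~\textbf{8}: one must verify that the sets defining $\sigma'$ and $\sigma''$ in the DL are genuinely recursively enumerable \emph{uniformly} in $\code\sigma$. This requires that the predicate ``$\na\in\nu(\st{s}\remv\Tau)$'', the segmentation of a P-view into $q\,q_\alpha^{\Sigma_\alpha}\,s_\alpha\,s'$, and the auxiliary translation $\psi$ are all computable on the name-free $\eff$-encoding of S-plays introduced just before the Universality subsection. Since under that encoding a P-move $p^\Tau$ records $\pi_2(\Tau)$ together with the count $|\Tau\remv\Sigma|$ of freshly-introduced names, the ``first name $\alpha$'' is tracked positionally, the drop of $\alpha$ is detected when the store shrinks below the relevant length, and $\psi$ just interleaves the fixed $\rd/i$ and $\mwrite{i}/\ok$ move-pairs — all manifestly recursive. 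Once this is in place, each component's view-function is $\phi_{\code\sigma}$ post-composed with a fixed recursive function on codes, and s-m-n delivers $B$ and $H$; the verification that the resulting indices satisfy the displayed equations is then immediate from the definitions in the Decomposition Lemma. I would also remark that $D$, $B$, $H$ are genuinely partial: they diverge precisely when $\code\sigma$ is not an index of a (view-function of a) well-formed innocent S-strategy, which is consistent with the ``$\bot$ otherwise'' clauses.
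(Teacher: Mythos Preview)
Your proposal is correct and follows essentially the same approach as the paper's proof: detect the applicable DL case from the type information and the second move, then obtain component indices by composing $\phi_{\code\sigma}$ with computable transformations of P-views and invoking the s-m-n theorem. Two minor points of alignment: the paper assumes the type sequence $\theta_1,\dots,\theta_m,\delta$ is encoded in $\code\sigma$ (it is not a constant, since $m$ varies), and since $\phi_{\code\sigma}$ is taken as a partial function from (codes of) odd-length P-views to P-moves, $D$ simply \emph{applies} $\phi_{\code\sigma}$ to the unique initial move rather than enumerating; your more detailed treatment of case~\textbf{8} and the $\eff$-encoding is a welcome elaboration of what the paper dismisses with ``along the same lines''.
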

\begin{proof}
{We assume that the type of $\sigma$ is represented in $\code{\sigma}$ and can be effectively decoded by $D,B$ and $H$.}
$D\code{\sigma}$ returns 1 if any of the $\theta_i$'s is $\int$, otherwise it applies $\phi_{\code{\sigma}}$ to the unique initial move of $\sem{\overline{\theta}}$ and returns the number corresponding to the result. For $B$, given $\code{\sigma},i$, membership in $\tau_i$ is checked as follows. For any (P-view) S-play $s$, we add $i$ to its initial move and check whether the resulting S-play is a member of $\sigma$. Thus we obtain $\phi_n$ such that $s\mapsto\phi_n(s,\code{\sigma},i)$ is the characteristic function of $\tau_i$. By an application of the S-m-n theorem we obtain $\code{\tau_i}$. For $H$ we argue along the same lines. 
\end{proof}

Since (call-by-value) $\pcf$ is Turing complete, there are closed $\pcf$-terms $\tilde D,\tilde H:\int\arr\int$ and $\tilde B:\int\arr\int\arr\int$ that represent each of the above functions with plays of the form \nt{$q\,{*}\,n\,f(n)$} or $q\,{*}\,m\,{*}\,n\,f(m,n)$.
The terms will be used inside the universal term, which will be constructed by mutual recursion (there are standard techniques to recast such definitions in $\pcf$).
Let us write $\theta=T_{l(\theta)}\arr T_{r(\theta)}$ whenever $\theta\in T$ is of arrow type (so $l,r:T\arr\{0,\dots,n\}$).\newpage

\begin{definition}
For each $\delta\in T$ we define  terms 
\[
z_0:\List{T_0},\dots,z_n:\List{T_n}\vdash F_\delta:\int\arr\delta
\]
by mutual recursion as follows.
\begin{align*}
  F_\delta \defn \lambda k^\int.\ &\mathsf{if}\ z_{1}^L\neq0\ \mathsf{then}\
  \letin{x=z_1^R1}{\remove{z_1}{1}{F_\delta(\tilde{B}\,k\,x)}}\\
  &\mathsf{else}\ 
    \mathsf{case}\ (\tilde{D}\,k)\ \mathsf{of} \\
  &\qquad2:\ \sskip\\
  &\qquad3:\ \tilde{H}k\\
  &\qquad4:\ \letin{\code{k_1,k_2}=\tilde{H}k}{\mkvar(F_{\comt\arr\int}\,k_1,F_{\int\arr\comt}\,k_2)}\\
  &\qquad5:\ \lambda y^{T_{l(\delta)}}.\,\Textend{z_{l(\delta)}}{y}{F_{T_{r(\delta)}}(\tilde{H}\,k)}\\
  &\qquad6:\ \letin{\code{i,k}=\tilde{H}k}{}\\ 
  &\qquad\qquad\Textend{z_3}{\lambda x^\comt.\,!(z_2^Ri)}{}\\[-1mm]
  &\qquad\qquad\;\; \Textend{z_4}{\lambda x^\int.\,(z_2^Ri)\aasg x}{\remove{z_2}{i}{F_\delta\,k}}\\
  &\qquad7:\ \letin{\code{i_1,i_2,k_1,k_2}=\tilde{H}k}{\mathsf{case}\ i_1\ \mathsf{of}}\\
  &\qquad\qquad1:\  \dots \\[-2.5mm]
  &\qquad\qquad\,\vdots \\[-1.5mm]
  &\qquad\qquad j:\ \Textend{z_{r(T_j)}}{(z_{j}^R\,i_2)(F_{T_{l(T_j)}}k_1)}{F_\delta\,k_2}\\[-2.5mm]
  &\qquad\qquad\,\vdots \\[-1.5mm]
  &\qquad\qquad n:\ \dots\\
  &\qquad8:\ \letin{\code{k_1,k_2}=\tilde{H}k}{}\\
  &\qquad\qquad\Hextend{z_1}{(\new{x}{\Hextend{z_4}{x}{F_\int\,k_1}})}{F_\delta\,k_2} \\
  &\qquad\mathsf{otherwise}:\ \Omega
\end{align*}
\end{definition}

\noindent The construction of $F_\delta$ follows closely the decomposition of $\sigma$ according to 
the Decomposition Lemma. In particular, on receiving $\code{\sigma}$, the term decides,
using the functions $B$ and $D$ of Lemma~\ref{l:DLDL}, to which branch of DL 
$\sigma$ can be matched.
Some branches decompose $\sigma$ into further strategies,
in which case $F_\delta$ will recursively call some $F_{\delta'}$ to simulate the rest of the strategy.
The use of lists in contexts guarantees that such a call is indeed recursive: $F$ is only parameterised by the output type $\delta'$, and each such $\delta'$ is in $T$.
\begin{proposition}[Universality]\label{prop:univ}
For every $\theta_1,\dots,\theta_m,\delta\in T$ and recursively presentable innocent S-strategy $\sigma:\sem{\theta_1}\otimes\cdots\otimes\sem{\theta_m}\arr\sem{\delta}$\,,\,
$\sigma = \ssem{\deindx (F_\delta\code{\sigma})}$\,.
\end{proposition}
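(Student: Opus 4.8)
The plan is to establish $\sigma=\ssem{\deindx(F_\delta\code\sigma)}$ by tracking, step by step, how the universal term processes $\code\sigma$ against the branches of the Decomposition Lemma (DL). Since $\sigma$ is an innocent S-strategy and, by Proposition~\ref{prop:model}, so is $\ssem{\deindx(F_\delta\code\sigma)}$, and innocent S-strategies are determined by their view-functions, it suffices to show that the two view-functions coincide; by O-Closure and the definition of $\strat(-)$ this reduces to checking that, after every even-length P-view $s$, the two strategies play the same P-move (up to name-permutation). So the whole argument is a matching of P-view prefixes, carried out by induction.

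First I would unfold $\deindx$: the term $\deindx(F_\delta\code\sigma)$ is just $F_\delta\code\sigma$ evaluated in a state where each list $z_j$ has been initialised empty and then extended, in the prescribed order, with the context variables of type $T_j$. I would therefore prove a more general statement by induction: for every $\delta\in T$, every sequence $\bar\theta$ over $T$, and every recursively presentable innocent S-strategy $\sigma:\sem{\bar\theta}\arr\sem\delta$, running $F_\delta\code\sigma$ with the lists $z_0,\dots,z_n$ \emph{loaded to represent $\bar\theta$} — each $\int$-entry held as a code, each $\vart$-entry split into a read closure and a write closure exactly as the factorisation of~\cite{AM97a} prescribes, the remaining entries held directly — denotes $\sigma$ after precomposition with the loading. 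The invariant carried through the induction is precisely that the contents of the lists encode the current context of whatever sub-strategy the Decomposition Lemma has produced so far; the effective version of DL (Lemma~\ref{l:DLDL}) guarantees that the codes manipulated by $F_\delta$ are indeed the codes of those sub-strategies.

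The induction then mimics DL case by case. Because $\sigma$ is in general infinite, one cannot induct on a size measure of $\sigma$ as in Proposition~\ref{prop:defin}; instead I would use that $\ssem{F_\delta}=\bigsqcup_k\ssem{F_\delta^{(k)}}$ for the finite recursion-unfoldings and induct on the length of the P-view to be reproduced — a P-view of length $2\ell$ is produced using only boundedly many recursive calls, hence by some $F_\delta^{(k)}$. For the opening move, $\tilde D\code\sigma$ computes which DL case applies (Lemma~\ref{l:DLDL}); $F_\delta$ then behaves accordingly: in the presence of an $\int$-entry it pops it via the $z_1^L\neq0$ branch (DL case~1, using $\tilde B$); in cases~2--5 it plays the base answer, pairs, $\mkvar$, or $\lambda$-abstracts exactly as DL prescribes; in case~6 it splits the queried $\vart$-entry into closures over $z_2^R i$ and removes it, matching the re-association of DL case~6; in case~7 it applies the relevant context function, obtaining its argument- and result-strategies through $\tilde H$; and in case~8 it opens a fresh block, simulates $\sigma'$ inside it via $F_\int$, and on exit pushes the returned code onto $z_1$ before resuming with $F_\delta\code{\sigma''}$. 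In every case the continuation is a recursive call $F_{\delta'}\code{\sigma'}$ with updated lists, and the induction hypothesis — available since $\delta'\in T$, $\sigma'$ is again recursively presentable, and the residual P-view is shorter — shows that this call reproduces $\sigma'$; combining with the equalities asserted in DL yields both that every P-view of $\sigma$ is produced and that nothing extraneous is. For cases~1--7 this is essentially the argument of~\cite{HY97,AJM00}.

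The main obstacle is case~8, the only genuinely new case. There one must simultaneously verify: that the fresh name $\alpha$ featuring in $\sigma$'s P-views corresponds exactly to the block variable of $\new x{\cdots}$ — which relies on the {\em Just-O}, {\em Just-P} and {\em Close} properties forcing $\alpha$ to occupy a contiguous, well-bracketed segment of each P-view and to be dropped at block exit; that $\sigma'$, which shadows $\sigma$ until $\alpha$ is dropped and then emits $\code s$, is faithfully simulated by $F_\int\code{\sigma'}$ run inside the block once the factorisation has replaced $\alpha$-access by the extra $\vart$-component supplied to $\sigma'$; and that the emitted code, once pushed onto $z_1$, is consumed correctly — i.e.\ the subsequent $z_1^L\neq0$ step (DL case~1 via $\tilde B$) hands it to $\sigma''$, which is exactly the strategy DL manufactures to resume the play from $\code s$. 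Aligning the list-context bookkeeping with the types of $\sigma'$ and $\sigma''$, and checking that the {\em Prev-PQ} and {\em Close} conditions make the factorisation and block exit commute as required, is where the real work of the proof lies.
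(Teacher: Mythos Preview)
Your approach is essentially the paper's: follow the Decomposition Lemma case by case and check that the code of $F_\delta$ realises each branch, with case~8 handled via a $\mathsf{new}$-block and the factorisation of the first store name. The paper's own proof is in fact little more than the case sketch you give, so your organisation around P-views and the invariant on list contents is already more explicit than what the paper offers.

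There is one slip worth fixing. You claim that the induction hypothesis is available because ``the residual P-view is shorter''. That fails precisely in case~8: the sub-strategy $\sigma'$ has P-views of the form $\psi(s)$ or $\psi(s)\code{s}$, and $\psi$ inserts a $\mread/\mwrite$ pair around every move, so these views are roughly three times as long as the segment $s$ they come from. Hence P-view length alone is not a well-founded measure across the recursive call to $F_{\int}\,k_1$. The fix is the same refinement used in the finitary definability proof (Proposition~\ref{prop:defin}): combine P-view length lexicographically with the number of names carried in the second move's store. In case~8 the first name $\alpha$ is removed from $\sigma'$'s stores, so the name-count strictly decreases even though the view lengthens; in all other cases the name-count does not increase and the view length drops (or we hit a base case). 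With that measure your induction goes through.

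A smaller point: your description of the loading invariant is slightly off. In $\deindx$ the context entries are held \emph{directly} in their lists --- $\int$-entries as their integer value (not a code), $\vart$-entries as the variable itself in $z_2$ (not split). The splitting of a $\vart$ into read/write closures happens only when case~6 fires, and the codes on $z_1$ arise only from case~8 pushing $\code{s}$; the $z_1^L\neq 0$ branch via $\tilde B$ handles both original $\int$-entries and these pushed codes uniformly. Keeping the invariant stated this way makes the case~8/case~1 interaction you describe line up exactly with the definitions.
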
 
\begin{proof}
Suppose $F_\delta$ receives $\code{\sigma}$ in its input $k$, where $\sigma:\sem{\theta_{1}}\otimes\cdots\otimes\sem{\theta_m}\arr\sem\delta$. Then
\[ 
\mathsf{if}\ z_{1}^L\neq0\ \mathsf{then}\
  \letin{x=z_1^R1}{\remove{z_1}{1}{F_\delta(\tilde{B}\,k\,x)}}
%
\]
recognizes the first branch of the DL. Recall that $T_1=\int$, so $z_1:\List{\int}$, and therefore $z_{1}^L\neq0$ holds iff there is some $\theta_i=\int$. If this is so, then $F_\delta$ needs to return a term corresponding to the strategy instantiated with the leftmost element in the list $z_1$.  This is achieved by first applying $\tilde B$ to $k,(z_1^R1)$ to obtain $\code{\tau_i}$ 
and applying $F_\delta$ to it.

If $z_1^L=0$,  $F_\delta$ will call $\tilde D$ on $\code{\sigma}$, 
which will return the number of the case from DL  ({\bf 2}-{\bf 8}) that applies to $\sigma$.
Subsequently, $F_\delta$ will proceed to a case analysis. Below we examine two cases in detail. 
\begin{enumerate}[label={\bf\arabic*}:]
\item[{\bf 5}:] $\sigma$ is the currying of  $\sigma'$, so $F_\delta$ should return 
`$\lambda y.\, F\code{\sigma'(y)}$'. Now, $\sigma'$ is $F_{T_{r(\delta)}}\code{\sigma'}$, i.e.\ $F_{T_{r(\delta)}}(\tilde H\, k)$, where $\delta=T_{l(\delta)}\Arr T_{r(\delta)}$. In order to preserve typability, we need to add the abstracted variable to the context of $F_{T_{r(\delta)}}(\tilde H\, k)$, which is what $\mathsf{Textend}\,z_{l(\delta)}\,\mathsf{with}\,y$ achieves.
\item[{\bf 8}:] $\sigma$ introduces some fresh name and decomposes to $\sigma',\sigma''$
as in the Decomposition Lemma. Hence,  $F_\delta$ should return $`\letin{y=(\new{x}{F\code{\sigma'(x)}})}{F\code{\sigma''(y)}}$', which is exactly what the code achieves.
\end{enumerate}
The other cases are similar.
\end{proof}

\begin{remark}
It is worth noting that the universality result for innocent S-strategies
implies an analogous result for innocent strategies and $\pcf$.
Thanks to call-by-value, the result is actually sharper than the universality results of~\cite{AJM00,HO00}, which had
to be proved ``up to observational equivalence". This was due to the fact that partial
recursive functions could not always be represented in the canonical way (i.e.\ by
terms for which the corresponding strategy contained plays of the form $q\, q\, n\, f(n)$).
This is no longer the case under the call-by-value
regime, where each partially recursive function $f$ can be coded by a term
whose denotation will be the strategy based on plays of the shape $n\, f(n)$.
\end{remark}


\section{From omniscience to innocence}

In Section~\ref{sec:syntax} we introduced the three languages: $\pcfplus$,
$\iacbv$ and $\rml$, interpreted respectively by innocent, block-innocent and knowing 
 strategies. Let $A$ be a \nt{prearena}. We write $\inno{A}$, $\binno{A}$ and $\knowing{A}$
for the corresponding classes of (store-free) strategies  in $A$.
Obviously, $\inno{A}\subseteq \binno{A}\subseteq \knowing{A}$.
 Next we shall study type-theoretic conditions under which
 one kind of strategy collapses to another. Thanks to universality results,
 this corresponds to the existence of an equivalent program in a weaker language.
 \begin{thm}\label{lem:bk}
Let $A=\sem{\seq{\theta_1, \cdots, \theta_n}{\theta\rarr\theta'}}$.
Then $\binno{A}\subsetneq\knowing{A}$.
\end{thm}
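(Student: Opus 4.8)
The goal is to exhibit, for every arrow type $\theta \rarr \theta'$ in context $\seq{\theta_1,\dots,\theta_n}{\cdot}$, a knowing strategy on $A = \sem{\seq{\theta_1,\dots,\theta_n}{\theta\rarr\theta'}}$ that is \emph{not} block-innocent; the inclusion $\binno{A}\subseteq\knowing{A}$ is already noted. The plan is to build such a strategy using the same phenomenon highlighted in the Introduction and in Example~\ref{ex:cell}: a knowing strategy can carry a private, stateful cell across moves into a component of arrow type, and the block-discipline forbids this. Concretely, I would take (the interpretation of) an $\rml$-term that allocates a reference with $\newc$ and returns a $\lambda$-abstraction closing over it — e.g.\ the canonical $\seq{}{\letin{v=\newc}{\lambda x^\comt.\,(v\aasg{}!v+1;\,!v)}}$ of type $\comt\rarr\expt$, which the paper already cites (Example after the language definitions) as an $\rml$-term with no $\iacbv$-equivalent. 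More generally, for arbitrary $\theta\rarr\theta'$ and arbitrary context $\bar\theta$, I would take the strategy that ignores its context arguments, allocates a fresh cell on the first P-move, and then on each O-question in the $\theta\rarr\theta'$ component answers using (and incrementing) the hidden cell, so that two successive applications produce \emph{different} answers even though they have the same P-view in the store-free game.

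The key steps, in order: (1) Fix such a knowing strategy $\sigma$ on $A$ and identify two plays $s_1 m, s_2 m'$ in $\sigma$ with $s_1, s_2$ odd-length, $\pv{s_1}=\pv{s_2}$ in the store-free prearena, but $m \ne m'$ — these are the two ``application'' responses that differ because of the hidden counter. Such plays exist precisely because the scope of the cell extends \emph{over} the $\lambda$-abstraction (it is returned, not deallocated), which is exactly what $\newc$ permits and what blocks of ground type cannot achieve. (2) Suppose for contradiction that $\sigma = \erase{\sigma'}$ for some innocent S-strategy $\sigma'$. Lift $s_1 m$ and $s_2 m'$ to S-plays $\hat s_1 \hat m, \hat s_2 \hat m' \in \sigma'$ whose erasures are $s_1 m, s_2 m'$. (3) Use the S-play conditions — chiefly \emph{Just-P} (the $PA$ clause forcing stores to match across a justifier and the answer closing its sub-block), \emph{Just-O}, and \emph{Close} — to show that any name carried at the differing move would have to have been introduced and then closed \emph{before} the $\lambda$ was returned, hence cannot persist to affect both $\hat m$ and $\hat m'$; so the stores at $\hat s_1$ and $\hat s_2$ agree on the relevant names. (4) Since $\sigma'$ is innocent and $\pv{\hat s_1}\sim\pv{\hat s_2}$ (their erasures have equal P-view and, by step (3), the stores line up), innocence of $\sigma'$ forces $\hat m \sim \hat m'$ and in particular $m = m'$ after erasure — contradiction.

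The technical heart — and the main obstacle — is step (3): making precise why, for an arrow type $\theta\rarr\theta'$, no name surviving in the store can straddle the P-view boundary at which the function value is returned to Opponent. This is the semantic content of ``variables in $\iacbv$ can only be allocated in blocks of ground type,'' and it is governed by \emph{Just-P}'s $PA$ clause together with the \emph{Block form} and \emph{Close} lemmas: when P answers the initial question of a returned abstraction, that answer (a $PA$-move) must carry the same store as its justifier, and any name not already present there gets closed off by \emph{Close}, so it cannot reappear in a later O-question justified past that point. I would phrase this as: in $\sigma'$, the store attached to a P-move answering the initial move of $\sem{\theta\rarr\theta'}$ equals the store of that initial move's justifier, and hence (by \emph{Init}, since that justifier lies on the P-view back to the unique initial O-question) contains no names at all. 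From this the P-views $\pv{\hat s_1}$ and $\pv{\hat s_2}$ beyond that point are name-free, innocence applies verbatim, and $m=m'$ follows. The remaining cases (arbitrary $\bar\theta$, arbitrary $\theta,\theta'$) require only that such a $\sigma$ genuinely exists and uses its state across the relevant boundary, which one checks directly from the shape of the constructed strategy; I expect this part to be routine once the ground-type example is in hand.
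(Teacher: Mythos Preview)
Your proposal is correct and follows essentially the same approach as the paper: exhibit a knowing strategy whose response on the second call differs from the first, then use \emph{Init}, the $PA$ clause of \emph{Just-P}, and \emph{Just-O} to force all stores along the spine $q_0\,a_0\,q_1$ to be empty, so that innocence of any putative S-strategy lifting would force identical responses. The paper's version is more direct in packaging: rather than going through an $\rml$-term and its denotation and then generalising, it simply observes that any prearena of this shape contains an enabling chain $q_0\vdash a_0\vdash q_1\vdash a_1$ and takes $\sigma=\{\epsilon,\,q_0 a_0,\,q_0 a_0 q_1 a_1\}$ (diverging at $q_0 a_0 q_1 a_1 q_1$) directly at the level of moves, which makes the argument uniform over $\theta,\theta',\bar\theta$ from the outset and avoids your step of checking that the concrete term ``genuinely exists and uses its state across the relevant boundary.''
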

\begin{proof}
Observe that there exist moves $q_0, a_0, q_1, a_1$ such that
$q_0\vdash_A a_0\vdash_A q_1\vdash_A a_1$ and consider
$\sigma=\makeset{\epsilon, q_0 a_0, q_0 a_0 q_1 a_1}$, i.e. $\sigma$
has no response at $q_0 a_0 q_1 a_1 q_1$. Then $\sigma\in \knowing{A}\setminus\binno{A}$.
It is worth remarking that a strategy of the above kind denotes the $\rml$-term
$\seq{}{\letin{v=\newc}{\lambda x^\comt. (\cond{!v}{\Omega}{v\aasg !v+1}}):\comt\rarr\comt}$.
\end{proof}
Theorem~\ref{lem:bk} confirms that, in general, block structure restricts expressivity.
However,  the next result shows this not to be the case for open terms of base type.
\begin{thm}
Let  $A=\sem{\seq{\theta_1, \cdots, \theta_n}{\beta}}$.
Then $\binno{A}=\knowing{A}$.
\end{thm}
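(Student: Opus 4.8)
The plan is to show that every knowing strategy $\sigma\in\knowing{A}$ on $A=\sem{\seq{\theta_1,\cdots,\theta_n}{\beta}}$ is block-innocent, i.e.\ arises as $\erase{\sigma'}$ for some innocent S-strategy $\sigma'$. The key observation is that when the result type is a base type $\beta$, the prearena $A=\sem{\theta_1}\otimes\cdots\otimes\sem{\theta_n}\arr\sem{\beta}$ has a particularly rigid shape: the unique initial O-question $q_0$ (played in $\sem{\theta_1}\otimes\cdots\otimes\sem{\theta_n}$, or $\ast_0$ if $n=0$) is justified by no move, and once P answers it with a move $a_0$ in $\sem{\beta}$ the play is complete. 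Crucially, since $\sem{\beta}\in\{1,\Z\}$ is flat, $a_0$ has no enabled moves below it, so no move in any play of $A$ is ever justified by a move in $\sem{\beta}$. This means the P-view mechanism behaves very tamely here, and it is exactly this that we exploit.

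The concrete approach I would take is to \emph{simulate the store by encoding the full history into a single private variable}. First I would observe that, because $\beta$ is a base type, a knowing strategy $\sigma$ on $A$ is determined by its behaviour and — by O-Closure and determinacy — the key data is: for each even-length $s\in\sigma$, P's unique response to each legal O-extension $sm$. I would then construct an innocent S-strategy $\sigma'$ on $A$ as follows: immediately after the initial O-move $q_0$, $\sigma'$ plays P's first move as $\sigma$ does, but decorated with a single fresh name $\na$ whose stored integer is a numerical code $\code{s}$ of the current P-view (equivalently, current history — which here coincide up to what matters, since all non-initial moves of $A$ live in the higher-order context arenas $\sem{\theta_i}$ where visibility already forces views to carry enough information). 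By {\em Just-O} every subsequent O-move must carry $\na$, so P always has access to $\code{s}$ in the store of the move it must respond to; P decodes it, consults $\sigma$, plays the dictated move, and re-encodes the updated history into $\na$'s value via a write. When $\sigma$ finally plays $a_0$ (the answer in $\sem{\beta}$, closing the block), $\na$ is dropped — legally, since at that point there are no open questions containing $\na$ other than those being answered, which is precisely the {\em Prev-PA} / {\em Close} situation — and erasing stores from $\sigma'$ recovers $\sigma$.

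The main obstacle, and where the base-type hypothesis does the real work, is checking that the S-plays so produced genuinely satisfy Definition~\ref{def:splay}, in particular {\em Just-P} (the $\mathit{PA}$ clause: when P plays an answer it must carry exactly the private variables of the question it answers) and {\em Prev-PQ}(b) (a variable can be closed only if its block has no open questions). For a general target type $\theta\to\theta'$ this fails — that is exactly the content of Theorem~\ref{lem:bk} — because P may be forced to drop $\na$ while an O-question enabled by an earlier move is still pending (the situation in the second Introduction play). But when the overall result type is $\beta$, the only answer that closes the block for $\na$ is $a_0$ itself, and $a_0$ answers the \emph{initial} question $q_0$, which was played \emph{before} $\na$ was introduced; by well-bracketing, when $a_0$ is played every question opened after $q_0$ — hence every question possibly containing $\na$ — has already been answered, so $\na$ is closed and the drop is legal. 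One still has to verify {\em Val-O} (O cannot change $\na$'s value: automatic, since $\na$ is private to P and its value is only ever rewritten by P) and the view-function conditions making $\sigma'$ a bona fide \emph{innocent} S-strategy (which holds because P's response now depends only on the decoded history $\code{s}$ visible in the P-view, which by construction is complete information). I would also need the easy direction $\binno{A}\subseteq\knowing{A}$, which is immediate. So the heart of the argument is the well-bracketing observation pinning down why the lone private variable can always be deallocated exactly when needed at base type.
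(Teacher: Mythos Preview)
Your proposal is correct and follows essentially the same approach as the paper: encode the full play history in a single private store variable introduced at P's second move and dropped only when P answers the initial question, which is legal precisely because at base type that answer closes the entire block (by well-bracketing no open question containing $\na$ remains). The paper's own proof is a two-sentence ``observe that'' pointing to the same factorisation-style construction; your elaboration of why the S-play conditions ({\em Just-P}, {\em Prev-PA}, {\em Close}) go through is exactly the content that sketch suppresses, though your parenthetical equating ``P-view'' with ``history'' is slightly misleading --- the whole point is that they differ, and the store bridges the gap.
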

\begin{proof}
Observe that any knowing strategy for $A$ becomes block-innocent if
in the second-move P introduces a store with one variable that keeps track of the history of
play (this is reminiscent of the factorization arguments in game semantics).
The variable should be removed from the store by P only when he plays an
answer to the initial question.
\end{proof}
By universality, we can conclude that each $\rml$-term of base type
is equivalent to an $\iacbv$-term. Since contexts used for
testing equivalence are exactly of this kind, we obtain the following corollaries.
The first one amounts to saying that $\rml$ is a conservative extension of $\iacbv$.
The second one states that block-structured contexts suffice to distinguish
terms that might use scope extrusion.

\begin{corollary}\label{cor:conservativity}
For any $\iacbv$-terms $\seq{\Gamma}{M_1,M_2}:\theta$ and $\rml$-terms $\seq{\Gamma}{N_1,N_2}:\theta$:
\begin{itemize}
\item $\seq{\Gamma}{M_1\eqmod{\rml}M_2}$
if, and only if, $\seq{\Gamma}{M_1\eqmod{\iacbv}M_2}$;
\item $\seq{\Gamma}{N_1\eqmod{\rml}N_2}$
if, and only if, $\seq{\Gamma}{N_1\eqmod{\iacbv}N_2}$.
\end{itemize}\qed
\end{corollary}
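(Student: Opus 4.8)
The plan is to reduce the statement to the single fact, already announced in the preceding discussion, that every $\rml$-term of base type is $\rml$-equivalent to an $\iacbv$-term. Both displayed equivalences will have the same proof, so I fix terms $\seq{\Gamma}{P_1,P_2:\theta}$ with $\Gamma=x_1{:}\theta_1,\dots,x_n{:}\theta_n$, where $P_1,P_2$ are either both $\iacbv$-terms or both $\rml$-terms, and set $\rho\defn\theta_1\to\cdots\to\theta_n\to\theta$. It then suffices to show that every $\iacbv$-context is interchangeable with an $\rml$-context, and conversely, as far as convergence of the instantiation at $P_i$ is concerned; equality of $\eqmod{\rml}$ and $\eqmod{\iacbv}$ on $\{P_1,P_2\}$ follows at once.

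For the direction from $\rml$-contexts to $\iacbv$-contexts I would argue syntactically: in a given $\iacbv$-context $C$, replace every subterm of the form $\new{x}{Q}$ (which by the typing rules always has base type) by $\letin{x=\newc}{Q}$, obtaining an $\rml$-context $C_\circ$; by the Lemma relating $\new{x}{M}$ with $\letin{x=\newc}{M}$ and the fact that $\eqmod{\lang}$ is a congruence, $C[P_i]\eqmod{\lang}C_\circ[P_i]$, and since both sides are closed of type $\comt$ this gives $C[P_i]\Downarrow\iff C_\circ[P_i]\Downarrow$. Hence $P_1\eqmod{\rml}P_2$ implies that no $\iacbv$-context separates $P_1,P_2$. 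The converse is the substantial direction. Given an $\rml$-context $C$ with $\seq{}{C[P_i]:\comt}$, I would first $\lambda$-lift the hole: after renaming the bound variables of $C$ away from $x_1,\dots,x_n$ and from a fresh $h{:}\rho$, substituting $h\,x_1\cdots x_n$ for the hole turns $C$ into an $\rml$-term $\seq{h{:}\rho}{C_0:\comt}$ with $C[Q]\eqmod{\lang}(\lambda h.C_0)(\lambda x_1\cdots x_n.Q)$ for every $Q$ of type $\seq{\Gamma}{\theta}$ — the two terms being linked by call-by-value $\beta$-steps with value arguments, which are contained in $\eqmod{\lang}$. The crucial point is that $C_0$ is now an $\rml$-term \emph{of base type} $\comt$, so by the key fact below it is $\lang$-equivalent to an $\iacbv$-term $\seq{h{:}\rho}{C_0':\comt}$. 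Using congruence of $\eqmod{\lang}$ and the $\lambda$-lifting equivalence, $C[P_i]\Downarrow\iff(\lambda h.C_0')(\lambda x_1\cdots x_n.P_i)\Downarrow$; that is, $C[P_i]\Downarrow\iff C''[P_i]\Downarrow$ for the \emph{$\iacbv$-context} $C''$ with $C''[Q]\defn(\lambda h.C_0')(\lambda x_1\cdots x_n.Q)$, which contains no $\newc$ regardless of whether $P_i$ does. Hence $P_1\eqmod{\iacbv}P_2$ implies that no $\rml$-context separates $P_1,P_2$.

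It remains to prove the \emph{key fact}: every $\rml$-term $\seq{\Delta}{N:\beta}$ of base type is $\lang$-equivalent (a fortiori $\rml$-equivalent) to an $\iacbv$-term $\seq{\Delta}{N':\beta}$. The knowing-strategy denotation $\sem{\seq{\Delta}{N}}$ lives on $A=\sem{\seq{\Delta}{\beta}}$, whose target is a base type, so by the theorem that $\binno{A}=\knowing{A}$ for such prearenas it is block-innocent, say $\sem{\seq{\Delta}{N}}=\erase{\sigma'}$ for an innocent S-strategy $\sigma'$. Moreover $\sem{\seq{\Delta}{N}}$ is recursively presentable (as are all term denotations), and the construction in that theorem's proof — adjoining a single private variable that records the play history and is released at the initial answer — is effective, so $\sigma'$ may be taken recursively presentable as well. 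Fixing a finite, subtype-closed set of types $T$ containing $\beta$, all types of $\Delta$, and $\comt,\expt,\vart,\comt{\to}\expt,\expt{\to}\comt$, Universality (Proposition~\ref{prop:univ}) produces an $\iacbv$-term $\seq{\Delta}{N':\beta}$ with $\ssem{\seq{\Delta}{N'}}=\sigma'$, whence $\sem{\seq{\Delta}{N'}}=\erase{\ssem{\seq{\Delta}{N'}}}=\erase{\sigma'}=\sem{\seq{\Delta}{N}}$ by the lemma expressing $\sem{-}$ as $\erase{\ssem{-}}$ on $\iacbv$-terms. Since $\sem{-}$ is compositional and the model is computationally adequate (both subsumed by full abstraction for $\rml$~\cite{AM97b}), $\sem{N}=\sem{N'}$ forces $N\eqmod{\lang}N'$.

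The step I expect to be the main obstacle is the bookkeeping inside the key fact: Universality is available only for \emph{recursively presentable} innocent S-strategies, so one must check carefully that recursive presentability is preserved both by the block-innocence construction underlying the base-type theorem and by the passage between $\sigma'$ and $\erase{\sigma'}$. Granting that, the remaining ingredients — the desugaring of $\new{}{}$, the $\lambda$-lifting, and the appeals to congruence, call-by-value $\beta$ and adequacy — are routine.
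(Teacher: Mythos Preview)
Your proposal is correct and follows exactly the route the paper sketches in the sentence preceding the corollary (``each $\rml$-term of base type is equivalent to an $\iacbv$-term; contexts used for testing equivalence are exactly of this kind''), which is all the paper offers by way of proof. Your $\lambda$-lifting of the hole, the desugaring of $\mathsf{new}$ for the easy direction, and the care taken over recursive presentability before invoking Universality are the natural details one would supply; the only cosmetic point is that the phrase ``renaming the bound variables of $C$ away from $x_1,\dots,x_n$'' should exclude the binders that actually capture $x_1,\dots,x_n$ at the hole, since those are essential.
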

Now we investigate the boundary between block structure
and lack of state.
\begin{lemma}\label{lem:binno-inno}
Let $A$ be a \nt{prearena} such that each question enables an answer
\footnote{All denotable \nt{prearenas} enjoy this property.}.
The following conditions are equivalent.
\begin{enumerate}
\item $\binno{A}\subseteq\inno{A}$.
\item No O-question is enabled by a P-question:
$m\vdash_A q_O$ implies $\lambda_A(m)=PA$.
\item Store content of O-questions is trivial:
$s q_O^\Sigma\in\Splays{A}$ implies $\dom{\Sigma}=\emptyset$.
\end{enumerate}
\end{lemma}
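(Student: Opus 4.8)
The plan is to prove the cycle of implications $(2)\Leftrightarrow(3)$ first, since this is a purely combinatorial statement about prearenas and S-plays, and then close the loop via $(2)\Rightarrow(1)$ and $\lnot(2)\Rightarrow\lnot(1)$.

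For $(3)\Rightarrow(2)$ I would argue contrapositively: suppose some O-question $q_O$ is enabled by a move $m$ with $\lambda_A(m)\neq PA$, i.e.\ $m$ is a P-question $q_P$ (it cannot be an O-move, since $m\vdash_A q_O$ forces $m$ to be a P-move, and it cannot be an answer, since answers enable nothing). Using the hypothesis that every question enables an answer, and working down from the initial O-question through a play that reaches $q_P$, I would exhibit a concrete S-play in which P, on playing the P-question justifying $q_O$, opens a fresh variable $\alpha$ in its store: such a move-with-store is admissible by {\em Prev-PQ} (the freshness conditions are met because $\alpha$ is new). Then by {\em Just-O} the O-question $q_O$ justified by $q_P$ must carry $\alpha$ in its store as well, so $\dom{\Sigma}\neq\emptyset$, contradicting (3). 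For the converse $(2)\Rightarrow(3)$, I would show by induction on the length of an S-play $s\,q_O^\Sigma$ that $\dom{\Sigma}=\emptyset$: the justifier $m$ of $q_O$ satisfies $\lambda_A(m)=PA$ by (2), so $m$ is a P-\emph{answer}; by {\em Just-P} (the $PA$ clause) applied to $m$ and its own justifier, and by {\em Prev-PA} / {\em Close}, any variable alive at an O-move inherited down an answer-justified chain has already been closed — more directly, a P-answer $m^\Tau$ has, by {\em Just-P}, $\Tau$ with $\dom{\Tau}$ equal to the store of \emph{its} justifying O-move, and iterating up the justification chain of O-answers and P-answers back to an initial move forces $\Tau=\epsilon$; then {\em Just-O} gives $\Sigma=\epsilon$ for $q_O$.

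For $(2)\Rightarrow(1)$: given a block-innocent $\sigma:A$, write $\sigma=\erase{\sigma'}$ for an innocent S-strategy $\sigma'$. By (2)$\Leftrightarrow$(3) every O-question in any S-play of $\sigma'$ has empty store. I would then show that in fact \emph{every} move of every S-play of $\sigma'$ has empty store: O-answers carry the store of their justifying P-question (by {\em Just-O}), so it suffices to handle P-moves; P-questions can only enlarge a store via {\em Prev-PQ} starting from the store at the preceding O-move, which is either an O-question (empty, just shown) or an O-answer, and by induction following the {\em Prev-PQ}/{\em Prev-PA} bookkeeping the store stays empty throughout — the point is that to open a variable P needs a non-trivial store to later pass to a justified O-move, but condition (2) forbids any O-move justified by a P-question, and O-moves justified by P-answers get empty stores by the argument above. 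With all stores empty, $\Erase{\_}$ is a bijection on the S-plays of $\sigma'$, and the Innocence condition for $\sigma'$ (which quantifies over $\pv{\cdot}$ up to $\sim$, trivial here since there are no names) becomes exactly the Innocence condition for $\sigma=\erase{\sigma'}$; hence $\sigma\in\inno{A}$.

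Finally $\lnot(2)\Rightarrow\lnot(1)$: if some O-question $q_O$ is enabled by a P-question $q_P$, I would use the $\mathsf{cell}$-style idea from Example~\ref{ex:cell} together with the decomposition machinery to build a block-innocent strategy that is not innocent. Concretely, following the pattern of Theorem~\ref{lem:bk}, pick moves along $q_0\vdash_A a_0\vdash_A q_P\vdash_A q_O$ (using that every question enables an answer to get $a_0$) and let $\sigma'$ be the innocent S-strategy in which P opens a private variable $\alpha$ at $q_P$, uses it to record whether $q_O$ has been visited an even or odd number of times, and responds differently on the two parities; this is a legal innocent S-strategy precisely because $\alpha$ is carried to $q_O$ by {\em Just-O}, so the two O-views differ as S-plays. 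Its erasure $\sigma=\erase{\sigma'}$ gives two plays with the same (store-free) O-view at $q_O$ but different P-responses, so $\sigma$ violates Innocence while being block-innocent. The main obstacle I anticipate is the second bullet of the $(2)\Rightarrow(1)$ argument — verifying cleanly, via {\em Prev-PQ}, {\em Prev-PA} and {\em Close}, that condition (2) really does force \emph{all} stores (not just those on O-questions) to be trivial, so that erasure is injective on S-plays; once that is in hand the equivalence of the two Innocence conditions is immediate.
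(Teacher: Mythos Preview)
Your plan for $(2)\Leftrightarrow(3)$ and for $\lnot(2)\Rightarrow\lnot(1)$ is essentially sound and matches the paper's approach (the paper does $(1)\Rightarrow(2)$ by contrapositive and $(2)\Rightarrow(3)$ directly, but the content is the same).

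The gap is in $(2)\Rightarrow(1)$. Your claim that condition~(2) forces \emph{all} stores to be trivial is false, and you correctly flag this as the obstacle. Here is a counterexample in the prearena $\sem{\seq{f:\comt\to\comt}{\comt}}$, which satisfies~(2): the S-play
\[
q_0\; q_f^{(\alpha,0)}\; a_f^{(\alpha,0)}\; a_0
\]
is legal. {\em Prev-PQ} permits P to introduce the fresh name $\alpha$ at the P-question $q_f$; {\em Just-O} then forces the O-answer $a_f$ to carry $(\alpha,0)$; and $\alpha$ is dropped at the P-answer $a_0$ since $q_f$ is now closed. Your inductive argument breaks precisely because {\em Prev-PQ} never prevents P from opening a name at a P-question; it only constrains \emph{which} names may be opened or closed. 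Condition~(2) removes P's \emph{incentive} to open names (no O-question will ever see them), but not the \emph{ability} to do so in an S-play.

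The paper's $(3)\Rightarrow(1)$ argument avoids this by proving a weaker but sufficient fact: for every O-move $o^{X_o}$ with justifier $p^{X_p}$, one has $X_o=X_p$ (not merely $\dom X_o=\dom X_p$). For O-questions this is immediate from~(3) and {\em Just-O}. For O-answers, one shows that no move strictly between $p$ and $o$ carries any name from $\dom X_p$: the earliest such occurrence would have to be an O-move, hence by~(3) an O-answer, whose justifying P-question (lying between $p$ and $o$ by bracketing) would then be an earlier occurrence, contradiction. {\em Val-O} then gives $X_o=X_p$. From this it follows that the store on every O-move in a P-view is determined by its justifier, which is already in the P-view; combined with S-innocence of $\sigma'$ (which determines P-move stores from the P-view), the entire S-play P-view is reconstructible up to $\sim$ from the store-free P-view, and ordinary innocence of $\erase{\sigma'}$ follows. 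You should replace your ``all stores empty'' step with this argument.
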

\begin{proof}\hfill
\begin{itemize}[label=$(1\Rarr 2)$]
\item[$(1\Rarr 2)$]  We prove the contrapositive. Assume 
that there exists a P-question $q_P$ and an O-question $q_O$ 
such that $q_P\vdash_A q_O$.
Let $s$ be a chain of hereditary enablers of $q_P$ (starting from an initial move) 
augmented with pointers from non-initial moves to the respective preceding moves.
Then 
\[ \]
\[
\rnode{A}{s}\;\;\
\rnode{C}{\rnode{B}{q}_P^{\,(x,0)}}\
\rnode{E}{\rnode{D}{q}_O^{\,(x,0)}}\
\rnode{G}{\rnode{F}{a}_P^{\,(x,0)}}\
\rnode{I}{\rnode{H}{q}_O^{\,(x,1)}}\
\rnode{K}{\rnode{J}{q}_P^{\,(x,1)}}
\justf[,nodesep=1mm]{B}{A}\justf[,angleA=90,angleB=75,nodesep=1mm]{D}{B}
\justf[,angleA=90,angleB=75,nodesep=1mm]{F}{D}
\justf[,angleB=70,nodesep=1mm]{H}{B}
\justf[,angleA=90,angleB=75,nodesep=1mm]{J}{H}
\]
defines a block-innocent strategy that is not innocent.

\item[$(2\Rarr 3)$] Suppose no P-question enables an O-question in $A$
and let $s q_O^X\in\splay{A}$.
Then the sequence of hereditary justifiers of $q_O$ in $s$,
in order of their occurrence in $s$,  must have the form $(q_O a_P)^\ast$.
Consequently, none of the stores involved can be non-empty, so $X$
must be empty too.
\item[$(3\Rarr 1)$]
We observe that
$s_1 p^{X_p} s_2 o^{X_o}\in \splay{A}$, where $p$ justifies $o$,  
implies $X_o=X_p$. Note that, in presence of block innocence, this
implies innocence because the store content of O-moves 
can be reconstructed uniquely from the P-view.
Thus, it suffices to prove our observation correct.
\begin{itemize}
\item If $o$ is a question, we simply use our assumption: then we must have
$X_o=\emptyset$ and, because $\dom{X_o}=\dom{X_p}$, we can conclude
$X_p=\emptyset$.
\item If $o$ is an answer then $p$ must be a question. We claim that 
no store in $s_2$ can contain variables from $\dom{X_o}=\dom{X_p}$.
Suppose this is not the case and there is such an occurrence.
Then the earliest such occurrence must be part of an O-move.
This move cannot be a question due to our assumption, so 
it is an answer move. By the bracketing condition, this must be an 
answer that an earlier P-question played after $p$. Moreover, the store
accompanying that question must also contain a variable from $\dom{X_o}=\dom{X_p}$
contradicting our choice of the earliest occurrence.\qedhere
\end{itemize}
\end{itemize}
\end{proof}

\noindent Thanks to the following lemma we will be able to determine precisely at which 
types  block-innocence implies innocence.
\begin{lemma}
$\sem{\seq{\theta_1,\cdots,\theta_n}{\theta}}$ satisfies condition 2.
of Lemma~\ref{lem:binno-inno} iff $\ord{\theta_i}\le 1$ ($i=1,\cdots,n$)
and $\ord{\theta}\le 2$.
\end{lemma}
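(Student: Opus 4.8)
The plan is to translate condition~2 into a purely combinatorial property of the justification relation of the prearena $A=\sem{\seq{\theta_1,\dots,\theta_n}{\theta}}=(\sem{\theta_1}\otimes\cdots\otimes\sem{\theta_n})\arr\sem{\theta}$, which I abbreviate $C\arr D$. Since, by alternation, the justifier of an $O$-question is always a $P$-move, condition~2 fails for $A$ exactly when some $P$-question of $A$ justifies a move of $A$ (necessarily an $O$-question). Unfolding the definitions of $\otimes$ and $\arr$, the moves of $C\arr D$ are: the initial moves, coming from $I_C$, relabelled as $O$-questions and carrying no justifier; the moves of $\BI{C}=\BI{\sem{\theta_1}}\uplus\cdots\uplus\BI{\sem{\theta_n}}$, whose $OP$-polarity is flipped while the $QA$-kind and the internal justification are inherited from the respective $\sem{\theta_i}$; and the moves of $M_D=M_{\sem{\theta}}$, whose labelling and internal justification are inherited verbatim from $\sem{\theta}$. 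The only fresh justification edges are $i_C\vdash i_D$, whose target is an answer, and the edges from $i_C$ to the depth-$1$ moves of each $\sem{\theta_i}$ (i.e.\ those justified by its initial move), whose source $i_C$ is an $O$-question; neither kind is relevant to condition~2. Hence a $P$-question of $A$ justifying a question is either (a) a $P$-question of $\sem{\theta}$ justifying a question inside $\sem{\theta}$, or (b) an $O$-question of some $\sem{\theta_i}$ --- which becomes a $P$-question of $A$ after the polarity flip --- justifying a question inside $\sem{\theta_i}$; and, conversely, each such configuration does produce a violation.

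It then remains to establish, by induction on the type $\sigma$ using the split $\sigma=\sigma_1\arr\sigma_2$ and the identity $\sem{\sigma_1\arr\sigma_2}=\sem{\sigma_1}\impl\sem{\sigma_2}$, two facts: (II) $\sem{\sigma}$ has an $O$-question justifying a question iff $\ord{\sigma}\ge 2$; and (I) $\sem{\sigma}$ has a $P$-question justifying a question iff $\ord{\sigma}\ge 3$. The base cases are immediate: $\sem{\beta}$ has no questions, and in $\sem{\vart}$ the $O$-questions $\mathsf{read}$ and $\mathsf{write}_i$ justify only answers. For (II), the key observation is that in $\sem{\sigma_1}\impl\sem{\sigma_2}$ the initial move of $\sem{\sigma_1}$, relabelled as an $O$-question, justifies the initial move of $\sem{\sigma_2}$ (an answer) together with the depth-$1$ moves of $\sem{\sigma_1}$, and these must be questions, since a move justified by an answer cannot, by the arena axioms, be an answer; so this alone yields an $O$-question-justifies-question configuration exactly when $\sem{\sigma_1}$ has non-initial moves, i.e.\ when $\sigma_1\neq\beta$, i.e.\ when $\ord{\sigma_1}\ge 1$, while every other such configuration is inherited, with labels preserved, from $\sem{\sigma_1}$ (inductively iff $\ord{\sigma_1}\ge 2$) or from $\sem{\sigma_2}$ (inductively iff $\ord{\sigma_2}\ge 2$); since $\ord{\sigma_1\arr\sigma_2}\ge 2$ iff $\ord{\sigma_1}\ge 1$ or $\ord{\sigma_2}\ge 2$, (II) follows. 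For (I), the $P$-questions of $\sem{\sigma_1}\impl\sem{\sigma_2}$ are precisely the flipped $O$-questions of $\sem{\sigma_1}$ together with the $P$-questions of $\sem{\sigma_2}$, and each justifies only moves it already justifies inside its component; hence $\sem{\sigma}$ has a $P$-question justifying a question iff $\sem{\sigma_1}$ has an $O$-question justifying a question (by (II), iff $\ord{\sigma_1}\ge 2$) or $\sem{\sigma_2}$ has a $P$-question justifying a question (inductively iff $\ord{\sigma_2}\ge 3$), and $\ord{\sigma_1\arr\sigma_2}\ge 3$ iff $\ord{\sigma_1}\ge 2$ or $\ord{\sigma_2}\ge 3$.

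Combining the reduction with (I) and (II): condition~2 fails for $A$ iff $\ord{\theta}\ge 3$ or $\ord{\theta_i}\ge 2$ for some $i$; equivalently, it holds iff $\ord{\theta}\le 2$ and $\ord{\theta_i}\le 1$ for all $i$, which is the claim. I expect the main obstacle to be not conceptual but bookkeeping: keeping the $OP$/$QA$ labels straight across the three arena constructions, correctly tracking which fresh justification edges each $\impl$ and $\arr$ introduces --- in particular that the ``$i_C\vdash$ depth-$1$'' edges always have an $O$-question as source, so they are harmless inside a context component but become, after the polarity flip, the very source of violations once that component is genuinely higher-order --- and handling the ground-type edge cases, where $\sem{\beta}$ contributes no non-initial moves and no questions. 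No properties of plays or strategies are needed; only the combinatorics of the arenas.
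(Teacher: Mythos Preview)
The paper states this lemma without proof, so there is nothing to compare your argument against directly. Your overall strategy is sound and delivers the result: reduce condition~2 to the question of whether some $P$-question enables a question, split according to whether the $P$-question lives in the $D=\sem{\theta}$ component (labels preserved, so this is your fact~(I) for $\theta$) or in some $\sem{\theta_i}$ component (the $OP$-flip turns it into an $O$-question of $\sem{\theta_i}$, so this is your fact~(II) for $\theta_i$), and then establish (I) and (II) by simultaneous structural induction.

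There is one slip in your inductive step for (II). You write that ``every other such configuration is inherited, with labels preserved, from $\sem{\sigma_1}$ (inductively iff $\ord{\sigma_1}\ge 2$)''. But on $\BI{\sem{\sigma_1}}$ the $\Rightarrow$-construction \emph{flips} $OP$-polarity, so an $O$-question of $\sem{\sigma_1}\Rightarrow\sem{\sigma_2}$ coming from the domain is a $P$-question of $\sem{\sigma_1}$; the correct inductive hypothesis to invoke there is (I), yielding $\ord{\sigma_1}\ge 3$, not (II). You track the flip correctly in your argument for (I), where you note that the $P$-questions contributed by the domain are the flipped $O$-questions of $\sem{\sigma_1}$; the same care is needed in (II). Fortunately the slip is harmless for the conclusion: the $i_{\sem{\sigma_1}}$ case already forces the disjunct $\ord{\sigma_1}\ge 1$, which subsumes both $\ge 2$ and $\ge 3$, so the final equivalence in (II) still comes out right.
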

Consequently, second-order $\iacbv$-terms always have
purely functional equivalents.
Finally, we can pinpoint the types at which strategies are bound to be innocent:
it suffices to combine the previous findings.
\begin{thm}
Let $A=\sem{\seq{\theta_1,\cdots,\theta_n}{\theta}}$. Then 
$\knowing{A}=\inno{A}$ iff $\ord{\theta_i}\le 1$ ($i=1,\cdots,n$) and $\ord{\theta}=0$.
\end{thm}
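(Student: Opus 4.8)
The plan is to derive the statement purely by assembling the three results that precede it and using the evident inclusions $\inno{A}\subseteq\binno{A}\subseteq\knowing{A}$: since these are inclusions, $\knowing{A}=\inno{A}$ holds if and only if both $\knowing{A}=\binno{A}$ and $\binno{A}=\inno{A}$ hold. The second equality is governed by Lemma~\ref{lem:binno-inno} — recall its condition~1 is $\binno{A}\subseteq\inno{A}$, which combined with the always-valid $\inno{A}\subseteq\binno{A}$ is exactly $\binno{A}=\inno{A}$ — together with the lemma immediately following it, which states that the relevant condition~2 is equivalent to $\ord{\theta_i}\le 1$ for all $i$ and $\ord{\theta}\le 2$ (the side-condition ``every question enables an answer'' being automatic for denotable prearenas). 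The first equality, $\knowing{A}=\binno{A}$, is governed by the theorem asserting $\binno{A}=\knowing{A}$ when the output type is ground, and by Theorem~\ref{lem:bk} asserting $\binno{A}\subsetneq\knowing{A}$ when the output type is an arrow type.

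For the ($\Leftarrow$) direction I would take $\ord{\theta_i}\le 1$ and $\ord{\theta}=0$, so $\theta$ is a ground type $\beta$; then $\binno{A}=\knowing{A}$ by the ground-output-type theorem, while condition~2 of Lemma~\ref{lem:binno-inno} holds (since $\ord{\theta_i}\le 1$ and $\ord{\theta}=0\le 2$), giving $\binno{A}=\inno{A}$, and chaining yields $\knowing{A}=\binno{A}=\inno{A}$. For ($\Rightarrow$), from $\knowing{A}=\inno{A}$ I get $\binno{A}=\inno{A}$, hence condition~2 of Lemma~\ref{lem:binno-inno}, hence $\ord{\theta_i}\le 1$ and $\ord{\theta}\le 2$ by the subsequent lemma; it then remains to exclude $\ord{\theta}\in\{1,2\}$. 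If $\theta$ is an arrow type, Theorem~\ref{lem:bk} gives $\binno{A}\subsetneq\knowing{A}$, contradicting $\binno{A}=\inno{A}=\knowing{A}$. The only type of order $\ge 1$ that is not an arrow type is $\vart$ (which has order exactly $1$), and for this case I would argue directly that $\sem{\seq{\Gamma}{\newc:\vart}}$ — a legitimate knowing strategy, being the denotation of an $\rml$-term — is not innocent: after the play $q_0\,(*,*)$ the continuations $\wt{5}\,\ok\,\rd\,5$ and $\wt{7}\,\ok\,\rd\,7$ both lie in it and share the P-view $q_0\,(*,*)\,\rd$, yet the strategy responds differently. This contradicts $\knowing{A}=\inno{A}$, forcing $\ord{\theta}=0$.

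The proof is essentially bookkeeping over the earlier lemmas, so the one point where I expect to have to be careful is that the hypotheses ``$\ord{\theta}\le 2$ but $\ord{\theta}\neq 0$'' do not literally reduce to ``$\theta$ is an arrow type'', because the grammar of $\lang$ has $\vart$ as its own type former of order~$1$; the $\newc$ counterexample above is what plugs this gap, and verifying its non-innocence is the only computation worth doing with care. One could alternatively dispatch the $\vart$ case by noting that for $A=\sem{\seq{\Gamma}{\vart}}$ (with each $\ord{\theta_i}\le 1$) every O-question carries the empty store, so $\binno{A}=\inno{A}$ by condition~3 of Lemma~\ref{lem:binno-inno}, whereupon the same memory-cell strategy shows $\knowing{A}\neq\binno{A}$; either route closes the argument.
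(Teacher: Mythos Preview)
Your proposal is correct and follows essentially the same route as the paper, which simply says ``it suffices to combine the previous findings.'' The one place where you are more careful than the paper is the $\vart$ case in the $(\Rightarrow)$ direction: Theorem~\ref{lem:bk} is stated only for arrow output types, so $\theta=\vart$ is not literally covered, and you correctly close this gap via the non-innocence of $\sem{\seq{\Gamma}{\newc}}$ (the same arena structure $q_0\vdash a_0\vdash q_1\vdash a_1$ that drives Theorem~\ref{lem:bk} is present for $\vart$, which is presumably why the paper glosses over it).
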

In the next section we demonstrate that the gap in expressivity between $\knowing{A}$
and $\binno{A}$ also bears practical consequences.  The undecidable equivalence
problem for second-order finitary $\rml$ becomes decidable in second-order finitary $\iacbv$
(as well as at some third-order types).



\section{Decidability of a finitary fragment of \texorpdfstring{$\iacbv$}{IAcbv}}

In order to prove program equivalence decidable, we restrict the base
datatype of integers to the finite segment $\makeset{0,\cdots,N}$ ($N>0$)
and replace recursive definitions ($\fix{M}$) with looping ($\while{M}{N}$).
Let us call the resultant language $\ialoop$.
Our decidability result will hold for a subset $\iatwo$ of $\ialoop$, in which
type order is restricted. $\iatwo$ will reside inside the third-order fragment of $\ialoop$
and contain its second-order fragment. Note that the second-order fragment of 
similarly restricted $\rml$ is known be undecidable (even without loops)~\cite{Mur04b}.

The decidability of program equivalence in $\iatwo$ will be shown by translating terms
to regular languages representing the corresponding \emph{block-innocent} strategies.
We stress that we are \emph{not} going to work with the induced S-plays.
Nevertheless, the translation will rely crucially on insights
gleaned from the semantics with explicit stores. In particular, we shall take advantage of the uniformity inherent in block innocence
to represent only subsets of the strategies in order to overcome technical problems presented by pointers.
We discuss the issue next.

\subsection{Pointer-related issues}

Pointers from answer-moves need not be represented at all, because they are uniquely reconstructible through the well-bracketing condition.
However, this need not apply to pointers from questions. The most obvious way to represent them is to decorate moves
with integers that encode the distance from the target in some way. Unfortunately, there are scenarios in which the distance can grow arbitrarily.

\cutout{
Next we analyse two typing scenarios that look hopeless from the point of view of encoding
pointers, since the distance from the pointer can grow arbitrarily. In the first case, thanks to
block-innocence, we will be able to overcome the difficulties. The other case
must remain a challenge for future work (or an undecidability result). On the basis of 
our discussion we shall subsequently introduce the type system of $\iatwo$.
}
Consider, for instance,  the \nt{prearena} $A=\sem{\seq{\theta}{\theta_1\rarr\ldots\rarr \theta_k\rarr\beta}}$.
Due to the presence of the $k$ arrows on the right-hand side we obtain chains of enablers
$q_0\vdash a_0 \vdash \cdots \vdash q_k\vdash a_k$, where $q_0$ is initial and 
each $q_i$ ($i=1,\cdots,k$) is initial in $\sem{\theta_i}$. We shall call the moves \emph{spinal}.
Observe that plays in $A$ can have the following shape
\[
q_0\cdots a_0 q_1 \cdots a_1 q_1\cdots a_1 q_1\cdots a_1 q_2
\]
and any of the occurrences of $a_1$ could be used to justify $q_2$, thus creating several different options for justifying $q_2$.
If we consider S-plays for $A$, Definition~\ref{def:splay} implies that none of the moves $q_i,a_i$ will ever carry a non-empty store.
Consequently, whenever a play of the above kind comes from a block-innocent strategy, its behaviour in  the $q_1\cdots a_1$ segments
will not depend on that in the other $q_1\cdots a_1$ segments. Thus, in order to explore exhaustively the range of behaviours offered by a block-innocent strategy
(so as to compare them reliably), it suffices to restrict the number of $q_1$'s  to $1$.
Next, under the assumption that $q_1$ occurs only once, one can repeat the same argument for $q_2$ to conclude
that a single occurrence of $q_2$ will suffice, and so on.
\cutout{
Consider $\sem{\seq{}{\lambda x^\comt.\lambda y^\comt. ():\comt\rarr\comt\rarr\comt}}$ (i.e. $k=2$),
which contains plays of the form $q_0 a_0 (q_1 a_1)^j$ for any $j\ge 0$. Pointers are still uniquely
determined in these plays, but everything changes once O plays $q_2$ next.
Then the target might be any of the $j$ occurrences of $q_1$. The strategy in question actually 
offers responses in all such cases, so it would seem that all of these plays need to be represented
(thus necessitating the use of an infinite alphabet). Fortunately, thanks to block-innocence, we
can restrict ourselves to the case $j=1$ and make the problem disappear. To see why, 
observe that none of the moves $q_i,a_i$ will ever carry a non-empty store in an S-play,
by Definition~\ref{def:splay}. Thus, because the strategy is block-innocent, its behaviour is already represented faithfully
by the single play $q_0 a_0 q_1 a_1 q_2 a_2$. In fact, this is one of the cases when block-innocence implies innocence,
but in general this will not be true for denotations  of $\iatwo$-terms.

Hence, we generalize the observation as follows. Since the move $q_1$ never carries a non-trivial store, it follows that no 
additional information about the strategy is hidden in plays containing two occurrences of $q_1$. This is because
a block-innocent strategy has to behave uniformly after each $q_1$ and in general will depend only
on what happened between $q_0$ and $a_0$, and not on what happened after a previous copy of $q_1$
was played (there can be no communication between the ``threads" started with $q_1$ because $q_1$ cannot carry 
a non-trivial store).
Now that it is known that O need only play one occurrence of $q_1$, we can apply 
a similar reasoning to $q_2$, and so on. 
}
Altogether this yields the following lemma. Note that, due to Visibility,  
insisting on the presence of a unique copy of $q_1,\cdots, q_k$ in a play amounts to 
asking that each $q_i$ be preceded by $a_{i-1}$.
\begin{lemma}
Call a play \emph{spinal} if each spinal question $q_i$ ($0< i \le k$) occurring in it 
is the immediate successor of $a_{i-1}$.
Let $\spinal{A}$ be the set of spinal plays of $A$.
Let $\sigma,\tau:A$ be block-innocent strategies.
Then $\sigma\cap \spinal{A} = \tau\cap\spinal{A}$ implies $\sigma=\tau$.\qed
\end{lemma}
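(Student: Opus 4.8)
The plan is to work through the underlying innocent $S$-strategies. Write $\sigma=\erase{\sigma'}$ and $\tau=\erase{\tau'}$ with $\sigma',\tau'$ innocent $S$-strategies on $A$, which is possible by the definition of block-innocence. The single fact that drives everything is that \emph{every spinal move carries the empty store in every $S$-play over $A$}: $q_0$ is initial, so its store is empty by \emph{Init}; $q_i$ ($i\ge1$) is an $O$-move justified by the $P$-move $a_{i-1}$, so by \emph{Just-O} its store has the same domain as $a_{i-1}$'s, hence is empty; and $a_i$ is a $P$-answer justified by the $O$-move $q_i$, so by the $\mathit{PA}$-clause of \emph{Just-P} its store equals $q_i$'s, hence is empty. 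Together with \emph{Just-O} and \emph{Close} this means that every name occurring in an $S$-play of $A$ is created by $P$ strictly inside one of the branches that $O$ opens off a spinal move, never migrates to a sibling branch, and is never the target of a pointer from outside its branch.

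The core of the argument is a \emph{spinalisation} step: for every $s\in\sigma$ I would produce a spinal play $\hat s\in\sigma$ with $\pv{\hat s}=\pv{s}$. Fix a lift $\bar s\in\sigma'$ of $s$ and identify, for each $i$, the branch containing (a descendant of) the last move of $s$, the \emph{active branch} at level $i$. Following the informal discussion, induct on $i=1,\dots,k$: assuming $q_1,\dots,q_{i-1}$ already occur immediately after their predecessors, delete from $\bar s$ every branch opened off the (unique) occurrence of $a_{i-1}$ except the active one, together with any stray material between $a_{i-1}$ and the surviving $q_i$. By the thread analysis above, the deleted moves are never pointed to from the surviving part and the names they carry never appear in the stores of surviving moves; hence the result is again a justified $S$-sequence satisfying all the conditions of Definition~\ref{def:splay} (the only non-immediate checks are \emph{Visibility}, \emph{Prev-PQ} and \emph{Val-O}, all of which are local to the surviving branches and the spine), its $P$-views agree move-for-move with the corresponding $P$-views of $\bar s$, and therefore it lies in $\sigma'$ by \emph{innocence} of $\sigma'$. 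Erasing gives $\hat s\in\sigma$, and iterating over $i$ makes it spinal without changing $\pv{\hat s}=\pv{s}$.

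With this in hand the conclusion is routine. By prefix-closure and $O$-closure it suffices to show that $\sigma$ and $\tau$ make the same $P$-responses. Take odd-length $s$ with $sm\in\sigma$ and $m$ a $P$-move. Spinalise to $\hat s\in\sigma$ with $\pv{\hat s}=\pv{s}$; since the lift of $\hat s$ can be taken with the same $S$-$P$-view as that of $\bar s$, innocence of $\sigma'$ gives $\hat s m\in\sigma$, and $\hat s m$ is again spinal because a $P$-move is never a spinal question. Hence $\hat s m\in\tau\cap\spinal{A}$, so $\hat s\in\tau$; and since in a fixed innocent $S$-strategy a store-free play determines its store annotation up to renaming (immediate from \emph{Just-O}, \emph{Val-O}, \emph{O-Closure} and \emph{Determinacy}), the lift of $\hat s m$ in $\tau'$ has the same $S$-$P$-view as the lift of $sm$; innocence of $\tau'$ then transports the response along $\pv{\hat s}=\pv{s}$ to give $sm\in\tau$. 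The symmetric argument yields $\tau\subseteq\sigma$.

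The main obstacle is the thread surgery of the second paragraph: pinning down exactly which sub-sequence constitutes ``the branch opened off an occurrence of $a_{i-1}$'', verifying that deleting the non-active ones preserves every play and $S$-play condition (visibility, well-bracketing and the store conditions being the ones that need care), and handling the nesting of branches cleanly enough that the induction over $i$ goes through. The empty-store property of spinal moves is precisely what makes the deletion harmless — it is what forces distinct branches to be name-disjoint and hence invisible to each other's $P$-views — but turning this morally-clear independence into a rigorous argument, together with the bookkeeping needed to move $S$-play lifts back and forth between $\sigma'$ and $\tau'$, is where the real work sits.
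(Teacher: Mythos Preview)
Your approach is correct and matches the paper's informal argument (the paper gives no detailed proof, only the discussion preceding the lemma and a \qed). The key observation --- that spinal moves carry empty stores, so the branches opened off each $a_{i-1}$ are name-disjoint and invisible to one another's $P$-views --- is exactly what the paper relies on, and your spinalisation step is the natural way to make it precise.

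There is, however, a real gap in your third paragraph. You claim that ``the lift of $\hat s m$ in $\tau'$ has the same $S$-$P$-view as the lift of $sm$'' and justify this by uniqueness of lifts within a fixed innocent $S$-strategy. But uniqueness of lifts only says that two lifts \emph{of the same erasure} are nominally equivalent; it tells you nothing about how the lift of $\hat s$ relates to the lift of $s$, since these have different erasures. What you actually need is this: because spinalisation is a deletion of subsequences determined entirely by the underlying play $s$ (the branch structure and the active path are read off the moves and pointers, not the stores), you can perform the \emph{same} deletion on the lift $\bar s^\tau\in\tau'$ (which exists by the inductive hypothesis $s\in\tau$). Your second-paragraph argument then shows that the result is an $S$-play in $\tau'$ with erasure $\hat s$ and with $S$-$P$-view equal to $\pv{\bar s^\tau}$. \emph{Now} uniqueness of lifts in $\tau'$ identifies this with (a renaming of) the prefix of the lift of $\hat s m$ in $\tau'$, and innocence of $\tau'$ transports the response to $\bar s^\tau$, giving $sm\in\tau$.

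You flag ``moving $S$-play lifts back and forth between $\sigma'$ and $\tau'$'' as where the real work sits, so you are aware something is needed here; but the specific sentence as written misidentifies what that something is. The point is not a property of lifts per se, but that your spinalisation surgery is strategy-agnostic and can therefore be replayed inside $\tau'$.
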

Hence, for the purpose of checking program equivalence, it suffices to compare
the induced sets of \emph{spinal} complete plays. Moreover, the pointer-related problems discussed above will not arise.

Now that we have dealt with one challenge, let us introduce another one, which
cannot be overcome so easily. Consider the \nt{prearena} 
$\sem{\seq{(\theta_1\rarr\theta_2\rarr\theta_3)\rarr\theta_4}{\theta}}$ and the enabling 
sequence $q_0\vdash q_1 \vdash q_2 \vdash a_2 \vdash q_3$ it contains. Now
consider the plays $q_0 q_1 (q_2 a_2)^j q_3$, where $j\ge 0$. Again, to represent
the pointer from $q_3$ to one of the $j$ occurrences of $a_2$, one would need an unbounded number of indices.
This time it is not sufficient to restrict $j$ to $1$, because the behaviour need not be uniform
after each $q_2$ (this is because in the setting with stores a non-empty store
can be introduced as soon as in the second move $q_1$). To see that the concern is real,
consider the term 
$
\seq{f:(\comt\rarr\comt\rarr\comt)\rarr\comt}{\new{x}{f(\lambda y^\comt.\cdots\lambda z^\comt.\cdots):\comt}}
$,
where $(\cdots)$ contain some code inspecting and changing the value of $x$.

This leads us to introduce $\iatwo$ via a type system that will not generate the configuration
just discussed. Another restriction is to omit third-order types in the context, as
they lead beyond the realm of regular languages 
(cf. $\seq{f:((\comt\rarr\comt)\rarr\comt)\rarr\comt}{f(\lambda g^{\comt\rarr\comt}.g()}$).
Since $\vart$ leads to identical problems as $\comt\rarr\comt$, we restrict its use
accordingly.

\subsection{\texorpdfstring{$\iatwo$}{IA2+}}

\begin{definition}
$\iatwo$ consists of $\ialoop$-terms whose typing derivations rely solely on typing
judgments of the shape
$
\seq{x_1:\ctype_1,\cdots,x_n:\ctype_n}{M:\ttype}
$,
where $\ctype$ and $\ttype$ are defined by the grammar below.
\[\begin{array}{rll}
\ctype &::=\quad &\beta \quad|\quad \vart \quad|\quad \beta\rarr\ctype \quad|\quad \vart\rarr\ctype \quad|\quad (\beta\rarr\beta)\rarr\ctype\\
\ttype  &::= &\beta\quad|\quad\vart \quad|\quad \ctype\rarr\ttype 
\end{array}\]
\end{definition}\medskip

\noindent A lot of pointers from questions become uniquely determined in strategies
representing $\iatwo$ terms, namely, all pointers from any O-questions and all
pointers from $P$-questions to O-questions.
\begin{lemma}\label{lem:pointers}
Let $A=\sem{\seq{\ctype_1,\cdots,\ctype_n}{\ttype}}$ and $s_1$, $s_2$ be spinal plays of $A$
that are equal after all pointers from O-questions and all pointers from P-questions
to O-questions have been erased. Then $s_1=s_2$.
\end{lemma}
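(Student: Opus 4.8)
The plan is to prove Lemma~\ref{lem:pointers} by showing that, for the prearenas $A=\sem{\seq{\ctype_1,\cdots,\ctype_n}{\ttype}}$ arising from $\iatwo$ judgments, every O-question has at most one possible justifier available in the P-view (so Visibility pins it down), and similarly every pointer from a P-question to an O-question is forced. First I would unfold the shape of $A$: the right-hand side $\ttype$ is built from $\ctype\rarr\ttype$, and each $\ctype$ is of the very restricted form $\beta\rarr\ctype$, $\vart\rarr\ctype$, or $(\beta\rarr\beta)\rarr\ctype$, so the only place a genuine ``branching'' of justifiers could occur is where a move enables two copies of the same question. I would catalogue, by the enabling relation $\vdash_A$ induced by the $\impl$/$\otimes$/$\arr$ constructions, exactly which moves are O-questions and which are P-questions, and which moves they are enabled by. The key structural observation is that in $\ctype$-types the only arrows whose source is a function type have source $\beta\rarr\beta$ — that is, order at most $1$ on the left — so any O-question enabled within such a component is enabled either by an initial move or by a P-answer, never by a P-question (this is essentially condition~2 of Lemma~\ref{lem:binno-inno} restricted to the relevant sub-components).

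The main steps, in order, would be: (1)~Using spinality, note that each spinal question $q_i$ occurs with its justifier uniquely determined (it is the immediate successor of $a_{i-1}$), so spinal plays already remove the one source of genuinely unbounded pointer ambiguity identified in the discussion preceding the lemma. (2)~For a non-spinal O-question $q_O$ occurring in $s$, appeal to Visibility: $q_O$'s justifier must appear in $\pv{s_{<q_O}}$. I would then argue, by analysing the $\ctype$/$\ttype$ grammar, that any P-view in $A$ contains at most one occurrence of any move that enables an O-question — because the enabler of an O-question is always a $PA$-move or an initial move, and in the P-view each such move can occur only once (its successors are swallowed or it is the unique pending move). Hence the pointer from $q_O$ is forced once pointers are known up to that point. (3)~For a P-question $q_P$ whose target is an O-question $q_O$, the same P-view analysis applies: $q_P$ points into $\pv{s_{<q_P}}$ (this is the relevant visibility clause for P), and the target O-question, being itself an O-question, occurs at most once in that P-view by the same counting argument. (4)~Finally, assemble: two spinal plays $s_1,s_2$ that agree after erasing exactly these pointers must, by an induction on length, have agreed on all earlier pointers (answer pointers by well-bracketing, O-question pointers and P-to-O-question pointers by steps~(2)--(3)), hence they agree on the one remaining class of pointers too (P-question to P-question — but by the $\iatwo$ grammar these only arise at the $(\beta\rarr\beta)\rarr\ctype$ spot, where the target is again forced by spinality and the order-$1$ restriction), so $s_1=s_2$.

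I expect the main obstacle to be the bookkeeping in step~(2)--(3): precisely characterising which moves of $A$ are O-questions and showing that in \emph{every} P-view each potential justifier of an O-question occurs at most once. This requires a careful case analysis over the $\ctype$ and $\ttype$ grammar and the arena constructions $\impl$, $\otimes$, $\arr$, tracking the $O/P$ and $Q/A$ labels through them; the subtlety is that $(\beta\rarr\beta)\rarr\ctype$ does introduce P-questions that enable O-questions one level down, but because the inner type is merely $\beta\rarr\beta$ (order~$1$), the enabled O-question is an initial move of a $\beta$-arena and so cannot itself enable anything problematic, and its single copy in the P-view is guaranteed by spinality together with the fact that the enabling P-question sits at a fixed spinal position. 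Once this combinatorial lemma about P-views is in hand, the inductive reconstruction of all pointers from the ``erased'' data is routine. It is worth noting that this is exactly the property that will later license representing the spinal complete plays of $\iatwo$-denotations as a \emph{regular language} over a finite alphabet, since only boundedly many pointer indices — the P-question-to-P-question ones at the $(\beta\rarr\beta)$ spots — ever need to be recorded explicitly.
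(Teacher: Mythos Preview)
Your overall plan---use the restricted $\ctype/\ttype$ grammar together with Visibility to pin down the erased pointers---is the right shape, but there are concrete errors that block the argument from going through.

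\textbf{Wrong view for O-moves.} In step~(2) you write that the justifier of an O-question $q_O$ must lie in $\pview{s_{<q_O}}$. Visibility for an O-move requires its (P-move) justifier to lie in the \emph{O-view} $\oview{s_{<q_O}}$, not the P-view. This matters: the paper's argument for O-questions justified by P-questions is carried out in the O-view, and the structure of P-views is quite different.

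\textbf{Missing the main case.} Your step~(2) asserts that ``the enabler of an O-question is always a $PA$-move or an initial move''. This is false for the prearenas $A=\sem{\seq{\ctype_1,\cdots,\ctype_n}{\ttype}}$: the chains of hereditary enablers inside each $\ctype$ have the shape $q'\,q_1'\,a_1'\cdots q_j'\,a_j'\,q_{j+1}'\,q$, so the O-question $q$ is enabled by the P-\emph{question} $q_{j+1}'$. You even acknowledge this in your obstacle paragraph (``$(\beta\rarr\beta)\rarr\ctype$ does introduce P-questions that enable O-questions''), which contradicts your own step~(2). The paper handles this case with a genuine combinatorial argument: one defines the \emph{degree} of a hereditary enabler of $q$ (its distance from the initial move in the chain above) and shows by induction on degree that each degree appears at most once in any O-view. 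That inductive argument---$q'$ is unique, hence $q_1'$ is unique, hence $a_1'$ is unique, hence $q_2'$ is unique, etc.---is the heart of the proof and is absent from your proposal.

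\textbf{Confusion in step~(4).} You speak of ``P-question to P-question'' pointers, but justification is always between moves of opposite $OP$-polarity; no such pointers exist. The pointers \emph{not} erased in the hypothesis are those from answers (both polarities) and from P-questions to O-\emph{answers}; these already agree in $s_1$ and $s_2$ by assumption, so nothing needs to be done for them.

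For P-questions pointing to O-questions, the paper's route is simpler than your step~(3): in these prearenas, whenever a P-question is enabled by an O-question, that O-question is spinal, hence occurs exactly once in a spinal play, so the pointer is forced. Likewise, whenever an O-question is enabled by a P-answer, both are spinal, again forcing the pointer. The only subtle case is O-questions enabled by P-questions, and that is exactly the degree-in-O-view argument you are missing.
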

\begin{proof}
Observe that whenever a P-question 
is enabled by an O-question in the \nt{prearenas} under consideration, the O-question must be spinal. Hence,
because both $s_1$ and $s_2$ are spinal, all such O-questions will
occur only once, so pointers from P-questions to O-questions are 
uniquely reconstructible.

Now let us consider O-questions. Observe that, due to restrictions on the type
system of $\iatwo$, whenever an O-question is justified by a P-answer, both will be
spinal. Hence only one copy of each can occur in a spinal position, making pointer
reconstruction unambiguous. Finally, we tackle the case of O-questions justified
by P-questions.

\begin{itemize}
\item If $q$ comes from a type of the context then, due to the shape of types involved,
any sequence of hereditary enablers of $q$ must be of the form 
$q'  q_1' a_1'\cdots q_j' a_j' q_{j+1}' q$, where $q'$
is initial and each of the moves listed enables the following one.
If a move $m$ enables $q$ hereditarily, let us define its degree
as the distance from the initial move in the sequence above
(this definition is independent of the actual choice of the chain
of enablers; the degree of $q_i'$ is $2i-1$,  that of $a_i'$ is $2i$).

By induction on move-degree we show that in any O-view
only one move of a given degree can be present, if at all.
By visibility this makes pointer reconstruction unambiguous. 
\begin{itemize}
\item By definition of a play $q'$ can occur in an O-view only once.
Whenever $q_1'$ is present in an O-view, it must preceded by an initial move,
so its  position in an O-view is uniquely determined (always second).
\item Since $q_i'$ occurs only once in an O-view, so does $a_i'$ (questions can be
answered only once). Hence, $q_{i+1}'$ can also occur only once, because each
occurrence must be preceded by $a_i'$.
\end{itemize}
Consequently, any move of degree $2j+1$ ($q_{j+1}'$) 
can only occur once in an O-view and thus the pointer from $q$
can be reconstructed uniquely.
\item If $q$ originated from the type on the right-hand side of the typing
judgment, we can repeat the reasoning above. The only difference is
that the sequences of enablers are now of the form
\[
q_0 a_0 \cdots q_k a_k q' q_1' a_1' \cdots q_j' a_j' q_{j+1}' q
\]
where $q_i, a_i$ ($i=0,\cdots,k$) are spinal. Then the base case
of the induction ($q'$) follows from the fact that we are dealing with
spinal plays.\qedhere
\end{itemize}
\end{proof}

\noindent Thus, the only pointers that need to be accounted for are those from P-questions
to O-answers. Here is the simplest scenario illustrating that they can be ambiguous.
Consider the terms 
\[
\seq{f:\comt\rarr\comt\rarr\comt}{\letin{g_1=f()}{(\letin{g_2=f()}{g_i()})}:\comt}
\]
where $i=1,2$.
They lead to the following plays, respectively for $i=1$ and $i=2$, which are equal up to 
pointers from P-questions to O-answers.

\medskip

\[
\rnode{A}{q_0}\qwe\rnode{B}{q_1}\qwe\rnode{C}{a_1}\qwe\rnode{D}{q_1}\qwe\rnode{E}{a_1}\qwe\rnode{F}{q_2}
\justf{B}{A}\justf{C}{B}\justf{D}{A}\justf{E}{D}\justf{F}{C}\qquad
\rnode{A}{q_0}\qwe\rnode{B}{q_1}\qwe\rnode{C}{a_1}\qwe\rnode{D}{q_1}\qwe\rnode{E}{a_1}\qwe\rnode{F}{q_2}
\justf{B}{A}\justf{C}{B}\justf{D}{A}\justf{E}{D}\justf{F}{E}
\]
\begin{remark}{
In the conference version of the paper~\cite{MT10} we suggested that justification pointers of the above kind be represented
with numerical indices encoding the target of the pointer inside the current P-view.
More precisely, one could enumerate (starting from $0$) all question-enabling O-answers in the P-view. Then pointers from P-questions
to O-answers could be encoded by decorating the P-question with the index of the O-answer. The plays above 
could then be coded as $q_0 q_1 a_1 q_1 a_1 q_2^0$ and $q_0 q_1 a_1 q_1 a_1 q_2^1$ respectively.
Unfortunately, there exists terms that  generate plays where such indices would be unbounded, such as
\[
{\letin{g_1=f()}{(\while{h()}{\letin{g_2=f()}{g_2()});g_1():\comt}}}
\]
where $f:\comt\rarr\comt\rarr\comt$ and $h:\comt\rarr\expt$. Because the number of loop iterations is unrestricted, the number needed to represent the
justification pointer corresponding to the rightmost occurrence of $g_1()$ cannot be bounded. Consequently, the representation scheme proposed in~\cite{MT10}
would lead to an infinite alphabet. Next we show that this problem can be overcome, though.
}
\end{remark}
The above-mentioned defect can be patched with the help of a different representation scheme based on annotating targets and sources of 
justification pointers, with $\circ$ and $\bullet$ respectively. We shall use the same two symbols for each pointer.
This can lead to ambiguities if many pointers are represented at the same time in a single string.  However, to avoid that, we are going to use
multiple strings to represent a single play. More precisely, these will be strings corresponding to the underlying sequence of moves
in which each pointer may or may not be represented, i.e. 
plays featuring $n$ pointers from P-questions to O-answers will be represented by $2^n$ encoded strings. For example, to represent

\medskip

\[
\rnode{A}{q_0}\qwe\rnode{B}{q_1}\qwe\rnode{C}{a_1}\qwe\rnode{D}{q_1}\qwe\rnode{E}{a_1}\qwe\rnode{F}{q_2}\qwe \rnode{G}{a_2} \qwe \rnode{H}{q_2}
\justf{B}{A}\justf{C}{B}\justf{D}{A}\justf{E}{D}\justf{F}{C}\justf{G}{F}\justf{H}{E}
\]
we shall use the following four strings
\[\begin{array}{ccc}
q_0 \, q_1\, a_1\, q_1\, a_1\, q_2\, a_2\, q_2 &\qquad & q_0\,  q_1\, a_1^\circ\, q_1\, a_1\, q_2^\bullet\, a_2\, q_2\\
q_0\,  q_1\, a_1\, q_1\, a_1^\circ\, q_2\, a_2\, q_2^\bullet && q_0\,  q_1\, a_1^\circ\, q_1\, a_1^\circ\, q_2^\bullet\, a_2\, q_2^\bullet
\end{array}\]
Note that the last string represents pointers ambiguously, though the second and third strings identify them uniquely. 
We could have achieved the same effect using strings  in which only one pointer is represented~\cite{HMO11} but, from the technical point of view, it is easier to include 
all possible pointer/no-pointer combinations.

\subsection{Regular-language interpretation}

In order to translate $\ialoop$-terms into regular languages representing their game semantics, we 
we restrict our translation to terms in a canonical shape, to be defined next. 
Any $\ialoop$-term can be converted effectively to such a form and the conversion
preserves denotation.

The canonical forms are defined by the following grammar. We use types as 
superscripts, whenever we want to highlight the type of an identifier ($u,v,x,y,z$ 
range over identifier names). Note that the only identifiers in canonical form are those of base type, 
represented by  $x^\beta$ below.
\[\begin{array}{ll}
\can ::=   & ()  \,\,\,|\,\,\, i \,\,\,|\,\,\, x^\beta \,\,\,|\,\,\, x^\beta\oplus y^\beta \,\,\,|\,\,\, \cond{x^\beta}{\can}{\can} \,\,\,|\,\,\, 
x^\vart\aasg y^\expt \,\,\,|\,\,\, !x^\vart \,\,\,|\,\,\,\lambda x^\theta.\can \,\,\,|\,\,\,\\
 & \badvar{\lambda x^\comt.\can}{\lambda y^\expt.\can} \,\,\,|\,\,\,  \new{x^\vart}{\can} \,\,\,|\,\,\, \while{\can}{\can}  \,\,\,|\,\,\, \letin{x^\beta=\can}{\can} \,\,\,|\,\,\,\\ 
& \letin{x= z y^\beta}{\can} \,\,\,|\,\,\, 
\letin{x= z\, \badvar{\lambda u^\comt.\can}{\lambda v^\expt.\can}}{\can} \,\,\,|\,\,\,
\letin{x=z(\lambda x^\theta.\can)}{\can}
\end{array}\]
\begin{lemma}\label{lem:canon}
Let $\seq{\Gamma}{M:\theta}$ be an $\ialoop$-term. There is
an $\ialoop$-term $\seq{\Gamma}{N:\theta}$ in canonical form, effectively
constructible from $M$, such that $\sem{\seq{\Gamma}{M}}=\sem{\seq{\Gamma}{N}}$.
\end{lemma}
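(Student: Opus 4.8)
The plan is to prove the lemma by giving an explicit, terminating rewriting procedure that puts an arbitrary $\ialoop$-term into the stated canonical shape — in effect, the passage to administrative normal form specialised to $\ialoop$ — and then checking that every elementary step it performs preserves the denotation. First I would fix a well-founded measure on terms (a lexicographic pair built from the order of the type and the syntactic size is a natural first candidate) and define the procedure by recursion on that measure, normalising proper subterms first and then massaging the result into one of the allowed productions of $\can$. The elementary steps are of three kinds: \textbf{(i)} unfolding $\letin{x=M}{N}=(\lambda x.N)M$; \textbf{(ii)} commuting conversions that hoist and re-associate $\letin{}{}$-bindings, e.g.\ $\letin{x=(\letin{y=M}{N})}{P}=\letin{y=M}{\letin{x=N}{P}}$, and that distribute an application or a conditional that sits in evaluation position over a $\letin{}{}$ or a conditional, e.g.\ $(\cond{x}{N_1}{N_0})\,L=\cond{x}{N_1L}{N_0L}$ and $M(\cond{x}{N_1}{N_0})=\cond{x}{MN_1}{MN_0}$; and \textbf{(iii)} $\eta$-expansion, used to bring a term $z$ of type $\vart$ into the form $\badvar{\lambda a^\comt.{\deref z}}{\lambda b^\expt.z\aasg b}$ and, \emph{only when it is already a value}, to bring a term of arrow type into a $\lambda$-literal. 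Iterating these, one drives every effectful subcomputation into a leading $\letin{}{}$-binding, forces the head of every application to be an identifier, and forces each argument into one of the three permitted shapes (base-type identifier, canonical $\mkvar$-literal, canonical $\lambda$-literal); conditionals and $\oplus$ are reduced to acting on base-type identifiers, and $\while{M}{N}$ is dealt with by recursing into $M$ and $N$. Because $\ialoop$ has replaced $\fix{M}$ by $\while{M}{N}$, there is no fixpoint case to treat.

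Denotation preservation is then straightforward: each of the rewrites (i)--(iii) instantiates an identity that holds in the game model of $\ialoop$. The unfolding and the commuting conversions follow from associativity of composition and naturality of the pairing and currying operations of $\Scatinn$ — concretely, from the fact that composition of strategies is computed sequentially, so that evaluation order is respected by the model. The $\vart$-$\eta$ law is the surjective-pairing law for the product arena $(1\Arr\Z)\otimes(\Z\Arr1)$ interpreting $\vart$, together with the copycat behaviour of the $\mkvar$ strategy; and value-$\eta$ at arrow types is just the statement that the exponential adjunction $\Lambda$ constructed earlier is a bijection. Since the interpretation is compositional, propagating these equalities through a typing derivation yields $\sem{\seq{\Gamma}{M}}=\sem{\seq{\Gamma}{N}}$ for the output $N$; and because the knowing, block-innocent and $S$-strategy models agree via the erasure correspondence, it does not matter which $\sem{-}$ is meant. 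The procedure is clearly effective once it is shown to terminate.

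The hard part will be the purely combinatorial business of termination and coverage. Distributing an application or a conditional over a conditional duplicates a subterm, so a naive size measure does not decrease; this can be handled either by a more careful measure (noting that in each recursive call the subterm being pushed around is strictly smaller, which does hold here) or, more cleanly, by recasting the translation in continuation-passing style so that contexts are instantiated only at the leaves. One must also verify that the rigid grammar of $\can$ is genuinely reachable in every case: in particular, values of type $\vart$ or of higher type cannot be named by the base-type binder $\letin{x^\beta=\can}{\can}$, so one has to confirm that every such value is either $\eta$-expanded into a literal or ends up as the explicit argument of an application, so that one of the three application-binding productions applies. Matching this exhaustive case analysis against the chosen measure is where essentially all of the effort goes; the semantic side is entirely off the shelf.
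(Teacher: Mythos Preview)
Your proposal is on the same track as the paper --- the transformation is indeed driven by $\eta$-expansion, commuting conversions, and $\beta$-reduction, and the semantic side is exactly the off-the-shelf argument you describe. However, there is a concrete gap in your list of elementary steps. Your step~(i), ``unfolding $\letin{x=M}{N}=(\lambda x.N)M$'', is only the definitional equation; it does not eliminate anything. The crucial missing rewrite is genuine $\beta$-reduction, i.e.\ \emph{substitution} of a higher-type value literal for its let-bound variable. The grammar of $\can$ admits higher-type let-bindings \emph{only} of the shape $\letin{x=z\,y^\beta}{\cdots}$, $\letin{x=z\,\badvar{\cdots}{\cdots}}{\cdots}$ or $\letin{x=z(\lambda y.\cdots)}{\cdots}$; it has no production for $\letin{x^{\theta_1\to\theta_2}=\lambda y.C}{C'}$ or $\letin{x^\vart=\badvar{\cdots}{\cdots}}{C'}$. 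With only your rewrites (i)--(iii), such a term is stuck: there is nothing to commute, the bound term is already a literal, and the binding cannot be kept. Your closing paragraph notices the problem (``every such value \ldots\ ends up as the explicit argument of an application'') but the conclusion does not follow --- the value may simply be let-bound, as in $(\lambda x^{\theta_1\to\theta_2}.N)(\lambda y.M)$ --- and substitution is the only way out.

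The paper organises the argument precisely around this point. It isolates an auxiliary lemma: if $C_1,C_2$ are already canonical, then $\letin{y^\theta=C_1}{C_2}$ can be normalised, by induction on~$\theta$. For non-base $\theta$ and $C_1$ a $\lambda$- or $\mkvar$-literal, one substitutes $C_1$ for the rightmost occurrence of $y$ in $C_2$; this creates a single non-canonical subterm of strictly smaller type, which the inductive hypothesis handles. Your ``lexicographic pair of type order and size'' is the right intuition for why this terminates, but it only works once substitution is on the table. A smaller remark: your distributive law $M(\cond{x}{N_1}{N_0})=\cond{x}{M N_1}{M N_0}$ is sound here because $x$ is already a base-type identifier, but the paper avoids it altogether by first revealing the $\lambda$ in (the canonical form of) $M$ and then applying the commuting conversion for $\mathsf{let}$ over $\mathsf{if}$, which keeps the duplication analysis cleaner.
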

\begin{proof}
$N$ can be obtained via a series of $\eta$-expansions, $\beta$-reductions and
commuting conversions involving $\mathsf{let}$ and $\mathsf{if}$. We present a detailed argument in Appendix~\ref{apx:canon}.
\end{proof}
A useful feature of the canonical form is that the problems with pointers 
can be related to the syntactic shape: they concern references to $\mathsf{let}$-bound identifiers $x^\theta$
such that $\theta$ is \emph{not} a base type (i.e. $\theta=\vart$ or $\theta$ is a function type).
\cutout{The representation scheme for pointers corresponds then to enumerating such $\mathsf{let}$ bindings
along branches of the syntactic tree of the canonical form (using $0$ for topmost bindings).}
Below we state our representability theorem for $\iatwo$-terms.
The definition of $\clg{A}_M$ is actually too generous, as we shall only need $\circ,\bullet$ to
decorate P-questions enabled by O-answers.
\begin{proposition}\label{prop:regular}
Suppose $\seq{\Gamma}{M:\theta}$ is an $\iatwo$-term. Let $\clg{A}_M=M_A+(M_A\times\{\circ,\bullet\})$, 
where $A=\sem{\seq{\Gamma}{\theta}}$.
Let $\comp{\seq{\Gamma}{M}}$ be the set of non-empty spinal complete plays from $\sem{\seq{\Gamma}{M:\theta}}$.
Then $\comp{\seq{\Gamma}{M}}$ can be represented as a regular language over a \emph{finite} subset of $\clg{A}_M$.
\end{proposition}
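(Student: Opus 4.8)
The plan is to proceed by structural induction on the canonical form of $M$ (Lemma~\ref{lem:canon}), building for each term a finite automaton whose accepted language is exactly $\comp{\seq{\Gamma}{M}}$ under the multi-string pointer encoding described above. First I would set up the representation more carefully: fix, for each arena $A = \sem{\seq{\Gamma}{\theta}}$ occurring in the derivation, the finite subset of moves that can actually appear in spinal plays (this is finite because $\beta$ is now $\makeset{0,\ldots,N}$ and the type is fixed), together with the annotation set $\{\circ,\bullet\}$ for P-questions enabled by O-answers. A play is then represented by the $2^n$ strings obtained by independently choosing, for each of its $n$ pointers from a P-question to an O-answer, whether to mark that pointer (source $\bullet$, target $\circ$) or not; all other pointers are omitted since they are uniquely reconstructible on spinal plays by Lemma~\ref{lem:pointers} and well-bracketing. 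The language $\comp{\seq{\Gamma}{M}}$ is the union over all complete spinal plays of their string-sets.

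The inductive construction follows the shape of canonical forms. For the pure PCF-style constructs ($()$, $i$, $x^\beta$, $x^\beta\oplus y^\beta$, conditionals, $\lambda$-abstraction, $\letin{x^\beta=\can}{\can}$, $\mkvar$) I would adapt the standard regular-language semantics of Ghica--McCusker-style interpretations of second-order Idealized Algol, taking care to thread the annotations through: the key point is that at base-type $\mathsf{let}$-bindings no ambiguous pointer is created, so these steps are essentially the finitary-game-semantics composition of regular languages (intersection over a shared alphabet followed by hiding), which preserves regularity. For $\while{\can}{\can}$ one uses Kleene star. The genuinely new cases are $\new{x^\vart}{\can}$ and the three $\mathsf{let}$-application forms $\letin{x = zy^\beta}{\can}$, $\letin{x = z\,\mkvar(\ldots)}{\can}$ and $\letin{x = z(\lambda x^\theta.\can)}{\can}$. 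For $\new{x}{\can}$ I would use the $\mathsf{cell}_\beta$ strategy of Example~\ref{ex:cell}: since the block has ground type, the cell's behaviour is a finite-state read/write discipline on values in $\makeset{0,\ldots,N}$, hence regular, and composing it with the (regular) denotation of the body keeps us regular; crucially the restriction to $\iatwo$ types guarantees (as argued before Definition of $\iatwo$) that such a block never forces an unbounded justification-pointer distance of the bad kind. For the $\mathsf{let}$-application forms the new identifier $x$ has non-base type, and this is exactly where pointers from later P-questions back to the O-answer introducing $x$ become ambiguous; here I would exploit the multi-string encoding, emitting a $\circ$-annotated answer for $x$ and, at each subsequent use of $x$ inside the body, nondeterministically either emitting a $\bullet$-annotated question (and simultaneously verifying via the automaton's finite control that this is the most recent still-open binding of the appropriate kind, so the marked string is well-formed) or emitting the plain question; spinality (the spinal-play restriction of Lemma above) bounds the nesting depth of such open bindings along any branch, so only finitely many control states are needed.

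The main obstacle I expect is precisely the bookkeeping for the $\mathsf{let}$-application cases: showing that the finite control suffices to (i) reconstruct all the unambiguous pointers on the fly during composition/hiding, and (ii) correctly generate exactly the $2^n$-fold string-set for each play, neither over- nor under-generating, while composition (the CSP-style synchronised product with hiding used to interpret application) does not blow up the alphabet. The resolution is to observe that along any spinal play the stack of "currently relevant" non-base $\mathsf{let}$-bound identifiers has bounded height — this is the content of the pointer analysis in Lemmas above together with the $\iatwo$ type restriction forbidding the $\sem{\seq{(\theta_1\rarr\theta_2\rarr\theta_3)\rarr\theta_4}{\theta}}$ configuration — so a fixed finite amount of state tracks which annotations are legal, and regularity is closed under the product-and-hide operation. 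Once each canonical constructor is handled, a routine induction combining these finitely many regular operations yields the claimed regular language, and effectiveness is immediate since Lemma~\ref{lem:canon} gives $N$ effectively and every step above is an effective automaton construction.
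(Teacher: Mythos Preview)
Your overall plan---induction on canonical forms, handling each constructor by a regularity-preserving operation---matches the paper's. The base cases, $\mathsf{while}$, base-type $\mathsf{let}$, and $\new{x}{\cdots}$ (intersection with a finite-state cell discipline, then hiding) are all treated essentially as you suggest.

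Where you diverge is in the $\mathsf{let}$-application cases with non-base bound identifier, and here your account contains a misconception. You propose that the automaton should, when emitting a $\bullet$-marked question, \emph{verify via finite control that this is the most recent still-open binding}, and you appeal to a bounded-stack argument to justify that finitely many states suffice. The paper does neither of these things. Its $\circ/\bullet$ encoding has \emph{no} well-formedness constraint on strings: for a play with $n$ ambiguous pointers it simply includes all $2^n$ subsets of marked pointers, including strings where several $\circ$'s and $\bullet$'s coexist ambiguously. Faithfulness comes from the fact that the single-pointer strings already determine the play. Operationally, the paper never tracks bindings at run time; instead, for each syntactic $\letin{x=z(\cdots)}{N}$ it takes the already-computed regular language for $N$ and applies a \emph{homomorphic substitution} sending each occurrence of $q_x$ to $q_{z,x}^\bullet$ (in one summand) or $q_{z,x}$ (in the other), while the single target $\star_{z,x}$ is prepended with or without $\circ$. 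Because substitution of regular languages for letters preserves regularity, no stack-depth bound is needed.

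There is also a technical point you do not mention: for the cases where $z$'s argument type is $\beta_1\to\beta_2$ or $\vart$, one must argue that the ``detours'' into $\lambda y.M$ (triggered whenever the second move $q_z$ is O-visible) are themselves regular and occur exactly after certain P-moves. The paper isolates an auxiliary lemma (every O-view of a non-trivial prefix contains exactly one P-move enabled by the initial move) to justify that detours cannot overlap and hence take the shape $(\cdots)^\ast$, and that after $\star_{z,x}$ detours are enabled precisely after P-moves hereditarily justified by $\star_x$. This is what makes the substitutions $p_{z,x}\,\mathcal{C}'/p_x$ correct; your sketch would need something equivalent.
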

\begin{proof}
$\comp{\seq{\Gamma}{M}}$ can be decomposed as $\sum_{i\in I_A}(i\,\, \comp{\seq{\Gamma}{M}}^i)$. Obviously $\comp{\seq{\Gamma}{M}}$
is regular if, and only if, any $\comp{\seq{\Gamma}{M}}^i$ is regular ($i\in I_A$). Hence, it suffices to show
that $\comp{\seq{\Gamma}{M}}^i$ is regular for any relevant $i$.

As already discussed, the regular-language representations, which we shall also refer to by $\compi{\seq{\Gamma}{M}}{i}$,
will consist of plays in which individual pointers may or may not be represented. However, because all possibilities are covered, 
this will yield a faithful representation of the induced complete plays.
We proceed by induction on the structure of canonical forms. Let $i_\Gamma$ range over $I_{\sem{\Gamma}}$. 
\begin{itemize}
\item $\compi{\seq{\Gamma}{()}}{i_\Gamma}=\star$
\item $\compi{\seq{\Gamma}{j}}{i_\Gamma}=j$
\item $\compi{\seq{\Gamma,x:\beta}{x:\beta}}{(i_\Gamma,j_x)}=j$
\item $\compi{\seq{\Gamma,x:\expt,y:\expt}{x\oplus y}}{(i_\Gamma, j_x, k_y)}=j\oplus k$
\item $\compi{\seq{\Gamma,x:\expt}{\cond{x}{M_1}{M_0}}}{(i_\Gamma,j_x)}=\compi{\seq{\Gamma,x}{M_{h(j)}}}{(i_\Gamma,j_x)}$, 
where $h(j)=\left\{\renewcommand\arraystretch{.7}\begin{array}{lcl} 0 &\,& j=0\\ 1 && j>0\end{array}\right.$
\item $\compi{\seq{\Gamma,x:\vart,y:\expt}{x\aasg y}}{(i_\Gamma,\star_x,j_y)}=\mwrite{j}_x\mok_x\star$
\item $\compi{\seq{\Gamma,x:\vart}{!x}}{(i_\Gamma,\star_x)}=\mread_x (\sum_{j=0}^{N} j_x j)$
\item $\compi{\seq{\Gamma}{\badvar{\lambda x^\comt.M_1}{\lambda y^\expt.M_2}}}{i_\Gamma}=
\star (\epsilon+\mread\,\compi{\seq{\Gamma,x}{M_1}}{(i_\Gamma,\star_x)}+\sum_{j=0}^N \mwrite{j}\, \compi{\seq{\Gamma,y}{M_2}}{(i_\Gamma,j_y)}[\mok/\star])$

\item $\compi{\seq{\Gamma}{\lambda x^\theta.M}}{i_\Gamma}=\star(\epsilon+\sum_{i\in I_{\sem{\theta}}} i\,\,\compi{\seq{\Gamma,x}{M}}{(i_\Gamma,i_x)}[m_x'/m_x]  )$

\item $\compi{\seq{\Gamma}{\new{x}{M}}}{i_\Gamma} = (\compi{\seq{\Gamma,x}{M}}{(i_\Gamma,\star_x)}\, \cap\, \clg{C}') [\epsilon/m_x]$ where, \nt{writing $\|$ for the shuffle operator on strings,} 
\[
\clg{C'}= (\clg{A}'\setminus (M_{\sem{\vart}})_x)^\ast\quad ||\quad ((\mread_x\,0_x)^\ast (\sum_{j=0}^N \mwrite{j}_x\,\mok_x\,(\mread_x\,j_x)^\ast  )^\ast)
\]
and $\clg{A}'$ is the finite alphabet used to represent $\seq{\Gamma,x:\vart}{M}$.
\end{itemize}
The substitution $[m_x'/m_x]$ highlights the fact that the moves associated with $x$ have to be bijectively relabelled, because
the copy of $\theta$ moved from the left- to the right-hand side of the context.
$[\epsilon/m_x]$ stands for erasure of all moves associated with $x$. Obviously, these (homomorphic) operations preserve regularity. 
Note that the clause for $\lambda x^\theta.M$ is correct because we consider spinal plays only.

For $\seq{\Gamma}{M:\beta}$ we sometimes refer
to components determined by the following decomposition.
\[
\compi{\seq{\Gamma}{M:\beta}}{i_\Gamma}=\sum_{j\in I_{\sem{\beta}}} (\compi{\seq{\Gamma}{M}}{i_\Gamma,j}\,\,j)
\]
The components $\compi{\seq{\Gamma}{M}}{i_\Gamma,j}$ can be extracted from $\compi{\seq{\Gamma}{M:\beta}}{i_\Gamma}$
by applying operations preserving regularity (intersection, erasure), so the latter is regular iff each of the former is.
\begin{itemize}
\item $\compi{\seq{\Gamma}{\while{M}{N}}}{i_\Gamma}= (\sum_{j=1}^N  \compi{\seq{\Gamma}{M}}{i_\Gamma,j} 
\,\compi{\seq{\Gamma}{N}}{i_\Gamma,\star}  )^\ast \,\compi{\seq{\Gamma}{M}}{i_\Gamma,0}$
\item $\compi{\seq{\Gamma}{\letin{x^\beta=M}{N}}}{i_\Gamma}=\sum_{j\in I_{\sem{\beta}}} \compi{\seq{\Gamma}{M}}{i_\Gamma,j}  
\compi{\seq{\Gamma,x}{N}}{(i_\Gamma,j_x)}$
\end{itemize}

The remaining cases are those of $\mathsf{let}$-bindings of the form $\letin{x=z(\cdots)}{\cdots}$.
First we explain some notation used throughout.
Consider the following context $\Gamma,z:\theta'\rarr\theta,x:\theta$.
We shall refer to moves contributed by $x:\theta$ with $m_x$.
If we want to range solely over O- or P-moves from the component, we use $o_x$ and $p_x$ respectively.
Moreover, we use $m_{z,x}, o_{z,x}, p_{z,x}$ to refer to copies of $m_x, o_x, p_x$ in the $z:\theta'\rarr\theta$ component.
The most common operation performed using this notation will be the relabelling of $m_x$ to $m_{z,x}$.
If $\theta$ is a function type, then there is a unique P-question $q_x$ enabled by the initial move $\star_x$.
Whenever we have a separate substitution rule for $q_x$, the rule for $m_x$ or $p_x$ will not apply to $q_x$.
In most cases we will want to substitute $q_{z,x}$ or $q_{z,x}^\bullet$ for $q_x$. In the latter case, $q_{z,x}^\bullet$ stands for $q_{z,x}$
annotated to represent the source of a pointer to a source move represented by $\star_{z,x}^\circ$.

First we tackle cases where the bound value is of function type, i.e. those related to possibly ambiguous
pointer reconstruction. Note that we include plays with and without pointer representations.
\begin{itemize}
\item $\seq{\Gamma,z:\beta\rarr(\theta_1\rarr\theta_2), y:\beta}{\letin{x=z y}{N}}$
given $\seq{\Gamma,z, y, x:\theta_1\rarr\theta_2}{N}$
\[\begin{array}{rcl}
\compi{\seq{\Gamma,z,y}{\letin{\cdots}{\cdots}}}{(i_\Gamma,\star_z,j_y)} &=& 
j_z \star_{z,x}^\circ  \compi{\seq{\Gamma,z,y,x}{N}}{(i_\Gamma,\star_z,j_y,\star_x)}[q_{z,x}^\bullet/q_x,\,\, m_{z,x}/m_x]\\
&+& j_z \star_{z,x}  \compi{\seq{\Gamma,z,y,x}{N}}{(i_\Gamma,\star_z,j_y,\star_x)}[q_{z,x}/q_x,\,\, m_{z,x}/m_x])
\end{array}\]
\item $\seq{\Gamma,z:(\beta_1\rarr\beta_2)\rarr(\theta_1\rarr\theta_2)}{\letin{x=z(\lambda y^{\beta_1}.M)}{N}}$
given $\seq{\Gamma,z, y:\beta_1}{M:\beta_2}$ and $\seq{\Gamma,z,x:\theta_1\rarr\theta_2}{N}$
\[\begin{array}{rcl}
\compi{\seq{\Gamma,z}{\letin{\cdots}{\cdots}}}{(i_\Gamma,\star_z)} &=&
q_z\, \clg{C}' \star_{z,x}^\circ \compi{\seq{\Gamma,z,x}{N}}{(i_\Gamma,\star_z,\star_x)}[q_{z,x}^\bullet\clg{C}'/q_x,\,\, p_{z,x}\clg{C}'/p_x,\,\, o_{z,x}/o_x]\\
&+& q_z\, \clg{C}' \star_{z,x} \compi{\seq{\Gamma,z,x}{N}}{(i_\Gamma,\star_z,\star_x)}[q_{z,x}\clg{C}'/q_x,\,\, p_{z,x}\clg{C}'/p_x,\,\, o_{z,x}/o_x]
\end{array}
\]
where $\clg{C}'=(\sum_{i\in I_{\sem{\beta_1}}} i_z\, (\sum_{j\in I_{\sem{\beta_2}}}\compi{\seq{\Gamma,z,y}{M}}{(i_\Gamma,\star_z,i_y),j} j_z) )^\ast$ 
\item $\seq{\Gamma,z:\vart\rarr(\theta_1\rarr\theta_2)}{\letin{x=z\,\badvar{\lambda u^{\comt}.M_1}{\lambda v^\expt.M_2}}{N}}$
given $\seq{\Gamma,z, u}{M_1}$, as well as $\seq{\Gamma,z, v}{M_2}$ and $\seq{\Gamma,z,x:\theta_1\rarr\theta_2}{N}$
\[\begin{array}{rcl}
\compi{\seq{\Gamma,z}{\letin{\cdots}{\cdots}}}{(i_\Gamma,\star_z)} &=& 
q_z\, \clg{C}' \star_{z,x}^\circ  \compi{\seq{\Gamma,z,x}{N}}{(i_\Gamma,\star_z,\star_x)}[q_{z,x}^\bullet\clg{C}'/q_x,\,\, p_{z,x}\clg{C}'/p_x,\,\, o_{z,x}/o_x] \\
&+& 
q_z\, \clg{C}' \star_{z,x}  \compi{\seq{\Gamma,z,x}{N}}{(i_\Gamma,\star_z,\star_x)}[q_{z,x}\clg{C}'/q_x,\,\, p_{z,x}\clg{C}'/p_x,\,\, o_{z,x}/o_x],
\end{array}\]
where $\clg{C}'=(\mread_z\, (\sum_{j=0}^N\compi{\seq{\Gamma,z,u}{M_1}}{(i_\Gamma,\star_z,\star_u),j} \,\,j_z) 
+ \sum_{j=0}^N\mwrite{j}_z\,\, \compi{\seq{\Gamma,z,v}{M_2}}{(i_\Gamma,\star_z,j_v),\star} \,\, \mok_z
)^\ast$ 
\end{itemize}
To understand the second formula (the third case is analogous) observe that, after $\star_{z,x}$ has been played,
plays for $\letin{\cdots}{\cdots}$ are plays from $N$ interleaved with possible detours to $\lambda x^{\beta_1}.M$:
such a detour can be triggered by  $i_z$ from $\beta_1$ each time the second move ($q_z$) is O-visible. 
Moreover, provided $q_z$ is O-visible,
such a detour can also take place between $q_z$ and $\star_{z,x}$.
The following auxiliary lemma will help us analyze when detours can occur.
\begin{lemma}
Let $A$ be a \nt{prearena}, $ s\in P_A$ be non-empty and  $P_i$ --- the set of P-moves enabled by 
the initial move of $s$. Let $s'$ be a prefix of $s$ containing at least two moves.
Then the O-view of $s'$ contains exactly one move from $P_i$.
\end{lemma}
\begin{proof}
Any non-initial move must be either in $P_i$ or hereditarily enabled by a move from $P_i$.
By visibility the O-view of $s'$ any prefix of $s$ must thus contain
a move from $P_i$. Because moves from $P_i$ are enabled by the initial move, they are
always the second moves in O-views. Hence, no two moves from $P_i$ can occur in the same O-view.
\end{proof}
Let us apply the lemma to the denotation of $M$. Because $\beta_2$ is a base type
it follows that at any time non-trivial O-views will contain a move from $\Gamma$ enabled
by the initial move or the answer from $\beta_2$ (which completes the play).
Returning to the play for $\letin{\cdots}{\cdots}$, this means that during a detour the second move
$q_z$ will be hidden from O-view until the detour is completed, i.e.
a single detour has to be completed
before the next one begins. Hence, $\clg{C'}$ has the form $(\cdots)^\ast$.

Also by the lemma above, once the play after $\star_{z,x}$ progresses, the second move $q_z$ will be O-visible if, and only if,
$q_{z,x}$ (which $\star_{z,x}$ enables) is. Thus detours will be possible exactly after $P$ 
plays a $P$-move hereditarily justified by $\star_{z,x}$, which corresponds to $P$ playing
a $P$-move hereditarily justified by $q_x$ in $N$ (hence the substitutions $p_{z,x} \clg{C'}/ p_x$ restricted
to $P$-moves). A special case is then that of $q_{z,x}$ which, as a question enabled by an answer, should be represented both with and without a pointer.

The above three cases cover all scenarios (that can arise in $\iatwo$) in which $z$'s type 
is of the form $\theta\rarr(\theta_1\rarr\theta_2)$.
The cases where $z:\theta\rarr\vart$ are analogous except that one needs to use $q_x$ to range
over $\mread_x, \mwrite{0}_x, \cdots, \mwrite{N}_x$ rather than the single move enabled by $\star_x$.
It remains to consider cases of $z:\theta\rarr\beta$. The bound values are of base type, so  no new pointer indices need to be
introduced.
\begin{itemize}
\item $\seq{\Gamma,z:\beta'\rarr\beta, y:\beta'}{\letin{x=z y}{N}}$
given $\seq{\Gamma,z, y, x:\beta}{N}$
\[
\compi{\seq{\Gamma,z,y}{\letin{}{}}}{(i_\Gamma,\star_z, j_y)} = j_z\,  (\sum_{k\in I_{\sem{\beta}}}k_z\, 
\compi{\seq{\Gamma,z,x,y}{N}}{(i_\Gamma,\star_z,j_y,k_x)})
\]
\item $\seq{\Gamma,z:(\beta_1\rarr\beta_2)\rarr\beta}{\letin{x=z(\lambda y^{\beta_1}.M)}{N}}$
given $\seq{\Gamma,z, y:\beta_1}{M:\beta_2}$ and $\seq{\Gamma,z,x:\beta}{N}$
\[
\compi{\seq{\Gamma,z}{\letin{}{}}}{(i_\Gamma,\star_z)} = q_z\, \clg{C}'  (\sum_{k\in I_{\sem{\beta}}} k_z\,\compi{\seq{\Gamma,z,x}{N}}{(i_\Gamma,\star_z,k_x)})
\]
where $\clg{C}'=(\sum_{i\in I_{\sem{\beta_1}}} i_z\, (\sum_{j\in I_{\sem{\beta_2}}}\compi{\seq{\Gamma,z,y}{M}}{(i_\Gamma,\star_z,i_y),j} j_z) )^\ast$ 
\item $\seq{\Gamma,z:\vart\rarr\beta}{\letin{x=z\,\badvar{\lambda u^{\comt}.M_1}{\lambda v^\expt.M_2}}{N}}$
given $\seq{\Gamma,z, u}{M_1}$, $\seq{\Gamma,z, v}{M_2}$ and $\seq{\Gamma,z,x}{N}$
\[
\compi{\seq{\Gamma,z}{\letin{}{}}}{(i_\Gamma,\star_z)} = q_z\, \clg{C}'  (\sum_{k\in I_{\sem{\beta}}} k_z\, \compi{\seq{\Gamma,z,x}{N}}{(i_\Gamma,\star_z,k_x)})
\]
where $\clg{C}'=(\mread_z\, (\sum_{j=0}^N\compi{\seq{\Gamma,z,u}{M_1}}{(i_\Gamma,\star_z,\star_u),j} \,\,j_z) 
+ \sum_{j=0}^N\mwrite{j}_z\,\, \compi{\seq{\Gamma,z,v}{M_2}}{(i_\Gamma,\star_z,j_v),\star} \,\, \mok_z
)^\ast$.\qedhere
\end{itemize}
\cutout{
For brevity, we shall write $\comp{M}$ instead of $\comp{\seq{\Gamma}{M}}$ whenever 
it is clear what $\Gamma$ should be.
$\comp{M}$ can be decomposed as $\sum_{i\in I_A}(i\,\, \comp{M}^i)$. Obviously $\comp{M}$
is regular if, and only if, so is any of $\comp{M}^i$ ($i\in I_A$). Hence, it suffices to show
that $\comp{M}^i$ is regular for any relevant $i$. The proof proceeds by induction on the structure of canonical forms.
The most difficult cases are those involving $\mathsf{let}$. Note that
whenever a canonical form of an $\iatwo$-term is of the shape $\letin{x=z(\lambda x^\theta.\can)}{\can}$,
$z$'s type must be of the form $(\beta_1\rarr\beta_2)\rarr(\theta_1\rarr\theta_2)$ (and $\theta$ is a base type).
We handle this case below. Consider the terms:
\[\begin{array}{l}
\seq{\Gamma,z:(\beta_1\rarr\beta_2)\rarr(\theta_1\rarr\theta_2), y:\beta_1}{M:\beta_2},\\
\seq{\Gamma,z:(\beta_1\rarr\beta_2)\rarr(\theta_1\rarr\theta_2),x:\theta_1\rarr\theta_2}{N:\theta'}.
\end{array}\]
Assuming that $M$ and $N$ satisfy the Proposition, we show that so does 
$N'\equiv\letin{x=z(\lambda y^{\beta_1}.M)}{N}$.
We shall refer to moves contributed by $x:\theta$ with $m_x$.
If we want to range solely over O- or P-moves from the component, we use $o_x$ and $p_x$ respectively.
Moreover, we use $m_{z,x}, o_{z,x}, p_{z,x}$ to refer to copies of $m_x, o_x, p_x$ in the $z:\theta'\rarr\theta$ component.
The most common operation performed using this notation will be the relabelling of $m_x$ to $m_{z,x}$.
If $\theta$ is a function type, then there is a unique P-question $q_x$ enabled by the initial move $\star_x$.
Whenever we have a separate substitution rule for $q_x$, the rule for $m_x$ or $p_x$ will not apply to $q_x$.
In most cases we will want to substitute $q_{z,x}^0$ ($q_{z,x}$ decorated with index $0$ represent a topmost binding) for $q_x$.
In addition, $i+1/i$ is used to increment all numerical indices by $1$.
Then we have
\[
\compi{N'}{(i_\Gamma,\star_z)} = q_z\, \clg{C}' \star_{z,x}  \compi{N}{(i_\Gamma,\star_z,\star_x)}[i+1/i,\,\, q_{z,x}^0\clg{C}'/q_x,\,\, 
p_{z,x}\clg{C}'/p_x,\,\, o_{z,x}/o_x]
\]
where $\clg{C}'=(\sum_{i\in I_{\sem{\beta_1}}} i_z\, \compi{M}{(i_\Gamma,\star_z,i_y)}[ j_z/j] )^\ast$ and $j$ ranges over $I_{\sem{\beta_2}}$.\qed}
\end{proof}
\begin{theorem}
Program equivalence of $\iatwo$-terms is decidable.
\end{theorem}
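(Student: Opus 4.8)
The plan is to reduce program equivalence of $\iatwo$-terms to equality of regular languages, which is decidable. Given two $\iatwo$-terms $\seq{\Gamma}{M_1:\theta}$ and $\seq{\Gamma}{M_2:\theta}$, I would first apply Lemma~\ref{lem:canon} to replace each $M_i$ by an effectively computable canonical form with the same denotation, so that Proposition~\ref{prop:regular} applies. That proposition then provides, for each initial move $i\in I_A$, an effective construction of finite automata for $\compi{\seq{\Gamma}{M_1}}{i}$ and $\compi{\seq{\Gamma}{M_2}}{i}$ over a common finite alphabet (a finite subset of $\clg{A}_{M_1}$, which equals $\clg{A}_{M_2}$ since the two judgments coincide); summing over the finitely many $i$ yields automata for $\comp{\seq{\Gamma}{M_1}}$ and $\comp{\seq{\Gamma}{M_2}}$. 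All the language operations appearing in the inductive clauses of Proposition~\ref{prop:regular} --- union, concatenation, Kleene star, shuffle with and intersection against a fixed regular set, and homomorphic images and erasures --- are effective and preserve regularity, so the whole translation is algorithmic.

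Second, I would establish that $\seq{\Gamma}{M_1\eqmod{\iatwo}M_2}$ holds if and only if $\comp{\seq{\Gamma}{M_1}}=\comp{\seq{\Gamma}{M_2}}$ as languages. One direction is finitary full abstraction: by soundness (Proposition~\ref{prop:model}) together with the finitary definability argument (the $\ialoop$-analogue of Proposition~\ref{prop:defin}), and using Corollary~\ref{cor:conservativity} to pass freely between $\iacbv$- and $\rml$-style contexts, two such terms are contextually equivalent exactly when the block-innocent strategies denoting them contain the same complete plays, just as in the $\rml$ full abstraction theorem recalled earlier. By the lemma on spinal plays above, block-innocent strategies that agree on spinal plays are equal; hence agreement of the complete-play sets is equivalent to agreement of the \emph{spinal} complete-play sets, which are precisely $\comp{\seq{\Gamma}{M_1}}$ and $\comp{\seq{\Gamma}{M_2}}$. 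Finally, the multi-string encoding of Proposition~\ref{prop:regular} is faithful: a set of spinal complete plays is determined by, and determines, the set of all of its pointer-annotated renderings (a play with $n$ relevant pointers contributing its $2^n$ variants), so equality of the strategies coincides with equality of the regular languages. Chaining these equivalences and invoking decidability of regular-language equality (e.g.\ minimise the two DFAs and compare, or test emptiness of the symmetric difference) completes the argument.

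The step I expect to be the main obstacle is the first half of the second paragraph: making precise the full-abstraction statement for the \emph{finitary} fragment and checking that $\iatwo$- (equivalently $\ialoop$-) contexts suffice to separate any two terms with distinct complete-play sets. This rests on a finitary definability result and on the reduction to spinal plays, and one must verify that restricting attention to spinal complete plays --- legitimate precisely because the denotations are block-innocent --- discards no observational content. Everything else, in particular the regular-language bookkeeping, is routine once Proposition~\ref{prop:regular} is available.
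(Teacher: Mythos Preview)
Your proposal is correct and follows essentially the same approach as the paper: reduce to equality of the regular languages of spinal complete plays via Proposition~\ref{prop:regular}, using the spinal-play lemma to justify the restriction and full abstraction to link complete plays with contextual equivalence. The paper in fact leaves the theorem unproved after Proposition~\ref{prop:regular}, treating the finitary full-abstraction step you flag as standard; your identification of that step as the one requiring care is accurate, and it is discharged by the finitary-definability argument (Proposition~\ref{prop:defin}) adapted to $\ialoop$ together with conservativity (Corollary~\ref{cor:conservativity}).
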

We remark that adding dynamic memory allocation in the form of $\newc$ to $\iatwo$,
or its second-order sublanguage, results in undecidability~\cite{Mur04b}. Hence,
at second order, block structure is ``strictly weaker" than scope extrusion.

\section{Summary}


In this paper we have introduced the notion of block-innocence that
has been linked with call-by-value Idealized Algol in a sequence of 
results. Thanks to the faithfulness
of block-innocence, we could investigate the interplay between
type theory, functional computation and stateful computation 
with block structure and dynamic allocation respectively. 
We have also shown a new decidability result for a carefully
designed fragment of $\iacbv$. Its extension to product types
poses no particular difficulty. In fact, it suffices to follow the way we have tackled
the $\vart$ type, which is itself a product type.
The result thus extends those from~\cite{Ghi01} and is a step forward
towards a full classification of decidable fragments of $\iacbv$:
the language $\iatwo$ we considered features all second-order types
and some third-order types, while finitary $\iacbv$ is known to be undecidable
at order $5$~\cite{Mur03b}. Interestingly, $\iatwo$ features restrictions that are 
compatible with the use of higher-order types in PASCAL~\cite{Mit02},
in which procedure parameters cannot be procedures with procedure parameters.
An interesting topic for future work would be to characterise the uniformity inherent in block-innocence in more
abstract, possibly category-theoretic, terms.

\section*{Acknowledgement}

We would like to thank the anonymous referees for their constructive comments and feedback.

\bibliographystyle{plain}
\bibliography{my}

\appendix
\section*{Appendix}
\section{S-plays}\label{apx:plays}
In this section we use the term ``move" and ``S-move" interchangeably.

\begin{lemma}[Prev-PA]\label{l:PrevPA}
If $s=\cdots m^\Sigma a_P^\Tau\cdots$ is an S-play then, for any $\na$,
  \begin{enumerate}[label=\({\alph*}]
    \item if $\na\in\nu(\Tau)$ then $\na\in\nu(\Sigma)$,
    \item if $\na\in\nu(\Sigma\remv\Tau)$ then $\na$ is closed in $s_{<a_P^T}$.
  \end{enumerate}
Moreover, $\Tau\Substore\Sigma$ and therefore $\Sigma\remv(\Sigma\remv\Tau)\Substore\Tau$ and $\Sigma\remv\Tau\substorE\Sigma$.
\end{lemma}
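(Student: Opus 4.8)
The statement to prove is Lemma~\ref{l:PrevPA} (Prev-PA): if $s = \cdots m^\Sigma a_P^\Tau \cdots$ is an S-play, then (a) $\na \in \nu(\Tau) \implies \na \in \nu(\Sigma)$, (b) $\na \in \nu(\Sigma \remv \Tau) \implies \na$ is closed in $s_{<a_P^\Tau}$, and moreover $\Tau \substore \Sigma$ (hence $\Sigma \remv (\Sigma \remv \Tau) \Substore \Tau$ and $\Sigma \remv \Tau \substorE \Sigma$).

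\textbf{Proof proposal.} The plan is to reduce Prev-PA to Prev-PQ by tracking the pending question. Let $a_P^\Tau$ answer a question $q$ occurring earlier in $s$; by Well-Bracketing, $q$ is the pending question of the prefix ending just before $a_P$. I would first argue that $q$ must be a P-question: since $a_P$ is a P-answer, $q$ is an O-question (alternation at a question/answer pair forces the answerer and questioner to be on opposite sides... actually $a_P$ answers $q$ means by the arena conditions $\lambda^{OP}$ differ, so $q$ is an O-question). Hmm — let me reconsider: I would instead locate the move $m^\Sigma$ immediately preceding $a_P^\Tau$. The key structural fact is that between $q$ and $a_P$, the play forms a ``sub-block'' in the sense already described in the paper for Just-P: the segment from the justifier of $a_P$ (which is $q$, an O-question by bracketing and alternation) up to $a_P$ behaves like a completed block.

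The main idea: I would do induction on the length of the segment from $q$ (the question answered by $a_P$) to $a_P$, or more precisely prove a stronger invariant. Consider the P-view $\pv{s_{\leq a_P^\Tau}}$. By the \emph{Block form} lemma (already available in the excerpt), for each name $\na$, $\pv{s_{\leq a_P}} = s_1 s_2 s_3$ with $\na$ absent from $s_1, s_3$ and present throughout $s_2$, and $s_2$ (if nonempty) starts with the move introducing $\na$. Now $a_P^\Tau$ is the last move of this P-view. If $\na \in \nu(\Tau)$ then $a_P \in s_2$, so the second-to-last move of the P-view — which is $m^\Sigma$ when $m$ precedes $a_P$ with $m \vdash a_P$, or more generally the move $n^{\Sigma'}$ such that $\pv{\cdots} = \cdots n^{\Sigma'} a_P^\Tau$ — also lies in $s_2$ unless $a_P$ is the first element of $s_2$; but $a_P$ being an answer cannot be the move introducing a name (names are introduced only by P-questions, by Prev-PQ), so indeed the predecessor in the P-view contains $\na$. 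This gives a P-view version of (a). To transfer from the P-view to the actual consecutive move $m^\Sigma$, I would use that $m^\Sigma$ is consecutive to $a_P^\Tau$ and invoke Just-O / Just-P relating stores along the P-view: since the P-view of $s_{\leq a_P}$ ends $\ldots n^{\Sigma'} a_P^\Tau$ where $n$ justifies $a_P$, and $m$ is the move literally before $a_P$, I relate $\st{m} = \Sigma$ to $\Sigma'$ via the observation that the stretch from $n$ to $a_P$ (excluding) is itself well-bracketed and its constituents carry over $\na$ by Just-O/Just-P.

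\textbf{The cleanest route}, which I would actually pursue: prove Prev-PA by a direct induction on $|s|$, simultaneously with a lemma stating that whenever $q_P^{\Sigma_q} \cdots a_P^\Tau$ with $a_P$ answering $q_P$, we have $\Tau = \Sigma_q$. Wait — that is false in general since P may open/close variables inside. So instead the invariant is: the set of names in $\Tau$ that were \emph{not} in the store of $q$'s justifier equals those opened by P-questions inside the segment and not yet closed, and all of $\nu(\Tau)$ survives back to $\Sigma$ (the predecessor's store) because the only move types that can \emph{drop} a name are P-questions (Prev-PQ), and between $m^\Sigma$ and $a_P^\Tau$ there is no move. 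Concretely: $m^\Sigma a_P^\Tau$ consecutive, $a_P$ a P-answer. By Prev-PQ applied... no, Prev-PQ is about $o^\Sigma q_P^\Tau$. So I genuinely need the analogue for answers. I would obtain it from Just-P: $a_P$ is justified by some O-question $q_O$ appearing earlier (bracketing: $q_O$ pending); but actually the justifier of a P-answer in these arenas: $a_P$ answers $q_O$ an O-question. Just-P with $o = q_O$, $p = a_P$ and $\lambda(a_P) = PA$ gives both $\st{q_O} \Substore \Tau$ and $\Tau \Substore \st{q_O}$, i.e.\ $\dom(\st{q_O}) = \dom(\Tau)$ as subsequences both ways. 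Then I combine this with the block-form statement for the segment from $q_O$ to $a_P$ to show $\nu(\Tau) \subseteq \nu(\Sigma)$: every name in $\Tau = $ store of $q_O$'s ``matching close'' is a name that was present at $q_O$, hence present continuously (block form) up through $m^\Sigma$.

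\textbf{Main obstacle.} The hard part is the bookkeeping connecting (i) the \emph{P-view} store relations (Just-O, Just-P, block form), which are the clean hypotheses, with (ii) the \emph{actual consecutive} move $m^\Sigma$ in $s$, which need not be in the P-view of $s_{\leq a_P}$ at all if $m$ is an O-move whose justifier is deep in the history. I expect to handle this by a secondary induction on the length of the segment separating $a_P$ from the justifier $q_O$ it answers, peeling off completed sub-blocks (each delimited by a P-question and its answer, to which Prev-PQ and the inductive hypothesis of Prev-PA apply) and using the \emph{Close} lemma to guarantee that names dropped inside such a sub-block do not reappear. Once that telescoping is set up, $\Tau \substore \Sigma$ follows since no name of $\Tau$ was dropped between $m^\Sigma$ and $a_P^\Tau$ (no move in between), and the remaining inclusions $\Sigma \remv (\Sigma \remv \Tau) \Substore \Tau$ and $\Sigma \remv \Tau \substorE \Sigma$ are immediate from $\dom(\Tau) \subseteq \dom(\Sigma)$ as a subsequence together with (b), which pins down that the dropped part $\Sigma \remv \Tau$ sits at the right end of $\dom(\Sigma)$ by the nesting discipline.
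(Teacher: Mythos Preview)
Your proposal has a circularity problem: you repeatedly invoke Block Form and Close, but in the paper's development those are proved \emph{after} Prev-PA and \emph{using} Prev-PA (the appendix proof of Block Form appeals to ``Prev'', i.e.\ Prev-PQ together with Prev-PA, and Close is derived from Block Form). So these lemmas are not available at this point. Your suggestion to use an ``inductive hypothesis of Prev-PA'' on the sub-blocks also misfires: the top-level answers inside the segment are O-answers (they answer P-questions), so Prev-PA does not apply to them.

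You do land on the correct structural decomposition in your final paragraph: between the O-question $q_0^{\Sigma_0}$ answered by $a_P^\Tau$ and $a_P^\Tau$ itself, the play is a concatenation of completed sub-blocks $q_1^{\Tau_1}\cdots a_1^{\Sigma_1}\cdots q_j^{\Tau_j}\cdots a_j^{\Sigma_j}$ with each $q_i$ a P-question and each $a_i$ an O-answer, so that $m^\Sigma=a_j^{\Sigma_j}$. What you are missing is the actual propagation mechanism, which is just Prev-PQ\,+\,Just-O\,+\,Just-P and nothing else. For (a): $\na\in\nu(\Tau)$ gives $\na\in\nu(\Sigma_0)$ by Just-P; since $q_0$ remains open throughout, Prev-PQ(b) forbids dropping $\na$ at each $q_{i+1}$, and Just-O carries $\na$ from $\Tau_i$ to $\Sigma_i$; iterating reaches $\Sigma_j=\Sigma$. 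Part (b) is the contrapositive run backwards with Prev-PQ(a). For $\Tau\Substore\Sigma$ you need a prefix relation, not just an inclusion: the paper shows $\Sigma_0\Substore\Sigma_i$ by induction on $i$, using that any name dropped at $q_{i+1}$ is closed (Prev-PQ(b)) hence not in $\Sigma_0$; your ``no move in between'' remark does not give this.
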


\begin{proof}
Let $s=s_1\,\rnode{A}{q_0^{\Sigma_0}}s_2\,\rnode{B}{a ^\Tau}\justf{B}{A}\cdots$. As $q_0^{\Sigma_0}$ is the pending-Q in $s_1q_0^{\Sigma_0}s_2$, we have that $s$ is in fact of the form:

\[ s_1\,\rnode{A}{q_0^{\Sigma_0}}\rnode{BA}{q_{1}^{\Tau_1}}\cdots\rnode{BB}{a_{1}^{\Sigma_1}}
\rnode{CA}{q_{2}^{\Tau_2}}\cdots\rnode{CB}{a_{2}^{\Sigma_2}}
\cdots\rnode{DA}{q_{j}^{\Tau_j}}\cdots\rnode{DB}{a_{j}^{\Sigma_j}}\rnode{B}{a^\Tau}\justf{BB}{BA}\justf{CB}{CA}\justf{DB}{DA}
\justf[,angleA=155,angleB=25]{B}{A} \]
For (a), by Just-P we have that $\na\in\nu(\Sigma_0)$ and therefore $\na$ is not closed in $s_1m_0^{\Sigma_0}s_2$. Hence, by Prev-PQ(b), $\na\in\nu(\Tau_1)$ and thus, by Just-O, $\na\in\nu(\Sigma_1)$. Repeating this argument $j$ times we obtain $\na\in\nu(\Sigma_j)$, i.e.~$\na\in\nu(\Sigma)$.\\
For (b), let $\na\in\nu(\Sigma)=\nu(\Sigma_j)$ be open in $s_{<a^\Tau}$. We claim that then $\na\in\nu(\Sigma_0)$ and therefore $\na\in\nu(\Tau)$ by Just-P. We have that $\na\in\nu(\Tau_j)$, by Just-O. Moreover, since there are no open questions in ${q_{j}^{\Tau_j}}\cdots{a_{j}^{\Sigma_j}}$, we have that $\na$ is open in $s_{<{q_{j}^{{\Tau}}}^{{}_{\!_j}}}$, so $\na\in\nu(s_{<{q_{j}^{{\Tau}}}^{{}_{\!_j}}})$ and thus, by Prev-PQ(a), $\na\in\dom(\Sigma_{j-1})$. Applying this argument $j$ times we obtain $\na\in\nu(\Sigma_0)$.\\
Finally, we show by induction that $\Sigma_0\Substore\Sigma_i$, for all $0\leq i\leq j$. For the inductive step, we have that $\Sigma_i\remv(\Sigma_i\remv\Tau_{i+1})\Substore\Tau_{i+1}\Substore\Sigma_{i+1}$. By IH, $\Sigma_0\Substore\Sigma_i$. Moreover, if $\na\in\nu(\Sigma_i\remv\Tau_{i+1})$ then $\na\notin\nu(\Sigma_0)$, by Prev-PQ(b), so $\Sigma_0\Substore\Sigma_i\remv(\Sigma_i\remv\Tau_{i+1})$, $\therefore\Sigma_0\Substore\Sigma_{i+1}$. Thus, $\Tau\Substore\Sigma_0\Substore\Sigma_j=\Sigma$. 
\end{proof}

\begin{lemma}[Block Form]\label{lem:block}
If $s$ is an S-play then $\pv{s}$ is in \emph{block-form}: for any $\na$, we have $\pv{s}=s_1s_2s_3$, where
\begin{itemize}
  \item $\na\notin\nu(s_1)\cup\nu(s_3)$ and $\forall m^\Sigma\in s_2.\ \na\in\nu(\Sigma)$,
  \item if $s_2\neq\epsilon$ then its first element is the move introducing $\na$ in $s$.
\end{itemize}
\end{lemma}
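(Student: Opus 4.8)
The plan is to prove the statement by induction on $|s|$, using the recursive definition of the P-view to reduce each step to appending a single move. Concretely, if $s=s'p$ with $p$ a P-move then $\pv{s}=\pv{s'}\,p$, and if $s=s_0\,p'\,s_{\mathrm{mid}}\,o$ with $o$ an O-move justified by $p'$ then $\pv{s}=\pv{s_0p'}\,o$; in both cases the prefix handed to the induction hypothesis is again an S-play (every clause of Definition~\ref{def:splay} is inherited by prefixes) and is strictly shorter than $s$. The base cases $|s|\le 1$ are immediate: $\pv{s}$ is $\epsilon$ or a single move whose store is empty by ({\em Init}), so $\na$ occurs nowhere in $\pv{s}$ and we take $s_2=\epsilon$. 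So everything reduces to one inductive step.

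For the step, fix $\na$, write $r$ for the relevant strictly shorter prefix and $m^\Sigma$ for the appended move, so that $\pv{s}=\pv{r}\,m^\Sigma$, and by the IH write $\pv{r}=t_1t_2t_3$ in block form for $\na$. The key observation is that the last move of $\pv{r}$, i.e.\ the left neighbour of $m^\Sigma$ in $\pv{s}$, always stands in a controlled relation to $m^\Sigma$: if $m$ is a P-move then (using that S-plays alternate) that neighbour is the O-move $o^{\Sigma_o}$ immediately preceding $m$ in $s$, so the pair $(o^{\Sigma_o},m^\Sigma)$ is governed by ({\em Prev-PQ}) of Definition~\ref{def:splay} when $m$ is a question and by Lemma~\ref{l:PrevPA} ({\em Prev-PA}) when $m$ is an answer; if $m=o^\Sigma$ is an O-move then that neighbour is precisely its justifier, with some store $\Sigma_o$, and ({\em Just-O}) forces $\dom(\Sigma_o)=\dom(\Sigma)$, hence $\nu(\Sigma_o)=\nu(\Sigma)$. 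Since block form makes the last move of $\pv{r}$, which carries $\Sigma_o$, contain $\na$ exactly when $t_3=\epsilon$ and $t_2\neq\epsilon$, a four-way case analysis on whether $\na\in\nu(\Sigma_o)$ and $\na\in\nu(\Sigma)$ completes the step: if $\na$ is in neither, append $m^\Sigma$ to $t_3$ (to $t_1$ when $t_2=\epsilon$); if in both, append it to $t_2$; if in $\nu(\Sigma_o)$ only, open a fresh final segment $t_3:=m^\Sigma$; and if in $\nu(\Sigma)$ but not $\nu(\Sigma_o)$, note that for a P-answer $\nu(\Sigma)\subseteq\nu(\Sigma_o)$ by Lemma~\ref{l:PrevPA}(a) and for an O-move $\nu(\Sigma)=\nu(\Sigma_o)$ by ({\em Just-O}), so $m$ must be a P-question, whence ({\em Prev-PQ})(a) makes $\na$ fresh for the whole prefix up to $o^{\Sigma_o}$, so $\pv{r}$ sits entirely inside $t_1$ and we set $t_2:=m^\Sigma$.

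The point I expect to need the most care is the bookkeeping behind the last clause of the lemma, together with the implicit assertion that $s_2$ really is a single interval — i.e.\ that once $\na$ has dropped out of the P-view it never re-enters. Both come out of the same analysis: whenever $\na\in\nu(\Sigma)$ for the appended move, either $\na$ also lies in the left neighbour's store — in which case $t_3$ was empty by the IH, so $\na$ had not yet left — or $\na$ is freshly introduced and the prefix had no occurrence of $\na$ at all; hence the configuration ``$t_3\neq\epsilon$, and later a move containing $\na$'' cannot arise, and no appeal to ({\em Close}) is required. For the introducing-move clause one first records, from ({\em Init}), ({\em Just-O}), Lemma~\ref{l:PrevPA}(a) and ({\em Prev-PQ}), that the chronologically first move of $s$ whose store contains $\na$ must be a P-question — this is by definition the move introducing $\na$; in the fresh-introduction case the new $t_2$ begins with exactly that move, while in all other cases appending $m^\Sigma$ does not change which move of $s$ first contains $\na$, so the first element of $t_2$ supplied by the IH is still the introducing move.
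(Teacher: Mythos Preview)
Your proposal is correct and follows essentially the same approach as the paper: induction on $|s|$, with the P-view of the shorter prefix obtained either as $\pv{s'}$ (when appending a P-move) or as $\pv{s_0p'}$ (when appending an O-move justified by $p'$), and the block-form extended using ({\em Just-O}) in the O-move case and ({\em Prev-PQ})/({\em Prev-PA}) in the P-move case. Your write-up is considerably more explicit than the paper's---you spell out the four-way case split and the bookkeeping for the introducing-move clause, whereas the paper dispatches the P-move case in two lines and leaves the second bullet largely implicit---but the underlying argument is the same.
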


\begin{proof}
We do induction on $|s|$, with the cases of $|s|\leq 1 $ being trivial. If $s=s'o^\Sigma$ then, by IH, $\pv{s}=s''p^\Tau o^\Sigma$, some $s''p^\Tau$ in block-form, and $\nu(\Tau)=\nu(\Sigma)$, which imply that $\pv{s}$ is in block-form.
If $s=s'p^\Sigma$ then $\pv{s}=\pv{s'}p^\Sigma$ with $\pv{s'}=s_1s_2s_3$ in block-form by IH. If $\na\notin\dom(\Sigma)$ then OK. Otherwise, if $\na$ does not appear in the last move in $s_1s_2s_3$ then, by Prev,
$p^\Sigma$ in fact introduces $\na$ in $s$ and therefore $\pv{s}$ has block-form.
\end{proof}

\begin{lemma}\label{l:oviews}
Let $s=s_1o^\Sigma p^\Tau s_2$ be an S-play with $\na\in\nu(\Sigma)\setminus\nu(\Tau)$. Then, for any $s_2'\prefix s_2$, $\na\notin\nu(\ov{s_1o^\Sigma p^\Tau s_2'})$.
\end{lemma}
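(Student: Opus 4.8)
The argument proceeds by induction on $|s_2'|$, and it is convenient to strengthen the statement. Since the first move of $s$ carries the empty store (\emph{Init}) and $\na\in\nu(\Sigma)$, we have $s_1\neq\epsilon$; moreover, as names are introduced only by Proponent and always freshly (\emph{Just-O}, \emph{Prev-PQ}), there is a unique move $n^\Delta$ of $s$ --- the move \emph{introducing} $\na$, i.e.\ the first whose store contains $\na$ --- and it is a P-move occurring inside $s_1$, so we may write $s_1=r_1\,n^\Delta\,r_1'$ with $\na\notin\nu(r_1)$. I would prove, for every $s_2'\prefix s_2$: (i) no move occurring in $s_2'$ carries $\na$ in its store; and (ii) $\ov{s_1o^\Sigma p^\Tau s_2'}$ is a subsequence of $r_1\cdot p^\Tau\cdot s_2'$. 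Statement (ii) yields the lemma at once, since $\na\notin\nu(r_1)$, $\na\notin\nu(\Tau)$ by hypothesis, and $\na\notin\nu(s_2')$ by (i).

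\emph{Base case} $s_2'=\epsilon$. Computing the O-view of the last (P-)move, $\ov{s_1o^\Sigma p^\Tau}=\ov r\cdot o'\cdot p^\Tau$, where $o'$ is the justifier of $p^\Tau$ and $s_1=r\cdot o'\cdot r''$. Were $o'=o^\Sigma$, \emph{Just-P} would give $\Sigma\Substore\Tau$, contradicting $\na\in\nu(\Sigma)\setminus\nu(\Tau)$; so $o'\in s_1$, and \emph{Just-P} also gives $\st{o'}\Substore\Tau$, whence $\na\notin\nu(o')$. By visibility $o'\in\pv{s_1o^\Sigma}$, and by Block Form (Lemma~\ref{lem:block}) the P-view $\pv{s_1o^\Sigma}$ splits as $u_1u_2$ with $n^\Delta$ heading $u_2$ and every move of $u_2$ carrying $\na$ (the third block is empty, as the final move $o^\Sigma$ carries $\na$). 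As $\na\notin\nu(o')$, we get $o'\in u_1$, so $o'$ precedes $n^\Delta$ in $s_1$; hence $r\,o'$ is a prefix of $r_1$ and $\ov r\,o'$ a subsequence of $r_1$, establishing (ii); (i) is vacuous.

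\emph{Inductive step}, $s_2'=s_2''\,m^{\Delta'}$. For (i): if $m^{\Delta'}$ is an O-move, its justifier is a P-move of $\ov{s_1o^\Sigma p^\Tau s_2''}$ (visibility), which carries no $\na$ by the induction hypothesis, so $\na\notin\nu(\Delta')$ by \emph{Just-O}; if $m^{\Delta'}$ is a P-move it is (by alternation) preceded by an O-move at the end of a shorter prefix of the required form, hence $\na$-free by the induction hypothesis, and \emph{Prev-PQ}/\emph{Prev-PA} (Lemma~\ref{l:PrevPA}), since added names must be fresh while $\na$ is not, gives $\na\notin\nu(\Delta')$. For (ii): the O-move case is immediate, as $\ov{s_1o^\Sigma p^\Tau s_2'}=\ov{s_1o^\Sigma p^\Tau s_2''}\cdot m^{\Delta'}$. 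If $m^{\Delta'}$ is a P-move with justifier $o''$ then $\ov{s_1o^\Sigma p^\Tau s_2'}=\ov w\cdot o''\cdot m^{\Delta'}$, where $w$ is the prefix preceding $o''$; by \emph{Just-P} and (i), $\st{o''}\Substore\Delta'$ is $\na$-free, so $o''\neq o^\Sigma$ (and $o''\neq p^\Tau$ by polarity). If $o''$ lies in $s_2''$ then $w=s_1o^\Sigma p^\Tau(\text{a shorter prefix of }s_2)$ and we conclude by the induction hypothesis; if $o''$ lies in $s_1$ then, exactly as in the base case, visibility places $o''$ in $\pv{s_1o^\Sigma p^\Tau s_2''}$ and, being $\na$-free, $o''$ must precede $n^\Delta$ in $s_1$, so $w$ is a prefix of $r_1$ and $\ov w$ a subsequence of $r_1$, giving (ii).

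The one delicate point --- the step I would spell out in full --- is the last one: that a justification pointer \emph{from a move of $s_2$ back into $s_1$} cannot land at or after $n^\Delta$. The mechanism is that such a target is the justifier of a P-move carrying no $\na$, hence by \emph{Just-P} carries no $\na$ itself, whereas every move of $s_1o^\Sigma$ at or after $n^\Delta$ carries $\na$; the latter relies on Block Form together with the ordering conditions on stores of Definition~\ref{def:splay}, which prevent $\na$ from being dropped and then re-activated (in particular, from being re-activated by an Opponent jump). Combining these forces the target strictly before $n^\Delta$, and the induction closes.
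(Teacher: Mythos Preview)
Your argument runs along the same lines as the paper's---induction on $|s_2'|$, Block Form for the base, the Just/Prev conditions for the inductive steps---and your explicit invariant~(ii) is a real improvement over the paper's bare ``arguing as before''. The base case and the inductive sub-cases where $m^{\Delta'}$ is an O-move, or where its justifier $o''$ lies in $s_2''$, all go through as you describe.

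The gap is in your final paragraph. You close the sub-case $o''\in s_1$ by asserting that \emph{every} move of $s_1o^\Sigma$ at or after $n^\Delta$ carries $\na$, because the store conditions ``prevent $\na$ from being dropped and then re-activated''. They do prevent re-introduction by P (Prev-PQ(a), Prev-PA(a)), but they do \emph{not} by themselves rule out an O-move reacquiring $\na$ through a justifier that predates an earlier drop---ruling that out is exactly the content of the present lemma (and of its corollary \emph{Close}), so invoking it here is circular. The repair is to use your own~(ii) instead: examine where the P-view $\pv{s_1o^\Sigma p^\Tau s_2''}$ crosses from $s_1o^\Sigma$ into $p^\Tau s_2''$. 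Either the last $s_1o^\Sigma$-move of that P-view is $o^\Sigma$ itself, in which case the $s_1o^\Sigma$-portion of the P-view coincides with $\pv{s_1o^\Sigma}$ (whose Block Form has empty tail since $o^\Sigma$ carries $\na$), forcing $o''$ into $u_1$; or some P-move $v\in s_1$ directly justifies an O-move $v'\in s_2''$, in which case O-visibility of $v'$ places $v$ in $\ov{s_1o^\Sigma p^\Tau s_2'''}$ for a strictly shorter prefix $s_2'''$, and IH~(ii) then forces $v\in r_1$, so every P-view move coming from $s_1$ already lies in $r_1$---in particular $o''$ does. With this adjustment the proof is complete, and at this point it is in fact tighter than the paper's.
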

\begin{proof}
We do induction on $|s_2'|$. For the base case, by the previous lemma we have that $\pv{s_1o^\Sigma}$ has block-form; in particular, it ends in a block of moves which contains the move introducing $\na$ in $s$, and all moves in the block contain $\na$ in their stores. Hence, since the justifier of $p^\Tau$, say $o'^{\Sigma'}$, occurs in $\pv{s_1o^\Sigma}$ and $\na\notin\nu(\Sigma')$ (by Just and the fact that $\na\notin\nu(\Tau)$), we have that $o'^{\Sigma'}$ occurs in $s$ before the move introducing $\na$ in it and therefore $\na\notin\nu(\ov{s_1o^\Sigma p^\Tau})$.
Now, if $s_2'=s_2''o'^{\Sigma'}$ then we need to show that $\na\notin\nu(\ov{s_1o^\Sigma p^\Tau s_2''}o'^{\Sigma'})$ given by IH that $\na\notin\nu(\ov{s_1o^\Sigma p^\Tau s_2''})$, which immediately follows from Just and Visibility. Finally, if $s_2'=s_2''p'^{\Tau'}$ then, by IH, $\na\notin\nu(s_2'')$ and therefore, by Prev, $\na\notin\nu(\Tau')$. Let $p'^{\Tau'}$ be justified by some $o'^{\Sigma'}$ in $s$. If $o'^{\Sigma'}$ occurs in $s_2''$ then our argument follows directly from the IH. Otherwise, arguing as before, $o'^{\Sigma'}$ occurs in $s$ before the introduction of $\na$ and therefore $\na\notin\nu(\ov{s_1o^\Sigma p^\Tau s_2''p'^{\Tau'}})$.
\end{proof}

\begin{corollary}[Close]
If $s=s_1o^\Sigma p^\Tau s_2$ is an S-play with $\na\in \nu(\Sigma)\setminus\nu(\Tau)$ then $\na\notin\nu(s_2)$.\qed
\end{corollary}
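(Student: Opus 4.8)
The plan is to derive the statement from Lemma~\ref{l:oviews} by a short case analysis, arguing by contradiction. Suppose some move of $s_2$ carries $\na$ in its store, and let $m^\Gamma$ be the \emph{earliest} such move within $s_2$, so that $\na\in\nu(\Gamma)$ while $\na$ occurs in no store strictly to the left of $m^\Gamma$ inside $s_2$. Since the move of $s$ immediately preceding $s_2$ is the P-move $p$, alternation forces the first move of $s_2$ to be an O-move; hence $m^\Gamma$ is either itself an O-move, or a P-move that is strictly preceded in $s_2$ by an O-move. I would then treat these two cases separately.

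If $m$ is an O-move, write $s_2=s_2'\,m^\Gamma\,s_2''$. Then $s_2'm^\Gamma\prefix s_2$ and, since appending an O-move to a justified sequence extends its O-view by exactly that move ($\oview{t\,o}=\oview{t}\,o$), we get $\oview{s_1o^\Sigma p^\Tau s_2'm^\Gamma}=\oview{s_1o^\Sigma p^\Tau s_2'}\,m^\Gamma$, whose last move has $\na$ in its store; this contradicts Lemma~\ref{l:oviews}. If $m$ is a P-move, let $n^\Delta$ be the move of $s$ immediately preceding $m^\Gamma$; by the remark above $n^\Delta$ lies in $s_2$ and is an O-move, and by minimality of $m^\Gamma$ we have $\na\notin\nu(\Delta)$, so $\na\in\nu(\Gamma\remv\Delta)$. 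If $m$ is a question, \emph{Prev-PQ}(a) applied to the consecutive pair $n^\Delta m^\Gamma$ yields that $\na$ does not occur in any store of the prefix of $s$ ending at $n^\Delta$; but that prefix contains $o^\Sigma$ with $\na\in\nu(\Sigma)$, a contradiction. If $m$ is an answer, part~(a) of Lemma~\ref{l:PrevPA} (Prev-PA) applied to the same pair gives $\na\in\nu(\Delta)$, again contradicting $\na\notin\nu(\Delta)$.

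The only point where something beyond Lemma~\ref{l:oviews} is genuinely needed is the P-move case: O-views record O-moves but not, in general, P-moves, so the absence of $\na$ from all O-views does not by itself forbid $\na$ from resurfacing inside a P-move of $s_2$. What closes this gap is that the store-update conditions \emph{Prev-PQ} and \emph{Prev-PA} pin down exactly how Proponent may reintroduce a name, namely only as a genuinely fresh name; thus a name that is still present at $o^\Sigma$ but has been dropped at the very next move $p^\Tau$ can never be made to reappear afterwards. I expect no delicate estimates here — the whole argument is a short consequence of Lemmas~\ref{l:oviews} and~\ref{l:PrevPA} together with the \emph{Prev-PQ} clause of Definition~\ref{def:splay}.
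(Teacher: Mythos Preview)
Your argument is correct, but the detour through \emph{Prev-PQ} and \emph{Prev-PA} in the P-move case is unnecessary, and rests on a misconception you state explicitly in your final paragraph. You write that ``O-views record O-moves but not, in general, P-moves, so the absence of $\na$ from all O-views does not by itself forbid $\na$ from resurfacing inside a P-move of $s_2$.'' But the \emph{last} move of any non-empty justified sequence is always in its O-view: the clause $\oview{s\,o\cdots p}=\oview{s}\,o\,p$ guarantees this for P-moves just as $\oview{s\,o}=\oview{s}\,o$ does for O-moves. Hence for \emph{any} move $m^\Gamma$ in $s_2$, taking $s_2'$ to be the prefix of $s_2$ ending at $m^\Gamma$, the move $m^\Gamma$ is the final element of $\oview{s_1o^\Sigma p^\Tau s_2'}$, and Lemma~\ref{l:oviews} gives $\na\notin\nu(\Gamma)$ immediately.

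This is exactly why the paper treats the corollary as a one-line consequence of Lemma~\ref{l:oviews} (the \qed\ after the statement). Your O-move case already does this; your P-move case reaches the same conclusion by a valid but roundabout path. The extra machinery (minimality of $m^\Gamma$, the appeal to \emph{Prev-PQ}(a) and \emph{Prev-PA}(a)) is not wrong, just redundant.
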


\begin{proof}[Proof of Lemma~\ref{l:inter}.]
For (a), assuming WLOG $p$ is a P-move in $AB$ and taking $s=s'n^{\Tau'}p^{\Sigma'}$, by definition of the interaction and Prev we have that if $\na\in\nu(\Sigma\remv\Tau)$ then $\na\in\nu(\Sigma'\remv\Tau')$, hence $\na\notin\nu(s'n^{\Tau'})$ and therefore $\na\notin\nu(un^\Tau)$.
\\
For (b), we do induction on $|s\iseq t|$, base case is trivial. Now, if $s\iseq t=u p^\Sigma$ with $p$ a generalised P-move then $\pv{s\iseq t}=\pv{u}p^\Sigma$ and, by IH, $\pv{u}$ has block form, say $u_1u_2u_3$. If $\na\notin\nu(\Sigma)$ then OK. Otherwise, if $\na$ does not appear in the last move of $\pv{u}$ then $p^\Sigma$ is in fact the move introducing $\na$ in $s\iseq t$, so OK. Otherwise, $u_3=\ee$ and therefore $\pv{s\iseq t}$ in block form. If $s\iseq t=u o^\Tau$ with $o$ an O-move in $AC$ then $\pv{s\iseq t}=u'o^\Tau$ for some $u'$ in block-form, and the last move in $u'$ has domain $\dom(\Tau)$. This implies that $s\iseq t$ is in block-form.
\\
For (c), assuming WLOG $p$ is a P-move in $AB$ and taking $s=s'n^{\Tau'}p^{\Sigma'}$, by definition of the interaction we have $\na\in\nu(\Tau'\remv\Sigma')$ so, by Prev, $\na$ is closed in $s'n^{\Tau'}$. Now suppose $\na$ is open in $un^{\Tau}$, that is, there is an open question $q_1^{\Sigma_1}$ in $un^{\Tau}$ with $\na\in\nu(\Sigma_1)$. Then, if $q_0^{\Sigma_0}$ is the pending question of $un^{\Tau}$ then $\na\in\nu(\Sigma_0)$: $\pv{un^{\Tau}}$ has block-form, by (b), and $q_0^{\Sigma_0}$ appears in it, thus if $\na\notin\nu(\Sigma_0)$ then $q_0^{\Sigma_0}$ would precede the move introducing $\na$ in $un^{\Tau}$ and hence it would precede $q_1^{\Sigma_1}$ too. As the interaction ends in an O-move in $AB$, $q_0^{\Sigma_0}$ is the pending question of $un^\Tau\rest AB$. Let $q_0^{\Sigma'_0}$ be the move in $s$ corresponding to $q_0^{\Sigma_0}$. $q_0^{\Sigma'_0}$ is the pending question in $s'n^{\Tau'}$ and therefore it appears in $\pv{s'n^{\Tau'}}$. Moreover, since $\na$ is closed in $s'n^{\Tau'}$, $\na\notin\nu(\Sigma'_0)$, which means that $q_0^{\Sigma'_0}$ occurs in $\pv{s'n^{\Tau'}}$ before the move introducing $\na$ in $s$. But then $\na\notin\nu(\Sigma_0)$, a contradiction.
\\
For (d), we do induction on $|s\iseq t|$, the base case being trivial. If $m^\Sigma$ is an O-move in $AC$ then the claim is obvious. So assume WLOG $m^\Sigma$ is a P-move in $AB$ and consider $\pv{s\iseq t\rest AB}_{\geq n^\Tau}$, and take two consecutive moves $m_1^{\Sigma_1}m_2^{\Sigma_2}$ in it, and assume $m_2^{\Sigma_2}$ be a P-move in $AB$. Then these are consecutive also in $s\iseq t\rest AB$ and hence, by switching, also in $s\iseq t$. Let $\na\in\nu(\Sigma_1\remv\Sigma_2)$. By definition of interaction, this dropping of $\na$ happens in $s$ too, at the respective consecutive moves  $m_1^{\Sigma_1'}m_2^{\Sigma_2'}$, assuming $s=\cdots n^{\Tau'}\cdots m_1^{\Sigma_1'}m_2^{\Sigma_2'}\cdots m^{\Sigma'}$. We can see that, as $\pv{s}$ is in block-form, $\na\notin\nu(\Sigma')$ and therefore, by Just, $\na\notin\nu(\Tau')$. More than that, $n^{\Tau'}$ occurs in $s$ before the move introducing $\na$ in $s$, thus $\na\notin\nu(\Tau)$. We therefore have $\Sigma_1\remv(\Sigma_1\remv\Sigma_2)\Substore\Sigma_2$ and $\nu(\Sigma_1\remv\Sigma_2)\cap\nu(\Tau)=\varnothing$. On the other hand, if $m_2^{\Sigma_2}$ is an O-move in $AB$ then $m_1^{\Sigma_1}$ is its justifier and therefore, by IH, $\Sigma_1\Substore\Sigma_2$. Thus, for every move $m''^{\Sigma''}$ in $\pv{s\iseq t\rest AB}_{\geq n^\Tau}$, we have $\Tau\Substore\Sigma''$ and hence $\Tau\Substore\Sigma$. Finally, if $m^\Sigma$ is an answer then, since $\pv{s\iseq t\rest AB}$ satisfies well-bracketing, we have that $\pv{s\iseq t\rest AB}_{\geq n^\Tau}=n^{\Tau_0}\rnode{BA}{q_{1}^{\Sigma_1}}\rnode{BB}{a_{1}^{\Tau_1}}
\cdots\rnode{DA}{q_{j}^{\Sigma_j}}\rnode{DB}{a_{j}^{\Tau_j}}m^{\Sigma_{j+1}}$
with $\dom(\Sigma_i)=\dom(\Tau_i)$ for each $1\leq i\leq j$ by IH. Assuming $s_{\geq
n^{\Tau'}}=n^{\Tau_0'}\rnode{BA}{q_{1}^{\Sigma_1'}}\cdots\rnode{BB}{a_{1}^{\Tau_1'}}\cdots\rnode{DA}{q_{j}^{\Sigma_j'}}
\cdots\rnode{DB}{a_{j}^{\Tau_j'}}m^{\Sigma_{j+1}'}$, if $\na\in\nu(\Sigma_{i+1}\remv\Tau_i)$ for some $i$ then $\na\in\nu(\Sigma_{i+1}'\remv\Tau_i')$ and therefore $\na\notin\nu(\Tau'_0)=\nu(\Sigma'_{j+1})$. But the latter implies that $\na\in\nu(\Tau_{i'}'\remv\Sigma_{i'+1}')$ for some $i<i'\leq j$ and therefore $\na\notin\nu(\Sigma_{i'+1})$. Thus, $\nu(\Sigma\remv\Tau)=\varnothing$.
\\
For (e), we replay the proof of lemma~\ref{l:oviews}, that is, we show that, for every $u_2'\prefix u_2$, $\na\notin\nu(\ov{u_1n^\Tau p^\Sigma u_2'})$. We do induction on $|u_2'|$. For the base case, by (b) we have that $\pv{u_1n^\Tau}$ has block-form; in particular, it ends in a block of moves which contains the move introducing $\na$ in $u$, and all moves in the block contain $\na$ in their stores. Hence, since the justifier of $p^\Sigma$, say $n'^{\Tau'}$, occurs in $\pv{u_1n^\Tau}$ and $a\notin\nu(\Tau')$ (by (d) and the fact that $\na\notin\nu(\Sigma)$), we have that $n'^{\Tau'}$ occurs in $u$ before the move introducing $\na$ in it and therefore $\na\notin\nu(\ov{u_1n^\Tau p^\Sigma})$.
Now, if $u_2'=u_2''o'^{\Tau'}$ (an O-move in $AC$) then we need to show that $\na\notin\nu(\ov{u_1m^\Tau p^\Sigma u_2''}o'^{\Tau'})$ given by IH that $\na\notin\nu(\ov{u_1m^\Tau p^\Sigma u_2''})$, which immediately follows from Visibility. Finally, if $u_2'=u_2''p'^{\Sigma'}$ (a generalised P-move) then, by IH, $\na\notin\nu(u_2'')$ and therefore, by (a), $\na\notin\nu(\Sigma')$. Let $p'^{\Sigma'}$ be justified by some $m'^{\Tau'}$ in $u$. If $m'^{\Tau'}$ occurs in $u_2''$ then our argument follows directly from the IH. Otherwise, arguing as before, $m'^{\Tau'}$ occurs in $u$ before the introduction of $\na$ and therefore $\na\notin\nu(\ov{u_1m^\Tau p^\Sigma u_2''p'^{\Sigma'}})$.
\\
For (f), suppose $\na$ appears in a $B$-move $n^{\Tau}$ of $u$. By (e), and because $\na$ reappears in $m^\Sigma$, we have that $\na$ also appears in the move following $n^{\Tau}$ in $u$. Applying this reasoning repeatedly, we obtain that $\na$ appears in $u\rest AC$ after $n^{\Tau}$.
\\
For (g), we do induction on $|s\iseq t|$, the base case being trivial. Now assume $s=s'm^{\Sigma'}$. If $m$ is an O-move then $\Sigma'\substore\Sigma$ follows from (d) and from the IH applied to the subsequence ending in the justifier of $m^\Sigma$. Moreover, if $\na\in\nu(\Sigma')$ then $\Sigma'(\na)$ is determined by the last appearance of $\na$ in $s'$, say $n^{\Tau'}$. By IH, the corresponding move of $s'\iseq t$ has store $\Tau$, with $\Tau(\na)=\Tau'(\na)$. But then, by inspection of the definition of interaction, any move in $s\iseq t$ occurring after $n^\Tau$ does not change the value of $\na$, hence $\Sigma(\na)=\Tau(\na)=\Sigma'(\na)$.
If $m$ is a P-move  preceded by $n^{\Tau'}$ then, using the IH, we have
\[ \Sigma'=\Tau'[\Sigma']\remv(\Tau'\remv\Sigma')+(\Sigma'\remv\Tau')\substore (sn^{\Tau'}\mix t)[\Sigma']\remv(\Tau'\remv\Sigma')+(\Sigma'\remv\Tau')=\Sigma\,.\]
From the above, and using again the IH, we obtain also that $\Sigma[\Sigma']=\Sigma$. Moreover, if $\na\in\nu(\st{t})$ then, by IH, if the last $B$-move of $s\iseq t$ has store $\Sigma_B$ then $\Sigma_B(\na)=\st{t}(\na)$ and the value of $\na$ cannot be changed by subsequent $A$-moves. Hence, if  $\na\in\nu(\Sigma)$ then $\Sigma(\na)=\st{t}(\na)$.
The case of $t=t'm^{\Sigma'}$ is treated dually.
\\
For (h), the last two moves come from the same sequence, say $s$, so let $s=s'n^{\Tau'}p^{\Sigma'}$. We have that
\begin{align*}
\Tau\remv\Sigma &= (sn^{\Tau'}\mix t)\remv((sn^{\Tau'}\mix t)[\Sigma']\remv(\Tau'\remv\Sigma')+(\Sigma'\remv\Tau')) \\
 &= (sn^{\Tau'}\mix t)\remv((sn^{\Tau'}\mix t)[\Sigma']\remv(\Tau'\remv\Sigma'))=(\Tau'\remv\Sigma')[sn^{\Tau'}\mix t]
\end{align*}
where the last equality holds because $\Tau'\remv\Sigma'\substore\Tau'\substore T=sn^{\Tau'}\mix t$, by (g). Hence, $\Tau\remv(\Tau\remv\Sigma)=\Tau\remv(\Tau'\remv\Sigma')\Substore\Sigma$.
We still need to show that $\Tau'\remv\Sigma'\substorE\Tau$.
Let $\na_1,\na_2$ be consecutive names in $\dom(\Tau)$ such that $\na_1\in\nu(\Tau'\remv\Sigma')$. Then, the point of introduction of $\na_2$ in $un^\Tau$ does not precede that of $\na_1$, say $q_1^{\Sigma_1}$. Now, by (c), $\na_1$ is closed in $un^\Tau$ so, using also (b), the latter has the form $u'\rnode{BA}{q_{1}^{\Sigma_1}}\cdots\rnode{BB}{a_{1}^{\Tau_1}}
\rnode{CA}{q_{2}^{\Sigma_2}}\cdots\rnode{CB}{a_{2}^{\Tau_2}}
\cdots\rnode{DA}{q_{j}^{\Sigma_j}}\cdots\rnode{DB}{a_{j}^{\Tau_j}}\justf{BB}{BA}\justf{CB}{CA}\justf{DB}{DA}$\,.
Thus, by (d), the point of introduction of $\na_2$ has to be one of the $q_i^{\Sigma_i}$'s. This implies that $\na_1,\na_2\in\nu(s)$ and therefore $\na_1,\na_2$ are consecutive also in $\Tau'$. But then $\na_1\in\nu(\Tau'\remv\Sigma')$ implies $\na_2\in\nu(\Tau'\remv\Sigma')$ too.
\end{proof}

\begin{proof}[Proof of lemma~\ref{l:nice}.]
Let us write $\Sigma_{ij}$ for $\Sigma_i\remv\Sigma_j$. Then, the LHS is:
\begin{align*} L
&=\Sigma_1[\nice(\Sigma_3,\Sigma_4,\Sigma_5)]\remv(\Sigma_2\remv\nice(\Sigma_3,\Sigma_4,\Sigma_5))+\nice(\Sigma_3,\Sigma_4,\Sigma_5)\remv\Sigma_2 \\
&=\Sigma_1[\nice(\Sigma_3,\Sigma_4,\Sigma_5)]\remv(\Sigma_2\remv(\Sigma_3[\Sigma_5]\remv\Sigma_{45}+\Sigma_{54}))+(\Sigma_3[\Sigma_5]\remv\Sigma_{45}
+\Sigma_{54})\remv\Sigma_2 \\
&=\Sigma_1[\nice(\Sigma_3,\Sigma_4,\Sigma_5)]\remv(\Sigma_2\remv(\Sigma_3\remv\Sigma_{45}))+(\Sigma_3[\Sigma_5]\remv\Sigma_{45})\remv\Sigma_2
+\Sigma_{54}
\end{align*}
where the last equality holds because of (a). The first constituent above is:
\[
L_1 = \Sigma_1[\Sigma_3[\Sigma_5]\remv\Sigma_{45}+\Sigma_{54}]\remv(\Sigma_2\remv(\Sigma_3\remv\Sigma_{45}))
= \Sigma_1[\Sigma_3[\Sigma_5]\remv\Sigma_{45}]\remv(\Sigma_2\remv(\Sigma_3\remv\Sigma_{45}))
\]
On the other hand, the RHS is:
\begin{align*}R
&=\nice(\Sigma_1,\Sigma_2,\Sigma_3)[\Sigma_5]\remv\Sigma_{45}+\Sigma_{54}
=(\Sigma_1[\Sigma_3]\remv\Sigma_{23}+\Sigma_{32})[\Sigma_5]\remv\Sigma_{45}+\Sigma_{54} \\
&=((\Sigma_1[\Sigma_3]\remv\Sigma_{23})\remv\Sigma_{45})[\Sigma_5]+(\Sigma_{32}\remv\Sigma_{45})[\Sigma_5]+\Sigma_{54} \\
&=((\Sigma_1[\Sigma_3]\remv\Sigma_{23})\remv\Sigma_{45})[\Sigma_5]+(\Sigma_{3}[\Sigma_5]\remv\Sigma_{45})\remv\Sigma_2+\Sigma_{54}
\end{align*}
The first constituent above is:
\begin{align*}
R_1 &= ((\Sigma_1[\Sigma_3\remv\Sigma_{45}]\remv\Sigma_{23})\remv\Sigma_{45})[\Sigma_5]
= ((\Sigma_1[\Sigma_3[\Sigma_5]\remv\Sigma_{45}]\remv\Sigma_{23})\remv\Sigma_{45})[\Sigma_5] \\
&= (\Sigma_1[\Sigma_3[\Sigma_5]\remv\Sigma_{45}]\remv\Sigma_{23})\remv\Sigma_{45}
\end{align*}
For the last equality, let $\na$ by in the domain of the resulting store. Then $\na\in\nu(\Sigma_5)\cap\nu(\Sigma_1)$, $\therefore\na\in\nu(\Sigma_4)\cap\nu(\Sigma_1)$, so $\na\in\nu(\Sigma_2)$ by (b). Moreover, $\na\notin\nu(\Sigma_{23})$, $\therefore\na\in\nu(\Sigma_3)$. Now, let us write $\Sigma$ for $\Sigma_1[\Sigma_3[\Sigma_5]\remv\Sigma_{45}]$. We need to show that
\[ \Sigma\remv(\Sigma_2\remv(\Sigma_3\remv\Sigma_{45})) = (\Sigma\remv(\Sigma_{2}\remv\Sigma_{3}))\remv\Sigma_{45} \]
and, in fact, it suffices to show that these stores, say $\Sigma_L$ and $\Sigma^R$, have the same domain. By elementary computation,
\begin{itemize}
\item $\na\in\nu(\Sigma_L)$ iff $(\na\in\nu(\Sigma)\land\na\notin\nu(\Sigma_2))\lor (\na\in\nu(\Sigma)\land\na\in\nu(\Sigma_3)\land\na\notin\nu(\Sigma_{45}))$,
\item $\na\in\nu(\Sigma^R)$ iff $(\na\in\nu(\Sigma)\land\na\notin\nu(\Sigma_2)\land\na\notin\nu(\Sigma_{45}))\lor (\na\in\nu(\Sigma)\land\na\in\nu(\Sigma_3)\land\na\notin\nu(\Sigma_{45}))$.
\end{itemize}
But note now that $\na\in\nu(\Sigma)\land\na\notin\nu(\Sigma_2)$ implies that $\na\notin\nu(\Sigma_{4})$, by (b), so $\na\notin\nu(\Sigma_{45})$.
\end{proof}


\section{Proof of Lemma~\ref{lem:canon}\label{apx:canon}}

We prove two auxiliary results first, which are special cases of Lemma~\ref{lem:canon}.

\begin{lemma}\label{lem:var}
Any identifier $x^\theta$ satisfies Lemma~\ref{lem:canon}. Moreover, the canonical 
form is of the form $\lambda y^\theta_1.\can$ when $\theta\equiv\theta_1\rarr\theta_2$
and of the shape  $\badvar{\lambda y^\comt.\can}{\lambda z^\expt.\can}$ if $\theta\equiv\vart$.
\end{lemma}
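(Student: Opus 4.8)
The statement to prove is Lemma~\ref{lem:var}: every identifier $x^\theta$ admits a canonical form satisfying Lemma~\ref{lem:canon}, and moreover this canonical form is an abstraction $\lambda y^{\theta_1}.\can$ when $\theta\equiv\theta_1\rarr\theta_2$, and a $\mkvar$ term $\badvar{\lambda y^\comt.\can}{\lambda z^\expt.\can}$ when $\theta\equiv\vart$.

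\medskip

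\noindent\textbf{Proof proposal.}

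The plan is to perform a full $\eta$-expansion of the identifier $x^\theta$, structured by induction on the type $\theta$, and to check along the way that the resulting term is in canonical form $\can$ and is denotationally equal to $x^\theta$ (the latter being automatic, since $\eta$-expansion preserves denotations in the model of $\ialoop$). I would split into the three shapes of $\theta$. First, if $\theta\equiv\beta$ is a base type, then $x^\beta$ is already a canonical form: the grammar for $\can$ explicitly lists $x^\beta$ as a production, and there is nothing to do, so the ``moreover'' clauses are vacuous in this case. Second, if $\theta\equiv\theta_1\rarr\theta_2$, I would $\eta$-expand to $\lambda y^{\theta_1}.\,(x\,y)$; since $x\,y$ is not literally in the grammar, I would rewrite it using a $\mathsf{let}$-binding followed by a recursive canonicalisation of the result, i.e.\ roughly $\lambda y^{\theta_1}.\,(\letin{z = x\,N_y}{z'})$, where $N_y$ is the canonical form of (the $\eta$-expansion of) $y^{\theta_1}$ and $z'$ is the canonical form of $z^{\theta_2}$, obtained from the induction hypothesis applied to the strictly smaller types $\theta_1$ and $\theta_2$. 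Here one must select the correct $\mathsf{let}$-production depending on the shape of $\theta_1$ (whether it is a base type, a function type, or $\vart$ via a $\mkvar$ argument), matching the three $\letin{x=z\,\cdots}{\can}$ clauses in the grammar, and on whether $\theta_2$ is a base type (in which case the $\letin{x^\beta=\can}{\can}$ clause applies instead). The outermost constructor is then $\lambda y^{\theta_1}.\can$, as required. Third, if $\theta\equiv\vart$, I would $\eta$-expand using the $\mkvar$ constructor: $x^\vart$ is denotationally equal to $\badvar{\lambda u^\comt.\,!x}{\lambda v^\expt.\,x\aasg v}$, and both bodies $!x^\vart$ and $x^\vart\aasg v^\expt$ are directly in the grammar for $\can$ (they are the productions $!x^\vart$ and $x^\vart\aasg y^\expt$), so this term is already in canonical form and its outermost constructor is the required $\mkvar$ of two abstractions.

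\medskip

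The main thing to be careful about — rather than a deep obstacle — is the mutual dependence between this lemma and the general canonicalisation of arbitrary terms: the recursive step for $\theta\equiv\theta_1\rarr\theta_2$ needs canonical forms for the argument identifier $y^{\theta_1}$ and for the result identifier $z^{\theta_2}$, and these are supplied by the induction hypothesis of this very lemma (on smaller types), not by the full Lemma~\ref{lem:canon}. I would therefore make the induction on type size explicit and note that the recursion stays within the statement of Lemma~\ref{lem:var}, so there is no circularity. The denotational equality $\sem{x^\theta} = \sem{N}$ follows because each rewrite step is an instance of $\eta$-expansion, $\beta$-reduction, or a $\mathsf{let}$/$\mathsf{if}$ commuting conversion, all of which are sound for the game-semantic interpretation; this is exactly the principle invoked in the proof sketch of Lemma~\ref{lem:canon}. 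The only genuinely fiddly part is the bookkeeping of which grammar production to invoke for the $\mathsf{let}$-binding in the arrow case, since the canonical-form grammar discriminates on the syntactic shape of the argument passed to $z$ (a base-type identifier $y^\beta$, a $\mkvar$ pair, or a $\lambda$-abstraction) — but each of the finitely many shapes of $\theta_1$ maps to exactly one available production, so the case analysis closes.
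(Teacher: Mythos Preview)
Your proposal is correct and follows essentially the same route as the paper: induction on the structure of $\theta$, with $x^\beta$ already canonical, $x^\vart$ handled by the $\mkvar$ $\eta$-expansion, and $x^{\theta_1\rarr\theta_2}$ handled by $\lambda y^{\theta_1}.\letin{v^{\theta_2}=x\,y}{v}$ followed by the inductive hypothesis on $y^{\theta_1}$ and $v^{\theta_2}$. Your explicit discussion of non-circularity (the recursion stays within this lemma, on strictly smaller types) and your case split on the shape of $\theta_1$ to pick the matching $\mathsf{let}$-production are exactly the bookkeeping the paper leaves implicit.

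One small inaccuracy: your remark that ``whether $\theta_2$ is a base type (in which case the $\letin{x^\beta=\can}{\can}$ clause applies instead)'' is off. That clause requires the definiens itself to be a $\can$, but $x\,N_y$ is an application, not a canonical form on its own; so regardless of $\theta_2$ you must use one of the three specialised $\letin{x=z\,\cdots}{\can}$ productions (which place no restriction on the type of the bound variable). This does not affect the correctness of your argument, since those three clauses already cover every $\theta_1$ and every $\theta_2$.
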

\begin{proof} 
Induction with respect to type structure. If $\theta$ is a base type, $x^\theta$ is already in canonical
form.  $x^\vart$ can be converted to one using the rule
\[
x^\vart  \lrarr  \badvar{\lambda u^\expt. x\aasg u}{\lambda v^\comt.!x}
\]
For $\theta\equiv\theta_1\rarr\theta_2$ we use the rule
\[
x^{\theta_1\rarr\theta_2} \lrarr \lambda z^{\theta_1}. \letin{v^{\theta_2}=xz^{\theta_1}}{v} 
\]
and appeal to the inductive hypothesis for $z^{\theta_1}$ and $v^{\theta_2}$.
\end{proof}

\begin{lemma}\label{lem:let}
Suppose $C_1, C_2$ are canonical forms. Then $\letin{y^\theta=C_1}{C_2}$, if typable,
satisfies Lemma~\ref{lem:canon}.
\end{lemma}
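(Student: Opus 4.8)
The plan is to proceed by induction on the structure of the canonical form $C_2$, pushing the outer $\mathsf{let}$ inside until it meets either an identifier or another language construct that can absorb it. First I would fix $C_1$ and $\theta$ and do a case analysis on the shape of $C_2$ according to the canonical-form grammar $\can$, treating $\letin{y^\theta=C_1}{C_2}$ as a term that needs to be rewritten into canonical form using only $\beta$-reduction, commuting conversions, and (for the final identifier occurrences) the rules extracted in Lemma~\ref{lem:var}.

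The key steps, organised by the shape of $C_2$, would be as follows. If $C_2$ is a value-like constant not mentioning $y$ (i.e.\ $()$, $i$, $x^\beta$ with $x\ne y$, $!x^\vart$, etc.), then $\letin{y^\theta=C_1}{C_2}$ reduces to $C_2$ up to a dead $\mathsf{let}$, which is handled by the convention $M;N$ and is already canonical (after possibly re-associating with $C_1$). If $C_2$ is $y$ itself, then we need $\letin{y^\theta=C_1}{y}$ to equal $C_1$ — this is immediate when $C_1$ is of base type (since then $\can$ already allows $\letin{x^\beta=\can}{\can}$), and when $\theta$ is a function or $\vart$ type we first $\eta$-expand $y$ using Lemma~\ref{lem:var} so that the body becomes $\lambda\text{-}$ or $\mathsf{mkvar}\text{-}$headed and the $\mathsf{let}$ can be commuted inward past the binder. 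If $C_2$ is a $\lambda$-abstraction $\lambda x^{\theta'}.C_2'$ or a $\mathsf{mkvar}$ or a conditional or a $\mathsf{while}$ or a nested $\mathsf{let}$, we apply the appropriate commuting conversion (e.g.\ $\letin{y=C_1}{\lambda x.C_2'} \leadsto \lambda x.\letin{y=C_1}{C_2'}$, and similarly for $\mathsf{if}$, $\mathsf{while}$, and the three $\letin{x=z(\cdots)}{\cdots}$ forms, distributing $C_1$ into the continuation) and appeal to the induction hypothesis on the resulting smaller subterms, where ``smaller'' is measured by the size of $C_2$. The cases $\letin{x=zy^\beta}{C_2'}$, $\letin{x=z\,\mathsf{mkvar}(\ldots)}{C_2'}$ and $\letin{x=z(\lambda x.C_2')}{C_2''}$ require care because $z$ might be $y$: if $z=y$ we first substitute, using $C_1$ for $y$ (permissible since $C_1$ is already canonical and of the right type), reducing to a term with fewer free occurrences of $y$; if $z\ne y$ we commute $C_1$ past the binding into the tail. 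Throughout, each rewrite step preserves denotation: $\beta$-reductions preserve $\sem{-}$ because the model validates the $\beta$-law for call-by-value application of values (and the bound term $C_1$, being in canonical form, is either a value or is pulled out by a further $\mathsf{let}$), and commuting conversions for $\mathsf{let}/\mathsf{if}$ preserve denotation by the standard equational theory of the call-by-value model.

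The main obstacle I expect is bookkeeping around the termination measure: naive induction on $C_2$ alone does not obviously decrease in the cases where $z=y$ and we substitute $C_1$ into $C_2$, since substitution can enlarge the term. The fix is to use a well-founded measure that combines the size of $C_2$ with the number of free occurrences of $y$ in $C_2$ (with the size of $C_1$ as a fixed parameter), ordered lexicographically: commuting conversions strictly decrease the size component while the substitution step strictly decreases the occurrence-count component (and the size of $C_1$ is bounded, having been put in canonical form once and for all). A secondary subtlety is ensuring that after an $\eta$-expansion in the $C_2=y$ case the resulting term is genuinely smaller or that the recursion still bottoms out; here one appeals directly to Lemma~\ref{lem:var}, which already furnishes the canonical form of $y$, rather than recursing. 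With these measures in place the induction goes through, and denotation-preservation is a routine verification that each of the finitely many rewrite schemas is sound in the game model, so I would state it as such and relegate the equational checks to ``standard''.
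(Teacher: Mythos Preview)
Your approach has a genuine semantic gap: several of the commuting conversions you rely on are \emph{unsound} in call-by-value with side effects, so they do not preserve denotation.

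Concretely:
\begin{itemize}
\item The rewrite $\letin{y=C_1}{\lambda x.C_2'}\leadsto \lambda x.\letin{y=C_1}{C_2'}$ fails whenever $C_1$ is not a value: on the left $C_1$ is evaluated once, eagerly; on the right it is delayed under the $\lambda$ and evaluated once per call (or not at all). A canonical form such as $\letin{x=z y'}{M}$ is not a value and may diverge or write to the store.
\item Commuting the outer $\mathsf{let}$ past an inner one, $\letin{y=C_1}{(\letin{x=z(\dots)}{C_2'})}\leadsto \letin{x=z(\dots)}{(\letin{y=C_1}{C_2'})}$, reverses the evaluation order of $C_1$ and $z(\dots)$. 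With state this is observably different.
\item The ``dead $\mathsf{let}$'' step, dropping $\letin{y=C_1}{C_2}$ to $C_2$ when $y$ does not occur, discards the effects of $C_1$.
\item Pushing $C_1$ into the body of $\while{M}{N}$ would re-evaluate $C_1$ on every iteration.
\end{itemize}

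The paper avoids all of this by inducting on the \emph{type} $\theta$, and case-splitting on the shape of $C_1$ rather than $C_2$. When $\theta$ is a base type the term is already canonical. Otherwise $C_1$ is either a value form ($\lambda$ or $\mathsf{mkvar}$), in which case substituting it for $y$ inside $C_2$ is sound (call-by-value validates $\beta_v$), or $C_1$ is headed by $\cond{}{}{}$/$\mathsf{let}$, in which case one uses the \emph{sound} direction of the conversions $\letin{y=(\letin{x=D}{E})}{F}\leadsto\letin{x=D}{(\letin{y=E}{F})}$ and $\letin{y=(\cond{x}{D_1}{D_0})}{F}\leadsto\cond{x}{(\letin{y=D_1}{F})}{(\letin{y=D_0}{F})}$ to expose a value at the binding site. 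Those conversions preserve evaluation order, which is exactly what your inward-pushing conversions do not.
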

\proof
Induction with respect to type structure. If $\theta$ is a base type, the term is already
in canonical form. If $\theta$ is not a base type, $C_1$ can take one of the following
three shapes: $\badvar{\lambda x^\comt.\can}{\lambda y^\expt.\can}$,  $\lambda x_1^{\theta_1}.\can$,
$\cond{x^\beta}{\can}{\can}$ or $\letin{\cdots}{\can}$.

We first focus on the first two of them to which the remaining two cases will be reduced later.
\begin{itemize}
\item Suppose $C_1\equiv\badvar{\lambda x_1^\comt.C_{11}}{\lambda x_2^\expt.C_{12}}$.
Then $\theta\equiv\vart$. Since $C_2$ is in canonical form, $y$ can only
occur in it as part of a canonical subterm of the form $y^\vart\aasg z^\expt$ or $!y$.
Hence, after substitution for $y$, we will obtain non-canonical subterms of the shape
$\badvar{\lambda x_1^\comt.C_{11}}{\lambda x_2^\expt.C_{12}}\aasg z$ and
$!(\badvar{\lambda x_1^\comt.C_{11}}{\lambda x_2^\expt.C_{12}})$.
Using the rules
\renewcommand\arraystretch{1}
\[\begin{array}{rcl}
! \badvar{\lambda u^\comt.D_1}{\lambda v^\expt.D_2} &\!\!\!\lrarr\!\!\!& D_1[()/u]\\
\badvar{\lambda u^\comt.D_1}{\lambda v^\expt.D_2}:= z &\!\!\!\lrarr\!\!\!& D_2[z/v]
\end{array}\]
we can easily convert them (and thus the whole term) to canonical form.
\item Suppose $C_1\equiv\lambda x_1^{\theta_1}.C_3$ and $\theta\equiv\theta_1\rarr\theta_2$.
Let us substitute $C_1$ for the rightmost occurrence of $y$ in $C_2$.
This will create a non-canonical subterm in $C_2$ of the form
$\letin{x^{\theta_2}=(\lambda x_1^{\theta_1}.C_3) C_4}{C_5}\equiv \letin{x^{\theta_2}=(\letin{x_1^{\theta_1}=C_4}{C_3})}{C_5}$.
By inductive hypothesis for $\theta_1$, $\letin{x_1^{\theta_1}=C_4}{C_3}$ can be converted to canonical form, say, $C_6$.
Consequently, the non-canonical subterm $\letin{x^{\theta_2}=(\lambda x_1^{\theta_1}.C_3) C_4}{C_5}$
 can be transformed into the form $\letin{x^{\theta_2}=C_6}{C_5}$, which --- by inductive hypothesis for $\theta_2$ --- can 
 also be converted to canonical form. Thus, we have shown how to recover canonical forms after substitution for the 
 rightmost occurrence of $y$. Because of the choice of the rightmost occurrence, the transformation does not
 involve terms containing other occurrences of $y$, so it will also decrease their overall number in $C_2$ by one.
Consequently, by repeated substitution for rightmost occurrences one can eventually arrive at a canonical form 
 for $\letin{y^\theta=(\lambda x_1^{\theta_1}.C_3)}{C_2}$.
 \end{itemize}
For the remaining two cases it suffices to take advantage of the following conversions
before referring to the two cases above.
\renewcommand\arraystretch{1}
\[\begin{array}{rcl}
\letin{y=(\cond{x}{D_1}{D_0})}{E}&\!\!\!\lrarr\!\!\! & \cond{x}{(\letin{y=D_1}{E})}{(\letin{y=D_0}{E})}\\
\letin{y=(\letin{x=D}{E})}{F} &\!\!\!\lrarr\!\!\!
                                                     &\letin{x=D}{(\letin{y=E}{F})}
\rlap{\hbox to100 pt{\hfill\qEd}}
\end{array}\]\smallskip

\noindent
Now we are ready to prove Lemma~\ref{lem:canon} by induction on term structure.
The base cases of $()$, $i$ are trivial. That of $x^\theta$ follows from Lemma~\ref{lem:var}.

The following inductive cases follow directly from the inductive
hypothesis: $\new{x}{M}$, $\lambda x^\theta.M$, $\while{M}{M}$.
The cases of $M_1\oplus M_2$ and $\cond{M}{N_1}{N_0}$
are only slightly more difficult. After invoking the inductive
hypothesis one needs to apply the rules given below.
Note that in this case all the $\mathsf{let}$-bindings are of base type.
\renewcommand\arraystretch{1}
\[\begin{array}{rcl}
D_1\oplus D_2 &\!\!\!\lrarr\!\!\!& \letin{x=D_1}{(\letin{y=D_2}{(x\oplus y)})}\\
\cond{D}{D_1}{D_0}&\!\!\!\lrarr\!\!\! &\letin{x=D}{(\cond{x}{D_1}{D_0})}
\end{array}\]
For $!M$ and $M\aasg N$ we take advantage of the fact that
a canonical form of type $\vart$ can only take three shapes:
$\badvar{\lambda x^\comt.\can}{\lambda y^\expt.\can}$, 
$\cond{x^\beta}{\can}{\can}$ or $\letin{\cdots}{\can}$.
An appeal to the inductive hypothesis for $M$ and $N$ and the conversions
given below will then yield the canonical forms for $!M$ and $M\aasg N$.
\[\begin{array}{rcl}
! \badvar{\lambda u^\comt.D_1}{\lambda v^\expt.D_2} &\!\!\!\lrarr\!\!\!& D_1[()/u]\\
\badvar{\lambda u^\comt.D_1}{\lambda v^\expt.D_2}:= E &\!\!\!\lrarr\!\!\!& \letin{x^\expt=E}{D_2[x/v]}\\
!(\cond{x}{D_1}{D_0}) &\!\!\!\lrarr\!\!\! & \cond{x}{!D_1}{!D_0}\\
(\cond{x}{D_1}{D_0} )\aasg D &\!\!\!\lrarr\!\!\! & \cond{x}{(D_1\aasg D)}{(D_0\aasg D)}\\
!(\letin{x=D}{E}) &\!\!\!\lrarr\!\!\! & \letin{x=D}{!E}\\
 (\letin{x=D}{E})\aasg F &\!\!\!\lrarr\!\!\! & \letin{x=D}{(E\aasg F)} 
\end{array}\]

To convert $\badvar{M}{N}$ to canonical form we observe that
a canonical form of function type can take the following shapes:
$\lambda x^\theta.\can$, $\cond{x^\beta}{\can}{\can}$ or $\letin{\cdots}{\can}$.
Hence, by appealing to the inductive hypothesis and then repeately applying
the rules below we will arrive at a canonical form.
\[\begin{array}{rcl}
\!\!\badvar{\cond{x}{D_1}{D_0}}{E} &\!\!\!\lrarr\!\!\! & \cond{x}{\badvar{D_1}{E}}{\badvar{D_0}{E}}\\
\!\!\badvar{\letin{x=M}{D}}{E} &\!\!\!\lrarr\!\!\! & \letin{x=M}{\badvar{D}{E}}\\
\!\!\badvar{\lambda u^\comt\!. D}{\cond{x}{E_1}{E_0}} &\!\!\!\lrarr\!\!\! & \cond{x}{\badvar{\lambda u^\comt\!.D}{E_1}}{\badvar{\lambda u^\comt\!.D}{E_0}}\\
\!\!\badvar{\lambda u^\comt\!. D}{\letin{x=M}{E}}&\!\!\!\lrarr\!\!\! & \letin{x=M}{\badvar{\lambda u^\comt\!. D}{E}}
\end{array}\]

Finally, we handle application $M N$. First we apply the inductive hypothesis to both terms.
Then we use the rules below to reveal the $\lambda$-abstraction inside the canonical form of $M$.
\[\begin{array}{rcl}
(\cond{x}{D_1}{D_0}) E &\!\!\!\lrarr\!\!\! &\cond{x}{(D_1 E)}{(D_2 E)}\\
(\letin{x=D}{E})F &\!\!\!\lrarr\!\!\! &\letin{x=D}{(E F)}
\end{array}\]
Now it suffices to be able to deal with terms of the form $(\lambda x^\theta.C_1) C_2\equiv\letin{x=C_2}{C_1}$,
and this is exactly what Lemma~\ref{lem:let} does.

All our transformation preserve denotations: the proofs are simple exercises in the use of Moggi's monadic
approach~\cite{Mog91} to modelling call-by-value languages (the store-free game model is an instance of the monadic framework).\qed


\end{document}